\documentclass[a4paper,
               DIV=18,
               indent,
               abstract=true,
               11pt]{scrartcl}

\usepackage{graphicx} 
\usepackage{subcaption}
\usepackage{amsmath}
\usepackage{amsthm}
\usepackage{thmtools}
\usepackage{amssymb}
\usepackage{enumitem}
\usepackage{bm}
\usepackage{mathtools}
\usepackage[utf8]{inputenc}
\usepackage{graphicx}
\usepackage{longtable}
\usepackage{booktabs}
\usepackage{braket}
\usepackage[style=alphabetic]{biblatex}
\usepackage{algpseudocode}
\usepackage{authblk}
\usepackage{bbm}
\usepackage{algorithm}
\usepackage{pgfplots}
\usepackage{xcolor}
\usepackage{tikz}
\usetikzlibrary{arrows.meta,positioning,shapes.geometric,fit,backgrounds,calc}
\usepackage{hyperref}
\usepackage[capitalise,sort]{cleveref}
\pgfplotsset{compat=newest}

\newtheoremstyle{plainnonitalic}{3pt} 
{3pt} 
{\normalfont} 
{} 
{\bfseries} 
{.} 
{ } 
{} 
\theoremstyle{plainnonitalic}
\newtheorem{theorem}{Theorem}[section]
\newtheorem{corollary}[theorem]{Corollary}
\newtheorem{lemma}[theorem]{Lemma}
\newtheorem{definition}[theorem]{Definition}
\newtheorem{remark}[theorem]{Remark}

\newtheorem{example}[theorem]{Example}
\newtheorem{notation}[theorem]{Notation}
\newtheorem{problem}[theorem]{Problem}
\newtheorem{result}[theorem]{Result}
\newtheorem{fact}[theorem]{Fact}
\newtheorem{conjecture}[theorem]{Conjecture}

\crefname{fact}{Fact}{Facts}
\newcommand*{\myrefeq}[2]{#1~(\hyperref[#2]{\ref*{#2}})}

\newcommand{\tp}{\otimes}

\newcommand{\densmats}[1]{\mathcal{S}\left(#1\right)}
\newcommand{\gen}[1]{\braket{#1}}

\newcommand{\NN}{\mathbb{N}}
\newcommand{\ZZ}{\mathbb{Z}}

\newcommand{\CC}{\mathbb{C}}
\newcommand{\DD}{\mathbb{D}}
\newcommand{\FF}{\mathbb{F}}

\DeclareMathOperator*{\EE}{\mathbb{E}}

\newcommand{\indic}{\mathbbm{1}}

\DeclarePairedDelimiter{\norm}{\lVert}{\rVert}
\newcommand{\Norm}[1]{\norm*{#1}}

\numberwithin{equation}{section}

\newcommand{\ketbra}[2]{\ket{#1}\!\bra{#2}}
\newcommand{\Tr}{\operatorname{Tr}}
\DeclarePairedDelimiter{\abs}{\lvert}{\rvert}
\newcommand{\Abs}[1]{\abs*{#1}}
\newcommand{\vd}{\mathbf}

\newcommand{\Weyl}{\operatorname{Weyl}}
\newcommand{\Stab}{\operatorname{Stab}}
\newcommand{\poly}{\operatorname{poly}}
\newcommand{\polylog}{\operatorname{polylog}}
\newcommand{\core}{\operatorname{core}}
\newcommand{\ncl}{\operatorname{ncl}}
\DeclarePairedDelimiter{\ceil}{\lceil}{\rceil}
\newcommand{\Ceil}{\ceil*}

\newcommand{\QFT}{\operatorname{QFT}}
\newcommand{\Graph}{\operatorname{Graph}}
\newcommand{\lcm}{\operatorname{lcm}}
\newcommand{\ord}{\operatorname{ord}}
\newcommand{\HSS}{\operatorname{HSS}}
\newcommand{\MASD}{\operatorname{MASD}}
\newcommand{\Baer}{\operatorname{Baer}}
\newcommand{\legendre}[2]{\left(\frac{#1}{#2}\right)}
\newcommand{\ctrl}[1]{{\mathsf{Ctrl}}(#1)}

\newcommand{\prb}[1]{\textsc{#1}}
\newcommand{\cc}[1]{{\text{\sffamily #1}}}

\newcommand{\sathya}[1]{\textcolor{red}{
}}
\newcommand{\isaac}[1]{\textcolor{blue}{
}}

\setlist[description]{leftmargin=1cm, labelindent=0.5cm, nosep}

\bibliography{references.bib}

\title{Learning quantum symmetries}

\author[1]{Isaac Holt\footnote{\href{mailto:isaac.holt@cst.cam.ac.uk}{isaac.holt@cst.cam.ac.uk}}}
\affil[1]{University of Cambridge}
\author[2]{Sathyawageeswar Subramanian\footnote{\href{mailto:sathya.subramanian@cs.ox.ac.uk}{sathya.subramanian@cs.ox.ac.uk}}}
\affil[2]{University of Oxford}
\date{}

\begin{document}

\maketitle

\begin{abstract}
    Quantum algorithms are powerful tools for finding symmetries of classical objects, most famously through Shor's algorithm and the Hidden Subgroup Problem (HSP). In this work, we study quantum algorithms for learning symmetries of \emph{quantum} objects. Existing work in this area centres on the recently introduced State Hidden Subgroup Problem (StateHSP), a quantum generalisation of HSP in which the task is to learn the symmetry subgroup of a quantum state. We develop efficient quantum algorithms for non-abelian StateHSP when the hidden subgroup is normal and the ambient group belongs to a broad class of non-abelian groups, extending the previous general theory beyond the abelian setting.

    StateHSP learns \emph{Bose} symmetries, under which a state must be invariant exactly under the action of the symmetry group. In quantum mechanics, however, physically equivalent pure states are defined only up to global phase. Motivated by this, we introduce a natural notion of \emph{Anyonic} state symmetry learning, based on invariance up to global phase. We give an efficient quantum algorithm by reducing the problem to StateHSP, where the reduction rests on a new correspondence between linearisations of projective representations and linear error-correcting codes. As an application, we obtain an improved algorithm for learning stabiliser groups of mixed qudit states of arbitrary local dimension. Finally, we introduce symmetry learning problems for other quantum objects, including unitaries, Hamiltonians, and finite collections of states, and give efficient algorithms for them by reduction to state symmetry learning. Together, these results broaden the scope of state symmetry learning as a common algorithmic primitive for learning quantum symmetries.
\end{abstract}

\pagebreak
\setcounter{tocdepth}{2}
\tableofcontents

\pagebreak


\section{Introduction}

Symmetries play a fundamental role in quantum physics \cite{FR96symmetriesQuantumPhysics}. For example, symmetries of Hamiltonians correspond to conserved quantities in physical systems, swap symmetries constrain entanglement \cite{BGTW25stateHSP}, and Pauli symmetries constrain magic \cite{GNW21bellDifferenceSampling}. Identifying the symmetries of quantum objects is therefore a natural computational problem.

Quantum algorithms are already known to be powerful tools for finding symmetries of \emph{classical} objects. Shor's celebrated algorithm \cite{Sho97quantumFactoring} for factoring integers and computing discrete logarithms works by finding the symmetries of a periodic function on the integers, an observation that led to the Hidden Subgroup Problem (HSP) \cite{ME99hiddenSubgroupProblem}, a central framework in quantum algorithms \cite{Joz01HSP,Reg04latticeProblems} that asks to recover the symmetries of a periodic function over an arbitrary group.

Extending this paradigm from classical to quantum objects is substantially more challenging, since quantum data admits richer notions of symmetry. Many physically relevant symmetries of quantum objects---including pure and mixed states, unitaries, Hamiltonians, and channels---are characterised by a projective unitary representation $R$ of a group $G$, where the action of a corresponding \emph{symmetry subgroup} $S \leq G$ leaves the object invariant.

Two natural computational tasks associated with symmetry identification are \emph{property testing} and \emph{learning}, two central paradigms in theoretical computer science \cite{MdW16quantumPropertyTesting,AW17quantumLearningTheory}. In the property testing setting, we are given copies of an object and promised that the symmetry subgroup is either the full group ($S=G$), or that the object is $\epsilon$-far (with respect to an appropriate distance measure) from every $G$-invariant object, and the task is to distinguish these two cases with high probability. In the learning setting, we must instead output an efficient description of its symmetry subgroup $S$ with high probability.

While many general symmetry testing problems admit efficient quantum algorithms \cite{LRW23testingQuantumSymmetries,LW22hamiltonianSymmetries,BRRW23testingQuantumSymmetries,RLW23quantumSymmetries}, comparatively little is known about symmetry learning. Prior to this work, the only symmetry learning problem studied in generality was the State Hidden Subgroup Problem (\prb{StateHSP}) \cite{BGTW25stateHSP}, which asks to learn the \emph{Bose symmetries} of a quantum state, which are characterised by invariance under the left action of a finite group represented linearly on the state space. Beyond introducing the problem, \cite{BGTW25stateHSP} showed that identifying hidden cuts in unentangled states is an instance of \prb{StateHSP} and gave an efficient quantum algorithm for it. Subsequently, \cite{HEC25abelianStateHSP} showed that every instance over a finite abelian group admits an efficient quantum algorithm, and identified several further applications, including stabiliser group learning, finding hidden translations, and detecting symmetry-protected topological phases. These results establish \prb{StateHSP} as a natural framework for quantum symmetry learning, but leave open how far this framework extends beyond the abelian setting and beyond Bose symmetry.\\

Despite this progress, the current algorithmic understanding of symmetry learning remains substantially narrower than that of symmetry testing. Efficient testing algorithms are known for multiple notions of state symmetry, including the strictly weaker notion of \emph{conjugation symmetry}, which has a more natural physical interpretation for mixed states, in addition to Bose symmetry \cite{LRW23testingQuantumSymmetries,BRRW23testingQuantumSymmetries,RLW23quantumSymmetries}, and for a broad range of quantum objects, including Hamiltonians, channels, measurements, and Lindbladians \cite{LW22hamiltonianSymmetries,LRW23testingQuantumSymmetries,BRRW23testingQuantumSymmetries}. Moreover, these algorithms place essentially no structural restrictions on the symmetry, requiring only that the underlying group be finite. By contrast, existing symmetry learning algorithms apply only to Bose symmetries of quantum states, finite \emph{abelian} ambient groups, and linear rather than projective representations.

The \prb{StateHSP} was introduced as a quantum generalisation of the more classical \prb{HSP}. Consequently, understanding its complexity over non-abelian groups is a natural objective. Though it admits an efficient quantum algorithm over finite abelian groups, \prb{HSP} is believed to be hard over general non-abelian groups \cite{MRS08symmetricGroupHSP}, even for quantum computers. Since \prb{HSP} reduces to \prb{StateHSP}, one expects \prb{StateHSP} over non-abelian groups to exhibit similar worst-case hardness. 

This worst-case hardness, however, does not preclude positive algorithmic results. Although efficient algorithms for general non-abelian \prb{HSP} remain elusive, many important families of non-abelian groups do admit efficient quantum algorithms \cite{RB98HSP,Gav04HSP,GP11HSP,ISS12HSP,II23HSP}, and the quantum query complexity of every instance is in fact polynomial \cite{EHK04queryComplexityHSP}. In contrast, the non-abelian \prb{StateHSP} has remained almost entirely unexplored.

\section{Our contributions}
In this work, we substantially broaden the scope of quantum symmetry learning in three complementary directions. First, we extend the \prb{StateHSP} framework to broad classes of non-abelian groups. Second, we introduce new symmetry learning problems based on more physically natural notions of quantum state symmetry. Third, we extend symmetry learning beyond quantum states to other classes of quantum objects, including unitaries, Hamiltonians, and collections of states. Together, these results substantially enlarge the range of quantum symmetries known to admit efficient quantum learning algorithms.

\subsection{StateHSP over non-abelian groups}

Our first goal is to extend the \prb{StateHSP} framework beyond the abelian setting. Our starting point is a quantum algorithm that identifies the normal core of the Bose-symmetry subgroup of a mixed state with respect to an arbitrary linear representation of a finite group. This algorithm is the \prb{StateHSP} analogue of the algorithm of Hallgren, Russell, and Ta-Shma~\cite{HRT03HSP} for recovering the normal core of a classical \prb{HSP} instance, and provides the first general algorithmic result for non-abelian \prb{StateHSP}.


\begin{result}[Informal --- see \cref{thm:efficient-algorithm-for-normal-core-statehsp}]\label{res:normal-core-statehsp}
    There is an efficient quantum algorithm for finding the normal core of the Bose-symmetry subgroup of a mixed state with respect to a linear representation of a finite group.
\end{result}

This algorithm generalises both the normal-core algorithm for the classical \prb{HSP} \cite{HRT03HSP} and the existing algorithm for abelian \prb{StateHSP} \cite{HEC25abelianStateHSP}. In both settings, it matches or improves the known sample and time complexities. Since \prb{HSP} reduces to \prb{StateHSP}, our algorithm also yields an improved normal-core algorithm for the classical \prb{HSP}.

As a consequence of our normal core algorithm, we obtain an efficient quantum algorithm for Bose Symmetry Learning over ``polynomially near-Hamiltonian'' groups (defined formally in \cref{sec:bose-symmetry-learning-over-poly-near-hamiltonian-groups}). This gives the largest class of non-abelian groups currently known to admit an efficient \prb{StateHSP} algorithm.

\begin{result}[Informal --- see \cref{thm:bose-symmetry-learning-over-poly-near-hamiltonian-groups}]\label{crl:informal-bose-symmetry-learning-over-poly-near-hamiltonian-groups}
    There is an efficient quantum algorithm for finding the Bose-symmetry subgroup of a mixed state with respect to a linear representation of a finite group which is ``polynomially near-Hamiltonian''.
\end{result}

Our improved sample complexity for the normal-core algorithm also improves the best known sample complexity for the \emph{hidden translation problem}, which asks to find the hidden translation operator under which a state in $\CC^N$ is invariant.. Whereas \cite{HEC25abelianStateHSP} requires $O(\log N)$ copies, our algorithm requires $O(\log N / \log\log N)$ copies in the worst case, and $O(\log\log N)$ copies on average.

\subsection{A generalisation of Bose symmetries}

The \prb{StateHSP} framework learns \emph{Bose symmetries}, in which a state must be invariant under the action of the symmetry group exactly so that the hidden subgroup satisfies $S = \{ g \in G: \Tr(R(g) \rho) = 1 \}$. From the perspective of quantum mechanics, however, this notion of symmetry is unnecessarily restrictive, since pure states that differ only by a global phase are physically indistinguishable. Motivated by this observation, we introduce a more general symmetry learning problem based on what we call \emph{anyonic symmetries}, in which invariance is required only up to a global phase.

Formally, \emph{Anyonic Symmetry Learning} asks to find the subgroup of anyonic symmetries
\[
S=\{g\in G: \abs{\Tr(R(g)\rho)}=1\},
\]
where $G$ is a group, $R$ is a representation, and $\rho$ is a quantum state. Equivalently, this is the subgroup that leaves the state invariant up to a global phase. A primary motivation for this problem is stabiliser group learning, where the natural symmetry condition is $\abs{\Tr(Z^aX^b\rho)}=1$ for the phaseless $n$-qudit Pauli operators $Z^aX^b$, rather than $\Tr(Z^aX^b\rho)=1$. 

We will hereinafter also refer to these two notions of quantum symmetry learning as \prb{BoseSL} and \prb{AnyonicSL} respectively. Note, in particular, that we will use \prb{StateHSP} and \prb{BoseSL} interchangeably.

We first show that Anyonic Symmetry Learning admits an efficient quantum algorithm when the symmetry is described by a linear representation and the underlying group is ``polynomially near-abelian'' (defined formally in \cref{sec:non-abelian-anyonic-symmetry-learning}).

\begin{result}[Informal --- see \cref{thm:algorithm-for-anyonic-symmetry-learning}]
    There is an efficient quantum algorithm for finding the anyonic-symmetry subgroup of a mixed state with respect to an arbitrary linear representation of a finite group which is ``polynomially near-abelian''.
\end{result}

This algorithm does not yet solve our motivating stabiliser learning problem, because the phaseless Pauli operators form a \emph{projective}, rather than linear, representation $(a, b) \mapsto Z^a X^b$ of $\ZZ_d^n \times \ZZ_d^n$. More fundamentally, projective representations describe the most general algebraic symmetries in quantum mechanics, since they capture symmetries that hold only up to a global phase by allowing for a ``twisting'' (a multiplication by a scalar) in the homomorphism condition. They are therefore the natural setting for symmetry learning.   

Our main result in this direction is that projective representations can also be handled efficiently. We show that Anyonic Symmetry Learning admits an efficient quantum algorithm for arbitrary projective representations of finite abelian groups.

\begin{result}[Informal --- see \cref{thm:algorithm-for-projective-state-hsp}]
    There is an efficient quantum algorithm for finding the anyonic-symmetry subgroup of a mixed state with respect to an arbitrary projective representation of an arbitrary finite abelian group.
\end{result}

Since the phaseless Pauli operators form a projective representation of $\mathbb{Z}_d^{2n}$, the above theorem immediately yields the first explicit quantum algorithm for learning the stabiliser group of an arbitrary mixed state of any local qudit dimension.

\begin{corollary}[Informal --- see \cref{thm:stabiliser-group-learning-algorithm}]
    There is an $O(n)$-copy, $\poly(n)$-time quantum algorithm for finding the stabiliser group of an arbitrary $n$-qudit mixed state.
\end{corollary}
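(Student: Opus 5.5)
The plan is to derive the corollary directly from the projective Anyonic Symmetry Learning result (\cref{thm:algorithm-for-projective-state-hsp}), taking $G = \ZZ_d^n \times \ZZ_d^n \cong \ZZ_d^{2n}$ and the projective representation $R(a,b) = Z^aX^b$. Since $Z^aX^b\,Z^{a'}X^{b'} = \omega^{-b\cdot a'} Z^{a+a'}X^{b+b'}$, with $\omega$ a primitive $d$-th root of unity, $R$ is a projective unitary representation whose cocycle takes values in $d$-th roots of unity. For a mixed state $\rho$ on $(\CC^d)^{\tp n}$, the anyonic-symmetry subgroup
\[
S=\{(a,b)\in\ZZ_d^{2n} : \abs{\Tr(Z^aX^b\rho)}=1\}
\]
is exactly the set of phaseless Paulis $P$ for which $\rho$ lies in an eigenspace of $P$. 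In particular, $\{\omega^c Z^aX^b : (a,b)\in S,\ c\in\ZZ_d\}$ is the (phase-extended) stabiliser group of $\rho$. Once $S$ is known, the true stabiliser group with signs is obtained by attaching to each generator $(a,b)$ the unique phase $\lambda_{a,b}$ with $\Tr(Z^aX^b\rho)=\lambda_{a,b}$.

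The steps are as follows.
\begin{enumerate}
\item Check that this instance satisfies the hypotheses of \cref{thm:algorithm-for-projective-state-hsp}: the group is finite abelian, $R$ is efficiently implementable (each $Z^aX^b$ is a tensor product of single-qudit gates), and it admits controlled application.
\item Apply the theorem to output a generating set of $S$, of size at most $2n$, in the encoding the theorem uses (e.g.\ a basis in Smith normal form over $\ZZ_d$).
\item For each generator, recover the phase $\lambda_{a,b}$ by measuring $Z^aX^b$ on copies of $\rho$. Because $\rho$ lies in a single eigenspace of $Z^aX^b$, one copy determines $\lambda_{a,b}$ deterministically.
\end{enumerate}
Done naively, step 3 costs $O(n)$ further copies, one per generator. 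These operators commute, since they jointly stabilise $\rho$ up to phase and hence have trivial symplectic form pairwise. They can therefore be measured simultaneously on a single copy after a Clifford diagonalisation, which is computable in $\poly(n)$ time.

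The main obstacle is the resource accounting rather than correctness. The copy and time bounds of \cref{thm:algorithm-for-projective-state-hsp} must specialise to $O(n)$ copies and $\poly(n,\log d)$ time here. I expect that theorem to reduce to \prb{BoseSL} over a linearising central extension, built from a linear code over the cocycle's value group $\ZZ_d$. That extension has size $d^{O(n)}$, so $\log\abs{G'}=O(n\log d)$. The abelian \prb{StateHSP} algorithm, or the normal-core algorithm (\cref{thm:efficient-algorithm-for-normal-core-statehsp}), uses $O(\log\abs{G'})$ copies in general. To reach the claimed $O(n)$ instead of $O(n\log d)$, I would use the finer bound in terms of the number of generators or rank of $G'$, which is $O(n)$. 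Failing that, I would argue directly that each Fourier/Bell-difference sample over $\ZZ_d^{2n}$ cuts the candidate subgroup by a constant factor in the relevant lattice sense. I would also verify that the linearisation step does not blow up the dimension of the representation beyond $\poly$ in the number of qudits, so that the total time stays $\poly(n)$ for fixed $d$ and polynomial in $\log d$ in general.
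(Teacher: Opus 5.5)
Your argument is the paper's: cast stabiliser learning as $\prb{ProjAnyonicSL}(\ZZ_d^{2n}, (a,b)\mapsto Z^aX^b, \rho)$, invoke \cref{thm:algorithm-for-projective-state-hsp} to get generators of the phaseless stabiliser group, and recover the phases by measuring $\rho$ in the eigenbasis of each generator (your observation that the generators commute, so one copy serves for all of them, is a harmless refinement). The ``main obstacle'' you raise does not arise in the paper's treatment. \cref{thm:algorithm-for-projective-state-hsp} uses no central extension: it linearises $R$ via a free self-orthogonal code over $\ZZ_d$ into a linear representation of the abelian group $(\ZZ_d^{2n})^s$ acting on only $t\le 8$ copies of $\rho$. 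Its copy bound $O(\log\abs{\mathcal{M}}+\log(1/\delta))/\epsilon$ with $\abs{\mathcal{M}}\le d^{2n}$ is already $O(n)$ once $d$ and $\epsilon$ are treated as constants, as the paper implicitly does. Note also that the rank-based sharpening you propose would not follow from this union-bound analysis, since $\ZZ_p^{r}$ has $(p^r-1)/(p-1)$ minimal subgroups.
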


Our algorithm either matches or improves (depending on the local qudit dimension $d$) the sample and time complexities of the existing stabiliser group learning algorithms for pure states \cite{HEC25abelianStateHSP,ADIS25quditBellSampling}, while additionally providing the first provably correct algorithm for arbitrary mixed states.

\subsection{Learning symmetries of other quantum objects}

Finally, we extend symmetry learning beyond quantum states to other classes of quantum objects. Our approach is to show that several natural symmetry learning problems reduce to state symmetry learning, allowing the algorithms developed in this framework to be applied in substantially greater generality.

We first introduce symmetry learning problems for unitaries and Hamiltonians, where the goal is to learn the subgroup that leaves the object invariant under conjugation. In particular, given a group $G$ and a representation $R$, for a unitary $U$ the \emph{Unitary Symmetry Learning} problem is to find the subgroup of all conjugate symmetries $S = \{ g \in G: R(g) U R(g)^\dagger = U \}$. \emph{Hamiltonian Symmetry Learning} is defined analogously, replacing $U$ by a Hamiltonian $H$.

Using ideas from \cite{LW22hamiltonianSymmetries}, we show that both problems reduce naturally to Bose Symmetry Learning. This is a powerful reduction, since it means any quantum algorithm for learning Bose-symmetries of states can also be used for learning symmetries of unitaries and Hamiltonians.

\begin{result}[Informal --- see \cref{thm:unitary-symmetry-learning-reduction-to-bose-symmetry-learning}]
    Learning the symmetries of a unitary is reducible to learning the Bose symmetries of its Choi state.
\end{result}
\begin{result}[Informal --- see \cref{thm:hamiltonian-symmetry-learning-reduction-to-unitary-symmetry-learning}]
    Learning the symmetries of a Hamiltonian is reducible to learning the symmetries of its time evolution unitary, and so is reducible to Bose symmetry learning.
\end{result}

We also introduce symmetry learning for collections of pure states, where symmetry is required only at the level of the collection rather than ``pointwise'' for each individual state. This leads naturally to the problems of Subset Symmetry Learning and Subspace Symmetry Learning. Given a group $G$ and a representation $R$, we have the following definitions.

\noindent\emph{Subset Symmetry Learning}: Given a finite set $A \subseteq \mathcal{H}$ of states, find the subgroup of all symmetries $S = \{ g \in G: R(g) A = A \}$.

\noindent\emph{Subspace Symmetry Learning}: Given a subspace $V \leq \mathcal{H}$ of states, find the subgroup of all symmetries $S = \{ g \in G: R(g) V = V \}$.

Note in particular that Subspace Symmetry Learning is a natural generalisation of Anyonic Symmetry Learning, since Anyonic Symmetry Learning of a state $\ket{\psi}$ is simply Subspace Symmetry Learning where the subspace $V = \{ e^{i \theta} \ket{\psi}: \theta \in [0, 2\pi) \}$ is one-dimensional.

Our next main result is a reduction from Subset Symmetry Learning over a group $G$ to Bose Symmetry Learning over the direct product of $G$ with the symmetric group on $|A|$ elements. Consequently, whenever $|A|$ is sufficiently small and $G$ is polynomially near-abelian, the direct product is polynomially near-Hamiltonian and the problem admits an efficient quantum algorithm via \cref{crl:informal-bose-symmetry-learning-over-poly-near-hamiltonian-groups}.

\begin{result}
    If $\abs{A} = O(\log \log \abs{G} / \log \log \log \abs{G})$ and $G$ is ``polynomially near-abelian'', then there is an efficient quantum algorithm for solving Subset Symmetry Learning over $G$.
\end{result}

We also establish a connection between Subspace Symmetry Learning and the problem of \emph{Conjugate Symmetry Learning} of mixed states (where the target symmetry subgroup is $\{ g \in G: R(g) \rho R(g)^\dagger = \rho \}$).\\

Together, these reductions show that state symmetry learning serves as a general algorithmic primitive: once efficient algorithms for learning state symmetries are available, they immediately yield efficient algorithms for learning symmetries of many other quantum objects.

\section{Technical overview}
The algorithms in this paper rely on three main technical ideas. For non-abelian StateHSP, the key ingredient is a generalisation of the normal-core algorithm for the classical Hidden Subgroup Problem. For Anyonic symmetry learning, we introduce a new Fourier difference sampling subroutine, which reduces anyonic symmetry learning to Bose symmetry learning. Finally, for projective representations, we develop a new notion of linearisation and show that such linearisations are intimately connected with linear error-correcting codes. The remainder of this section gives an overview of these ideas.

\subsection{Strong and weak Fourier sampling and a normal core algorithm}

Fourier sampling is the fundamental algorithmic primitive underlying the efficient quantum algorithm for abelian Bose Symmetry Learning developed in \cite{HEC25abelianStateHSP}. Our first technical contribution is to extend this framework to non-abelian groups. Recall that Fourier sampling with respect to a representation $R$ of an abelian group $G$ is the procedure that applies the quantum Fourier transform over $G$ to the uniform superposition over $g\in G$ of $\ket{g} \tp R(g) \ket{\psi}$, and then measures the $G$-register in the computational basis. In \cref{sec:normal-core-state-hsp}, we study the non-abelian analogues of Fourier sampling—\emph{strong} (measuring the full irrep outcome) and \emph{weak} (measuring only the irrep label)—using techniques from non-abelian Fourier analysis and the representation theory of finite groups. We generalise the key Fourier-analytic properties established in \cite{HEC25abelianStateHSP}, providing the foundation for our algorithms for non-abelian \prb{StateHSP}.

Weak Fourier sampling is the specific primitive that yields our normal core algorithm (\cref{res:normal-core-statehsp}) for non-abelian \prb{StateHSP}. The key observation is that the output subgroup $A$ always satisfies $C \subseteq A$, where $C$ is the normal core of the hidden subgroup. Consequently, the algorithm succeeds precisely when $A$ contains none of minimal subgroups properly containing $C$. This yields a sample complexity depending on $\log |\mathcal{M}|$ rather than $\log |G|$, where $\mathcal{M}$ is the set of minimal subgroups properly containing $C$. Since $|\mathcal{M}| \le |G|$, and may even be exponentially smaller (i.e.\ $\approx\log|G|)$, this strictly improves the previous analyses in many cases.

\subsection{Anyonic Symmetry Learning and Fourier difference sampling}
When the representation $R$ is linear, we solve abelian Anyonic Symmetry Learning by reducing it to abelian Bose Symmetry Learning. The key observation is that if a state $\rho$ possesses anyonic symmetry under $R(S)$, then $\rho^{\otimes 2}$ possesses Bose symmetry under $(R \otimes R^{-1})(S)$. Since efficient algorithms for abelian Bose Symmetry Learning are based on Fourier sampling \cite{HEC25abelianStateHSP}, this reduction naturally leads to a new quantum algorithmic primitive, which we call \emph{Fourier difference sampling}. Fourier difference sampling simply performs Fourier sampling twice and subtracts the two outcomes, analogous to the well-known Bell difference sampling routine \cite{GNW21bellDifferenceSampling,Mon17}. This reduction allows Anyonic Symmetry Learning to inherit the efficiency of the existing Bose Symmetry Learning algorithm while using only a constant-factor increase in the required number of copies of the state.

This tensor-product reduction relies crucially on commutativity, since when $G$ is abelian, $R \otimes R^{-1}$ is again a linear representation. However, for non-abelian groups, $R \tp R^{-1}$ is generally not a linear representation, so a different reduction is required. Our solution is to enlarge the group and consider the linear representation
\[
R'(g,k)=\omega^{-k}R(g)
\]
of $G\times\mathbb{Z}_E$, where $E$ is the exponent of $G$, and $\omega$ is an $E$-th root of unity. The additional phase factor converts anyonic symmetries into Bose symmetries, since $\rho$ is anyonic-symmetry under $R(g)$ if and only if $R(g) \rho = \omega^\ell \rho$ (i.e. $R'(g, \ell) \rho = \rho$) for some $\ell \in \ZZ_E$.

\subsection{Projective Anyonic Symmetry Learning and linearisation of projective representations}

Our main technical contribution is an efficient algorithm for Projective Anyonic Symmetry Learning, wherein $R$ is a \emph{projective}, rather than linear, representation. The central obstacle is that the standard approach to linearising a projective representation is incompatible with our algorithmic framework. Given a projective representation $R:G\to\mathcal U(\mathcal H)$, one typically lifts $R$ to a linear representation of a central extension $G'$ of $G$ which also acts on $\mathcal{H}$. However, even when $G$ is abelian, the extension $G'$ is generally non-abelian, and therefore lies outside the range of groups for which efficient Bose Symmetry Learning algorithms are known. In \cref{sec:projective-state-hsp}, we overcome this obstacle by introducing a new notion of a \emph{linearisation} of a projective representation of a finite abelian group.

Our goal is therefore to construct a linear representation of an abelian group related to $G$ directly from $R$, without passing through a non-abelian central extension. The basic idea is illustrated by the following example. Say $R$ satisfies $R(g) R(h) = \omega^{B(g, h)} R(g + h)$ for a bilinear form $B(g, h)$. This means that $R(a \cdot g) R(a \cdot h) = \omega^{B(a \cdot g, a \cdot h)} R(a \cdot g + a \cdot h) = \omega^{a^2 B(g, h)} R(a \cdot (g + h))$ for any $a \in \ZZ$; in particular, $x \mapsto R(a \cdot x)$ is also a projective representation, and so is
\begin{equation*}
    R'(g) = R(g) \tp R(a \cdot g) \tp R(b \cdot g) \tp R(c \cdot g) \tp R(d \cdot g) \label{eq:linearisation-example}
\end{equation*}
for all $a, b, c, d \in \ZZ$. $R'$ has cocycle $\omega^{(1 + a^2 + b^2 + c^2 + d^2) B(g, h)}$. Let $E$ be the multiplicative order of $\omega$, We want to find $a, b, c, d \in \ZZ$ such that $a^2 + b^2 + c^2 + d^2 = E - 1$. This will mean the cocycle of $R'$ is trivial, so in fact $R'$ is a linear representation. We call $R'$ a \emph{linearisation} of $R$. Importantly, \emph{Lagrange's four-square theorem} guarantees the existence of such $a, b, c, d$, and moreover, there is an efficient algorithm for finding them \cite{PT18lagrangeFourSquares}.

The construction above is only the simplest member of a much larger family of linearisations. More generally, every linearisation corresponds naturally to a \emph{linear error-correcting code} $C$ over $\ZZ_E$, where $E$ is the exponent of $G$. Each linearisation in the family is a linear representation of the abelian group $G^s$ acting on $\mathcal{H}^{\tp t}$, where $s, t \in \NN$ vary between different representations in the family. The dimension of the code corresponds to the number $s$ of copies of $G$ which the linearisation represents, and the block length of the code corresponds to the number $t$ of copies of $\mathcal{H}$ which the linearisation acts on. In order to induce a valid linearisation, the code $C$ must satisfy two properties: it must be a free submodule of $\ZZ_E^t$, and it must be self-orthogonal (all codewords are pairwise orthogonal).

This correspondence admits a particularly natural interpretation. A linearisation can be viewed as encoding a message $\vd{g} = (g_1, \dots, g_s) \in G^s$ into a codeword $\vd{c} = (c_1, \dots, c_t) \in G^t$, after which the projective representation is applied componentwise. Here each $c_j = M_{1 j} \cdot g_1 + \cdots + M_{s j} \cdot g_s$ and $M$ is a generator matrix for $C$, and the linearisation is then given by
\begin{equation*}
    (g_1, \dots, g_s) \mapsto R(c_1) \tp \cdots \tp R(c_t).
\end{equation*}
\cref{fig:linearisation-diagram} depicts a schematic of this construction. In the case of the first example of linearisation in \cref{eq:linearisation-example}, the corresponding code has generator matrix $M = \begin{pmatrix}
1 & a & b & c & d
\end{pmatrix}$.

\begin{figure}[htbp]
  \centering
  \begin{tikzpicture}[
    >=Latex,
    stage/.style={
      draw=black,
      fill=blue!4,
      rounded corners=5pt,
      minimum width=3cm,
      minimum height=2.0cm,
      align=center,
      inner sep=8pt
    },
    detailbox/.style={
      draw=none,
      fill=none,
      minimum width=3.0cm,
      minimum height=1.1cm,
      align=center,
      inner sep=4pt
    },
    arrow/.style={
      ->,
      very thick,
      color=blue!65!black
    },
    arrowlabel/.style={
      midway,
      below=3pt,
      align=center,
      font=\small
    }
  ]

    \node[stage, anchor=west] (message) at (0, 0) {
      \textbf{Message}\\[3pt]
      $\bigl(g_1,\ldots,g_s\bigr)\in G^s$
    };

    \node[stage, anchor=center] (codeword)
  at (.45\linewidth,0) {
      \textbf{Codeword}\\[3pt]
      $\bigl(c_1,\ldots,c_t\bigr)\in G^t$
    };

    \node[stage, anchor=east]  (representation) at (\linewidth,0) {
      \textbf{Linear representation}\\[3pt]
      $\displaystyle R(c_1)\otimes\cdots\otimes R(c_t)$
    };

    \draw[arrow] (message) --
      node[arrowlabel] (encodinglabel) {encoding}
      (codeword);

    \draw[arrow] (codeword) --
      node[arrowlabel] {apply $R$ \\     componentwise  }
      (representation);

    \coordinate (detailbase) at ([yshift=-2.8cm]encodinglabel.south);


    \coordinate (detailcenter) at ([yshift=-2.5cm]encodinglabel.south);
\coordinate (detailleft) at ([xshift=1.4cm]message.west |- detailcenter);
\coordinate (detailright) at ([xshift=-1.4cm]representation.east |- detailcenter);

\begin{scope}[on background layer]
  \node[
    draw=black!65!black,
    fill=blue!4,
    rounded corners=5pt,
    inner xsep=0pt,
    inner ysep=12pt,
    minimum height=2.5cm,
    fit=(detailleft) (detailright)
  ] (encodingdetail) {};
\end{scope}

    \node[detailbox,yshift=-2pt] (detailinput)
    at ([xshift=-5.8cm]encodingdetail.center)
      {
      $\underbrace{\begin{pmatrix}
          g_1 & \cdots & g_s
      \end{pmatrix}}_{\in G^{1 \times s}}$
      };

    \node[detailbox,yshift=-2pt] (detailoutput)
      at ([xshift=2.8cm]encodingdetail.center)
      {
      $\displaystyle
        \underbrace{\begin{pmatrix}
            g_1 & \cdots & g_s
        \end{pmatrix}}_{\in G^{1 \times s}} \cdot
        \underbrace{\begin{pmatrix}
            M_{1 1} & \cdots & M_{1 t} \\
            \vdots & \ddots & \vdots \\
            M_{s 1} & \cdots & M_{s t}
        \end{pmatrix}}_{\in \ZZ^{s \times t}}
        = \underbrace{\begin{pmatrix}
            c_1 & \cdots & c_t
        \end{pmatrix}}_{\in G^{1 \times t}}
        $
      };

    \draw[arrow] (detailinput) --
  node[below, font=\small, align=center]
  {right-multiply \\ by $M$}
  (detailoutput);
    \draw[very thick,dotted,color=blue!65!black]
    (encodinglabel.south)
    -- node[left,font=\small]{generator matrix $M\qquad$}
    (encodingdetail.north);

    \node[
      font=\small\bfseries,
      anchor=north,
      yshift=-7pt
    ] at (encodingdetail.north) {Encoding};

  \end{tikzpicture}
  \caption{Construction of a linearisation of projective representation $R$ of finite abelian group $G$.}
  \label{fig:linearisation-diagram}
\end{figure}

The coding-theoretic correspondence immediately yields a reduction from Projective Anyonic Symmetry Learning over $G$ to Anyonic Symmetry Learning over $G^s$, where the linear representation is precisely the linearisation of $R$ corresponding to the chosen code $C$. Combining this reduction with the algorithms of the previous subsection gives an efficient algorithm for Projective Anyonic Symmetry Learning. Furthermore, if $C$ is also a \emph{zero-sum} code (meaning each of its codewords have entries summing to $0$ modulo $E$), the intermediate Anyonic Symmetry Learning instance can be bypassed entirely, yielding a direct reduction to Bose Symmetry Learning. 

\subsection{Optimising the parameters of linearisations}
The correspondence between linearisations and linear codes naturally raises the question of which codes induce the most efficient algorithms. Two parameters are particularly important: the block length, which determines the number of copies of the input state required for a single application of Fourier sampling, and the code rate, which determines the number of Fourier samples obtained per copy of the input state. Indeed, if a linearisation is induced by a code $C$ of dimension $s$ and block length $t$, then Fourier sampling with the corresponding linear representation produces $s$ Fourier samples from $t$ copies of the state, giving a sampling rate of $s/t$, precisely the rate of the code.

We therefore study the trade-off between minimising block length and maximising code rate. Small block lengths are desirable in near-term applications, where maintaining coherent access to many copies of a quantum state is difficult, while high code rates are advantageous when state preparation is expensive or only a limited number of copies is available. For general linearisations, we construct explicit families of constant-block-length codes achieving the optimal sampling rate and, for most values of the exponent $E$, explicit codes achieving every possible trade-off between block length and rate. When the code is additionally required to satisfy the zero-sum condition, we show using elementary linear algebra that achieving the optimal sampling rate necessarily requires block length at least linear in $E$.

Finally, we provide several explicit optimal constructions. In particular, we show that the generator matrices of the extended binary Hamming and ternary Golay codes are zero-sum codes that induce valid linearisations and achieve the optimal sampling rate for groups of exponent $E=2$ and $E=3$, respectively. The analysis combines elementary linear algebra with elegant number-theoretic arguments based on modular sums of squares and quadratic residues.

\section{Related work}

\paragraph{The State Hidden Subgroup Problem}

The State Hidden Subgroup Problem (\prb{StateHSP}) was introduced in \cite{BGTW25stateHSP}. Part of the motivation was to study a specific instance of abelian \prb{StateHSP}, the hidden cut problem, for which \cite{BGTW25stateHSP} gave an efficient quantum algorithm. Later, \cite{HEC25abelianStateHSP} provided an efficient quantum algorithm that works for all instances of abelian \prb{StateHSP}. \cite{LGFJ26dihedralStateHSP} is the first work to show an efficient quantum algorithm for \prb{StateHSP} over a specific family of non-abelian groups (namely, $n$ copies of the dihedral group of order $8$). \cite{GJMM26quantumStateIsomorphism} studied the problem of testing whether two quantum states are related by a group action, which they refer to as ``quantum state isomorphism (QSI) under group action''. They related several versions of QSI to various symmetry learning problems; in particular, they showed that QSI over abelian groups is reducible to \prb{StateHSP} over the generalised dihedral group, and that an approximate mixed state version of abelian \prb{StateHSP} is \cc{QSZK}-hard in the worst case, which means an efficient algorithm for that problem is unlikely. \cite{SKZDS26approximateHiddenCut} studied an approximate version of a specific instance of \prb{StateHSP} (the hidden cut problem) and provided a heuristic method for solving it.

\paragraph{Hidden subgroup problem} The well-known hidden subgroup problem \cite{Joz01HSP,NC10quantumComputationInformation} was introduced as a generalisation of the period finding problem which is solved in Shor's algorithm \cite{Sho97quantumFactoring} for factoring and discrete logarithms. HSP over finite abelian groups is well-known to be efficiently solvable by quantum computers \cite{NC10quantumComputationInformation}. \prb{HSP} over certain infinite abelian groups has been shown to also admit efficient quantum algorithms \cite{EHKS14quantumUnitGroup,dBDF20continuousHSP,Kup25infiniteHSP}. After the abelian HSP was shown to be efficiently solvable, attention turned to the non-abelian HSP, which has now been studied for over two decades. Earlier works on it include \cite{HRT03HSP} who provided an efficient quantum algorithm that finds the normal core of the hidden subgroup. \cite{GSVV01quantumNonabelianHSP} used this normal core algorithm to show an efficient quantum algorithm for non-abelian HSP over groups which are sufficiently close to Hamiltonian, \cite{Gav04HSP} improved upon this by expanding the range of ``near-Hamiltonian'' groups which admit an efficient HSP algorithm. \cite{EHK04queryComplexityHSP} showed that the quantum query complexity of any non-abelian \prb{HSP} is polynomial. Many works \cite{RB98HSP,Gav04HSP,GP11HSP,ISS12HSP,II23HSP} have shown efficient quantum algorithms for \prb{HSP} over specific families of non-abelian groups. Instances of \prb{HSP} over the symmetric group and dihedral group are of particular interest, due their connections to graph isomorphism \cite{MRS08symmetricGroupHSP} and lattice problems \cite{Reg04latticeProblems}. A subexponential time algorithm for dihedral \prb{HSP} \cite{Kup05dihedralHSP} is known, but no polynomial time algorithm is known. \cite{MRS08symmetricGroupHSP} showed that \prb{HSP} over the symmetric group is likely to be hard.

\paragraph{Testing symmetries} Collectively, \cite{LRW23testingQuantumSymmetries,LW22hamiltonianSymmetries,BRRW23testingQuantumSymmetries,RLW23quantumSymmetries} considered problems in property testing whether a given quantum object (pure state, mixed state, channel, Hamiltonian, measurement) is symmetric under the full action of a group. The combination of these four works show that symmetry testing can be done with respect to any \emph{projective} representation of any finite (possibly non-abelian) group. Furthermore, \cite{RLW23quantumSymmetries} shows that when the notion of distance is changed in the promise of the property testing problem (from $L^2$ distance to $L^1$ distance or fidelity), certain state symmetry testing problems become \cc{QZSK}-complete.

\section{Outlook and future directions}

In this work, we formally introduce a variety of symmetry learning problems which all fit into the same framework inspired by the \prb{StateHSP}. We present efficient quantum algorithms for a wide array of instances of these problems; namely, a normal core algorithm for \prb{StateHSP}, an algorithm for \prb{StateHSP} over polynomially near-Hamiltonian groups, algorithms for instances of a new symmetry learning problem (\emph{Anyonic Symmetry Learning}), and algorithms for learning symmetries of unitaries, Hamiltonians, and subsets of states. Furthermore, through our study of anyonic symmetry learning, we establish a new connection between linearisations of projective representations and linear codes.

We hope that our introduction of a more general symmetry learning framework encourages further progress in these area.\isaac{is this previous sentence needed?} Below, we outline some interesting and relevant directions for future research.

\paragraph{Applications of our general algorithms}

While we have identified a few applications of these algorithms, we expect that there are many more interesting and physically relevant problems in quantum computing that can be formulated as an instance of one of these symmetry learning problems, and so admits an efficient quantum algorithm via our general algorithms. We are particularly hopeful that our quantum algorithm for Projective Anyonic Symmetry Learning will find many useful applications, given the importance of projective representations in quantum physics; these arise, for example, when considering magnetic translations of a charged particle moving in a lattice.

\paragraph{Learning approximate symmetries}

Throughout this work, we use an \emph{exact} definition of symmetry. However, in practical scenarios, quantum computations are subject to noise, which means that input states no longer satisfy exact symmetries. Breaking of exact symmetries can also occur if we are only able to implement the representation $R$ approximately. Therefore, it is an important question to ask whether there exist efficient quantum algorithms for finding \emph{approximate} symmetries. With that motivation, we define the \textit{Approximate Anyonic Symmetry Learning} problem as follows:
\begin{definition}[Approximate Anyonic Symmetry Learning]\label{prb:approx-state-hsp}\leavevmode
    Given a group $G$, a representation $R$, a state $\rho$ and a threshold $\epsilon_1 \in [0, 1)$, find the set of all $\epsilon_1$-approximate symmetries $S = \{ g \in G: \abs{\Tr(R(g) \rho)} \geq 1 - \epsilon_1 \}$.
\end{definition}

One can define other approximate symmetry learning problems (e.g. involving unitaries or Hamiltonians) similarly.
  
It is fairly straightforward to show that abelian Approximate Anyonic Symmetry Learning can be solved with high probability when $\epsilon_1 = O(1 / \log \abs{G})$, using the same quantum algorithm as for (exact) abelian Anyonic Symmetry Learning. However, we leave it as an open problem to determine whether it can be solved with high probability for larger $\epsilon_1$ (ideally, for any $\epsilon_1$ below a fixed constant threshold).

Since the symmetries are now approximate rather than exact, the most general set of symmetries $S$ may not be a \emph{subgroup} of $G$ but rather an \emph{arbitrary subset}. This in itself leads to some interesting questions, such as whether this subset has some ``approximate'' algebraic structure (e.g. an approximate subgroup, with small doubling constant), which could lead to connections to additive combinatorics. Also, since subgroups have efficient (i.e. $\polylog \abs{G}$-sized) descriptions (as a set of their generators) whereas arbitrary subsets do not, we must change the notion of finding the hidden symmetry set from outputting a full description of the set. For \emph{subgroups}, learning a description of the subgroup and sampling from uniformly random elements of the subgroup are equivalent under randomised polynomial-time reductions. Thus, a natural generalisation of learning a symmetry \emph{subset} would be to sample from a distribution supported on the set; ideally, the uniform distribution.

\paragraph{Relaxing requirements on the group and representation}

The quantum symmetry testing algorithms in \cite{LRW23testingQuantumSymmetries,LW22hamiltonianSymmetries,BRRW23testingQuantumSymmetries,RLW23quantumSymmetries} work for symmetries characterised by any finite group $G$ and projective representation $R$. In contrast, the symmetry learning problems for which we provide quantum algorithms all require the group $G$ and representation $R$ to satisfy certain further properties. For example, for Anyonic Symmetry Learning, if $R$ is linear, then $G$ can be polynomially near-abelian (which strictly includes all finite abelian groups), but if $R$ is projective (fully general), then $G$ must be finite and abelian. For Bose Symmetry Learning, $G$ must be poly-near Hamiltonian. Therefore, a natural future direction is to determine which of these requirements can be relaxed, in order to close the gap between requirements needed for testing versus learning. For example:
\begin{itemize}
    \item Can we allow $G$ to be discrete but infinite, or even continuous?
    \item Can we allow $G$ to be any finite non-abelian group?
    \item Can we allow for $R$ to be projective for Anyonic Symmetry Learning over some non-abelian groups?
\end{itemize}
The ultimate goal would be to have a quantum algorithm for Anyonic Symmetry Learning that works for an arbitrary finite (or even infinite) group $G$ and projective representation $R$. Of course, given that \prb{HSP} is reducible to Anyonic Symmetry Learning and \prb{HSP} appears to be hard for general non-abelian groups (e.g. for the symmetric group), this appears unlikely;\isaac{TODO: is this sentence too negative?} however, it is possible that the more general framework of quantum symmetry learning offers new insights into the non-abelian case that the narrower lens of the Hidden Subgroup Problem failed to reveal.

\paragraph{Linearising projective representations and coding theory}

One of our most interesting contributions is establishing a correspondence between linearisations of projective representations of finite abelian groups and linear error correcting codes. By using this correspondence, we are able to solve Anyonic Symmetry Learning with respect to any projective representation of a finite abelian group. This connection between codes and linearisations opens up many exciting avenues for future research. We have already shown that optimising two properties of the code (the block length and rate) gives rise to more favourable properties of the corresponding linearisation. We suspect that other properties of the code, such as the minimum distance, may also relate to relevant properties of the linearisation.

A natural question is whether this linearisation construction can be generalised to projective representations of finite \emph{non-abelian} groups. Such a generalisation would require some non-trivial new ideas: the current construction for abelian groups works because the map $g \mapsto g^k$ is a group homomorphism for any abelian $G$ and $k \in \ZZ$. However, it is generally not a homomorphism for non-abelian groups.

We expect that this new notion of linearisation and its connections to linear codes may have other interesting applications outside of symmetry learning.
Indeed, in an upcoming work, we show that linearisations of the phaseless Pauli representation of $\ZZ_d^{2n}$ give rise to a general notion of \emph{Bell sampling} and \emph{Bell difference sampling}, two important quantum algorithmic primitives which have found a vast range of applications in quantum algorithms, property testing and learning theory \cite{Mon17,GNW21bellDifferenceSampling,GIKL24pseudoentanglement,GIKL24stabiliserEstimation,GIKL25learningFewNonClifford,CGYZ25stabilizerBootstrapping,AD25tolerantStabiliserTesting}. Our generalisation includes as special cases all previous Bell (difference) sampling algorithms from \cite{Mon17,GNW21bellDifferenceSampling,HEC25abelianStateHSP,ADIS25quditBellSampling}. In this upcoming work, we also explore some of the applications of this generalised Bell (difference) sampling procedure.

\section{Preliminaries}

Write $\log$ for the base-$2$ logarithm. Denote the set of natural numbers between $1$ and $n$ by $[n]$. Write $S^1 = \{z \in \CC: \abs{z} = 1\}$ for the unit circle and $\DD = \{z \in \CC: \abs{z} \leq 1\}$ for the unit disc.

For a Hilbert space $\mathcal{H}$, write $\densmats{\mathcal{H}}$ for the set of density operators on $\mathcal{H}$, i.e. the set of positive semidefinite linear operators $\rho$ on $\mathcal{H}$ with $\Tr(\rho) = 1$. Write $\mathcal{U}(\mathcal{H})$ for the set of unitary operators on $\mathcal{H}$. For operators $A$ and $B$ on $\mathcal{H}$, write $[A, B] = A B - B A$ for their commutator.

\subsection{Quantum information}

We assume all Hilbert spaces to be finite-dimensional. A \emph{quantum channel} is a completely positive, trace preserving map between density operators on Hilbert spaces. For every quantum channel $\mathcal{E}: \densmats{\mathcal{H}} \to \densmats{\mathcal{H}'}$, there is an associated \emph{Choi state} $\Phi^{\mathcal{E}} \in \densmats{\mathcal{H} \tp \mathcal{H}'}$ which completely encodes the action of $\mathcal{E}$, and is defined by $\Phi^{\mathcal{E}} = (\mathcal{E} \tp \text{id})(\ketbra{\Phi^+}{\Phi^+})$, where $\ket{\Phi^+} = \frac{1}{\sqrt{N}} \sum_{x = 1}^N \ket{x} \ket{x}$ is the maximally entangled state on $\mathcal{H} \tp \mathcal{H}$ and $N = \dim \mathcal{H}$. A quantum channel $\mathcal{E}$ is a \emph{unitary channel} (i.e. $\mathcal{E}(\rho) = U \rho U^\dagger$ for some unitary $U$) if and only if its Choi state $\Phi^{\mathcal{E}}$ is pure; in which case, we may write $\Phi^U$ for the Choi state of $\mathcal{E}$.

\subsection{Coding theory}


\begin{definition}[Linear Code]\label{def:linear-code}
    For $d \geq 2$, a \emph{(linear) code} $C$ over $\ZZ_d$ of \emph{block length $n$} is a submodule (i.e. subgroup) of $\ZZ_d^n$.
\end{definition}

\begin{definition}[Free Code]\label{def:free-code}
    $C$ is called \emph{free} if it is a free submodule, i.e. if it has a basis (a generating set which is linearly independent). The \emph{dimension} $\dim C$ of a free code $C$ is the size of any basis for $C$. The \emph{rate} of $C$ is the ratio between its dimension and its block length, $r(C) = k / n$.
\end{definition}

\begin{definition}[Generator Matrix]\label{def:generator-matrix}
    A \emph{generator matrix} for a free code $C$ is any matrix $G \in \ZZ_d^{k \times n}$ whose rows form a basis for $C$.
\end{definition}

\begin{definition}[Dual Code]\label{def:dual-code}
    The \emph{dual code} of $C$ is the code $C^\perp = \{x \in \ZZ_d^n : x . c = 0 \: \forall c \in C\}$. If $C$ is free with dimension $k$, then $C^\perp$ is also free with dimension $n - k$. $C$ is called \emph{self-orthogonal} if $C \subseteq C^\perp$, and \emph{self-dual} if $C = C^\perp$.
\end{definition}

\begin{definition}[Check Matrix]\label{def:check-matrix}
    A \emph{check matrix} for a free code $C$ is any matrix $H \in \ZZ_d^{(n - k) \times n}$ which is a generator matrix for $C^\perp$.
\end{definition}

We will be interested in three properties of linear codes: self-orthogonal, zero-sum, and isotropic.

\begin{definition}[Self-orthogonal Code]\label{def:self-orthogonal-code}
    A free code $C$ is \emph{self-orthogonal} if $C \subseteq C^\perp$, i.e. if $x . y = 0$ for all $x, y \in C$.
\end{definition}

\begin{definition}[Isotropic Code]\label{def:isotropic-code}
    A free code $C$ is \emph{isotropic} if all codewords are orthogonal to themselves, i.e. $x . x = 0$ for all $x \in C$.
\end{definition}

\begin{definition}[Zero-Sum Code]\label{def:zero-sum-code}
    A free code $C$ is \emph{zero-sum} if the sum of the entries of each codeword is zero, i.e. $\sum_{i = 1}^n x_i = 0 \pmod{d}$ for all $x \in C$.
\end{definition}

\subsection{Number theory}

Key to our construction of linearisations of projective representations is Lagrange's four-square theorem, which says that every natural number can be decomposed into a sum of four integer squares.

\begin{fact}[Lagrange's Four-Square Theorem]\label{fct:lagrange-four-square-theorem}
    For any natural number $n \in \NN$, there exist $a, b, c, d \in \ZZ$ such that $n = a^2 + b^2 + c^2 + d^2$.
\end{fact}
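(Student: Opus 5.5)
We would prove Lagrange's four-square theorem by the classical route: first reduce to primes using Euler's four-square identity, then handle each prime by a descent argument. The identity
\[
(a^2+b^2+c^2+d^2)(w^2+x^2+y^2+z^2) = (aw+bx+cy+dz)^2 + (ax-bw+cz-dy)^2 + (ay-bz-cw+dx)^2 + (az+by-cx-dw)^2
\]
is a direct expansion; it is the norm multiplicativity of quaternions. It shows that the set of sums of four squares is closed under multiplication. Since $0 = 0^2+0^2+0^2+0^2$, $1 = 1^2$ and $2 = 1^2+1^2$, it then suffices to treat odd primes $p$.

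For an odd prime $p$, the first step is to find $x,y$ with $x^2+y^2+1 \equiv 0 \pmod p$. This follows by pigeonhole. The sets $\{x^2 \bmod p : 0\le x\le (p-1)/2\}$ and $\{-1-y^2 \bmod p : 0 \le y \le (p-1)/2\}$ each contain $(p+1)/2$ distinct residues, so they must intersect. We may take $|x|,|y| < p/2$, which gives $x^2+y^2+1 = mp$ with $1 \le m < p$. Hence some multiple $mp$ with $0<m<p$ is a sum of four squares.

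The main step is descent: let $m$ be the least such multiplier and show $m=1$.

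If $m$ is even, then among $a,b,c,d$ with $a^2+b^2+c^2+d^2 = mp$ the number of odd entries is even. We can therefore pair the entries so that $a\equiv b$ and $c\equiv d \pmod 2$. Then
\[
\left(\tfrac{a+b}{2}\right)^2+\left(\tfrac{a-b}{2}\right)^2+\left(\tfrac{c+d}{2}\right)^2+\left(\tfrac{c-d}{2}\right)^2 = \tfrac{m}{2}p,
\]
contradicting minimality. So $m$ is odd.

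If $m>1$ is odd, choose $w\equiv a,\ x\equiv b,\ y\equiv c,\ z\equiv d \pmod m$ with each of $w,x,y,z$ in $(-m/2,m/2)$. Then $w^2+x^2+y^2+z^2 = rm$ with $0\le r<m$.

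Next we rule out the degenerate values of $r$. If $r=0$, then $w=x=y=z=0$, so $m$ divides each of $a,b,c,d$. This gives $m^2 \mid mp$, so $m \mid p$, which is impossible for $1<m<p$.

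Otherwise $0<r<m$. Multiply the two representations using Euler's identity to get $m^2 r p$ as a sum of four squares. Each of the four bracketed terms is divisible by $m$: the first is $\equiv a^2+b^2+c^2+d^2 \equiv 0 \pmod m$, and the other three vanish mod $m$ by the antisymmetric pairing. Dividing each term by $m$ expresses $rp$ as a sum of four squares with $r<m$, contradicting minimality. Therefore $m=1$.

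The main obstacle is this descent step: specifically the congruence check that all four Euler terms are divisible by $m$, and the exclusion of $r=0$. Everything else is routine bookkeeping. Alternatively, one could cite the theorem as classical, as the paper does by stating it as a Fact, together with \cite{PT18lagrangeFourSquares} for the algorithmic version.
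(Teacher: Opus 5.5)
Your proof is correct. The Euler identity checks out term by term. The pigeonhole step gives $x^2+y^2+1=mp$ with $1\le m<p$. The descent is complete: halving for even $m$, the reduction modulo odd $m$ with $r=0$ excluded via $m\mid p$, and the check that all four Euler terms are divisible by $m$.

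The paper, however, gives no proof. It states the theorem as a classical Fact and immediately pairs it with the randomised algorithms of \cite{RS86integerSquareDecompositions,PT18lagrangeFourSquares}. That pairing matters because what the paper actually uses downstream is an explicit decomposition computable in expected $\polylog(n)$ time. This is needed for the matrix $\vd{R}$ built from $E-1=s_1^2+s_2^2+s_3^2+s_4^2$, and for case 4 of \cref{lmm:sum-of-squares-decomposition-conditions}.

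Your argument makes the existence statement self-contained. Its first step is the same quadratic-residue pigeonhole count the paper uses for the mod-$p$ case of \cref{lmm:sum-of-squares-decomposition-conditions}.

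As an algorithm, though, it is not a substitute for the citation. It needs the factorisation of $n$ to reduce to primes. Also, the odd-$m$ descent step only guarantees $rm\le (m-1)^2$, i.e.\ $r\le m-2$, so you get no polylogarithmic bound on the number of descent steps. Keeping your closing remark that the algorithmic version is cited is therefore the right call.
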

    
Moreover, there exist randomised (classical) algorithms (\cite{RS86integerSquareDecompositions,PT18lagrangeFourSquares}) for finding explicit decompositions of integers $n$ as a sum of four squares in expected time $\polylog(n)$.

\subsection{Group theory}

\begin{definition}[Group Exponent]\label{def:group-exponent}
    The \emph{exponent} $\exp(G)$ of a finite group $G$ is the smallest positive integer $E$ such that $g^E = e_G$ for all $g \in G$, where $e_G$ is the identity element of $G$. Equivalently, $\exp(G) = \lcm(\{\ord(g): g \in G\})$.
\end{definition}

\begin{definition}[Normal Core]\label{def:normal-core}
    The \emph{normal core} $\core_G (H)$ (or just $\core(H)$ when $G$ is clear) of a subgroup $K \leq G$ is the largest normal subgroup of $G$ contained in $K$.
\end{definition}

\begin{definition}[Normal Closure]\label{def:normal-closure}
    The \emph{normal closure} $\ncl_G (H)$ (or just $\ncl(H)$ when $G$ is clear) of a subgroup $H \leq G$ is the smallest normal subgroup of $G$ containing $H$.
\end{definition}

\begin{definition}[Normaliser]\label{def:subgroup-normaliser}
    The \emph{normaliser} $N_G (K)$ of a subgroup $K \leq G$ in $G$ is the largest subgroup $L$ of $G$ such that $K$ is normal in $L$.
\end{definition}

\begin{definition}[Subgroup Index]\label{def:subgroup-index}
    The \emph{index} of a subgroup $K \leq G$ in a finite group $G$ is $[G: K] = \abs{G} / \abs{K}$.
\end{definition}

\begin{fact}[Lagrange's Theorem]\label{fct:lagranges-theorem}
    The index of a subgroup $K \leq G$ in a finite group $G$ is integral, i.e. $\abs{K}$ divides $\abs{G}$.
\end{fact}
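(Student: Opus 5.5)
The statement is Lagrange's theorem: for a subgroup $K \leq G$ of a finite group $G$, $\abs{K}$ divides $\abs{G}$, so $[G:K]$ is an integer. The plan is the standard coset-counting argument: show that the left cosets of $K$ partition $G$ into blocks that all have exactly $\abs{K}$ elements.

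First, I will define the relation $g \sim h$ if and only if $g^{-1}h \in K$ and check that it is an equivalence relation.
\begin{itemize}
    \item Reflexivity holds because $e_G \in K$.
    \item Symmetry holds because $K$ is closed under inverses, and $(g^{-1}h)^{-1} = h^{-1}g$.
    \item Transitivity holds because $K$ is closed under products, and $g^{-1}h \cdot h^{-1}k = g^{-1}k$.
\end{itemize}
The equivalence class of $g$ is exactly the left coset $gK = \{gk : k \in K\}$. Since equivalence classes partition $G$, the distinct left cosets are pairwise disjoint and cover $G$.

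Second, I will show that every coset has the same size as $K$. The map $K \to gK$ given by $k \mapsto gk$ is surjective by definition of $gK$. It is injective by left cancellation, since $gk = gk'$ implies $k = k'$. Hence $\abs{gK} = \abs{K}$ for every $g \in G$.

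Finally, $G$ is finite, so there are finitely many distinct cosets; call their number $m$. Counting the elements of $G$ block by block gives $\abs{G} = m\abs{K}$. Therefore $\abs{K}$ divides $\abs{G}$, and $[G:K] = \abs{G}/\abs{K} = m$ is a positive integer.

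None of these steps is a real obstacle. The only point needing care is disjointness: if two cosets $gK$ and $hK$ share an element, they must be equal. This follows immediately from the equivalence-relation formulation, since they are then the same equivalence class. That is why I set the argument up through $\sim$ rather than working with cosets directly.
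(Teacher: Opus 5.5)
Your proof is correct: it is the standard coset-partition argument, and each step (the equivalence relation $g^{-1}h \in K$, the bijection $k \mapsto gk$ from $K$ onto $gK$, and the count $\abs{G} = m\abs{K}$) is sound. The paper gives no proof of this statement and simply quotes it as a standard fact, so your argument is the textbook proof it implicitly relies on.
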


\subsection{Representation theory}

\begin{definition}[Linear Representation]\label{def:linear-representation}
    A \emph{unitary linear representation} of a group $G$ on a vector space $V$ is a homomorphism $R: G \to \mathcal{U}(V)$.
\end{definition}

\begin{definition}[Projective Representation]\label{def:projective-representation}
    A \emph{unitary projective representation} is a map $R: G \to \mathcal{U}(V)$ such that
    \begin{equation*}
        R(g) R(h) = \omega(g, h) R(g h) \quad \forall g, h \in G,
    \end{equation*}
    where $\omega: G \times G \to S^1$ is called the \emph{co-cycle}.
\end{definition}

\noindent Note that every linear representation is a projective representation with trivial co-cycle. We will often use the term ``projective representation'' to refer to a non-linear (i.e. has a non-trivial co-cycle) projective representation, and use ``representation'' to refer to a linear or non-linear projective representation.

\begin{notation}
    Write $\ctrl{R}$ for the unitary on $\CC^G \tp V$ defined by $\ctrl{R} = \sum_{g \in G} \ketbra{g}{g} \tp R(g)$.
\end{notation}

\begin{definition}[Unitary Equivalence]\label{def:unitary-equivalence}
    Two representations $R: G \to \mathcal{U}(V)$ and $R': G \to \mathcal{U}(V)$ are \emph{unitarily equivalent} if there exists a unitary $U \in \mathcal{U}(V)$ such that $R'(g) = U R(g) U^\dagger$ for all $g \in G$.
\end{definition}

\begin{definition}[Dimension]
    The \emph{dimension} $d_R$ of a representation $R$ is the dimension of the vector space it acts on.
\end{definition}

\begin{fact}[Pontryagin Duality]\label{fct:pontryagin-duality}
    If $G$ is a finite abelian group, then $\widehat{G}$ is also a finite abelian group, and $G \cong \widehat{\widehat{G}}$.
\end{fact}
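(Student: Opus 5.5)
Pontryagin duality is a standard fact, so my plan is to reconstruct the classical argument in three stages: (i) show $\widehat{G}$ is a finite abelian group; (ii) show $|\widehat{G}| = |G|$ via the structure theorem; (iii) show the evaluation map $\mathrm{ev}: G \to \widehat{\widehat{G}}$, $g \mapsto (\chi \mapsto \chi(g))$, is an injective homomorphism, and conclude by a counting argument.

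For (i), I take $\widehat{G}$ to be the set of homomorphisms $\chi: G \to S^1$ under pointwise multiplication. This is a group: it is closed, the trivial character is the identity, inverses are given by complex conjugation, and it is commutative because $S^1$ is. It is finite because each $\chi(g)$ must be an $\ord(g)$-th root of unity.

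For (ii), I invoke the structure theorem to write $G \cong \ZZ_{n_1} \times \cdots \times \ZZ_{n_k}$. A character of a cyclic group $\ZZ_n$ is determined by where it sends the generator, which must be an $n$-th root of unity, so $\widehat{\ZZ_n} \cong \ZZ_n$. Characters of a direct product are exactly products of characters of the factors, so $\widehat{G} \cong \prod_i \widehat{\ZZ_{n_i}} \cong G$. In particular $|\widehat{G}| = |G|$, and applying the same argument to $\widehat{G}$ gives $|\widehat{\widehat{G}}| = |\widehat{G}| = |G|$.

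For (iii), the map $\mathrm{ev}$ is a homomorphism because each $\chi$ is multiplicative. The main obstacle is injectivity: given $g \neq e_G$, I must produce a character $\chi$ with $\chi(g) \neq 1$. Using the decomposition above, write $g = (g_1, \dots, g_k)$ with some coordinate $g_i \neq 0$ in $\ZZ_{n_i}$. Define $\chi(x) = e^{2\pi i x_i / n_i}$; this is a character and $\chi(g) = e^{2\pi i g_i/n_i} \neq 1$. So $\mathrm{ev}$ is an injective homomorphism between groups of equal finite order, hence an isomorphism.

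I would note that, unlike the isomorphism $G \cong \widehat{G}$ from (ii), which depends on choosing a decomposition, the isomorphism $G \cong \widehat{\widehat{G}}$ given by $\mathrm{ev}$ is canonical.
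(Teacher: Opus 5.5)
The paper states this as a standard fact and gives no proof, so there is no argument of its own to compare against. Your proof is correct and is the usual textbook argument.

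Two details deserve a line each. First, the paper defines $\widehat{G}$ as a set of irrep labels, so you should say that every irrep of a finite abelian group is one-dimensional (Schur's lemma). Hence $\widehat{G}$ is exactly the group of homomorphisms $G \to S^1$ under pointwise multiplication. Second, in (iii) note that each $\mathrm{ev}(g)$ is itself a character of $\widehat{G}$; this is immediate from that pointwise group law.

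It is also worth being clear about what step (iii) buys you. The bare isomorphism $G \cong \widehat{\widehat{G}}$ already follows from (ii) applied twice: $\widehat{\widehat{G}} \cong \widehat{G} \cong G$. However, the paper really relies on the canonical identification via $\mathrm{ev}$. For example, it asserts $(K^\perp)^\perp = K$ in \cref{prop:facts-about-dual-subgroup}, and this need not hold if $\widehat{\widehat{G}}$ is identified with $G$ through an arbitrary isomorphism.

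Finally, if you prefer to avoid the structure theorem, the paper's own preliminaries suffice.
\begin{itemize}
\item Since all $d_\lambda = 1$, \cref{fct:regular-representation-irrep-decomposition} gives $|\widehat{G}| = \sum_\lambda d_\lambda^2 = |G|$. Applying the same count to $\widehat{G}$ gives $|\widehat{\widehat{G}}| = |\widehat{G}|$.
\item For $g \neq e_G$, \cref{fct:regular-representation-character-values} gives $\sum_\lambda \chi_\lambda(g) = 0$. So some $\chi_\lambda(g) \neq 1$, since otherwise the sum would equal $|\widehat{G}| \geq 1$. This is precisely the injectivity of $\mathrm{ev}$.
\end{itemize}
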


\noindent When working with finite abelian groups, we will occasionally use Pontryagin duality to identify $G$ with $\widehat{\widehat{G}}$.

\begin{definition}[Irreducibility]\label{def:irrep}
    A linear representation $R$ is \emph{irreducible} if there are no non-trivial subspaces of $V$ which are invariant under the action of every $R(g)$. We refer to irreducible representations as \emph{irreps}. A finite group $G$ has a finite number of irreps up to unitary equivalence. We denote a set of the labels of such irreps as $\hat{G}$. We often label irreps of $G$ by $\lambda$, and write $R_\lambda$ for the irrep with the label $\lambda$. Write $d_\lambda$ for the dimension of $R_\lambda$.
\end{definition}

\begin{definition}[Character]\label{def:character}
    The \emph{character} $\chi_R$ of a representation $R$ is the function $\chi_R: G \to \CC$ defined by $\chi_R (g) = \Tr(R(g))$. Write $\chi_\lambda$ for the character of an irrep labelled by $\lambda \in \hat{G}$.
\end{definition}

\begin{notation}
    For a subgroup $K \leq G$, write $\hat{G}[K] = \{\lambda \in \hat{G}: K \subseteq \ker R_\lambda\}$ for the set of irreps which act trivially on $K$. Since $\ker R_\lambda$ is always normal in $G$, $\hat{G}[K] = \hat{G}[\ncl(K)]$.
\end{notation}

\begin{definition}[Direct and Tensor Products]\label{def:direct-and-tensor-products-of-representations}
    Given two representations $R: G \to \mathcal{U}(V)$ and $R': G \to \mathcal{U}(V')$, we can define their \emph{direct product} $R \oplus R': G \to \mathcal{U}(V \oplus V')$ by $(R \oplus R')(g) = R(g) \oplus R'(g)$ for all $g \in G$, and their \emph{tensor product} $R \tp R': G \to \mathcal{U}(V \tp V')$ by $(R \tp R')(g) = R(g) \tp R'(g)$ for all $g \in G$. These are both clearly also representations.
\end{definition}

An important linear representation is the \emph{left regular representation} $R_\text{lreg}: G \to \mathcal{U}(\CC^G)$ which is defined linearly by $R_\text{lreg}(g) \ket{h} = \ket{g h}$ for all $g, h \in G$. Similarly, the \emph{right regular representation} $R_\text{rreg}: G \to \mathcal{U}(\CC^G)$ is defined linearly by $R_\text{rreg}(g) \ket{h} = \ket{h g^{-1}}$ for all $g, h \in G$. The left and right regular representations are unitarily equivalent.

\begin{fact}[Decomposition of Regular Representation]\label{fct:regular-representation-irrep-decomposition}
    The left and right regular representations of a finite group decompose into irreps as
    \begin{equation*}
        R_\text{lreg} \cong R_\text{rreg} \cong \bigoplus_{\lambda \in \hat{G}} R_\lambda^{\oplus d_\lambda}.
    \end{equation*}
    In particular, the character of both regular representations, $r_G: G \to \CC$, is given by $r_G (g) = \sum_{\lambda \in \hat{G}} d_\lambda \chi_\lambda (g)$.
\end{fact}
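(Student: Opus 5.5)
The plan is to realise both regular representations concretely via the Fourier transform, treating $R_\text{lreg}$ and $R_\text{rreg}$ together. First I will show they are unitarily equivalent. The unitary $U\ket{h} = \ket{h^{-1}}$ gives $U R_\text{lreg}(g) U^\dagger \ket{h} = U\ket{g h^{-1}} = \ket{h g^{-1}} = R_\text{rreg}(g)\ket{h}$, so it intertwines them. It therefore suffices to decompose $R_\text{lreg}$.

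Next I compute the character $r_G$. Since $R_\text{lreg}(g)$ is a permutation matrix, its trace counts fixed points. Now $gh = h$ holds if and only if $g = e_G$, so $r_G(e_G) = \abs{G}$ and $r_G(g) = 0$ for $g \neq e_G$.

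I then use the standard facts that characters of inequivalent irreps are orthonormal under $\langle \chi, \psi\rangle = \frac{1}{\abs{G}}\sum_g \overline{\chi(g)}\psi(g)$, and that the multiplicity of $R_\lambda$ in any representation $R$ is $\langle \chi_\lambda, \chi_R\rangle$. Both follow from Schur's lemma via averaging over the group. Applied to $R_\text{lreg}$, the multiplicity is
\[
\langle \chi_\lambda, r_G\rangle = \frac{1}{\abs{G}}\overline{\chi_\lambda(e_G)}\abs{G} = d_\lambda .
\]
Complete reducibility of unitary representations holds because orthogonal complements of invariant subspaces are invariant. Together with the multiplicity formula, this gives $R_\text{lreg} \cong \bigoplus_\lambda R_\lambda^{\oplus d_\lambda}$. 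Taking traces of this decomposition yields $r_G(g) = \sum_\lambda d_\lambda \chi_\lambda(g)$.

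The only genuinely nontrivial step is the orthogonality of characters, and it is standard. I would prove it by applying Schur's lemma to $\sum_g R_\lambda(g) A R_\mu(g)^\dagger$ for matrix units $A$, and would cite it rather than reprove it.
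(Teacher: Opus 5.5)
The paper gives no proof of this statement. It records it as a standard fact from representation theory (the neighbouring fact cites Serre), and your argument is the standard textbook proof. It is correct. The intertwiner $\ket{h}\mapsto\ket{h^{-1}}$ handles left versus right. The fixed-point count gives $r_G$, and character orthogonality (the trace of the paper's Schur orthogonality relation) gives multiplicity $\langle\chi_\lambda, r_G\rangle = d_\lambda$. Complete reducibility and taking traces then finish the job.

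One cosmetic point: your first sentence promises a realisation ``via the Fourier transform'', but nothing you do uses it, so drop that sentence. That route would be circular here anyway. Schur orthogonality only gives $\QFT\,\QFT^{\dagger} = I$, and upgrading this to unitarity needs $\sum_{\lambda\in\hat{G}} d_\lambda^2 = \abs{G}$, which is this very fact evaluated at $e_G$.
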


\begin{fact}[{\cite[Proposition 2.5]{Ser77representationTheory}}]\label{fct:regular-representation-character-values}
    $r_G (g) = 0$ for all $g \in G \setminus \{e_G\}$, and $r_G (e_G) = \abs{G}$, where $e_G$ is the identity element of $G$.
\end{fact}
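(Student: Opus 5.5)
We need to show that the regular character vanishes off the identity and equals $\abs{G}$ at the identity. Since $R_\text{lreg}$ and $R_\text{rreg}$ are unitarily equivalent, they have the same character, so it suffices to compute the trace of $R_\text{lreg}(g)$ directly in the computational basis $\{\ket{h}: h \in G\}$ of $\CC^G$.

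First, write the trace as a sum of diagonal entries:
\begin{equation*}
    r_G(g) = \Tr(R_\text{lreg}(g)) = \sum_{h \in G} \bra{h} R_\text{lreg}(g) \ket{h} = \sum_{h \in G} \braket{h | g h}.
\end{equation*}
Each term is $1$ if $gh = h$ and $0$ otherwise. By cancellation in the group, $gh = h$ holds if and only if $g = e_G$, and this condition does not depend on $h$. So when $g = e_G$ every term equals $1$, giving $r_G(e_G) = \abs{G}$. When $g \neq e_G$, $R_\text{lreg}(g)$ is a permutation matrix with no fixed points, every term is $0$, and $r_G(g) = 0$.

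The right regular representation gives the same result directly, since $hg^{-1} = h$ if and only if $g = e_G$. There is no real obstacle here. The only point needing care is that the character is basis-independent, which is what lets us compute in the computational basis. This also links the result to \cref{fct:regular-representation-irrep-decomposition}, giving the identity $\sum_\lambda d_\lambda \chi_\lambda(g) = \abs{G}\,\indic[g = e_G]$.
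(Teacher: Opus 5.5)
Your proof is correct, and it is the same argument as in the cited source (Serre, Proposition~2.5), which the paper invokes without proof. For $g \neq e_G$ the regular representation permutes the basis $\{\ket{h} : h \in G\}$ with no fixed points, so every diagonal entry vanishes, while $R_\text{lreg}(e_G)$ is the identity and has trace $|G|$.
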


\begin{fact}[Schur Orthogonality]\label{fct:schur-orthogonality}
    Let $\lambda, \mu \in \hat{G}$. Then for all $i, j \in [d_\lambda]$ and $k, l \in [d_\mu]$,
    \[
        \frac{1}{|G|} \sum_{g\in G} R_\lambda(g)_{ij}\overline{R_\mu(g)_{kl}} =
        \begin{cases}
            0 & \text{ if } \lambda \neq \mu, \\
            \frac{1}{d_\lambda} \delta_{ik}\delta_{jl} & \text{ if } \lambda = \mu.
        \end{cases}
    \]
\end{fact}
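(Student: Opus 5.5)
We prove Schur orthogonality, the last statement above, by the standard averaging argument based on Schur's lemma. Fix $\lambda,\mu\in\hat{G}$. For an arbitrary linear map $X: V_\mu \to V_\lambda$, define the averaged operator
\begin{equation*}
    T_X = \frac{1}{\abs{G}} \sum_{g\in G} R_\lambda(g)\, X\, R_\mu(g)^{\dagger}.
\end{equation*}
Reindexing the sum by $g \mapsto hg$ shows that $R_\lambda(h) T_X = T_X R_\mu(h)$ for every $h\in G$, so $T_X$ is an intertwiner.

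We then invoke Schur's lemma, which we will state and prove briefly if it is not already available. The kernel and image of an intertwiner are invariant subspaces, so by irreducibility (\cref{def:irrep}) $T_X$ is either zero or an isomorphism. This gives $T_X=0$ when $\lambda\neq\mu$, since the irreps in $\hat{G}$ are pairwise inequivalent. When $\lambda=\mu$, pick an eigenvalue $c$ of $T_X$ (one exists since $\CC$ is algebraically closed). Then $T_X - cI$ is a non-invertible intertwiner, hence zero, so $T_X = cI$. Taking traces and using cyclicity gives $\Tr T_X = \Tr X$, so $c=\Tr(X)/d_\lambda$.

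Next, specialise to $X = E_{jl}$, the matrix unit with $E_{jl}\ket{l} = \ket{j}$. Since the representations are unitary, $R_\mu(g)^\dagger_{lk} = \overline{R_\mu(g)_{kl}}$. The $(i,k)$ entry of $T_{E_{jl}}$ is therefore
\begin{equation*}
    \frac{1}{\abs{G}}\sum_{g} R_\lambda(g)_{ij}\,\overline{R_\mu(g)_{kl}}.
\end{equation*}
In the case $\lambda\neq\mu$ this entry vanishes. In the case $\lambda=\mu$ it equals $\frac{\Tr(E_{jl})}{d_\lambda}\delta_{ik} = \frac{1}{d_\lambda}\delta_{jl}\delta_{ik}$, which is exactly the claimed formula.

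The only delicate step is the $\lambda\neq\mu$ case. There one needs the labels in $\hat{G}$ to denote \emph{inequivalent} irreps, so that a nonzero intertwiner is impossible. One also needs $R_\lambda$ and $R_\mu$ to be realised as unitary matrices, so that $\dagger$ can be identified with $g^{-1}$. Both hold by the conventions of \cref{def:linear-representation} and \cref{def:irrep}. Everything else is routine index bookkeeping.
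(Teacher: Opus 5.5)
Your argument is correct. The paper gives no proof of this statement: it records Schur orthogonality as a standard Fact. There is therefore no competing route to compare against, and yours is the usual textbook proof (average $R_\lambda(g)XR_\mu(g)^\dagger$, apply Schur's lemma, then specialise $X$ to a matrix unit), with the index bookkeeping done right.

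Two small points would tighten it.

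First, the paper defines $\hat{G}$ as labels of irreps up to \emph{unitary} equivalence. Your Schur step, when $\lambda\neq\mu$, only shows that a nonzero $T_X$ would be an invertible intertwiner. To reach the contradiction, add one line. Unitarity gives that $T_X^\dagger T_X$ commutes with every $R_\mu(h)$, so $T_X^\dagger T_X = cI$ with $c>0$. Then $T_X/\sqrt{c}$ is a unitary intertwiner, so $R_\lambda$ and $R_\mu$ would be unitarily equivalent, contradicting $\lambda\neq\mu$.

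Second, the identity $(R_\mu(g)^\dagger)_{lk}=\overline{R_\mu(g)_{kl}}$ is just the definition of the adjoint and does not need unitarity. Unitarity is actually used in two other places:
\begin{itemize}
\item in the intertwining step, via $R_\mu(h^{-1}g)^\dagger = R_\mu(g)^\dagger R_\mu(h)$;
\item in the trace computation, via $\Tr(R_\lambda(g)XR_\lambda(g)^\dagger)=\Tr X$.
\end{itemize}
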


An important fact about projective representations of finite abelian groups is that the cocycle must be of a specific form:

\begin{fact}\label{fct:abelian-projective-representation-cocycle-form}
    Let $G$ be a finite abelian group and let $R: G \to \mathcal{U}(V)$ be a projective representation of $G$ with cocycle $\omega$. Then up to a \emph{gauge transformation} $f: G \to S^1$ (i.e. replacing $R(g)$ with $f(g) R(g)$), $\omega$ can be written as
    \begin{equation*}
        \omega(g, h) = \omega^{B(g, h)} \quad \forall g, h \in G,
    \end{equation*}
    where $\omega$ is a primitive $k$-th root of unity for some $k \in \NN$, and $B: G \times G \to \ZZ_k$ is a bilinear map, i.e. $B(g + g', h) = B(g, h) + B(g', h)$ and $B(g, h + h') = B(g, h) + B(g, h')$ for all $g, g', h, h' \in G$.

    Since $B$ is bilinear, we must have $0 = B(g, 0) = B(g, E \cdot h) = E \cdot B(g, h)$ for all $g, h \in G$, where $E = \exp(G)$ is the exponent of $G$. So $\omega(g, h)^E = 1$ for all $g, h \in G$. So by multiplying $B$ by a scalar if necessary, we can assume that $k = E$.
\end{fact}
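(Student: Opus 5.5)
Since $G$ is finite abelian, I will work with the standard structure of projective representations of abelian groups. The plan is to show the cocycle is cohomologous to a root of unity raised to a bilinear form. I will use the fact that for abelian groups, the commutator function $\beta(g,h) = \omega(g,h)/\omega(h,g)$ is a gauge-invariant alternating bicharacter, and that $H^2(G,S^1)$ is determined by it.

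\textbf{Step 1: gauge-invariance of $\beta$.} From $R(g)R(h)=\omega(g,h)R(gh)$ and $R(h)R(g)=\omega(h,g)R(gh)$ (using $gh=hg$), I get $R(g)R(h)R(g)^{-1}R(h)^{-1}=\beta(g,h)I$. A gauge change $R\mapsto fR$ leaves this commutator unchanged, so $\beta$ is well defined. It is also multiplicative in each argument: conjugating $R(h)R(h')$ by $R(g)$ gives $\beta(g,hh')=\beta(g,h)\beta(g,h')$, and symmetrically in the first argument. Moreover $\beta(g,g)=1$.

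\textbf{Step 2: choose a bilinear representative.} I decompose $G\cong \ZZ_{n_1}\times\cdots\times\ZZ_{n_r}$ with generators $e_1,\dots,e_r$, and write $g=\sum_i x_i e_i$. I then define an ``upper-triangular'' bilinear form
\[
\tilde\omega(g,h)=\prod_{i<j}\beta(e_j,e_i)^{x_j y_i}.
\]
This is well defined modulo the $n_i$, because $\beta(e_j,e_i)^{n_i}=\beta(e_j,n_ie_i)=1$, and likewise for $n_j$. It is a bicharacter, hence a 2-cocycle, and its commutator is $\beta$.

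\textbf{Step 3: show $\omega\tilde\omega^{-1}$ is a coboundary.} The ratio $\omega\tilde\omega^{-1}$ is a symmetric cocycle, since its commutator is trivial. A symmetric cocycle on an abelian group defines an abelian central extension by $S^1$. Because $S^1$ is divisible, that extension splits, so the symmetric cocycle is a coboundary $f(g)f(h)/f(gh)$. This supplies the gauge transformation.

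\textbf{Step 4: extract the root of unity.} All values $\beta(e_j,e_i)$ are roots of unity of order dividing $\gcd(n_i,n_j)$. Taking $k$ to be the lcm of these orders and $\omega$ a primitive $k$-th root, I write each value as $\omega^{b_{ji}}$. This gives $B(g,h)=\sum_{i<j}b_{ji}x_jy_i\in\ZZ_k$, which is bilinear.

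The final remark in the statement then follows. From $E\cdot B(g,h)=B(g,E h)=B(g,0)=0$, every value of $\omega^{B}$ is an $E$-th root of unity, so $k\mid E$. Rescaling $B$ by $E/k$ and taking $\omega$ to be a primitive $E$-th root gives $k=E$.

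\textbf{Main obstacle.} The hardest step is Step 3, namely that symmetric cocycles into $S^1$ are coboundaries. My first approach is the splitting argument via divisibility of $S^1$. The extension $\tilde G=S^1\times G$ with the twisted product is abelian, and $S^1$ is an injective $\ZZ$-module, so the projection $\tilde G\to G$ has a homomorphic section $g\mapsto (f(g)^{-1},g)$. Such a section yields the required $f$.
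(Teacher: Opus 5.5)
Your proof is correct, and it does more than the paper does. The paper states the normal form $\omega(g,h)=\omega^{B(g,h)}$ as a known fact with no proof. It only argues the final rescaling ($E\cdot B=0$, hence $\omega(g,h)^E=1$, hence one may take $k=E$), and you reproduce that part.

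Your Steps 1--4 prove the part the paper only cites. You use three ingredients:
\begin{itemize}
\item the gauge-invariant commutator bicharacter $\beta$;
\item an explicit upper-triangular bicharacter $\tilde\omega$ with the same commutator;
\item the vanishing of the symmetric class $\omega\tilde\omega^{-1}$, because abelian extensions of $G$ by the divisible group $S^1$ split (equivalently $\mathrm{Ext}(G,S^1)=0$).
\end{itemize}
This makes the argument self-contained. It also shows explicitly that only the alternating part $\beta$ is gauge-invariant, so $B$ can always be chosen upper-triangular in a fixed basis of $G$.

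Choosing $k$ as the lcm of the orders of the $\beta(e_j,e_i)$ has a further benefit: $k\mid E$ automatically, so the rescaling really is multiplication of $B$ by $E/k$. For an arbitrary $k$, as in the paper's phrasing, multiplication alone need not work. One would first note that $B$ takes values in the $E$-torsion of $\ZZ_k$, a copy of $\ZZ_{\gcd(k,E)}$.

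One cosmetic point concerns Step 3. With the twisted product $(a,g)(b,h)=(ab\,c(g,h),g+h)$ and section $g\mapsto(f(g)^{-1},g)$, you get $c(g,h)=f(g)f(h)/f(g+h)$. The gauge that turns $\omega$ into $\tilde\omega$ is then $R\mapsto f^{-1}R$ rather than $R\mapsto fR$, so keep the inversion convention consistent.
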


\subsection{Fourier analysis}

Let $G$ be a finite group. Recall that the set $\CC^G$ of functions $f : G \rightarrow \CC$ is a vector space over $\CC$ of dimension $\abs{G}$. Define the usual inner product $\gen{\cdot, \cdot}$ on $\CC^G$ by
\begin{equation*}
    \gen{f_1, f_2} \coloneq \frac{1}{\abs{G}} \sum_{g \in G} \overline{f_1 (g)} f_2(g).
\end{equation*}

\begin{definition}[Fourier Transform]
    Let $f: G \to \CC$. The \emph{Fourier transform} of $f$ at an irrep labelled by $\lambda \in \hat{G}$ is defined as
    \begin{equation*}
        \hat{f}(\lambda) \coloneq \frac{1}{\abs{G}} \sum_{g \in G} f(g) R_\lambda (g)^\dagger.
    \end{equation*}
    In particular, if $G$ is abelian, then $\hat{f}(\lambda) = \frac{1}{\abs{G}} \sum_{g \in G} f(g) \overline{\chi_\lambda (g)} = \gen{\chi_\lambda, f}$.
\end{definition}

\begin{definition}[Quantum Fourier Transform]
    The \emph{quantum Fourier transform} over a finite group $G$ is the unitary map $\QFT$ defined by
    \begin{equation*}
        \QFT \ket{g} = \sum_{\lambda \in \hat{G}} \sum_{i, j = 1}^{d_\lambda} \sqrt{\frac{d_\lambda}{|G|}} R_\lambda(g)_{ij} \ket{\lambda, i, j}.
    \end{equation*}
\end{definition}

\noindent Throughout this work, we assume that any group $G$ we consider admits a quantum Fourier transform which can be implemented by a $\polylog \abs{G}$-sized quantum circuit. This is true for all finite abelian groups \cite{CVD10quantumAlgorithmsAlgebraicProblems} as well as many commonly occurring non-abelian groups \cite{Bea97symmetricGroupQFT,MRS04QFT,MALHG24nonAbelianQFT}.

For the rest of this subsection, we assume $G$ to be a finite abelian group, and let $K$ be an arbitrary subgroup of $G$.

\begin{definition}\label{def:convolution}
    For $f_1, f_2 : G \rightarrow \CC$, the \textbf{convolution} of $f_1$ and $f_2$ is
    \begin{align*}
        f_1 * f_2 : G & \to \CC, \\
        g & \mapsto \EE_{h \in G} f_1 (h) f_2 (g - h).
    \end{align*}
\end{definition}

\begin{fact}[Convolution Theorem]\label{fct:convolution-theorem}
    For all $f_1, f_2 : G \rightarrow \CC$ and $\lambda \in \widehat{G}$, then
    \begin{equation*}
        \widehat{(f_1 * f_2)}(\lambda) = \widehat{f_1}(\lambda) \widehat{f_2}(\lambda).
    \end{equation*}
\end{fact}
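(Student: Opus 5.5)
We prove the Convolution Theorem (\cref{fct:convolution-theorem}) by direct computation, using only the definition of the Fourier transform on a finite abelian group, $\hat{f}(\lambda) = \frac{1}{\abs{G}} \sum_{g \in G} f(g) \overline{\chi_\lambda(g)}$, and the fact that every irrep of an abelian group is one-dimensional. A one-dimensional irrep is a homomorphism $G \to S^1$, so its character is multiplicative: $\chi_\lambda(a + b) = \chi_\lambda(a)\chi_\lambda(b)$. Conjugating gives $\overline{\chi_\lambda(a+b)} = \overline{\chi_\lambda(a)}\,\overline{\chi_\lambda(b)}$, which is the only structural input needed.

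First, expand the left-hand side using \cref{def:convolution}:
\begin{equation*}
    \widehat{(f_1 * f_2)}(\lambda) = \frac{1}{\abs{G}^2} \sum_{g \in G} \sum_{h \in G} f_1(h) f_2(g - h) \overline{\chi_\lambda(g)}.
\end{equation*}
Next, reindex the inner variables by substituting $k = g - h$. For fixed $h$, the map $g \mapsto g - h$ is a bijection of $G$, so the double sum over $(g,h)$ becomes a double sum over $(h,k) \in G \times G$ with $g = h + k$. Multiplicativity then gives $\overline{\chi_\lambda(h+k)} = \overline{\chi_\lambda(h)}\,\overline{\chi_\lambda(k)}$. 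The summand therefore factors as $\bigl(f_1(h)\overline{\chi_\lambda(h)}\bigr)\bigl(f_2(k)\overline{\chi_\lambda(k)}\bigr)$.

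Finally, the double sum splits into a product of two single sums. Each carries one factor $\frac{1}{\abs{G}}$, so together they give exactly $\widehat{f_1}(\lambda)\widehat{f_2}(\lambda)$. There is no genuine obstacle here. The only points needing care are matching the normalisations, with the $\EE_h$ in the convolution supplying the second $1/\abs{G}$, and using commutativity so that $g - h$ is well-defined and the characters are homomorphisms.
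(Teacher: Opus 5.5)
Your proof is correct: reindexing with $k = g - h$ and using $\overline{\chi_\lambda(h+k)} = \overline{\chi_\lambda(h)}\,\overline{\chi_\lambda(k)}$ factors the double sum, and the two factors of $1/\abs{G}$ (one from the transform, one from the $\EE_h$ in \cref{def:convolution}) match the paper's normalisations exactly. The paper states this as a standard fact without proof, and your direct computation is the standard argument for it.
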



\begin{definition}\label{def:dual-subgroup}
    The \textbf{dual subgroup} $K^{\perp}$ of $K$ in $\widehat{G}$ is defined as
    \begin{equation*}
        K^{\perp} = \{\lambda \in \widehat{G} : \chi_\lambda(k) = 1 \: \forall k \in K\} = \widehat{G}[K].
    \end{equation*}
\end{definition}

\noindent We recall a few standard facts about dual subgroups:

\begin{fact}\label{prop:facts-about-dual-subgroup}\leavevmode
    \begin{itemize}
        \item $K^{\perp}$ is a subgroup of $\widehat{G}$
        \item If we view $K^{\perp}$ as a subgroup of $G$ (via Pontryagin duality), then $(K^{\perp})^{\perp} = K$.
        \item For any subgroup $L \leq G$, $L \subseteq K \Longleftrightarrow K^{\perp} \subseteq L^{\perp}$ (inclusion-reversing).
        \item $\abs{K} \cdot \abs{K^{\perp}} = \abs{G}$.
    \end{itemize}
\end{fact}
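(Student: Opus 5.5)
Since this is a standard fact about dual subgroups of finite abelian groups, I will prove its four items in turn, working directly from the definition $K^\perp=\{\lambda\in\widehat G:\chi_\lambda(k)=1\ \forall k\in K\}$. For abelian $G$ every irrep is one-dimensional, so each $\chi_\lambda$ is a homomorphism $G\to S^1$. Also, $\widehat G$ is a group under pointwise multiplication of characters, with the trivial character as identity and $\overline{\chi_\lambda}$ as inverse.

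\emph{Subgroup.} This is immediate. The trivial character lies in $K^\perp$. If $\chi_\lambda,\chi_\mu$ are both trivial on $K$, then so are $\chi_\lambda\chi_\mu$ and $\overline{\chi_\lambda}$.

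\emph{Index identity $\abs{K}\cdot\abs{K^\perp}=\abs{G}$.} This is the step I expect to carry the real content, so I do it before the double-dual statement. The plan is to identify $K^\perp$ with $\widehat{G/K}$. A character of $G$ trivial on $K$ factors uniquely through the quotient $G\to G/K$. Conversely, every character of $G/K$ pulls back to a character of $G$ that is trivial on $K$. This gives a bijection $K^\perp\cong\widehat{G/K}$. For a finite abelian group $A$ we have $\abs{\widehat A}=\abs{A}$, since $A$ has exactly $\abs{A}$ one-dimensional irreps: by \cref{fct:regular-representation-irrep-decomposition}, $\sum_\lambda d_\lambda^2=\abs{A}$ with all $d_\lambda=1$. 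Hence $\abs{K^\perp}=\abs{G/K}=\abs{G}/\abs{K}$, using \cref{fct:lagranges-theorem}.

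\emph{Double dual.} By Pontryagin duality (\cref{fct:pontryagin-duality}), $g\in G$ corresponds to the evaluation character $\lambda\mapsto\chi_\lambda(g)$ on $\widehat G$. The inclusion $K\subseteq (K^\perp)^\perp$ is tautological: each $k\in K$ evaluates to $1$ on every $\lambda\in K^\perp$. Applying the index identity twice, once in $G$ and once in $\widehat G$ (which has size $\abs{G}$), gives
\[
\abs{(K^\perp)^\perp}=\abs{\widehat G}/\abs{K^\perp}=\abs{K}.
\]
Therefore the inclusion is an equality.

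\emph{Inclusion reversal.} If $L\subseteq K$, then any character trivial on $K$ is trivial on $L$, so $K^\perp\subseteq L^\perp$. Conversely, applying this forward direction in $\widehat G$ to $K^\perp\subseteq L^\perp$ yields $(L^\perp)^\perp\subseteq(K^\perp)^\perp$. By the double-dual item this is exactly $L\subseteq K$.

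The only nontrivial ingredient is the count $\abs{\widehat{G/K}}=\abs{G/K}$, together with the factorisation of characters through the quotient. Everything else follows formally from it.
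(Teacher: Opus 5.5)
The paper does not prove this statement. It recalls it as a standard fact about finite abelian groups, so there is no argument of the paper's to compare yours against. Your proof is correct and is the usual one. The identification $K^\perp \cong \widehat{G/K}$ together with $\abs{\widehat{A}} = \abs{A}$ gives the index identity. The double dual then follows from the tautological inclusion plus counting, and inclusion reversal follows formally from the double dual.

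One point needs tightening. \cref{fct:pontryagin-duality}, as the paper states it, only asserts some isomorphism $G \cong \widehat{\widehat{G}}$. Your double-dual step needs the specific evaluation map $g \mapsto (\lambda \mapsto \chi_\lambda(g))$ to be that isomorphism, and so does the statement itself. Under a non-canonical identification it can fail: take $G = \ZZ_2 \times \ZZ_2$ and $K = \{0\} \times \ZZ_2$, and compose evaluation with the coordinate swap; then $(K^\perp)^\perp$ corresponds to $\ZZ_2 \times \{0\}$. You can close this with the paper's own preliminaries:
\begin{itemize}
  \item Evaluation is a homomorphism.
  \item It is injective. For $g \neq e_G$, all $d_\lambda = 1$ gives $\sum_{\lambda} \chi_\lambda(g) = r_G(g) = 0$ by \cref{fct:regular-representation-character-values}, so some $\chi_\lambda(g) \neq 1$.
  \item It is therefore bijective by counting, since $\abs{\widehat{\widehat{G}}} = \abs{\widehat{G}} = \abs{G}$.
\end{itemize}
Your argument also never uses \cref{lmm:subgroup-orthogonality-relations}, which is good. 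The paper's proof of those relations invokes the inclusion-reversal and double-dual properties of this very fact, so deriving the fact from them would be circular.
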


\begin{fact}[Subgroup Orthogonality Relations]\label{lmm:subgroup-orthogonality-relations}
    For all $x \in G$, 
    \begin{equation*}
        \sum_{\lambda \in K^{\perp}} \chi_\lambda (x) = \begin{cases}
            \left|K^{\perp}\right| & \text{if } x \in K \\
            0 & \text{otherwise} .
        \end{cases}
    \end{equation*}
    Also, for any $\lambda \in \widehat{G}$,
    \begin{equation*}
        \sum_{g \in K} \chi_\lambda (g) = \begin{cases}
            \abs{K} & \text{if } \lambda \in K^{\perp} \\
            0 & \text{otherwise} .
        \end{cases}
    \end{equation*}
\end{fact}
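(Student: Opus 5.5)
We need to prove the Subgroup Orthogonality Relations (the last stated fact): the sum of $\chi_\lambda(x)$ over $\lambda \in K^\perp$, and the sum of $\chi_\lambda(g)$ over $g \in K$. We handle the second identity first, then deduce the first from it by Pontryagin duality.

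\textbf{Second identity.} Fix $\lambda \in \widehat{G}$. Since $G$ is abelian, $\chi_\lambda$ is a homomorphism $G \to S^1$, and so its restriction to $K$ is a character of $K$.

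If $\lambda \in K^\perp$, then by \cref{def:dual-subgroup} every term $\chi_\lambda(g)$ with $g \in K$ equals $1$, so the sum is $\abs{K}$.

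Otherwise, choose $k_0 \in K$ with $\chi_\lambda(k_0) \neq 1$. Translation $g \mapsto g + k_0$ is a bijection of $K$, so writing $T = \sum_{g \in K} \chi_\lambda(g)$ we get
\[
T = \sum_{g \in K} \chi_\lambda(g + k_0) = \chi_\lambda(k_0)\, T .
\]
Since $\chi_\lambda(k_0) \neq 1$, this forces $T = 0$.

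\textbf{First identity.} We view $K^\perp$ as a subgroup of $\widehat{G}$, which is itself a finite abelian group by \cref{fct:pontryagin-duality}. Under Pontryagin duality, $x \in G$ corresponds to the character $\lambda \mapsto \chi_\lambda(x)$ of $\widehat{G}$. So the sum $\sum_{\lambda \in K^\perp} \chi_\lambda(x)$ is exactly the second identity applied to the group $\widehat{G}$, its subgroup $K^\perp$, and the dual element $x$. That identity gives
\[
\sum_{\lambda \in K^\perp} \chi_\lambda(x) =
\begin{cases}
\abs{K^\perp} & \text{if } x \in (K^\perp)^\perp, \\
0 & \text{otherwise}.
\end{cases}
\]
Finally, $(K^\perp)^\perp = K$ by \cref{prop:facts-about-dual-subgroup}.

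\textbf{Main obstacle.} The only delicate step is keeping track of identifications in the first identity. One must check that the condition "$x$ is trivial on $K^\perp$", i.e.\ $\chi_\lambda(x) = 1$ for all $\lambda \in K^\perp$, is precisely membership of $x$ in $(K^\perp)^\perp$ under the identification $G \cong \widehat{\widehat{G}}$. This is exactly how the double-dual fact is phrased, so the argument goes through.
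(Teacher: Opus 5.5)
Your proof is correct and is essentially the paper's. The second identity uses the same translation argument. For the first, you apply that argument to $\widehat{G}$, the subgroup $K^\perp$, and the evaluation character $\lambda \mapsto \chi_\lambda(x)$, and this unfolds to exactly the paper's step: shift the sum by some $\lambda' \in K^\perp$ with $\chi_{\lambda'}(x) \neq 1$. Both versions rest on $(K^\perp)^\perp = K$ under the canonical evaluation identification $G \cong \widehat{\widehat{G}}$. The paper invokes it through the inclusion-reversing property to produce $\lambda'$, and you rightly flag that this identification must be the evaluation map.
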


\begin{proof}
    The first equality follows immediately if $x \in K$, since then $\chi_\lambda (x) = 1$ for all $\lambda \in K^{\perp}$. So suppose $x \notin K$, then there is some $\lambda' \in K^{\perp}$ such that $\chi_{\lambda'}(x) \neq 1$ (if not, then $K^{\perp} \subseteq \gen{x, K}^{\perp}$, so $\gen{x, K} \subseteq K$, but $x \notin K$). Now
    \begin{align*}
        \sum_{\lambda \in K^{\perp}} \chi_\lambda(x) = \sum_{\lambda \in K^{\perp}} (\chi_\lambda \chi_{\lambda'}) (x) = \chi_{\lambda'}(x) \sum_{\lambda \in K^{\perp}} \chi_\lambda(x),
    \end{align*}
    hence $\sum_{\lambda \in K^{\perp}} \chi_\lambda(x) = 0$.

    The proof is similar for the second equality: if $\lambda \in K^{\perp}$, then $\chi_\lambda (g) = 1$ for all $g \in K$, so the result follows immediately. If $\lambda \notin K^{\perp}$, then there must be some $h \in K$ such that $\chi_\lambda(h) \neq 1$. Now
    \begin{align*}
        \sum_{g \in K} \chi_\lambda(g) & = \sum_{g \in K} \chi_\lambda(h + g) = \chi_\lambda(h) \sum_{g \in K} \chi_\lambda(g),
    \end{align*}
    so $\sum_{g \in K} \chi_\lambda (g) = 0$.
\end{proof}

\begin{fact}[Poisson Summation Identity]\label{lmm:poisson-summation-identity} 
    Let $f : G \rightarrow \CC$. Then
    \begin{equation*}
        \frac{1}{\abs{K}} \sum_{g \in K} f(g) = \EE_{g \in K} f(g) = \sum_{\lambda \in K^\perp} \widehat{f}(\lambda).
    \end{equation*}
\end{fact}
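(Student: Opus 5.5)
We prove the Poisson Summation Identity by expanding $\widehat{f}(\lambda)$ from its definition and swapping the order of summation. In the abelian setting, $\widehat{f}(\lambda) = \frac{1}{\abs{G}} \sum_{g \in G} f(g) \overline{\chi_\lambda(g)}$, so
\begin{equation*}
    \sum_{\lambda \in K^\perp} \widehat{f}(\lambda) = \frac{1}{\abs{G}} \sum_{g \in G} f(g) \sum_{\lambda \in K^\perp} \overline{\chi_\lambda(g)}.
\end{equation*}
The plan is to evaluate the inner character sum using the first part of the Subgroup Orthogonality Relations (\cref{lmm:subgroup-orthogonality-relations}).

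Since $\chi_\lambda(g)$ is a root of unity, $\overline{\chi_\lambda(g)} = \chi_\lambda(-g)$. Moreover, $-g \in K$ exactly when $g \in K$, because $K$ is a subgroup. Alternatively, one can simply conjugate the real-valued identity in \cref{lmm:subgroup-orthogonality-relations}, whose right-hand side is real. Either way,
\begin{equation*}
    \sum_{\lambda \in K^\perp} \overline{\chi_\lambda(g)} = \abs{K^\perp} \, \indic[g \in K].
\end{equation*}
Substituting this in, the right-hand side becomes
\begin{equation*}
    \frac{\abs{K^\perp}}{\abs{G}} \sum_{g \in K} f(g).
\end{equation*}

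To finish, we apply the fact $\abs{K} \cdot \abs{K^\perp} = \abs{G}$ from \cref{prop:facts-about-dual-subgroup}. This gives $\abs{K^\perp}/\abs{G} = 1/\abs{K}$, which is exactly the left-hand side, and that equals $\EE_{g \in K} f(g)$ by definition.

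There is no real obstacle here. The only point needing care is the complex conjugate in the definition of the Fourier transform: we must check that the orthogonality relation still applies once the character is conjugated, which the inversion argument above handles.
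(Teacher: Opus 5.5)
Your proof is correct and follows the same route as the paper: expand $\widehat{f}(\lambda)$, swap the order of summation, evaluate the inner character sum with the Subgroup Orthogonality Relations, and finish with $\abs{K}\cdot\abs{K^\perp} = \abs{G}$. Your explicit handling of the complex conjugate is slightly more careful than the paper's own computation, which silently writes $\chi_\lambda(g)$ in place of $\overline{\chi_\lambda(g)}$; this is harmless, since $K^\perp$ is closed under inversion and the character sum is real.
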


\begin{proof}
    We have
    \begin{align*}
        \sum_{\lambda \in K^{\perp}} \widehat{f}(\lambda) & = \sum_{\lambda \in K^{\perp}} \frac{1}{\abs{G}} \sum_{g \in G} \chi_\lambda (g) f(g) \\
        & = \frac{1}{\abs{G}} \sum_{g \in G} f(g) \sum_{\lambda \in K^\perp} \chi_\lambda (g) \\
        & = \frac{1}{\abs{G}} \sum_{g \in G} f(g) \cdot \abs{K^\perp} \indic_K (g) & \quad \text{by \cref{lmm:subgroup-orthogonality-relations}} \\
        & = \frac{\left|K^\perp\right|}{\abs{G}} \sum_{g \in K} f(g),
    \end{align*}
    and the result follows since $\abs{K} \cdot \abs{K^\perp} = \abs{G}$.
\end{proof}

\noindent A version of the Poisson summation identity also holds when summing over the coset of a dual subgroup:

\begin{lemma}[Poisson Summation Identity on Cosets]\label{lmm:poisson-summation-identity-on-cosets} 
    Let $f : G \rightarrow \CC$. Then
    \begin{equation*}
        \sum_{\lambda \in \lambda' + K^\perp} \widehat{f}(\lambda) = \EE_{g \in K} \overline{\chi_{\lambda'} (g)} f(g).
    \end{equation*}
\end{lemma}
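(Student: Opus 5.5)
The plan is to reduce to the ordinary Poisson summation identity (\cref{lmm:poisson-summation-identity}) by twisting $f$ by a character. Define $f'(g) = \overline{\chi_{\lambda'}(g)} f(g)$. By definition,
\begin{equation*}
    \widehat{f'}(\mu) = \frac{1}{\abs{G}} \sum_{g \in G} \overline{\chi_{\mu}(g)}\, \overline{\chi_{\lambda'}(g)} f(g) = \frac{1}{\abs{G}} \sum_{g \in G} \overline{\chi_{\mu + \lambda'}(g)} f(g) = \widehat{f}(\mu + \lambda').
\end{equation*}
This uses the fact that characters of an abelian group multiply, $\chi_\mu \chi_{\lambda'} = \chi_{\mu+\lambda'}$, which is the group law on $\widehat{G}$.

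Next, apply \cref{lmm:poisson-summation-identity} to $f'$. This gives $\EE_{g \in K} f'(g) = \sum_{\mu \in K^\perp} \widehat{f'}(\mu) = \sum_{\mu \in K^\perp} \widehat{f}(\lambda' + \mu)$. Reindexing by $\lambda = \lambda' + \mu$, which runs bijectively over the coset $\lambda' + K^\perp$, yields exactly the claimed identity.

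There is no real obstacle; the only care needed is with conventions. The definition of the Fourier transform uses $R_\lambda(g)^\dagger$, i.e.\ $\overline{\chi_\lambda(g)}$ in the abelian case. However, the displayed proof of \cref{lmm:poisson-summation-identity} writes $\chi_\lambda(g)$. Under either convention the twist must be chosen so that it shifts the Fourier variable by $+\lambda'$; with the conjugated convention, $\overline{\chi_{\lambda'}}$ is the right twist, matching the statement.

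Since $K^\perp$ is closed under negation, the identity $\sum_{\lambda \in K^\perp} \chi_\lambda(g) = \abs{K^\perp}\indic_K(g)$ is insensitive to conjugation. So the untwisted Poisson identity holds under either convention, and it can be cited as is. Alternatively, one can verify the result directly: expand $\sum_{\lambda \in \lambda'+K^\perp} \widehat{f}(\lambda)$, pull out $\overline{\chi_{\lambda'}(g)}$, and use \cref{lmm:subgroup-orthogonality-relations} together with $\abs{K}\abs{K^\perp} = \abs{G}$.
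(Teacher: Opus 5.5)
Your proof is correct and is essentially the paper's argument. The paper does the direct expansion you give as your alternative: it writes out $\widehat{f}(\lambda+\lambda')$, pulls out $\overline{\chi_{\lambda'}(g)}$, and applies the subgroup orthogonality relations with $\abs{K}\abs{K^\perp}=\abs{G}$. Your main route instead packages the same computation as the modulation identity $\widehat{f'}(\mu)=\widehat{f}(\mu+\lambda')$ followed by the untwisted Poisson identity, and your remark that the conjugation convention is harmless because $K^\perp$ is closed under negation is also right.
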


\begin{proof}
    We have
    \begin{align*}
        \sum_{\lambda \in K^\perp} \widehat{f}(\lambda + \lambda') & = \sum_{\lambda \in K^\perp} \frac{1}{\abs{G}} \sum_{g \in G} \overline{\chi_\lambda (g)} \overline{\chi_{\lambda'} (g)} f(g) \\
        & = \frac{1}{\abs{G}} \sum_{g \in G} \overline{\chi_{\lambda'} (g)} f(g) \sum_{\lambda \in K^\perp} \overline{\chi_\lambda (g)} \\
        & = \frac{1}{\abs{G}} \sum_{g \in G} \overline{\chi_{\lambda'} (g)} f(g) \cdot \abs{K^\perp} \indic_K (g) & \quad \text{by \cref{lmm:subgroup-orthogonality-relations}} \\
        & = \frac{\left|K^\perp\right|}{\abs{G}} \sum_{g \in K} \overline{\chi_{\lambda'} (g)} f(g),
    \end{align*}
    and the result follows since $\abs{K} \cdot \abs{K^\perp} = \abs{G}$.
\end{proof}

\subsection{Fourier sampling and Bose symmetry learning}

Recall the Fourier sampling primitive from \cite{HEC25abelianStateHSP, BGTW25stateHSP}:

\begin{algorithm}
\caption{Fourier Sample}\label{alg:state-fourier-sample}
\begin{algorithmic}[1]
    \State Starting with the state $\ket{0} \ket{\psi}$, apply the inverse QFT over $G$ to obtain $\frac{1}{\sqrt{\abs{G}}} \sum_{x \in G} \ket{x} \ket{\psi}$.
    \State Apply $\ctrl{R}$, giving $\frac{1}{\sqrt{\abs{G}}} \sum_{x \in G} \ket{x} \tp R(x) \ket{\psi}$.
    \State Apply the quantum Fourier transform over $G$ to the first register.
    \State Measure the irrep label register $\ket{\lambda}$ in the computational basis.
\end{algorithmic}
\end{algorithm}

\begin{fact}[{\cite{BGTW25stateHSP,HEC25abelianStateHSP}}]\label{lmm:state-fourier-sampling-output-distribution}
    For abelian $G$, \cref{alg:state-fourier-sample} implements the projective measurement $\{\Pi_\lambda : \lambda \in \widehat{G}\}$, where $\Pi_\lambda = \frac{1}{\abs{G}} \sum_{x \in G} \overline{\chi_{\lambda} (x)} R(x)$, i.e. the output distribution $q_{\rho, R}: \widehat{G} \rightarrow [0, 1]$ of \cref{alg:state-fourier-sample} performed on a mixed state $\rho$ is
    \begin{equation*}
        \lambda \mapsto \frac{1}{\abs{G}} \sum_{x \in G} \chi_{\lambda} (x) \Tr(R(x) \rho)
    \end{equation*}
    Note that we can perform the transformation $\lambda \mapsto -\lambda$ on the output of \cref{alg:state-fourier-sample}. This means that the output distribution becomes
    \begin{equation}
        q_{\rho, R} (\lambda) = \frac{1}{\abs{G}} \sum_{x \in G} \chi_{-\lambda} (x) \Tr(R(x) \rho) = \frac{1}{\abs{G}} \sum_{x \in G} \overline{\chi_\lambda (x)} p_{\rho, R} (x) = \hat{p}_{\rho, R} (\lambda) \label{eq:fourier-sampling-output-distribution},
    \end{equation}
    where $p_{\rho, R} : G \rightarrow \CC$ is the function defined by $p_{\rho, R}(x) = \Tr(R(x) \rho)$. Note that $p_{\rho, R} (x)$ gives a measure of how symmetric $\rho$ is under $R(x)$.
\end{fact}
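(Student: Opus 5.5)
The statement to prove is the Fact: for abelian $G$, \cref{alg:state-fourier-sample} implements the projective measurement $\{\Pi_\lambda\}$ with $\Pi_\lambda = \frac{1}{\abs{G}}\sum_{x}\overline{\chi_\lambda(x)}R(x)$, and the output distribution is $\hat{p}_{\rho,R}$ after relabelling $\lambda\mapsto-\lambda$.

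The approach is to track the state through the circuit and read off the measurement operators. I first prove the claim for a pure input $\ket{\psi}$, and then extend to mixed $\rho$ by linearity (purify, or write $\rho$ as a convex combination of pure states).

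\textbf{Step 1: the state before measurement.} After the inverse QFT and $\ctrl{R}$ the state is $\frac{1}{\sqrt{\abs{G}}}\sum_x \ket{x}\tp R(x)\ket{\psi}$. For abelian $G$ all irreps are one-dimensional, so the QFT sends $\ket{x}\mapsto \frac{1}{\sqrt{\abs{G}}}\sum_\lambda \chi_\lambda(x)\ket{\lambda}$. The state before measurement is therefore
\begin{equation*}
\sum_{\lambda}\ket{\lambda}\tp \frac{1}{\abs{G}}\sum_x \chi_\lambda(x)R(x)\ket{\psi}.
\end{equation*}

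\textbf{Step 2: identify the measurement operators.} Measuring $\lambda$ applies the operator $P_\lambda=\frac{1}{\abs{G}}\sum_x\chi_\lambda(x)R(x)$ to the second register. By the convention in the statement this is $\Pi_{\lambda'}$ with the conjugate character, $\lambda'=-\lambda$; the conjugation is harmless because $\overline{\chi_\lambda}=\chi_{-\lambda}$. It remains to check that the family $\{P_\lambda\}$ is a projective measurement:
\begin{itemize}
\item \emph{Hermitian:} $R(x)^\dagger=R(-x)$ and $\overline{\chi_\lambda(x)}=\chi_\lambda(-x)$, so reindexing $x\mapsto -x$ gives $P_\lambda^\dagger=P_\lambda$.
\item \emph{Orthogonal idempotents:} expand $P_\lambda P_\mu=\frac{1}{\abs{G}^2}\sum_{x,y}\chi_\lambda(x)\chi_\mu(y)R(x+y)$. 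Substitute $z=x+y$ and use the character orthogonality $\frac{1}{\abs{G}}\sum_x\chi_\lambda(x)\overline{\chi_\mu(x)}=\delta_{\lambda\mu}$ (Schur orthogonality, \cref{fct:schur-orthogonality}) to get $P_\lambda P_\mu=\delta_{\lambda\mu}P_\lambda$.
\item \emph{Completeness:} $\sum_\lambda\chi_\lambda(x)=\abs{G}\,\delta_{x,0}$ (\cref{fct:regular-representation-character-values}), so $\sum_\lambda P_\lambda=R(0)=I$.
\end{itemize}
This is the only step with any real content, and it is a standard computation; it is the one place where linearity of $R$ and commutativity of $G$ are actually used.

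\textbf{Step 3: the output distribution.} The probability of outcome $\lambda$ is $\Tr(P_\lambda\rho)=\frac{1}{\abs{G}}\sum_x\chi_\lambda(x)\Tr(R(x)\rho)$, as stated. Relabelling $\lambda\mapsto-\lambda$ and using $\chi_{-\lambda}=\overline{\chi_\lambda}$ turns this into $\frac{1}{\abs{G}}\sum_x\overline{\chi_\lambda(x)}\,p_{\rho,R}(x)$, which by the definition of the Fourier transform for abelian $G$ is $\hat{p}_{\rho,R}(\lambda)$, giving \eqref{eq:fourier-sampling-output-distribution}.
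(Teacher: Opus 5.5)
Your proof is correct and follows essentially the same route as the paper's. Both track the pure state through the circuit, collapse the resulting double sum over $G$ using that $R$ and $\chi_\lambda$ are homomorphisms of an abelian group, and extend to mixed states by linearity.

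The difference is only in packaging. The paper expands $\Norm{P_\lambda\ket{\psi}}^2$ directly. You instead establish the operator identities $P_\lambda^\dagger = P_\lambda$, $P_\lambda P_\mu = \delta_{\lambda\mu}P_\lambda$ and $\sum_\lambda P_\lambda = I$. This actually justifies the ``projective measurement'' claim in the statement, which the paper's probability computation leaves implicit. It also handles the $\lambda$ versus $-\lambda$ labelling cleanly.
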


When working with abelian groups, we assume that the transformation $\lambda \mapsto -\lambda$ is always applied after Fourier sampling so that the output distribution of the procedure is as in \cref{eq:fourier-sampling-output-distribution}. This has the nicer interpretation of being the Fourier transform of $p_{\rho, R}$.

\begin{proof}
    First consider when the input $\rho$ is a pure state $\ketbra{\psi}{\psi}$. After step 3 of \cref{alg:state-fourier-sample}, the state is
    \begin{equation}
        \frac{1}{\abs{G}} \sum_{g \in G} \sum_{\lambda \in \widehat{G}} \chi_\lambda (g) \ket{\lambda} \tp R(g) \ket{\psi} = \frac{1}{\abs{G}} \sum_{\lambda \in \widehat{G}} \ket{\lambda} \tp \left(\sum_{g \in G} \overline{\chi_\lambda (g)} R(g) \ket{\psi}\right).
    \end{equation}
    Now the probability of measuring $\lambda$ is
    \begin{align}
        q_{\psi, R} (\lambda) & = \frac{1}{\abs{G}^{2}} \Norm{\sum_{g \in G} \chi_\lambda (g) R(g) \ket{\psi}}^{2} \\
        & = \frac{1}{\abs{G}^{2}} \sum_{g, h \in G}\chi_{\lambda} (g) \chi_\lambda (h) \bra{\psi} R(g)^{\dagger} R(h) \ket{\psi} \\
        & = \frac{1}{\abs{G}^{2}} \sum_{g, h \in G} \chi_\lambda (-g) \chi_\lambda (h) \braket{\psi | R(-g) R(h) | \psi} \\
        & = \frac{1}{\abs{G}^{2}} \sum_{g, h \in G} \chi_\lambda (h - g) \braket{\psi | R(h - g) | \psi} \\
        & = \frac{1}{\abs{G}} \sum_{g \in G} \chi_\lambda (g) \braket{\psi | R(g) | \psi},
    \end{align}
    where the third line follows from the fact that $R$ is unitary
    representation, the fourth line follows from the fact $\chi_{\lambda}$ and $R$ are homomorphisms and $G$ is abelian, and the fifth line follows from doubling summing over the group. This shows the result for pure states; the result for general mixed states follows by linearity.
\end{proof}

\begin{fact}\label{lmm:fourier-sampling-distribution-mass-on-subgroup-dual}
    Let $G$ be a abelian. For a subgroup $K < G$, we have
    \begin{equation*}
        q_{\rho, R} (K^\perp) \coloneqq \sum_{\lambda \in K^\perp} q_{\rho, R} (\lambda) = \EE_{g \in K} p_{\rho, R} (g)
    \end{equation*}
\end{fact}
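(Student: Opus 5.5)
This is a direct consequence of \cref{lmm:state-fourier-sampling-output-distribution} combined with the Poisson Summation Identity (\cref{lmm:poisson-summation-identity}). The plan is to sum the output distribution over $K^\perp$ and recognise the result as a Poisson sum for $f = p_{\rho, R}$.

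First, by \cref{eq:fourier-sampling-output-distribution}, after the relabelling $\lambda \mapsto -\lambda$ that we always apply in the abelian case, the output distribution is
\begin{equation*}
    q_{\rho, R}(\lambda) = \hat{p}_{\rho, R}(\lambda), \qquad p_{\rho, R}(x) = \Tr(R(x)\rho).
\end{equation*}
Summing over $\lambda \in K^\perp$ therefore gives
\begin{equation*}
    q_{\rho, R}(K^\perp) = \sum_{\lambda \in K^\perp} \hat{p}_{\rho, R}(\lambda).
\end{equation*}
Second, applying \cref{lmm:poisson-summation-identity} with $f = p_{\rho, R}$ rewrites the right-hand side as $\EE_{g \in K} p_{\rho, R}(g)$, which is the claim.

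The only point needing care is consistency of sign conventions. The Fourier transform is defined with $\overline{\chi_\lambda}$, whereas the proof of \cref{lmm:poisson-summation-identity} is written with $\chi_\lambda$. This does not matter, because $K^\perp$ is a subgroup of $\widehat{G}$ and is therefore closed under $\lambda \mapsto -\lambda$. Equivalently, $\sum_{\lambda \in K^\perp} \overline{\chi_\lambda(g)} = \sum_{\lambda \in K^\perp} \chi_\lambda(g) = \abs{K^\perp}\indic_K(g)$ by \cref{lmm:subgroup-orthogonality-relations}, so summing over $K^\perp$ gives the same result with or without the relabelling.

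For robustness I would also give the direct computation. Interchange the sums to get $\frac{1}{\abs{G}}\sum_{g} p_{\rho,R}(g)\sum_{\lambda\in K^\perp}\overline{\chi_\lambda(g)}$. Then use the orthogonality relation to reduce this to $\frac{\abs{K^\perp}}{\abs{G}}\sum_{g\in K}p_{\rho,R}(g)$. Finally, use $\abs{K}\abs{K^\perp}=\abs{G}$ from \cref{prop:facts-about-dual-subgroup} to turn this into the average over $K$. There is no real obstacle beyond this bookkeeping.
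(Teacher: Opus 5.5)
Your proposal is correct and takes essentially the same route as the paper, which obtains the result directly from the Fourier sampling output formula together with the subgroup orthogonality relations. Your direct computation is exactly that argument, and your Poisson summation route is the same computation packaged as a lemma; your remark on the sign convention ($K^\perp$ is closed under negation) correctly handles the one bookkeeping point.
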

\begin{proof}
    This follows from \cref{lmm:state-fourier-sampling-output-distribution} and the subgroup orthogonality relations.
\end{proof}

Below we recall the mixed state version of the State Hidden Subgroup Problem (\prb{StateHSP}) \cite{BGTW25stateHSP,HEC25abelianStateHSP}. Given that a key focus of this paper is on learning different kinds of symmetries, we find it more appropriate to refer to this problem as \emph{Bose symmetry learning}, which captures the specific type of symmetry that we are interested in here.

\begin{problem}[Bose Symmetry Learning (\prb{BoseSL})]\label{prb:state-hsp}\leavevmode
    \begin{description}[style=standard]
        \item[Input] Copies of a state $\rho \in \densmats{\mathcal{H}}$.
        \item[Promise] There is a finite group $G$, a unitary linear representation $R: G \rightarrow \mathcal{U}(\mathcal{H})$, a subgroup $S \leq G$, and $0 \leq \epsilon < 1$ such that: 
        \begin{itemize}
            \item $\rho$ possesses \emph{Bose symmetry} under the action of $S$:
            \begin{equation*}
                \Tr(R(g) \rho) = 1 \quad \forall g \in S .
            \end{equation*}
            \item For all $g \notin S$, $\rho$ is $\epsilon$-far from possessing Bose symmetry under the action of $g$:
            \begin{equation*}
                \abs{\Tr(R(g) \rho)} \leq 1 - \epsilon \quad \forall g \in G \setminus S .
            \end{equation*}
        \end{itemize}
        \item[Task] Find (a set of generators of) the hidden symmetry subgroup $S$.
    \end{description}
    Write such an instance $\mathcal{IB}$ of \prb{BoseSL} as $\mathcal{IB} = \prb{BoseSL}(G, R, \rho)$. Write $\HSS(\mathcal{IB})$ for the associated \emph{hidden symmetry subgroup} $S$ and $\MASD(\mathcal{IB})$ for the associated \emph{minimum asymmetry distance} $\epsilon$.
\end{problem}

\noindent To show an efficient quantum algorithm for \prb{AbelianBoseSL}, \cite{HEC25abelianStateHSP} use the following two facts which are corollaries of \cref{lmm:fourier-sampling-distribution-mass-on-subgroup-dual}:
\begin{fact}[{\cite[Equation 22]{HEC25abelianStateHSP}}]\label{fct:fourier-sampling-distribution-supported-on-dual-of-symmetry-subgroup}
    The distribution $q_{\rho, R}$ is supported on the dual subgroup $S^\perp$, i.e. $q_{\rho, R} (S^\perp) = 1$.
\end{fact}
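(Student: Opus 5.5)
We deduce the claim $q_{\rho,R}(S^\perp)=1$ directly from \cref{lmm:fourier-sampling-distribution-mass-on-subgroup-dual}, taking the subgroup $K$ there to be the hidden symmetry subgroup $S$. That fact gives
\begin{equation*}
    q_{\rho,R}(S^\perp)=\sum_{\lambda\in S^\perp} q_{\rho,R}(\lambda)=\EE_{g\in S} p_{\rho,R}(g)=\EE_{g\in S}\Tr(R(g)\rho).
\end{equation*}
The promise of \prb{BoseSL} (\cref{prb:state-hsp}) says that $\Tr(R(g)\rho)=1$ for every $g\in S$. Every term of the average is therefore $1$, so the average equals $1$.

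This shows that the total probability assigned by $q_{\rho,R}$ to $S^\perp$ is $1$. Since $q_{\rho,R}$ is a probability distribution on $\widehat{G}$, as noted in \cref{lmm:state-fourier-sampling-output-distribution}, it has nonnegative values summing to $1$. Hence $q_{\rho,R}(\lambda)=0$ for all $\lambda\notin S^\perp$, i.e.\ the distribution is supported on $S^\perp$.

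Two minor points need checking. First, \cref{lmm:fourier-sampling-distribution-mass-on-subgroup-dual} is stated for $K<G$; if $S=G$ the identity still holds, because the same Poisson-summation computation (\cref{lmm:poisson-summation-identity}) applies to any subgroup. Second, the sign convention must match: we use the distribution $\hat p_{\rho,R}$ from \cref{eq:fourier-sampling-output-distribution}, obtained after the relabelling $\lambda\mapsto-\lambda$. Since $S^\perp$ is a subgroup, it is closed under negation, so either convention gives the same conclusion. There is no real obstacle; the argument is a direct specialisation of the earlier fact.
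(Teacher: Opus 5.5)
Your proposal is correct and follows the paper's intended route: the paper presents this fact as a direct corollary of \cref{lmm:fourier-sampling-distribution-mass-on-subgroup-dual}, applied with $K = S$ and combined with the Bose-symmetry promise $\Tr(R(g)\rho) = 1$ for all $g \in S$. Your remarks on the case $S = G$ and on the sign convention are harmless extra care and do not change the argument.
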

\begin{fact}[{\cite[Lemma 3]{HEC25abelianStateHSP}}]\label{fct:fourier-sampling-distribution-anti-concentration}
    For any subgroup $K > S$, the distribution $q_{\rho, R}$ is anti-concentrated on $K^\perp$, i.e. $q_{\rho, R} (K^\perp) \leq 1 - \epsilon / 2$.
\end{fact}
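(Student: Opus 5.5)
We prove \cref{fct:fourier-sampling-distribution-anti-concentration} directly from \cref{lmm:fourier-sampling-distribution-mass-on-subgroup-dual}, which gives $q_{\rho,R}(K^\perp) = \EE_{g \in K} p_{\rho,R}(g)$. Note that $q_{\rho,R}(K^\perp)$ is a sum of probabilities, so it is real and nonnegative. Hence it equals the real part of this average,
\begin{equation*}
    q_{\rho,R}(K^\perp) = \frac{1}{\abs{K}} \sum_{g \in K} \mathrm{Re}\, \Tr(R(g)\rho).
\end{equation*}
We bound this average by splitting $K$ into its elements inside and outside $S$. Here we read $K > S$ as $S \lneq K$, i.e.\ $S$ is a proper subgroup of $K$.

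For $g \in S \cap K = S$, the promise gives $\Tr(R(g)\rho) = 1$, so each such term contributes exactly $1$. For $g \in K \setminus S$, the promise gives $\abs{\Tr(R(g)\rho)} \le 1-\epsilon$, so $\mathrm{Re}\,\Tr(R(g)\rho) \le 1-\epsilon$. Therefore
\begin{equation*}
    q_{\rho,R}(K^\perp) \le \frac{\abs{S}}{\abs{K}} + \Bigl(1 - \frac{\abs{S}}{\abs{K}}\Bigr)(1-\epsilon) = 1 - \epsilon\Bigl(1 - \frac{\abs{S}}{\abs{K}}\Bigr).
\end{equation*}

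The only substantive step is showing that a constant fraction of $K$ lies outside $S$. By Lagrange's theorem (\cref{fct:lagranges-theorem}), $\abs{S}$ divides $\abs{K}$. Since $S \neq K$, the index satisfies $[K:S] \ge 2$, so $\abs{S}/\abs{K} \le 1/2$. Substituting this into the bound gives $q_{\rho,R}(K^\perp) \le 1 - \epsilon/2$, as claimed.

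Nothing in this argument uses that $S$ is exactly the hidden subgroup beyond the two promise conditions, and the required ingredients are already established earlier in the paper. So I do not expect any real obstacle; the one point to be careful about is justifying that the average is real, which holds because it equals a probability mass.
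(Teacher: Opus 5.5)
Your proof is correct and is essentially the argument the paper relies on. The paper treats this fact as a corollary of the identity $q_{\rho,R}(K^\perp) = \EE_{g \in K} p_{\rho,R}(g)$, and its own proofs of the analogous anti-concentration lemmas use exactly your split of $K$ into $S$ and $K \setminus S$ with $[K:S] \geq 2$ from Lagrange's theorem; the only difference is that it bounds the average with the triangle inequality on absolute values, whereas you pass to real parts, which works equally well.
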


They use these properties to show that $O(\log \abs{G})$ samples from $q_{\rho, R}$ (which are obtained by Fourier sampling) generate the full dual subgroup $S^\perp$, with high probability. $S$ can then be recovered from $S^\perp$ by using standard Gaussian elimination techniques.

\section{A normal core algorithm for Bose symmetry learning}\label{sec:normal-core-state-hsp}


In this section, we prove that there is an efficient quantum algorithm for finding the normal core of the hidden symmetry subgroup $\HSS(\mathcal{IB})$ of any instance $\mathcal{IB} = \prb{BoseSL}(G, R, \rho)$ of (possibly non-abelian) Bose symmetry learning, where an efficient quantum Fourier transform over $G$ exists. 

\begin{theorem}\label{thm:efficient-algorithm-for-normal-core-statehsp}
    Let $\mathcal{IB} = \prb{BoseSL}(G, R, \rho)$ be an instance of Bose symmetry learning with hidden symmetry subgroup $S = \HSS(\mathcal{IB})$ and minimum asymmetry distance $\epsilon = \MASD(\mathcal{IB})$. Let $C = \core_G (S)$ be the normal core of $S$ in $G$, and let $\mathcal{M}$ be the set of all minimal non-trivial subgroups of $G / C$.

    Then there is a quantum algorithm that, using $O(\log \abs{\mathcal{M}} + \log(1 / \delta)) / \epsilon$ copies of $\rho$, outputs a set of generators for $C$ with probability at least $1 - \delta$.

    Moreover, if the quantum Fourier transform over $G$ and the controlled representation unitary $\ctrl{R} = \sum_{g \in G} \ketbra{g}{g} \tp R(g)$ can be implemented efficiently (i.e. in time $\polylog \abs{G}$), then the algorithm runs efficiently.
\end{theorem}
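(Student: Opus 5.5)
We run weak Fourier sampling: apply \cref{alg:state-fourier-sample} over the possibly non-abelian $G$, measuring only the irrep label $\lambda$, to $m$ copies of $\rho$. We then output the subgroup
\[
A \;=\; \bigcap_{i=1}^m \ker R_{\lambda_i},
\]
and return generators of $A$. Each kernel $\ker R_{\lambda}$ is computable classically, assuming, as in \cite{HRT03HSP}, that the character table and irreps of $G$ are accessible, or by the standard procedure of sampling elements $g$ with $\chi_\lambda(g)=d_\lambda$. The analysis has two parts. First, $C\subseteq A$ always. Second, with high probability $A$ contains no element of $\mathcal{M}$, which we view as minimal subgroups of $G$ properly containing $C$. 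Since $A$ is normal and contains $C$, the quotient $A/C$ is a subgroup of $G/C$. If it is non-trivial, it contains some minimal non-trivial subgroup $M/C\in\mathcal{M}$. So excluding every $M\in\mathcal{M}$ forces $A=C$.

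\textbf{Step 1: output distribution.} Compute $\Pr[\lambda]$ for pure $\rho=\ketbra{\psi}{\psi}$ and extend by linearity. After the QFT, the $\ket{\lambda,i,j}$ component is $\sqrt{d_\lambda}/\abs{G}\sum_g R_\lambda(g)_{ij}R(g)\ket{\psi}$. Summing its squared norm over $i,j$ and using Schur orthogonality in the form $\sum_{i,j}\overline{R_\lambda(g)_{ij}}R_\lambda(h)_{ij}=\chi_\lambda(g^{-1}h)$ gives
\[
\Pr[\lambda] \;=\; \frac{d_\lambda}{\abs{G}}\sum_{x\in G}\chi_\lambda(x)\,\Tr(R(x)\rho),
\]
up to replacing $\lambda$ by its dual; this is the non-abelian analogue of \cref{lmm:state-fourier-sampling-output-distribution}. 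For a normal subgroup $N$, summing over $\lambda\in\hat G[N]$ uses the fact that $\sum_{\lambda\in\hat G[N]}d_\lambda\chi_\lambda$ is the regular character of $G/N$ pulled back to $G$. By \cref{fct:regular-representation-character-values} applied to $G/N$, this equals $\abs{G/N}\indic_N$. Hence
\[
\Pr[\lambda\in\hat G[N]] \;=\; \EE_{x\in N}\Tr(R(x)\rho).
\]

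\textbf{Step 2: $C\subseteq A$.} Take $N=C$. Every $x\in C\subseteq S$ has $\Tr(R(x)\rho)=1$, so $\Pr[\lambda\in\hat G[C]]=1$. Thus every sample satisfies $C\subseteq\ker R_{\lambda_i}$.

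\textbf{Step 3: excluding each $M\in\mathcal{M}$.} Because $C=\core_G(S)$ is the largest normal subgroup of $G$ inside $S$, the normal closure $N=\ncl(M)$ is not contained in $S$. Notice that $A\supseteq M$ iff every $\ker R_{\lambda_i}\supseteq M$ iff every $\lambda_i\in\hat G[\ncl(M)]$. By Step 1,
\[
\Pr\bigl[\lambda\in\hat G[N]\bigr] \;=\; \EE_{x\in N}\Tr(R(x)\rho).
\]
Here $N\cap S$ is a proper subgroup of $N$, so at least half of the elements of $N$ lie outside $S$. Each such $x$ has $\Re\Tr(R(x)\rho)\le 1-\epsilon$, and the probability above is real, so it is at most $1-\epsilon/2$. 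Therefore, for each $M$,
\[
\Pr[A\supseteq M]\;\le\;(1-\epsilon/2)^m .
\]
A union bound over $\mathcal{M}$ gives failure probability at most $\abs{\mathcal{M}}e^{-m\epsilon/2}$. This is at most $\delta$ once $m=O(\log\abs{\mathcal{M}}+\log(1/\delta))/\epsilon$, which is the claimed sample count.

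The main obstacle is Step 1 together with the normal-closure observation in Step 3: one must ensure the non-abelian weak-sampling distribution summed over $\hat G[N]$ really collapses to the average over $N$. The most delicate points are the correct handling of the dual or conjugate label and the normality of $N$. Efficiency follows because each sample costs one QFT and one $\ctrl{R}$, and intersecting $O(m)$ kernels is classical group-theoretic postprocessing that is polynomial in $\log\abs{G}$ under the standing assumptions.
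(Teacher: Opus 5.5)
Your proposal is correct and follows the paper's proof essentially step for step: weak Fourier sampling, outputting $\bigcap_i \ker R_{\lambda_i}$, support on $\hat G[C]$ via the regular character of $G/N$, and a union bound over the minimal subgroups above $C$ with per-subgroup bound $(1-\epsilon/2)^m$. Your Step 3 is slightly more careful than the paper. The paper invokes its anti-concentration lemma, which is stated for $K > S$, even though the subgroups $M$ in the union bound only satisfy $M > C$ and may lie inside $S$; your argument that $\ncl(M)\cap S$ is a proper subgroup of $\ncl(M)$ covers exactly this case.
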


Note that $\abs{\mathcal{M}} \leq \abs{G}$, since every minimal non-trivial subgroup of $G / C$ is generated by a single element of $G / C$.

We recover the quantum algorithm for abelian \prb{StateHSP} from \cite{HEC25abelianStateHSP} and the normal core quantum algorithm for \prb{HSP} from \cite{HRT03HSP} as special cases of the algorithm in \cref{thm:efficient-algorithm-for-normal-core-statehsp}. Since the sample complexity of the algorithms in \cite{HEC25abelianStateHSP} and \cite{HRT03HSP} is $O(\log \abs{G} + \log(1 / \delta)) / \epsilon$, we have improved sample complexity upper bound compared to the original algorithms, due to the dependence on $\abs{\mathcal{M}}$ rather than $\abs{G}$.

\begin{corollary}
    There is a quantum algorithm that solves abelian \prb{StateHSP} (abelian \prb{BoseSL}) with probability at least $1 - \delta$ using $O(\log \abs{\mathcal{M}} + \log(1 / \delta)) / \epsilon$ copies of $\ket{\psi}$.
\end{corollary}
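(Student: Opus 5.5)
The plan is to obtain the corollary as a direct specialisation of \cref{thm:efficient-algorithm-for-normal-core-statehsp}, with essentially no new analysis. Let $\mathcal{IB} = \prb{BoseSL}(G, R, \ketbra{\psi}{\psi})$ be an abelian instance, with $S = \HSS(\mathcal{IB})$ and $\epsilon = \MASD(\mathcal{IB})$. A pure state is a special case of the mixed-state input allowed in \prb{BoseSL}, so the theorem applies verbatim with $\rho = \ketbra{\psi}{\psi}$.

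The key observation is that in an abelian group every subgroup is normal, since $gSg^{-1} = S$ for all $g \in G$. By \cref{def:normal-core}, the largest normal subgroup of $G$ contained in $S$ is then $S$ itself, so $C = \core_G(S) = S$. The algorithm of \cref{thm:efficient-algorithm-for-normal-core-statehsp} therefore outputs a generating set for $C = S$, which is exactly the task in \cref{prb:state-hsp}. It uses $O(\log\abs{\mathcal{M}} + \log(1/\delta))/\epsilon$ copies of $\ket{\psi}$ and succeeds with probability at least $1-\delta$. Here $\mathcal{M}$ is the set of minimal non-trivial subgroups of $G/S$; in the abelian case these are the cyclic subgroups of prime order. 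Since $\abs{\mathcal{M}} \le \abs{G}$, this matches or improves the bound $O(\log\abs{G} + \log(1/\delta))/\epsilon$ of \cite{HEC25abelianStateHSP}.

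For efficiency, note that the QFT over any finite abelian group has a $\polylog\abs{G}$-size circuit \cite{CVD10quantumAlgorithmsAlgebraicProblems}. Under the standing assumption that $\ctrl{R}$ is efficiently implementable, the ``moreover'' clause of the theorem then gives polynomial running time.

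The only point needing care is to confirm that the theorem's output is literally a generating set for $C$. It should not be, for example, a description of $\hat{G}[C]$ that still has to be dualised. If it is the dual description, we recover $S = (S^\perp)^\perp$ using \cref{prop:facts-about-dual-subgroup} and standard Gaussian elimination over abelian groups, as in \cite{HEC25abelianStateHSP}. I expect no genuine obstacle here; the substance lies entirely in the theorem.
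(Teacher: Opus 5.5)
Your proposal is correct and follows the paper's proof: the corollary is obtained by specialising \cref{thm:efficient-algorithm-for-normal-core-statehsp}, using that every subgroup of an abelian group is normal, so $\core_G(S) = S$. Your remarks about pure versus mixed inputs and about recovering generators from $\bigcap_j \ker \lambda_j$ by Gaussian elimination over abelian groups are sound additions that the paper leaves implicit.
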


\begin{proof}
    This follows immediately from \cref{thm:efficient-algorithm-for-normal-core-statehsp}, since every subgroup of an abelian group is normal.
\end{proof}

To recover the algorithm from \cite{HRT03HSP}, we recall a result from \cite{BGTW25stateHSP} that \prb{HSP} is reducible to \prb{StateHSP}, which we prove for completeness's sake:

\begin{lemma}\label{fct:hsp-can-be-formulated-as-state-hsp}
    \prb{HSP} is reducible to \prb{BoseSL} (\prb{StateHSP}) with minimum asymmetry distance $\epsilon = 1$.
\end{lemma}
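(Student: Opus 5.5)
The plan is to encode a classical HSP instance as a Bose symmetry learning instance on the group algebra. Let $f: G \to X$ be an HSP oracle hiding $H \leq G$, meaning $f(g) = f(g')$ if and only if $gH = g'H$. Query $f$ once on a uniform superposition and discard the output register. This produces the coset state
\[
\rho_H = \frac{1}{\abs{G}} \sum_{g \in G} \ketbra{gH}{gH}, \qquad \ket{gH} = \frac{1}{\sqrt{\abs{H}}} \sum_{h \in H} \ket{gh}.
\]
Equivalently, one can measure the function register and keep the pure coset state $\ket{gH}$ for a uniformly random $g$. Either way, each query yields one copy of the input state.

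For the representation I would take $R = R_\text{rreg}$, with $R(x)\ket{y} = \ket{y x^{-1}}$. The right action is the natural choice here, because right multiplication by elements of $H$ fixes each left coset, while left multiplication does not (for non-normal $H$). The task is then to compute $\Tr(R(x)\rho_H)$ and show that it equals $1$ when $x \in H$ and $0$ otherwise.

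By linearity it suffices to compute $\braket{gH | R(x) | gH}$. The vector $R(x)\ket{gH}$ is the uniform superposition over the set $gHx^{-1}$.
\begin{itemize}
\item If $x \in H$, then $gHx^{-1} = gH$, so the inner product is $1$.
\item If $x \notin H$, then $gHx^{-1}$ is a right translate of $gH$ and is disjoint from it. Indeed, if $gh_1 = gh_2x^{-1}$ then $x = h_1^{-1}h_2 \in H$, a contradiction. So the inner product is $0$.
\end{itemize}
Averaging over $g$ gives $\Tr(R(x)\rho_H) = \indic_H(x)$. Hence the instance $\prb{BoseSL}(G, R_\text{rreg}, \rho_H)$ satisfies the promise with hidden symmetry subgroup $\HSS = H$ and minimum asymmetry distance $\MASD = 1$, since $\abs{\Tr(R(x)\rho_H)} = 0 = 1 - 1$ for every $x \notin H$.

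Efficiency of the reduction is straightforward. State preparation costs one oracle query plus a uniform superposition over $G$. The operation $\ctrl{R}$ is the map $\ket{x}\ket{y} \mapsto \ket{x}\ket{yx^{-1}}$, which is a reversible classical group-arithmetic circuit running in $\polylog\abs{G}$ time whenever group operations are efficient. So any \prb{BoseSL} algorithm solves \prb{HSP} with the same sample, now query, complexity.

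The only delicate point is choosing the side of the action consistently with the coset convention. With left cosets we must use the right regular representation, and with right cosets the left regular one. With the wrong side, the symmetry subgroup of $\rho_H$ would become a conjugate of $H$, or $H$ only in the normal case. Once the side is fixed correctly, the trace computation above is routine.
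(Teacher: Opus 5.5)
Your proof is correct and is essentially the paper's argument: the paper uses the pure graph state $\ket{\Graph(f)}$ with the representation $R_\text{rreg} \tp I$, and relies on the same disjointness of $gH$ and $gHx^{-1}$ for $x \notin H$. Your $\rho_H$ is exactly the reduced state of $\ket{\Graph(f)}$ on the group register, and since the representation acts trivially on the function register, the two values of $\Tr(R(x)\,\cdot)$ coincide.
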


\begin{proof}
    Let $\mathcal{I} = \prb{HSP}(G, f)$ be an instance of \prb{HSP} with hidden subgroup $S$. Define the state $\ket{\psi}$ by
    \begin{align*}
        \ket{\psi} & \coloneq \ket{\Graph(f)} = \frac{1}{\sqrt{\abs{G}}} \sum_{g \in G} \ket{g} \ket{f(g)} = \sqrt{\frac{\abs{S}}{\abs{G}}} \cdot \sum_{c \in G / S} \ket{c S} \ket{f(c)} \in \CC^G \tp \CC^X,
    \end{align*}
    where $X$ is the range of $f$. Define the representation $R: G \rightarrow \mathcal{U}(\CC^G) \tp \mathcal{U}(\CC^X)$ by
    \begin{equation*}
        R(g) \ket{h} \ket{x} = \ket{h g^{-1}} \ket{x} \quad \forall g, h \in G,
    \end{equation*}
    i.e. $R$ is the tensor product of the right-regular representation of $G$ with the identity representation acting on $\CC^X$. Then for all $g \in S$, $R(g) \ket{\psi} = \ket{\psi}$, since $\ket{c S g} = \ket{c S}$. Also, for all $g \in G \setminus S$, since $c S$ and $c S g$ are disjoint, $\braket{\psi | R(g) | \psi} = 0$. Hence, the instance $\prb{HSP}(G, f)$ reduces to the instance $\prb{BoseSL}(G, R, \ket{\psi})$ of \prb{BoseSL} with hidden symmetry subgroup $S$ and minimum asymmetry distance $\epsilon = 1$.
\end{proof}

\begin{corollary}\label{crl:efficient-algorithm-for-normal-core-hsp} 
    There is a quantum algorithm that, given a quantum oracle for $f: G \to X$ satisfying the conditions of the hidden subgroup problem with hidden subgroup $S$, finds the normal core $\core_G (S)$ of $S$ in $G$ using $O(\log \abs{\mathcal{M}} + \log(1 / \delta))$ queries to the oracle for $f$.
\end{corollary}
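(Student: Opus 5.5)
The plan is to combine \cref{fct:hsp-can-be-formulated-as-state-hsp} with \cref{thm:efficient-algorithm-for-normal-core-statehsp}, and then to check that each copy of the resulting state costs only one oracle query.

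First, given the \prb{HSP} instance $\prb{HSP}(G, f)$ with hidden subgroup $S$, I will invoke \cref{fct:hsp-can-be-formulated-as-state-hsp}. This produces the \prb{BoseSL} instance $\mathcal{IB} = \prb{BoseSL}(G, R, \ket{\Graph(f)})$, where $R$ is the right-regular representation tensored with the identity on $\CC^X$. Its hidden symmetry subgroup is $\HSS(\mathcal{IB}) = S$ and its minimum asymmetry distance is $\MASD(\mathcal{IB}) = 1$. The normal core $C = \core_G(S)$ and the set $\mathcal{M}$ of minimal non-trivial subgroups of $G/C$ are therefore the same objects in both problems.

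Second, I account for the cost of producing the state. One copy of $\ket{\Graph(f)}$ is prepared with a single oracle query: apply the inverse QFT (or any preparation of the uniform superposition) to $\ket{0}$ in the $\CC^G$ register, then apply the oracle $\ket{g}\ket{0} \mapsto \ket{g}\ket{f(g)}$.

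Third, I apply \cref{thm:efficient-algorithm-for-normal-core-statehsp} with $\epsilon = 1$. It outputs generators of $C$ with probability at least $1-\delta$ using $O(\log\abs{\mathcal{M}} + \log(1/\delta))$ copies, hence that many queries. For the efficiency claim, I must check that $\ctrl{R}$ is efficient. Here $\ctrl{R}\ket{g}\ket{h}\ket{x} = \ket{g}\ket{hg^{-1}}\ket{x}$, which is a reversible classical group-arithmetic circuit and so runs in $\polylog\abs{G}$ time whenever group multiplication and inversion are efficient. Given the paper's standing assumption that the QFT over $G$ is efficient, the time bound follows.

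The only real obstacle is this bookkeeping. One point is that the query count, rather than the copy count, is what is being bounded; this is handled by the one-query-per-copy preparation. The other is making sure the reduction preserves $\mathcal{M}$ exactly, which holds because the hidden subgroup is literally $S$. There is no new analysis beyond these two checks.
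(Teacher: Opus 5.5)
Your proposal is correct and matches the paper's proof: reduce via \cref{fct:hsp-can-be-formulated-as-state-hsp} to a \prb{BoseSL} instance with hidden subgroup $S$ and $\epsilon = 1$, note that each copy of $\ket{\Graph(f)}$ costs one oracle query, and apply \cref{thm:efficient-algorithm-for-normal-core-statehsp}. Your extra check that $\ctrl{R}$ is efficiently implementable is harmless. It goes beyond what the statement needs, since the statement only bounds the number of queries.
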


\begin{proof}
This follows from \cref{fct:hsp-can-be-formulated-as-state-hsp} and the fact that the state $\ket{\psi}$ in the proof of \cref{fct:hsp-can-be-formulated-as-state-hsp} can be prepared using $1$ query to the oracle for $f$. \end{proof}

\subsection{Analysis of the algorithm}

For the rest of this section, let $\mathcal{IB} = \prb{BoseSL}(G, R, \rho)$ be an instance of Bose symmetry learning with hidden symmetry subgroup $S = \HSS(\mathcal{IB})$ and minimum asymmetry distance $\epsilon = \MASD(\mathcal{IB})$.

Consider a generalisation of the strong Fourier sampling procedure defined in \cite{GSVV01quantumNonabelianHSP}: it is the same Fourier sampling procedure as in \cref{alg:state-fourier-sample}, except that we measure row and index registers $\ket{j, k}$ as well as the irrep label register $\ket{\lambda}$.

\begin{algorithm}
\caption{Strong Fourier Sampling}\label{alg:strong-state-fourier-sample}
    \begin{algorithmic}[1]
        \State Prepare the state $\frac{1}{\sqrt{\abs{G}}} \sum_{g \in G} \ket{g} \ket{\psi}$.
        \State Apply the unitary $\ctrl{R}$, giving $\frac{1}{\sqrt{\abs{G}}} \sum_{g \in G} \ket{g} \tp R(g) \ket{\psi}$.
        \State Apply the quantum Fourier transform over $G$ on the input register.
        \State Measure the irrep label, row index and column index registers $\ket{\lambda, j, k}$.
    \end{algorithmic}
\end{algorithm}

Note that performing \cref{alg:state-fourier-sample} is equivalent to performing \cref{alg:strong-state-fourier-sample} and discarding the row and column index measurements.

The following lemma gives an expression for the output probability distribution of \cref{alg:strong-state-fourier-sample}. It generalises \cite[Theorem 7]{GSVV01quantumNonabelianHSP}, as the distribution from \cite[Theorem 7]{GSVV01quantumNonabelianHSP} coincides with the distribution below when $\rho = \ketbra{\Graph(f)}{\Graph(f)}$ (which is the state we consider when working with the StateHSP formulation of the HSP, as in \cref{fct:hsp-can-be-formulated-as-state-hsp}).

\begin{lemma}\label{lmm:strong-state-fourier-sample::probability} 
    The probability of measuring an irrep label $\lambda$, row index $j$ and column index $k$ in \cref{alg:strong-state-fourier-sample} is
    \begin{equation*}
        q_{\rho, R} (\lambda, j, k) \coloneqq \frac{1}{\abs{G}} \sum_{g \in G} R_\lambda (g)_{k k} \cdot \Tr(R(g) \rho).
    \end{equation*}
    In particular, the probability does not depend on the row index $j$. The above distribution is called the \emph{strong Fourier sampling distribution}.
\end{lemma}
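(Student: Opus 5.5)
The plan is to follow the proof of \cref{lmm:state-fourier-sampling-output-distribution}, but to replace the abelian character manipulations with Schur orthogonality (\cref{fct:schur-orthogonality}). First I treat a pure input $\rho = \ketbra{\psi}{\psi}$; the mixed-state case then follows by linearity, since the outcome probability is linear in $\rho$ and both sides of the claimed identity are linear in $\rho$. After step 3 of \cref{alg:strong-state-fourier-sample}, the definition of the QFT gives the state
\begin{equation*}
    \frac{1}{\sqrt{\abs{G}}} \sum_{g \in G} \sum_{\lambda \in \hat{G}} \sum_{j, k = 1}^{d_\lambda} \sqrt{\frac{d_\lambda}{\abs{G}}}\, R_\lambda(g)_{jk} \ket{\lambda, j, k} \tp R(g)\ket{\psi}.
\end{equation*}
The probability of the outcome $(\lambda, j, k)$ is therefore the squared norm of the vector $\frac{\sqrt{d_\lambda}}{\abs{G}} \sum_{g} R_\lambda(g)_{jk} R(g)\ket{\psi}$. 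This equals
\begin{equation*}
    \frac{d_\lambda}{\abs{G}^2} \sum_{g, h \in G} \overline{R_\lambda(g)_{jk}}\, R_\lambda(h)_{jk} \braket{\psi | R(g)^\dagger R(h) | \psi}.
\end{equation*}

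Next I reparametrise with $h = g x$. Because $R$ is a unitary linear representation, $R(g)^\dagger R(h) = R(x)$, so the inner product no longer depends on $g$. Because $R_\lambda$ is a homomorphism, $R_\lambda(g x)_{jk} = \sum_{l} R_\lambda(g)_{jl} R_\lambda(x)_{lk}$. Exchanging the order of summation isolates the inner sum $\sum_{g} \overline{R_\lambda(g)_{jk}} R_\lambda(g)_{jl}$. By \cref{fct:schur-orthogonality} with $\lambda = \mu$, this sum equals $\frac{\abs{G}}{d_\lambda}\delta_{kl}$. The factor $\frac{\abs{G}}{d_\lambda}$ cancels against $\frac{d_\lambda}{\abs{G}^2}$, and $\delta_{kl}$ collapses the sum over $l$. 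What remains is $\frac{1}{\abs{G}} \sum_{x} R_\lambda(x)_{kk} \braket{\psi | R(x) | \psi}$, which is the claimed formula. The row index $j$ disappears entirely, and this proves the final remark of the lemma.

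The main obstacle is bookkeeping rather than anything conceptual. I need to track which index carries the complex conjugate and make sure Schur orthogonality is applied to the correct index pair, namely the shared row $j$ with columns $k$ and $l$. Getting either choice wrong would produce $\overline{R_\lambda(x)_{kk}}$ instead of $R_\lambda(x)_{kk}$. If that happens, the fix is to substitute $x \mapsto x^{-1}$, or equivalently to use $h = g x$ versus $g = h x$ consistently, and to align the result with the sign convention used in the abelian case. Finally, I would note that summing the formula over $j$ and $k$ gives $\frac{d_\lambda}{\abs{G}}\sum_x \chi_\lambda(x)\Tr(R(x)\rho)$, which reproduces \cref{lmm:state-fourier-sampling-output-distribution} when $G$ is abelian.
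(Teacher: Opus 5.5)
Your proposal is correct and matches the paper's proof step for step. Both expand the post-QFT state, write the outcome probability as a squared norm, reparametrise $h = gx$ so that $R(g)^\dagger R(h) = R(x)$ and $R_\lambda(gx)_{jk} = \sum_l R_\lambda(g)_{jl} R_\lambda(x)_{lk}$, collapse the $g$-sum by Schur orthogonality, and pass to mixed states by linearity. Your index choices already produce $R_\lambda(x)_{kk}$ rather than its conjugate, so the contingency in your last paragraph is unnecessary; also, substituting $x \mapsto x^{-1}$ alone would not have been a valid repair, since it conjugates $\Tr(R(x)\rho)$ as well.
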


\begin{proof}
    Applying $\QFT_G$ in step 3 yields the state
    \begin{equation*}
        \frac{1}{\sqrt{\abs{G}}} \sum_{\lambda \in \widehat{G}} \sqrt{\frac{d_\lambda}{\abs{G}}} \sum_{j, k = 1}^{d_\lambda} \ket{\lambda, j, k} \otimes \sum_{g \in G} R(g)_{j k} R(g) \ket{\psi}.
    \end{equation*}
    Hence,
    \begin{align*}
        q_{\rho, R} (\lambda, j, k) & = \Norm{\frac{\sqrt{d_\lambda}}{\abs{G}} \sum_{g \in G} \rho_\lambda (g)_{j k} R(g) \ket{\psi}}^2 \\
        & = \frac{d_\lambda}{\abs{G}^2} \sum_{g , h \in G} \rho_\lambda (h)_{j k} \overline{\rho_\lambda (g)_{j k}} \cdot \braket{\psi | R(g)^{\dagger} R(h) | \psi} \\
        & = \frac{d_\lambda}{\abs{G}^2} \sum_{g , h \in G} \rho_\lambda (gh^{-1})_{j k} \overline{\rho_\lambda (g)_{j k}} \cdot \braket{\psi | R(g) ^{\dagger} R(gh^{-1}) | \psi} \\
        & = \frac{d_\lambda}{\abs{G}^2} \sum_{g , h \in G} \left(\rho_\lambda (g) \rho_\lambda (h)\right)_{j k} \overline{\rho_\lambda (g)_{j k}} \braket{\psi | R(h) | \psi} \\
        & = \frac{1}{\abs{G}} \sum_{\ell = 1}^{d_\lambda} \delta_{\ell k} \sum_{g' \in G} \rho_\lambda (g')_{\ell k} \braket{\psi | R(g') | \psi} \quad & \text{by \cref{fct:schur-orthogonality}} \\
        & = \frac{1}{\abs{G}} \sum_{g \in G} \rho_\lambda (g)_{k k} \cdot \braket{\psi | R(g) | \psi}.
    \end{align*}
    This shows the result for pure states; the result for mixed states follows by linearity.
\end{proof}

\begin{corollary}[{\cite[Fact 3.2]{BGTW25stateHSP}}]\label{crl:weak-state-fourier-sample::probability}
    The probability of measuring an irrep $\lambda$ in \cref{alg:state-fourier-sample} is
    \begin{equation*}
        q_{\rho, R} (\lambda) \coloneqq \frac{d_\lambda}{\abs{G}} \cdot \sum_{x \in G} \chi_\lambda (x) \cdot \Tr(R(x) \rho).
    \end{equation*}
    The above distribution is called the \emph{weak Fourier sampling distribution}, and is the marginal distribution of the irrep label $\lambda$ in the strong Fourier sampling distribution $q_{\rho, R} (\lambda, j, k)$.
\end{corollary}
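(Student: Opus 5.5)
The weak Fourier sampling probability is the marginal of the strong Fourier sampling distribution, so I would sum the expression from \cref{lmm:strong-state-fourier-sample::probability} over the row and column indices.

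First I sum over the row index $j \in [d_\lambda]$. By \cref{lmm:strong-state-fourier-sample::probability}, $q_{\rho,R}(\lambda,j,k)$ does not depend on $j$, so this sum simply multiplies the expression by $d_\lambda$. This gives
\begin{equation*}
    \sum_{j=1}^{d_\lambda} q_{\rho,R}(\lambda,j,k) = \frac{d_\lambda}{\abs{G}} \sum_{g\in G} R_\lambda(g)_{kk}\,\Tr(R(g)\rho).
\end{equation*}
Next I sum over the column index $k \in [d_\lambda]$. Exchanging the two finite sums gives $\sum_k R_\lambda(g)_{kk} = \Tr R_\lambda(g) = \chi_\lambda(g)$ for each $g$, which is the claimed formula.

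It remains to check that this marginal really is the output distribution of \cref{alg:state-fourier-sample}, which only measures the irrep label register. Measuring only $\lambda$ yields the same marginal as measuring $(\lambda,j,k)$ and discarding $j,k$, since the measurement of the label register commutes with the measurement of the index registers. The paper already notes this equivalence right after \cref{alg:strong-state-fourier-sample}.

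There is one subtle point. In the proof of \cref{lmm:strong-state-fourier-sample::probability}, the factor $d_\lambda$ appears to be absorbed by Schur orthogonality in a way that makes the strong distribution sum over $k$ to $\frac{1}{\abs G}\sum_g\chi_\lambda(g)\Tr(R(g)\rho)$. The extra $d_\lambda$ in the claim then comes entirely from the $j$-independence. I would sanity-check normalisation using \cref{fct:regular-representation-character-values}: summing the claimed formula over $\lambda$ gives $\frac{1}{\abs G}\sum_g r_G(g)\Tr(R(g)\rho) = \Tr\rho = 1$. This confirms that the $d_\lambda$ bookkeeping is consistent, and it is the only step where I expect any care to be needed.
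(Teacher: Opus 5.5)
Your proposal is correct and follows the paper's proof: sum the strong Fourier sampling probability from \cref{lmm:strong-state-fourier-sample::probability} over the row index $j$, which contributes the factor $d_\lambda$, and over the column index $k$, which produces $\chi_\lambda$. Two points you spell out are left implicit in the paper, and both are sound: that measuring only the label is a coarse-graining of the full $(\lambda,j,k)$ measurement, and the normalisation check via \cref{fct:regular-representation-character-values}.
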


\begin{proof}
    We simply sum over all $j, k \in \left[d_\lambda\right]$, and the result follows instantly.
\end{proof}

\noindent Now, we extend some of the analysis in \cite{HEC25abelianStateHSP} (\cref{lmm:fourier-sampling-distribution-mass-on-subgroup-dual,fct:fourier-sampling-distribution-anti-concentration,fct:fourier-sampling-distribution-supported-on-dual-of-symmetry-subgroup}) to the non-abelian setting. Notably, we must now often work with normal subgroups of $G$, which the hidden subgroup $S$ may not be. Thus, we consider the \textbf{normal core} $\core_G (S)$ of $S$ in $G$, which is the largest subgroup of $S$ which is normal in $G$. We also consider the \textbf{normal closure} $\ncl_G (K)$ of a subgroup $K$ in $G$, which is the smallest normal subgroup of $G$ which contains $K$.

\begin{fact}\label{lmm:bijection-between-irreps-of-quotient-group-and-trivial-irreps}
    Let $K$ be a normal subgroup of $G$. There is a natural bijection between irreps $\lambda \in \widehat{G}[K]$ and irreps of $G / K$, which preserves the degree of the representations.
\end{fact}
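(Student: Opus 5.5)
We construct the bijection explicitly through the quotient map $\pi: G \to G/K$, $g \mapsto gK$. Given an irrep $\sigma$ of $G/K$ on a space $W$, we set $R_\lambda = \sigma \circ \pi$. This is a homomorphism $G \to \mathcal{U}(W)$ of the same dimension, and it is trivial on $K$ because $\pi(K) = \{K\}$ is the identity of $G/K$. Conversely, given $\lambda \in \widehat{G}[K]$, we have $K \subseteq \ker R_\lambda$. So $R_\lambda$ factors through $\pi$: the map $\bar{R}_\lambda(gK) \coloneqq R_\lambda(g)$ is well defined, since $gK = g'K$ implies $g^{-1}g' \in K \subseteq \ker R_\lambda$. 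It is also a homomorphism of $G/K$ on the same space. The two constructions are mutually inverse on the level of representations, and degree preservation is immediate because the underlying vector space never changes.

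It remains to check that irreducibility and unitary equivalence are preserved, so that we get a bijection between \emph{labels} (equivalence classes) $\widehat{G}[K]$ and $\widehat{G/K}$. For irreducibility, the key point is that $\pi$ is surjective. Hence the sets of operators $\{R_\lambda(g) : g \in G\}$ and $\{\bar R_\lambda(c) : c \in G/K\}$ coincide, so a subspace is invariant under one family if and only if it is invariant under the other. For equivalence, a unitary $U$ intertwines $R_\lambda$ and $R_\mu$ if and only if it intertwines $\bar R_\lambda$ and $\bar R_\mu$, again by surjectivity of $\pi$. Finally, we need the set $\widehat{G}[K]$ to be closed under equivalence. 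This holds because equivalent representations have the same kernel: if $R' = U R U^\dagger$ then $R'(g) = I \iff R(g) = I$.

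The only mildly delicate point is this bookkeeping between concrete irreps and their labels; none of the steps involves a real obstacle. Normality of $K$ is used exactly to make $G/K$ a group, so that $\pi$ is a homomorphism. As a sanity check (optional), the bijection is consistent with counting: $\sum_{\lambda \in \widehat{G}[K]} d_\lambda^2 = \abs{G/K}$ by \cref{fct:regular-representation-irrep-decomposition} applied to $G/K$.
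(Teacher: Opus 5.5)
Your proposal is correct and takes the same route as the paper: irreps of $G/K$ are composed with the quotient map, and irreps trivial on $K$ are factored through it, with well-definedness coming from $K \subseteq \ker R_\lambda$. You also make explicit the points the paper calls ``clear'': irreducibility and equivalence are preserved because $\pi$ is surjective, and $\widehat{G}[K]$ is closed under equivalence. This makes the bijection on labels fully rigorous.
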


\begin{proof}
    Let $R_\lambda : G \rightarrow \mathcal{U}(\CC^{d_\lambda})$ be an irrep in $\widehat{G}$ with $K \leq \ker \lambda$. Define the representation $\rho'_\lambda: G / K \rightarrow \mathcal{U}(\CC^{d_\lambda})$ by $\rho'_\lambda\left(g K\right) = R_\lambda\left(g\right)$. $\rho'_\lambda$ is well-defined since $R_\lambda\left(k\right) = I$ for all $k \in K$ and $R_\lambda$ is a homomorphism. $\rho'_\lambda$ is also an irrep, which is clear from the definition of irreducibility.

    Conversely, given an irrep $R': G / K \rightarrow \mathcal{U}(\CC^{d_\lambda})$, we can define a representation $R : G \rightarrow \mathcal{U}(\CC^{d_\lambda})$ by $R(g) = R'(g K)$. Again, this is well-defined and irreducible.
\end{proof}

\begin{lemma}\label{lmm:probability-of-triv-set-under-weak-state-fourier-sampling}
    For any subgroup $K$ of $G$, $q_{\rho, R} (\widehat{G}[K]) = \frac{1}{\abs{N}} \sum_{x \in N} \Tr(R(x) \rho) = \EE_{x \in N} p_{\rho, R} (x)$, where $N = \ncl_G (K)$ is the normal closure of $K$.
\end{lemma}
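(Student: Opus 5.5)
We need $q_{\rho,R}(\widehat G[K]) = \sum_{\lambda\in\widehat G[K]} \frac{d_\lambda}{|G|}\sum_x \chi_\lambda(x)\Tr(R(x)\rho)$. The plan is to swap the order of summation and identify the inner character sum with the regular character of the quotient $G/N$, where $N = \ncl_G(K)$. By the notation remark, $\widehat G[K] = \widehat G[N]$, since each $\ker R_\lambda$ is normal and so contains $K$ iff it contains $N$. We may therefore work with $N$ throughout.

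Writing the total mass using \cref{crl:weak-state-fourier-sample::probability} and swapping sums gives
\begin{equation*}
q_{\rho,R}(\widehat G[N]) = \frac{1}{|G|}\sum_{x\in G} \Tr(R(x)\rho) \sum_{\lambda\in\widehat G[N]} d_\lambda \chi_\lambda(x).
\end{equation*}
By \cref{lmm:bijection-between-irreps-of-quotient-group-and-trivial-irreps}, the irreps in $\widehat G[N]$ correspond bijectively and degree-preservingly to the irreps of $G/N$, with $\chi_\lambda(x) = \chi'_{\lambda}(xN)$. The inner sum is therefore $\sum_{\lambda'\in\widehat{G/N}} d_{\lambda'}\chi_{\lambda'}(xN) = r_{G/N}(xN)$, the regular character of $G/N$ (\cref{fct:regular-representation-irrep-decomposition}). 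By \cref{fct:regular-representation-character-values} this equals $|G/N| = |G|/|N|$ when $x\in N$ and $0$ otherwise.

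Substituting, we get $q_{\rho,R}(\widehat G[N]) = \frac{1}{|G|}\cdot\frac{|G|}{|N|}\sum_{x\in N}\Tr(R(x)\rho) = \EE_{x\in N} p_{\rho,R}(x)$, which is the claim.

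The only subtle point is the bijection step. We must check that every irrep of $G/N$ arises exactly once, up to equivalence, from some $\lambda\in\widehat G[N]$, so that the sum runs over all of $\widehat{G/N}$ with no repetitions. This follows because equivalence of the pulled-back representations of $G$ coincides with equivalence of the corresponding representations of $G/N$, and \cref{lmm:bijection-between-irreps-of-quotient-group-and-trivial-irreps} supplies maps in both directions. A second point to watch is that the weak Fourier sampling formula with the factor $d_\lambda$ should be read as summing the strong distribution over $j,k$; each diagonal entry $k$ is counted once per row $j$, which accounts for the $d_\lambda$. I will take that formula as given.
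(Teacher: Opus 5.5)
Your proposal is correct and follows the paper's proof essentially step for step. The paper likewise replaces $\widehat{G}[K]$ by $\widehat{G}[N]$ and interchanges the sums. It then uses the degree-preserving bijection with the irreps of $G/N$ to identify $\sum_{\lambda \in \widehat{G}[N]} d_\lambda \chi_\lambda(x)$ with the regular character $r_{G/N}(xN)$, and evaluates this as $\frac{\abs{G}}{\abs{N}} \indic\{x \in N\}$.
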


\begin{proof}
    We have
    \begin{align*}
        q_{\rho, R} (\widehat{G}[K]) & = q_{\rho, R} (\widehat{G}[N]) = \sum_{\lambda \in \widehat{G}[N]} q_{\rho, R} (\lambda) = \sum_{x \in G} \left(\sum_{\lambda \in \widehat{G}[N]} \frac{d_\lambda}{\abs{G}} \chi_\lambda (x)\right) p_{\rho, R} (x).
    \end{align*}
    Now
    \begin{align*}
        \sum_{\lambda \in \widehat{G}[N]} \frac{d_\lambda}{\abs{G}} \chi_\lambda \left(x\right) & = \frac{1}{\abs{G}} \sum_{\lambda' \in \widehat{G / N}} d_{\lambda'} \chi_{\lambda'}\left(x N\right) \quad & \text{by \cref{lmm:bijection-between-irreps-of-quotient-group-and-trivial-irreps}} \\
        & = \frac{1}{\abs{G}} r_{G / N} (x N) \quad & \text{by \cref{fct:regular-representation-irrep-decomposition}} \\
        & = \frac{1}{\abs{G}} \cdot \abs{G / N} \cdot \indic\{x N = N\} \quad & \text{by \cref{fct:regular-representation-character-values}} \\
        & = \frac{1}{\abs{N}} \cdot \indic\{x \in N\},
    \end{align*}
    where $r_{G / N}$ is the character of the regular representation of $G / N$.
\end{proof}

\begin{corollary}\label{lmm:weak-state-fourier-sampling-distribution-supported-on-trivial-irreps-of-hidden-subgroup-normal-core}
    The distribution $q_{\rho, R}$ is supported on $\widehat{G}[C]$, where $C = \core_G (S)$.
\end{corollary}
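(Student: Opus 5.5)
Apply \cref{lmm:probability-of-triv-set-under-weak-state-fourier-sampling} with $K = C$. Since $C = \core_G(S)$ is normal in $G$, we have $\ncl_G(C) = C$, so
\begin{equation*}
    q_{\rho, R}(\widehat{G}[C]) = \EE_{x \in C} p_{\rho, R}(x) = \EE_{x \in C} \Tr(R(x)\rho).
\end{equation*}
It therefore suffices to show that $\Tr(R(x)\rho) = 1$ for every $x \in C$. This holds because $C \subseteq S$, and by the promise of \prb{BoseSL} (\cref{prb:state-hsp}), $\Tr(R(g)\rho) = 1$ for all $g \in S$. Hence $q_{\rho,R}(\widehat{G}[C]) = 1$.

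To conclude that $q_{\rho,R}$ is supported on $\widehat{G}[C]$, we also need $q_{\rho,R}(\lambda) \geq 0$ for every $\lambda$. Otherwise total mass $1$ on $\widehat{G}[C]$ would not force zero mass elsewhere. By \cref{crl:weak-state-fourier-sample::probability}, $q_{\rho,R}$ is the marginal of the measurement outcome distribution of \cref{alg:state-fourier-sample}, so it is a genuine probability distribution: it is nonnegative and sums to $1$ over $\widehat{G}$. Combined with $q_{\rho,R}(\widehat{G}[C]) = 1$, this gives $q_{\rho,R}(\lambda) = 0$ for all $\lambda \notin \widehat{G}[C]$.

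There is no real obstacle here. The only points needing care are that normality of $C$ makes its normal closure equal to $C$ itself, and that nonnegativity comes from the operational interpretation of $q_{\rho,R}$ as a measurement distribution rather than from its character-sum formula. For the weaker statement one could instead use $\ncl_G(K)$ for any $K \leq C$, since such a normal closure is still contained in $C \subseteq S$. Taking $K = C$ directly is the cleanest choice.
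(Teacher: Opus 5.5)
Your proof is correct and matches the paper's argument: apply the lemma on $q_{\rho,R}(\widehat{G}[K])$ with $K = C$, use normality of $C$ to get $\ncl_G(C) = C$, and use $C \subseteq S$ to get $\Tr(R(x)\rho) = 1$ on $C$, so the total mass on $\widehat{G}[C]$ is $1$. You also state explicitly a step the paper leaves implicit: passing from total mass $1$ to support needs $q_{\rho,R}$ to be nonnegative, which holds here because $R$ is linear and $q_{\rho,R}$ is a genuine measurement distribution.
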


\begin{proof}
    $C$ is normal so is equal to its normal closure in $G$. The result follows immediately from \cref{lmm:probability-of-triv-set-under-weak-state-fourier-sampling}, since $R(g) \rho = \rho$ for all $g \in C$.
\end{proof}

\noindent A generalisation of the anti-concentration property of \cref{fct:fourier-sampling-distribution-anti-concentration} holds:

\begin{lemma}\label{lmm:weak-state-fourier-sampling::anti-concentration-of-distribution}
    Let $K$ be a subgroup of $G$ with $K > S$. Then
    \begin{equation*}
        q_{\rho, R} (\widehat{G}[K]) \leq 1 - \frac{\epsilon}{2} .
    \end{equation*}
\end{lemma}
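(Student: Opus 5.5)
By \cref{lmm:probability-of-triv-set-under-weak-state-fourier-sampling}, $q_{\rho, R}(\widehat{G}[K]) = \EE_{x \in N} p_{\rho, R}(x)$ with $N = \ncl_G(K)$. Since $q_{\rho,R}(\widehat{G}[K])$ is a probability, this expectation is real, so it equals $\EE_{x \in N} \operatorname{Re} p_{\rho,R}(x)$. The plan is to bound this average by splitting $N$ into $N \cap S$ and $N \setminus S$.

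\textbf{Bounding the average.} On $N \cap S$ we have $p_{\rho,R}(x) = 1$. On $N \setminus S$ the promise gives $\operatorname{Re} p_{\rho,R}(x) \le \abs{p_{\rho,R}(x)} \le 1 - \epsilon$. Hence
\begin{equation*}
    \EE_{x \in N} p_{\rho,R}(x) \le \frac{\abs{N \cap S}}{\abs{N}} + \Bigl(1 - \frac{\abs{N \cap S}}{\abs{N}}\Bigr)(1 - \epsilon) = 1 - \epsilon\Bigl(1 - \frac{1}{[N : N \cap S]}\Bigr).
\end{equation*}
It therefore suffices to show $[N : N \cap S] \ge 2$, i.e.\ $N \not\subseteq S$.

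\textbf{Showing $N \not\subseteq S$.} This is the key step, and it is where the hypothesis $K > S$ enters. Since $K$ properly contains $S$, there is some $k \in K \setminus S$. Then $k \in K \subseteq N$ but $k \notin S$, so $N \cap S$ is a proper subgroup of $N$. By Lagrange's theorem (\cref{fct:lagranges-theorem}), $[N : N \cap S] \ge 2$, and the displayed bound gives $q_{\rho,R}(\widehat{G}[K]) \le 1 - \epsilon/2$.

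I expect no serious obstacle. The only points needing care are that the sum is real, which holds because it is a probability (alternatively, $p_{\rho,R}(x^{-1}) = \overline{p_{\rho,R}(x)}$ and $N$ is closed under inverses), and that the argument uses only $K \not\subseteq S$, not $S \le K$. The lemma therefore holds more generally for any $K$ not contained in $S$; normality of $K$ is never used, because the normal closure is absorbed by \cref{lmm:probability-of-triv-set-under-weak-state-fourier-sampling}.
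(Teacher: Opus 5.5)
Your proof is correct and follows essentially the same route as the paper: write the mass as the normal-closure average from \cref{lmm:probability-of-triv-set-under-weak-state-fourier-sampling}, split $N$ into the symmetric and asymmetric parts, and use Lagrange's theorem to get an index of at least $2$. The differences are cosmetic: you bound the real part where the paper applies the triangle inequality, and you split along $N \cap S$ where the paper splits along $S$ using $S \le K \le N$. As you note, your version shows the bound holds for any $K \not\le S$.
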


\begin{proof}
Let $N = \ncl_G \left(K\right)$. We have $S < K \leq N$, so $\abs{S} \leq \abs{N} / 2$ by Lagrange's theorem. Now
\begin{align*}
    q_{\rho, R} (\widehat{G}[K]) & = \Abs{\frac{1}{\abs{N}} \sum_{x \in N} \Tr(R(x) \rho)} \quad & \text{by \cref{lmm:probability-of-triv-set-under-weak-state-fourier-sampling}} \\
    & = \Abs{\frac{1}{\abs{N}} \sum_{x \in S} \Tr(R(x) \rho) + \frac{1}{\abs{N}} \sum_{x \in N \setminus S} \Tr(R(x) \rho)} \\
    & \leq \frac{1}{\abs{N}} \sum_{x \in S} \Tr(R(x) \rho) + \frac{1}{\abs{N}} \sum_{x \in N \setminus S} \abs{\Tr(R(x) \rho)} \quad & \text{by the triangle inequality} \\
    & \leq \frac{\abs{S}}{\abs{N}} + \frac{\abs{N} - \abs{S}}{\abs{N}} (1 - \epsilon) \\
    & \leq 1 - \frac{\epsilon}{2} .
    \end{align*}
\end{proof}

\noindent We are now ready to prove \cref{thm:efficient-algorithm-for-normal-core-statehsp}. 

\begin{proof}[Proof of \cref{thm:efficient-algorithm-for-normal-core-statehsp}]
    The algorithm is: obtain $m$ (a constant to be specified later) independent samples $\lambda_1, \dots, \lambda_m \in \widehat{G}$ from the weak Fourier sampling distribution $q_{\rho, R}$ by running \cref{alg:state-fourier-sample} $m$ times. Then output $A = \bigcap_{j \in [m]} \ker \lambda_j$.

    Note that by \cref{lmm:weak-state-fourier-sampling-distribution-supported-on-trivial-irreps-of-hidden-subgroup-normal-core}, $C = \core_G (S) \leq \ker R_j$ for all $j$, so $C \leq A$ (note that $A$ itself is a normal subgroup of $G$). We want to upper bound the probability that $C \neq A$ (i.e. $C < A$). We have that $C = A$ if and only if for all subgroups $K$ of $G$ with $K > C$, it holds that $K \nleq A$; equivalently, $K \nleq A$ for all subgroups $K$ of $G$ such that $C < K$ and there is no subgroup $L$ with $C < L < K$. We write $\mathcal{M'}$ to denote the collection of such subgroups $K$. Note that there is a bijection between $\mathcal{M'}$ and the set of minimal non-trivial subgroups $\mathcal{M}$ of $G / C$, given by $K \leftrightarrow K / C$. Hence, $\abs{\mathcal{M'}} = \abs{\mathcal{M}}$.

    Hence,
    \begin{equation*}
        \Pr(A \neq C) = \Pr(K \leq A \text{ for some } K \in \mathcal{M'}) \leq \sum_{K \in \mathcal{M'}} \Pr(K \leq A) \label{eq:normal-core-state-hsp::failure-probability-upper-bound}
    \end{equation*}
    by the union bound. But for $K \in \mathcal{M'}$, $K \leq A$ iff $K \leq \ker \lambda_j$ for all $j$, i.e. $\lambda_j \in \widehat{G}[K]$ for all $j$. Since the samples $\lambda_1, \ldots, \lambda_m$ are independent, we thus have $\Pr(K \leq A) \leq \left(1 - \frac{\epsilon}{2}\right)^m$ by \cref{lmm:weak-state-fourier-sampling::anti-concentration-of-distribution}.

    Therefore, $\Pr(A \neq C) \leq \abs{\mathcal{M'}} \cdot (1 - \frac{\epsilon}{2})^m \leq \abs{\mathcal{M'}} \cdot e^{-m \epsilon / 2}$ (by the inequality $1 - x \leq e^{-x}$). In order for the failure probability to upper bounded by $\delta$, it is thus sufficient to have $\abs{\mathcal{M'}} \cdot e^{-m \epsilon / 2} \leq \delta$, or equivalently,
    \begin{equation*}
        m \geq \frac{\ln \abs{\mathcal{M'}} + \ln(1 / \delta)}{\epsilon / 2} .
    \end{equation*}
\end{proof}

\subsection{Applications of the algorithm}

\subsubsection{Improved sample and time complexity for certain instances of abelian Bose symmetry learning}

 Since an upper bound for $\abs{\mathcal{M}}$ is $\abs{G} / \abs{C}$ (since each minimal subgroup of $G / C$ is generated by a single element), the sample complexity dependence of our algorithm is no higher than $\log \abs{G}$, and may be lower for certain $G$ and $C$. As an example, we obtain an improved sample complexity for solving the hidden translation problem, which is defined in \cite[Definition 12]{HEC25abelianStateHSP} as:

\begin{problem}[Hidden Translation Problem]\label{prb:hidden-translation-problem}\leavevmode
    \begin{description}
        \item[Input] Copies of a state $\ket{\psi} \in \CC^N$.
        \item[Promise] There is a $k \mid N$ such that $T^k \ket{\psi} = \ket{\psi}$, and for all $j \notin k \ZZ$, $\abs{\braket{\psi | T^j | \psi}} \leq 1 - \epsilon$, where $T$ is the translation operator defined by $T \ket{x} = \ket{x + 1}$.
        \item[Task] Find $k$.
    \end{description}
\end{problem}

\begin{corollary}
    The hidden translation problem can be solved efficiently with probability at least $1 - \delta$ using $O(\log N / \log \log N + \log(1 / \delta)) / \epsilon$ copies of $\ket{\psi}$ in the worst case, and $O(\log \log N + \log(1 / \delta)) / \epsilon$ copies of $\ket{\psi}$ for average $N$.
\end{corollary}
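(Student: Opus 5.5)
The plan is to model the hidden translation problem as an instance of abelian Bose symmetry learning and then apply \cref{thm:efficient-algorithm-for-normal-core-statehsp}. Take $G = \ZZ_N$ and let $R$ be the linear representation $R(j) = T^j$. The promise says that $S = k\ZZ_N$ is the hidden symmetry subgroup and that the minimum asymmetry distance is $\epsilon$. Since $G$ is abelian, $S$ is normal, so $C = \core_G(S) = S$. The algorithm therefore outputs generators of $S$, and $k$ can be read off as the gcd of the generators with $N$.

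Both efficiency conditions hold. The QFT over $\ZZ_N$ is efficient, and $\ctrl{R}$ is the controlled-addition map $\ket{j}\ket{x} \mapsto \ket{j}\ket{x+j}$, which is efficient. Hence the copy complexity is $O(\log\abs{\mathcal{M}} + \log(1/\delta))/\epsilon$, where $\mathcal{M}$ is the set of minimal non-trivial subgroups of $G/C \cong \ZZ_k$.

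Next I count $\mathcal{M}$. A cyclic group $\ZZ_k$ has exactly one subgroup of each order dividing $k$. Its minimal non-trivial subgroups are therefore exactly the subgroups of prime order $p$, one for each prime $p \mid k$. This gives $\abs{\mathcal{M}} = \omega(k) \le \omega(N)$, where $\omega$ counts distinct prime divisors. The remaining work is number theory.
\begin{itemize}
\item \emph{Worst case.} The standard bound $\omega(N) = O(\log N/\log\log N)$ follows from primorial growth: if $N$ has $r$ distinct primes, then $N \ge p_1\cdots p_r = e^{(1+o(1)) r\log r}$. Then $\log\abs{\mathcal{M}} = O(\log\log N)$, which is even better than the stated $O(\log N/\log\log N)$, so the worst-case claim follows a fortiori.
\item \emph{Average case.} Hardy--Ramanujan/Turán gives that $\omega(N)$ has average order $\log\log N$, so $\log\abs{\mathcal{M}} = O(\log\log\log N) \subseteq O(\log\log N)$.
\end{itemize}

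The main obstacle is not mathematical; it is matching the stated form. I suspect the authors bound $\abs{\mathcal{M}}$ more crudely, or count copies differently. Either way, the bounds $\log\abs{\mathcal{M}} \le \log\omega(N) \le \log\log N$ suffice for both claims, so I would present the crude bound $\omega(N) = O(\log N/\log\log N)$ and note that it implies both statements.
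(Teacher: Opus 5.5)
Your proposal is correct and follows the paper's proof. Both realise the problem as $\prb{BoseSL}(\ZZ_N, x \mapsto T^x, \psi)$, note that $\core(S) = S$ and $\ZZ_N / S \cong \ZZ_k$, count $\abs{\mathcal{M}} = \nu(k) \le \nu(N)$, and invoke \cref{thm:efficient-algorithm-for-normal-core-statehsp}. The paper substitutes the bounds $\nu(N) = O(\log N/\log\log N)$ and $O(\log\log N)$ directly where $\log\abs{\mathcal{M}}$ appears, so you are right that the theorem actually gives the sharper $O(\log\log N + \log(1/\delta))/\epsilon$, and the stated bounds follow a fortiori.
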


This is an improvement over the sample complexity upper bound of $O(\log N + \log(1 / \delta)) / \epsilon$ in \cite[Theorem 9]{HEC25abelianStateHSP}.

\begin{proof}
    It is clear that the hidden translation problem is an instance $\mathcal{I B} = \prb{BoseSL}(\ZZ_N, x \mapsto T^x, \psi)$ of \prb{BoseSL}, with hidden symmetry subgroup $S = \HSS(\mathcal{I B}) = k \ZZ_N$ and minimum asymmetry distance $\epsilon = \MASD(\mathcal{I B})$.

    The hidden subgroup $S$ is isomorphic to $\ZZ_{N / k}$, and since $\ZZ_N$ is abelian, $\core(S) = S$. So $\ZZ_N / \core(S) \cong \ZZ_k$. The number of minimal subgroups $\abs{\mathcal{M}}$ of $\ZZ_k$ is the number of distinct prime factors of $k$, denoted here as $\nu(k)$. Since $k \mid N$, $\nu(k) \leq \nu(N)$. $\nu(N) = O(\log N / \log \log N)$, and for average $N$, $\nu(N) = O(\log \log N)$. So by \cref{thm:efficient-algorithm-for-normal-core-statehsp}, we are done.
\end{proof}

\subsubsection{Bose symmetry learning over poly-near Hamiltonian groups}\label{sec:bose-symmetry-learning-over-poly-near-hamiltonian-groups}

A group $G$ is \emph{Hamiltonian} if every subgroup of $G$ is normal. Our quantum algorithm in \cref{thm:efficient-algorithm-for-normal-core-statehsp} shows that \prb{BoseSL} can be solved over any Hamiltonian group. However, the class of Hamiltonian groups is quite small: a group is Hamiltonian if and only if it is abelian or of the form $Q \times \ZZ_2^n \times A$, where $Q$ is the quaternion group of order $8$, $n \in \NN$, and $A$ is an abelian group of odd order.

An equivalent condition for $G$ being Hamiltonian is $[G: \Baer(G)] = 1$, where the \emph{Baer norm} of $G$ is defined as
\begin{equation*}
    \Baer(G) \coloneqq \bigcap_{K \leq G} N_G (K) = \{ g \in G: g K g^{-1} = K \text{ for all } K \leq G \},
\end{equation*}
i.e. the intersection of the normalisers of all subgroups of $G$. Generalising the definition of a Hamiltonian group, we say $G$ is $\emph{poly-near Hamiltonian}$ if $[G: \Baer(G)] = O(\polylog \abs{G})$.

\cite{Gav04HSP} used the normal core quantum algorithm from \cite{HRT03HSP} to give an efficient quantum algorithm for the hidden subgroup problem over any family of poly-near Hamiltonian groups, subject to the assumption that the quantum Fourier transform over certain subgroups of $G$ can be efficiently implemented by a universal gate set. Our normal core algorithm (from \cref{thm:efficient-algorithm-for-normal-core-statehsp}) for Bose symmetry learning (\prb{StateHSP}) has the same or better sample and time complexity as the algorithm from \cite{HRT03HSP} (for a given failure probability $\delta$), and so because the algorithm in \cite{Gav04HSP} uses the algorithm from \cite{HRT03HSP} as a blackbox, we can use the same argument to show that Bose symmetry learning can be solved efficiently over any family of poly-near Hamiltonian groups, by replacing blackbox calls to the algorithm from \cite{HRT03HSP} with blackbox calls to our normal core algorithm. We refer the reader to \cite{Gav04HSP} for details.

\begin{theorem}\label{thm:bose-symmetry-learning-over-poly-near-hamiltonian-groups}
    If the quantum Fourier transform over certain subgroups of $G$ (see \cite{Gav04HSP} for the descriptions of such subgroups) can be efficiently implemented, then \prb{BoseSL} can be solved efficiently over any family of poly-near Hamiltonian groups.
\end{theorem}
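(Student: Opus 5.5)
The plan is to follow Gavinsky's reduction for poly-near Hamiltonian groups \cite{Gav04HSP} step by step, using \cref{thm:efficient-algorithm-for-normal-core-statehsp} wherever that argument calls the normal-core algorithm of \cite{HRT03HSP}. To make this legitimate, we first isolate the properties of the \cite{HRT03HSP} subroutine that Gavinsky actually uses. These are: (i) it outputs generators of $\core_G(S)$ with probability $1-\delta$; (ii) it uses $\polylog\abs{G}\cdot\log(1/\delta)$ queries and time; (iii) it still works when run over subgroups $L\le G$ and quotients $L/N$ of $G$, assuming the corresponding QFTs are efficient. Property (i) is \cref{thm:efficient-algorithm-for-normal-core-statehsp}. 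Property (ii) follows from $\abs{\mathcal{M}}\le\abs{G}$, provided $\epsilon\ge 1/\polylog\abs{G}$, which is the efficiency regime we implicitly assume.

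For (iii), we must check that \prb{BoseSL} instances restrict and descend correctly. \emph{Restriction:} if $L\le G$, then $\prb{BoseSL}(L,R|_L,\rho)$ is an instance with hidden subgroup $S\cap L$ and minimum asymmetry distance at least $\epsilon$, because the promise holds pointwise for every $g\in G$. \emph{Quotients:} if $N\trianglelefteq L$ and $N\le S$ is already known, we want an instance over $L/N$. We do not need a representation of $L/N$ on $\mathcal{H}$. Instead, we prepare the $N$-twirled state $\rho$ itself, which is unchanged since $R(n)\rho=\rho$ for $n\in N$. We then run Fourier sampling over $L$ and observe that the weak Fourier distribution is supported on $\widehat{L}[N]\cong\widehat{L/N}$, by \cref{lmm:weak-state-fourier-sampling-distribution-supported-on-trivial-irreps-of-hidden-subgroup-normal-core} and \cref{lmm:bijection-between-irreps-of-quotient-group-and-trivial-irreps}. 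The kernel-intersection output of \cref{thm:efficient-algorithm-for-normal-core-statehsp} then computes $\core_L(S\cap L)$, which contains $N$, exactly as required.

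With these pieces in place, Gavinsky's algorithm proceeds as follows.
\begin{enumerate}
\item Compute $C=\core_G(S)$.
\item Use that $B=\Baer(G)$ normalises every subgroup, so $S\cap B$ is normal in $B$. By the restriction step applied to $B$, the normal core algorithm over $B$ returns $S\cap B$ exactly.
\item Use that $[G:B]=\polylog\abs{G}$. This leaves only polynomially many cosets $gB$ to examine, and for each one we decide whether $S\cap gB\neq\emptyset$ and, if so, find a representative.
\end{enumerate}
Gavinsky carries out step 3 via normal-core calls on the subgroups $\langle g, S\cap B\rangle$ or suitable quotients. Each such call translates into one of our restricted or quotient \prb{BoseSL} instances.

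The main obstacle is step 3. Wherever the classical HSP argument queries $f$ directly, for instance to test whether a candidate $g$ lies in $S$ by comparing $f(g)$ with $f(e)$, we have no oracle. The replacement is a symmetry test: estimate $\Tr(R(g)\rho)$ by a Hadamard test with $O(\log(1/\delta)/\epsilon^2)$ copies. The promise gap between $1$ and $1-\epsilon$ makes this membership test reliable. I would first audit \cite{Gav04HSP} to confirm that every step uses only membership queries, normal-core calls on subgroups and quotients, and classical group computations. A union bound over the $\poly\log\abs{G}$ subroutine calls then gives overall success probability $1-\delta$.
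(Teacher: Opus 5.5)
Your proposal is essentially the paper's argument. The paper simply asserts that Gavinsky's reduction uses the normal-core algorithm of \cite{HRT03HSP} as a black box and substitutes \cref{thm:efficient-algorithm-for-normal-core-statehsp}; your restriction observation (that $\prb{BoseSL}(L,R|_L,\rho)$ has hidden subgroup $S\cap L$ and asymmetry distance at least $\epsilon$) is precisely what justifies that substitution, so your version is, if anything, more careful. One correction to your step 3: use the subgroups $\langle g\rangle\Baer(G)$, not $\langle g, S\cap B\rangle$, since the latter need not meet $S\cap gB$ when $g\notin S$. If $h\in S\cap gB$, then $S\cap\langle h\rangle B$ is normalised by both $h$ and $B$, so it is its own normal core in $\langle g\rangle B=\langle h\rangle B$. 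Hence $S$ is generated by the outputs of the $[G:B]$ normal-core calls, and neither quotient instances nor Hadamard-test membership queries are needed.
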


\section{Anyonic symmetry learning}\label{sec:relaxed-state-hsp}

In the \prb{BoseSL} problem, the Bose symmetry condition $\Tr(R(h) \rho) = 1$ is equivalent to $R(h) \rho = \rho$, i.e. $\rho$ is an eigenstate of $R(h)$ with eigenvalue $1$. However, given that in quantum mechanics the states $\ket{\psi}$ and $e^{i \theta} \ket{\psi}$ are physically indistinguishable, a reasonable definition of symmetry might be the weaker condition $\abs{\Tr(R(h) \rho)} = 1$. We refer to this kind of symmetry as \emph{anyonic symmetry}. It is equivalent to $\rho$ being an eigenstate of $R(h)$ (with no restriction on the associated eigenvalue). Thus, we introduce the \emph{anyonic symmetry learning} problem which is the anyonic symmetry analogue of \prb{BoseSL}:

\begin{problem}[Anyonic Symmetry Learning (\prb{AnyonicSL})]\label{prb:relaxed-state-hsp}\leavevmode
    \begin{description}[style=standard]
        \item[Input] Copies of a state $\rho \in \densmats{\mathcal{H}}$.
        \item[Promise] There is a finite group $G$, a unitary linear representation $R: G \rightarrow \mathcal{U}(\mathcal{H})$, a subgroup $S \leq G$, and $0 \leq \epsilon < 1$ such that: 
        \begin{itemize}
            \item $\rho$ possesses \emph{anyonic symmetry} under the action of $S$:
            \begin{equation*}
                \abs{\Tr(R(g) \rho)} = 1 \quad \forall g \in S .
            \end{equation*}
            \item For all $g \notin S$, $\rho$ is $\epsilon$-far from possessing anyonic symmetry under the action of $g$:
            \begin{equation*}
                \abs{\Tr(R(g) \rho)} \leq 1 - \epsilon \quad \forall g \in G \setminus S .
            \end{equation*}
        \end{itemize}
        \item[Task] Find (a set of generators of) the hidden symmetry subgroup $S$.
    \end{description}
    Write such an instance $\mathcal{IA}$ of \prb{AnyonicSL} as $\mathcal{IA} = \prb{AnyonicSL}(G, R, \rho)$. Write $\HSS(\mathcal{IA})$ for the associated hidden symmetry subgroup $S$ and $\MASD(\mathcal{IA})$ for the associated \emph{minimum asymmetry distance} $\epsilon$.
\end{problem}

\noindent Our goal is to reduce \prb{AnyonicSL} to \prb{BoseSL}, since there is already an efficient quantum algorithm for any instance of \prb{AbelianBoseSL} from \cite{HEC25abelianStateHSP}. This leads to the following idea:

\begin{definition}\label{def:representation-endomorphisation}
    If $R : G \rightarrow \mathcal{U}(\mathcal{H})$ is a unitary representation of an abelian group $G$,
    then so is the \textbf{endomorphisation of $R$}, $R \tp R^{-1}: G \rightarrow
    \mathcal{U}(\mathcal{H}) \tp \mathcal{U}(\mathcal{H})$, defined by
    \begin{equation*}
        (R \tp R^{-1}) (g) \coloneqq R(g) \tp R(g)^{-1} = R(g) \tp R(g^{-1}).
    \end{equation*}
\end{definition}

\noindent Suppose we have an instance $\mathcal{IA} = \prb{AnyonicSL}(G, R, \rho)$ of anyonic symmetry learning with state $\rho \in \densmats{\mathcal{H}}$, representation $R : G \rightarrow \mathcal{U}(\mathcal{H})$, hidden symmetry subgroup $S = \HSS(\mathcal{IA}) \leq G$ and minimum asymmetry distance $\epsilon = \MASD(\mathcal{IA})$.

Let $\rho' = \rho \tp \rho \in \densmats{\mathcal{H} \tp \mathcal{H}}$ and $R' = R \tp R^{-1} : G \rightarrow \mathcal{U}(\mathcal{H}) \tp \mathcal{U}(\mathcal{H})$. Then
\begin{equation*}
    \Tr(R'(x) \rho') = \Tr(R(x) \rho) \cdot \Tr(R(x^{-1}) \rho) = \Tr(R(x) \rho) \cdot \Tr(\rho^\dagger R(x)^\dagger) = \abs{\Tr(R(x) \rho)}^2,
\end{equation*}
since $\rho$ is Hermitian. Hence, this instance gives rise to an instance of BSSL with state $\rho'$, representation $R'$, subgroup $S$ and parameter $\epsilon' = 1 - (1 - \epsilon)^2 \leq 2 \epsilon$.

Moreover, we have the following:

\begin{lemma}\label{lmm:implementation-of-endomorphism-representation}
    If we can efficiently implement $\ctrl{R} = \sum_{g \in G} \ketbra{g}{g} \tp R(g)$, then we can efficiently implement $\ctrl{R \tp R^{-1}}$.
\end{lemma}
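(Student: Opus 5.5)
We implement $\ctrl{R \tp R^{-1}}$ on $\CC^G \tp \mathcal{H} \tp \mathcal{H}$ as a short sequence of calls to $\ctrl{R}$ together with classical reversible arithmetic on the group register. The starting point is the identity
\begin{equation*}
    \ctrl{R \tp R^{-1}} = \sum_{g \in G} \ketbra{g}{g} \tp R(g) \tp R(g^{-1}),
\end{equation*}
which factors as the product of two commuting controlled operations. The first is $\sum_g \ketbra{g}{g} \tp R(g) \tp I$, which is just $\ctrl{R}$ applied with control the group register and target the first copy of $\mathcal{H}$. The second is $\sum_g \ketbra{g}{g} \tp I \tp R(g^{-1})$, which needs a separate construction.

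For the second factor we conjugate a call to $\ctrl{R}$ (targeting the second copy of $\mathcal{H}$) by the permutation unitary $P: \ket{g} \mapsto \ket{g^{-1}}$ on the group register. Then
\begin{equation*}
    P^\dagger\, \ctrl{R}\, P = \sum_{g} \ketbra{g}{g} \tp R(g^{-1}),
\end{equation*}
because $P\ket{g} = \ket{g^{-1}}$, the control then selects $R(g^{-1})$, and $P^\dagger$ restores $\ket{g}$. Since $P$ is a bijection on basis states, it is a valid unitary.

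The only step that needs justification is that $P$ can be implemented in time $\polylog\abs{G}$. For abelian $G$ written additively in the standard encoding $G \cong \ZZ_{n_1} \times \cdots \times \ZZ_{n_r}$, $P$ is componentwise negation modulo $n_i$, an efficient reversible arithmetic circuit. More generally, it suffices that inversion in $G$ is classically computable in polynomial time, which holds for the encodings assumed throughout the paper. Then $\ket{g}\ket{0} \mapsto \ket{g}\ket{g^{-1}}$ can be computed reversibly, and the original register uncomputed using the inverse map, which is inversion again. Alternatively, one can keep the ancilla $\ket{g^{-1}}$ as the control for the second call to $\ctrl{R}$ and uncompute it afterwards; this avoids needing in-place inversion.

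Composing these steps gives $\ctrl{R \tp R^{-1}}$ using two calls to $\ctrl{R}$ and $O(1)$ efficient group-arithmetic circuits. I expect no real obstacle: the only point to state carefully is the efficiency of in-place inversion on the group register, which the ancilla trick handles.
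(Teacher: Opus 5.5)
Your proposal is correct and is essentially the paper's argument: the paper copies $g$ into an ancilla, negates it, uses that register as the control for a second call to $\ctrl{R}$ on the second copy of $\mathcal{H}$, and then uncomputes. This is exactly your ancilla variant, and your in-place conjugation by $P$ is an equivalent repackaging; the only cosmetic difference is that the paper implements the negation $\ket{g}\mapsto\ket{-g}$ as $\QFT_G^2$, so it needs nothing beyond the already-assumed efficient QFT, whereas you appeal to efficient classical inversion in the group encoding.
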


\begin{proof}
    Starting in the state $\ket{g} \ket{\psi} \ket{\phi}$, add an ancilla register whose states live in $\CC^{G}$: $\ket{g} \ket{\psi} \ket{0} \ket{\phi}$. Apply the controlled-sum unitary to the $G$-registers, giving $\ket{g} \ket{\psi} \ket{g} \ket{\psi}$. Then apply the negation gate to the third register, giving $\ket{g} \ket{\psi} \ket{-g} \ket{\phi}$. (The negation gate can be implemented by $\QFT_G^2$ for example.) Then apply $\ctrl{R}^{\tp 2}$, giving $\ket{g}\tp R(g) \ket{\psi}\tp \ket{-g} \tp R(g^{-1}) \ket{\phi}$.
    Finally, apply the negation gate again to the $\ket{- g}$ register, then discard this ancilla register, giving $\ket{g} \tp R(g) \ket{\psi} \tp R(g^{-1}) \ket{\phi}$.
\end{proof}

\noindent Thus, any instance of abelian \prb{AnyonicSL} can be reduced to an instance of abelian \prb{BoseSL}, with only a doubling in the number of copies of $\rho$ used, and only a doubling of the number of copies of $\rho$ needed at a time. In fact, we can reduce the number of copies needed, using a new quantum algorithmic primitive which we call \emph{Fourier difference sampling}:

\begin{algorithm}
\caption{Fourier Difference Sampling}\label{alg:fourier-difference-sampling}
\begin{algorithmic}[1]
    \State \textbf{Input}: $m + 1$ copies of a state $\rho$.
    \State Draw $m + 1$ Fourier samples $z'_0, z'_1, \dots, z'_m$ from the distribution $q_{\rho, R}$ using \cref{alg:state-fourier-sample} with representation $R$.
    \State \textbf{Output} $z'_1 - z'_0, \dots, z'_m - z'_0$.
\end{algorithmic}
\end{algorithm}

\begin{lemma}\label{lmm:fourier-difference-sampling-distribution}
    Each of the $m$ outputs of \cref{alg:fourier-difference-sampling} are distributed according to
    \begin{align*}
        q_{\rho \tp \rho, R \tp R^{-1}} (z) = \frac{1}{\abs{G}} \sum_{x \in G} \overline{\chi_{z} (x)} \abs{\Tr(R(x) \rho)}^{2},
    \end{align*}
    and are conditionally independent given $z'_0$.
\end{lemma}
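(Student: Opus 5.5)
The plan is to compute the distribution of a single difference $z = z'_j - z'_0$ directly. The factorisation $q_{\rho \tp \rho, R \tp R^{-1}} = \hat{p}_{\rho \tp \rho, R \tp R^{-1}}$ will then follow from the computation $\Tr((R \tp R^{-1})(x)\, \rho \tp \rho) = \abs{\Tr(R(x)\rho)}^2$ given just before the statement.

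\textbf{Distribution of a single difference.} By \cref{lmm:state-fourier-sampling-output-distribution}, with the convention that the sign flip $\lambda \mapsto -\lambda$ has been applied, each $z'_i$ is an independent sample from $q_{\rho,R} = \hat{p}_{\rho,R}$. Hence $z = z'_j - z'_0$ has law
\[
\Pr(z) = \sum_{y \in \widehat{G}} q_{\rho,R}(y)\, q_{\rho,R}(y+z).
\]
I will insert the Fourier expressions for both factors:
\[
\Pr(z) = \frac{1}{\abs{G}^2}\sum_{x,x'\in G} p_{\rho,R}(x)\, p_{\rho,R}(x')\, \overline{\chi_z(x')} \sum_{y} \overline{\chi_y(x)\chi_y(x')}.
\]
Taking $K = \{0\}$ in \cref{lmm:subgroup-orthogonality-relations}, applied on $\widehat{G}$ via Pontryagin duality, the inner sum equals $\abs{G}\,\indic\{x' = -x\}$. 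This collapses the expression to
\[
\Pr(z) = \frac{1}{\abs{G}}\sum_{x} p_{\rho,R}(x)\, p_{\rho,R}(-x)\, \chi_z(x),
\]
using $\overline{\chi_z(-x)} = \chi_z(x)$.

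\textbf{Matching the claimed formula.} For abelian $G$ and linear $R$ we have $R(-x) = R(x)^\dagger$. Since $\rho$ is Hermitian, $p_{\rho,R}(-x) = \overline{p_{\rho,R}(x)}$, so the product $p_{\rho,R}(x)\,p_{\rho,R}(-x)$ equals $\abs{\Tr(R(x)\rho)}^2$. Because this weight is real and even in $x$, replacing $\chi_z(x)$ by $\overline{\chi_z(x)}$ (that is, substituting $x \mapsto -x$) leaves the sum unchanged. This gives exactly the stated formula, which is $q_{\rho\tp\rho, R\tp R^{-1}}(z)$ by \cref{lmm:state-fourier-sampling-output-distribution} applied to $\rho' = \rho\tp\rho$ and $R' = R \tp R^{-1}$.

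\textbf{Conditional independence.} Conditional on $z'_0$, the outputs $z'_j - z'_0$ are deterministic shifts of the mutually independent $z'_1,\dots,z'_m$. They are therefore conditionally independent given $z'_0$.

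The step needing the most care is the sign and conjugation bookkeeping in the first part. Specifically, it must be checked that the difference $z'_j - z'_0$, rather than the sum, produces the autocorrelation with $p(-x)$, and that the paper's sign convention for $q_{\rho,R}$ is used consistently. The rest is routine.
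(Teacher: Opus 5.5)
Your proposal is correct and takes essentially the paper's route. The paper notes that $-z'_0$ is distributed as $q_{\rho,R^{-1}}$, so the law of $z'_j - z'_0$ is the convolution $q_{\rho,R} * q_{\rho,R^{-1}}$, which is the Fourier transform of $p_{\rho,R}\, p_{\rho,R^{-1}} = p_{\rho\tp\rho, R\tp R^{-1}}$. Your character-orthogonality computation is this convolution theorem unpacked by hand, and you also justify conditional independence explicitly, which the paper's proof leaves unstated.
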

\begin{proof}
    Each of the $m + 1$ samples $z'_0, z'_1, \dots, z'_m$ are independent and distributed according to $q_{\rho, R} = \hat{p}_{\rho, R}$. So $-z'_0$ is distributed according to
    \begin{align*}
        q_{\rho, R} (-z) & = \frac{1}{\abs{G}} \overline{\sum_{x \in G} \overline{\chi_{-z} (x)} \Tr(R(x) \rho)} \\
        & = \frac{1}{\abs{G}} \overline{\sum_{x \in G} {\chi_z (x)} \Tr(R(x) \rho)} = \frac{1}{\abs{G}} \sum_{x \in G} \overline{\chi_z (x)} \Tr(R^{-1}(x) \rho) = q_{\rho, R^{-1}} (z).
    \end{align*}
    So $z'_1 - z'_0$ is distributed according to the convolution $q_{\rho, R} * q_{\rho, R^{-1}}$ which is equal to the Fourier transform of $p_{\rho, R} \cdot p_{\rho, R^{-1}} = p_{\rho \tp \rho, R \tp R^{-1}}$ by \cref{fct:convolution-theorem}. 
\end{proof}

Using \cref{alg:fourier-difference-sampling}, we can solve any instance of abelian \prb{AnyonicSL} by solving the related abelian \prb{BoseSL} problem, while only needing one extra copy of $\rho$. Moreover, we still only require coherent access to one copy of $\rho$ needed at a time. We summarise this result in the following theorem:

\begin{theorem}\label{thm:algorithm-for-anyonic-symmetry-learning}
    Let $\mathcal{IA} = \prb{AnyonicSL}(G, R, \rho)$ be an instance of abelian \prb{AnyonicSL} with hidden symmetry subgroup $S = \HSS(\mathcal{IA})$ and minimum asymmetry distance $\epsilon = \MASD(\mathcal{IA})$. Let $\mathcal{M}$ be the set of all minimal subgroups of $G / S$, so $\abs{\mathcal{M}} \leq \abs{G}$.
    
    Then there is a quantum algorithm, using $O(\log \abs{\mathcal{M}} + \log(1 / \delta)) / \epsilon$ copies of $\rho$), which outputs a generating set of $S$ with probability at least $1 - \delta$. Moreover, if $\ctrl{R}$ and $\QFT_G$ can be implemented efficiently (in time $\polylog \abs{G}$), then the algorithm runs in time $\polylog(\abs{G}) / \epsilon$.
\end{theorem}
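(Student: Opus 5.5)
The plan is to run the normal-core algorithm of \cref{thm:efficient-algorithm-for-normal-core-statehsp} on the Bose symmetry learning instance $\mathcal{IB} = \prb{BoseSL}(G, R \tp R^{-1}, \rho \tp \rho)$, but to produce its weak Fourier samples by Fourier difference sampling (\cref{alg:fourier-difference-sampling}). Since $G$ is abelian, every subgroup is normal, so $\core_G(S) = S$ and the set $\mathcal{M}$ in that theorem is exactly the set of minimal subgroups of $G/S$ from the statement.

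First I will verify that $\mathcal{IB}$ is a valid instance with hidden subgroup $S$. From the computation preceding \cref{lmm:implementation-of-endomorphism-representation}, $p_{\rho\tp\rho, R\tp R^{-1}}(x) = \abs{\Tr(R(x)\rho)}^2$. This equals $1$ for $x \in S$ and is at most $(1-\epsilon)^2 = 1-\epsilon'$ for $x \notin S$, where $\epsilon' = 2\epsilon - \epsilon^2 \geq \epsilon$. So $\HSS(\mathcal{IB}) = S$ and $\MASD(\mathcal{IB}) \geq \epsilon$.

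Next, \cref{lmm:fourier-difference-sampling-distribution} shows that each output $z_j = z'_j - z'_0$ has marginal law $q_{\rho\tp\rho, R\tp R^{-1}}$. The outputs are independent conditional on $z'_0$, but not unconditionally, and this is the main obstacle: the proof of \cref{thm:efficient-algorithm-for-normal-core-statehsp} used independence of the samples in the step bounding $\Pr(K \leq A)$. I would handle it by conditioning on $z'_0$. By \cref{fct:fourier-sampling-distribution-supported-on-dual-of-symmetry-subgroup}, $z'_0$ lies in the coset $\lambda_0 + S^\perp$ for some character $\lambda_0$ with $\chi_{\lambda_0}$ restricted to $S$ equal to the eigenvalue character of $\rho$ (note $R(s)\rho = \chi_{\lambda_0}(s)\rho$ for $s\in S$).

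Fix a subgroup $K > S$, and use the Poisson summation identity on cosets (\cref{lmm:poisson-summation-identity-on-cosets}). Conditioned on $z'_0 = \mu$, the probability that $z'_j - \mu \in K^\perp$ is $q_{\rho,R}(\mu + K^\perp) = \abs{\EE_{g\in K} \overline{\chi_\mu(g)} \Tr(R(g)\rho)}$. Here I use that this quantity is real and non-negative, being a probability. Now split $K$ into $S$ and $K \setminus S$. On $S$, the summand has modulus $1$. On $K \setminus S$, it has modulus at most $1-\epsilon$, and $\abs{S} \leq \abs{K}/2$. The triangle inequality then gives at most $1 - \epsilon/2$, exactly as in \cref{lmm:weak-state-fourier-sampling::anti-concentration-of-distribution}. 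Hence, for every $\mu$, $\Pr(K \leq \bigcap_j \ker z_j \mid z'_0 = \mu) \leq (1-\epsilon/2)^m$.

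From here the union bound over $K \in \mathcal{M'}$ from the proof of \cref{thm:efficient-algorithm-for-normal-core-statehsp} goes through unchanged. Taking $m = O(\log\abs{\mathcal{M}} + \log(1/\delta))/\epsilon$ makes the failure probability at most $\delta$, using $m+1$ copies of $\rho$, one at a time. On success, $\{z_j\}$ generates $S^\perp$, and $S = (S^\perp)^\perp$ is recovered by the standard abelian linear algebra (Smith normal form) in $\polylog\abs{G}$ time. Each sample costs one $\ctrl{R}$ and two $\QFT_G$ calls, so the total time is $\polylog(\abs{G})/\epsilon$.
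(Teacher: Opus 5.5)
Your proposal is correct and follows the paper's route: Fourier difference sampling for the instance $\rho\tp\rho$ with $R\tp R^{-1}$, conditioning on $z'_0$ to make the samples independent, a union bound over the minimal overgroups of $S$, and the law of total probability. The paper leaves one step implicit, namely that $\Pr(z_j \in K^\perp \mid z'_0 = \mu) = q_{\rho,R}(\mu + K^\perp) \le 1 - \epsilon/2$ for every $\mu$, and you supply it correctly via \cref{lmm:poisson-summation-identity-on-cosets}; your aside that $z'_0$ lies in a coset of $S^\perp$ is true, but it is not what \cref{fct:fourier-sampling-distribution-supported-on-dual-of-symmetry-subgroup} states, and the argument does not need it.
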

\begin{proof}
    The proof is almost identical to that of \cref{thm:efficient-algorithm-for-normal-core-statehsp}. The only difference is that we use \cref{alg:fourier-difference-sampling} to obtain samples from $q_{\rho \tp \rho, R \tp R^{-1}}$ instead of using \cref{alg:strong-state-fourier-sample}. Say the Fourier difference samples obtained are $z_1 = z'_1 - z'_0, \dots, z_m = z'_m - z'_0$. Instead of being unconditionally independent, $z_1, \dots, z_m$ are conditionally independent given $z'_0$. So we consider the probabilities conditioned on $z'_0$ (e.g. in \cref{eq:normal-core-state-hsp::failure-probability-upper-bound}). The same $m$ gives an upper bound of $\delta$ for the failure probability, and then we are done by the law of total probability.

    A more detailed proof which deals explicitly with the conditional independence is given in the proof of \cref{thm:algorithm-for-projective-state-hsp}.
\end{proof}

\subsection{Learning Bose symmetries when anyonic symmetries are present}

Suppose we have an instance $\prb{AnyonicSL}(G, R, \rho)$ of abelian anyonic symmetry learning, but we are only interested in learning the \emph{Bose} symmetries of $\rho$, but not the \emph{anyonic} symmetries.

Let $T \leq G$ be the subgroup of all anyonic symmetries, i.e. $T = \{g \in G: \abs{\Tr(R(g) \rho)} = 1\}$. Let $S \leq T$ be the subgroup of all Bose symmetries, i.e. $S = \{g \in G: \abs{\Tr(R(g) \rho)} = 1\}$.

\begin{lemma}\label{lmm:fourier-sampling-zero-mass-on-anyonic-subgroup-dual-for-bose-symmetry}
    The Fourier sampling distribution $q_{\rho, R}$ is supported on $S^\perp$, and has zero mass on $U^\perp$ for all $S < U \leq T$.
\end{lemma}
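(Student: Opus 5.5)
The plan is to derive both claims from the mass identity of \cref{lmm:fourier-sampling-distribution-mass-on-subgroup-dual}: for every subgroup $K \leq G$,
\[
q_{\rho, R}(K^\perp) = \EE_{g \in K} p_{\rho, R}(g).
\]
Throughout, I take $S$ to be the Bose-symmetry subgroup $S = \{g \in G : \Tr(R(g)\rho) = 1\}$. The displayed definition of $S$ in the text repeats the absolute value, which I assume is a typo.

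For the support claim I take $K = S$. Every $g \in S$ has $p_{\rho,R}(g) = 1$, so $q_{\rho,R}(S^\perp) = 1$. This is just \cref{fct:fourier-sampling-distribution-supported-on-dual-of-symmetry-subgroup} again.

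For the zero-mass claim the key step is to show that $p_{\rho,R}$ restricted to $T$ is a character of $T$.

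\begin{itemize}
\item \emph{Each $g \in T$ acts on $\rho$ as a scalar.} Write $\rho = \sum_i p_i \ketbra{\psi_i}{\psi_i}$ with $p_i > 0$. Then $\Tr(R(g)\rho) = \sum_i p_i \braket{\psi_i | R(g) | \psi_i}$ is a convex combination of points in the unit disc $\DD$. If its modulus is $1$, all terms must equal a common unimodular $c_g$. Equality in Cauchy--Schwarz then gives $R(g)\ket{\psi_i} = c_g \ket{\psi_i}$ for every $i$, so $R(g)\rho = c_g \rho$.
\item \emph{$g \mapsto c_g$ is a homomorphism.} Since $R$ is linear, $c_{gh}\rho = R(g)R(h)\rho = c_g c_h \rho$. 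Hence $g \mapsto c_g = p_{\rho,R}(g)$ is a homomorphism $T \to S^1$ whose kernel is exactly $S$.
\end{itemize}

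Now fix $S < U \leq T$. The restriction of $p_{\rho,R}$ to $U$ is a character of $U$. It is non-trivial because $U \not\subseteq S$.

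By the second subgroup orthogonality relation of \cref{lmm:subgroup-orthogonality-relations}, applied inside the group $U$ (the usual shift argument), a non-trivial character sums to zero over $U$. Therefore $q_{\rho,R}(U^\perp) = \EE_{g \in U} p_{\rho,R}(g) = 0$. Because $q_{\rho,R}$ is a probability distribution, this means it places zero mass at every point of $U^\perp$.

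The only step that is not immediate bookkeeping is the first bullet, that $|\Tr(R(g)\rho)| = 1$ forces $\rho$ to be an eigen-operator of $R(g)$ for mixed states. I expect this to be the main point to write carefully, using the extremality of points of $S^1$ in $\DD$ together with the Cauchy--Schwarz equality case.
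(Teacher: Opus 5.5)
Your proof is correct, but it establishes the key fact (that $p_{\rho,R}$ restricted to $T$ is a character) by a different route from the paper; the support claim on $S^\perp$ is handled the same way in both.

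The paper argues Fourier-analytically:
\begin{enumerate}
\item By \cref{lmm:fourier-difference-sampling-distribution}, the difference of two independent Fourier samples is distributed as $q_{\rho\otimes\rho, R\otimes R^{-1}}$, which is supported on $T^\perp$.
\item Hence $q_{\rho,R}$ itself lies on a single coset $\lambda' + T^\perp$, with $\lambda' \in S^\perp$.
\item The coset Poisson formula (\cref{lmm:poisson-summation-identity-on-cosets}) gives $1 = \EE_{g\in T}\overline{\chi_{\lambda'}(g)}\,p_{\rho,R}(g)$. By extremality this forces $p_{\rho,R} = \chi_{\lambda'}$ on $T$, so $S = T \cap \langle \lambda' \rangle^\perp$.
\item Finally $q_{\rho,R}(U^\perp) = \EE_{g\in U}\chi_{\lambda'}(g) = \indic\{U \leq S\} = 0$.
\end{enumerate}

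You instead apply the extremality argument at the operator level. The convex-combination and Cauchy--Schwarz equality cases show that $\rho$ is a common eigen-operator of $R(T)$, and you then use the shift argument inside $U$.

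Your version is more elementary and self-contained. It needs neither the convolution theorem nor the coset Poisson formula, it shows along the way that $T$ is a subgroup, and it proves for mixed states the eigen-operator fact that the paper uses without justification in \cref{sec:non-abelian-anyonic-symmetry-learning}.

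The paper's version reuses its Fourier machinery and gives extra structure. It shows that the Fourier sampling distribution sits on a single coset of $T^\perp$ inside $S^\perp$. It also shows that the phase character on $T$ is the restriction of a global character $\chi_{\lambda'}$ of $G$.

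You are also right that the displayed definition of $S$ is a typo and should read $\Tr(R(g)\rho) = 1$.
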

\begin{proof}
    The support on $S^\perp$ follows from \cref{lmm:fourier-sampling-distribution-mass-on-subgroup-dual}. The convoluted distribution $q_{\rho, R} * q_{\rho, R^{-1}}$ is supported on $T^\perp$ by \cref{lmm:fourier-difference-sampling-distribution}; hence, $q_{\rho, R}$ is also supported on $\lambda' + T^\perp$ for some $\lambda' \in \widehat{G}$. Since $T^\perp \leq S^\perp$, we must have $\lambda' \in S^\perp$.
    
    By \cref{lmm:poisson-summation-identity-on-cosets}, we have $1 = \sum_{z \in \lambda' + T^\perp} q_{\rho, R} (z) = \EE_{g \in T} \overline{\chi_{\lambda'} (g)} p_{\rho, R} (g)$, so it must be that $p_{\rho, R} (g) = \chi_{\lambda'} (g)$ for all $g \in T$. Thus, $S = T \cap \gen{\lambda'}^\perp$. For $S < U \leq T$,
    \begin{align*}
        \sum_{\lambda \in U^\perp} q_{\rho, R} (z) & = \EE_{g \in U} p_{\rho, R} (g) \quad & \text{by \cref{lmm:poisson-summation-identity}} \\
        & = \EE_{g \in U} \chi_{\lambda'} (g) \\
        & = \indic\{\lambda' \in U^\perp\} \quad & \text{by \cref{lmm:subgroup-orthogonality-relations}} \\
        & = \indic\{U \leq \gen{\lambda'}^\perp\} = \indic\{U \leq S\} = 0.
    \end{align*}
\end{proof}

\begin{lemma}\label{lmm:fourier-sampling-anti-concentration-for-bose-vs-anyonic-symmetry}
    For any $K > S$, $q_{\rho, R} (K^\perp) \leq 1 - \epsilon / 2$.
\end{lemma}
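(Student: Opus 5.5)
Throughout, $\epsilon$ is the minimum asymmetry distance of the instance, so $\abs{\Tr(R(g)\rho)} \leq 1-\epsilon$ for all $g \in G \setminus T$. The plan is to mirror the proof of \cref{fct:fourier-sampling-distribution-anti-concentration} (equivalently \cref{lmm:weak-state-fourier-sampling::anti-concentration-of-distribution}). The one new ingredient is that elements of $T \setminus S$ have $\abs{p_{\rho,R}(g)} = 1$, so they cannot be bounded by $1-\epsilon$. Instead they are controlled via the character $\chi_{\lambda'}$ found in the proof of \cref{lmm:fourier-sampling-zero-mass-on-anyonic-subgroup-dual-for-bose-symmetry}. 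That proof gives $p_{\rho,R}(g) = \chi_{\lambda'}(g)$ for all $g \in T$, with $S = T \cap \gen{\lambda'}^\perp$.

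\textbf{Step 1.} Start from \cref{lmm:fourier-sampling-distribution-mass-on-subgroup-dual}, which gives $q_{\rho,R}(K^\perp) = \EE_{g \in K} p_{\rho,R}(g)$. This quantity is a nonnegative real, so it equals its absolute value. Split the sum over $K$ into the part over $K \cap T$ and the part over $K \setminus T$. This gives
\begin{equation*}
q_{\rho,R}(K^\perp) \leq \frac{\abs{K\cap T}}{\abs{K}}\Abs{\EE_{g\in K\cap T}\chi_{\lambda'}(g)} + \frac{\abs{K\setminus T}}{\abs{K}}(1-\epsilon).
\end{equation*}

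\textbf{Step 2.} Apply \cref{lmm:subgroup-orthogonality-relations} to the subgroup $U = K \cap T$. The average $\EE_{g \in U}\chi_{\lambda'}(g)$ equals $1$ if $U \leq \gen{\lambda'}^\perp$, and $0$ otherwise. The first case holds exactly when $U \leq T \cap \gen{\lambda'}^\perp = S$.

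\textbf{Step 3.} Split into cases.
\begin{itemize}
\item If $K \cap T \not\leq S$, the first term vanishes. Then $q_{\rho,R}(K^\perp) \leq 1-\epsilon$, which is at most $1-\epsilon/2$.
\item If $K \cap T \leq S$, then $K \cap T = S$, since $S \leq K$ and $S \leq T$. As $K > S$, Lagrange's theorem gives $\abs{S} \leq \abs{K}/2$. The bound then becomes
\begin{equation*}
\frac{\abs{S}}{\abs{K}} + \Bigl(1-\frac{\abs{S}}{\abs{K}}\Bigr)(1-\epsilon) \leq 1-\frac{\epsilon}{2},
\end{equation*}
exactly as in \cref{lmm:weak-state-fourier-sampling::anti-concentration-of-distribution}.
\end{itemize}

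The main obstacle is recognising that the elements of $T \setminus S$ must be handled by exact cancellation rather than by the triangle inequality. The identification $p_{\rho,R}|_T = \chi_{\lambda'}$ from the previous lemma supplies this cancellation.
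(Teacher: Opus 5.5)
Your proof is correct and follows the paper's argument: the same case split into $K\cap T = S$ (handled by the identical Lagrange-theorem estimate) and $K \cap T > S$. The only difference is in the latter case. There the paper obtains the stronger conclusion $q_{\rho,R}(K^\perp) = 0$ by combining the preceding zero-mass lemma with the inclusion $K^\perp \subseteq (K\cap T)^\perp$, whereas you redo the cancellation directly from $p_{\rho,R}|_T = \chi_{\lambda'}$ and character orthogonality, which gives the weaker but sufficient bound $1-\epsilon$.
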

\begin{proof}
    By \cref{lmm:fourier-sampling-distribution-mass-on-subgroup-dual}, $q_{\rho, R} (K^\perp) = \EE_{g \in K} p_{\rho, R} (g)$. Suppose $K \cap T > S$. Then since $K \cap T \leq T$, we have $q_{\rho, R} ((K \cap T)^\perp) = 0$ by \cref{lmm:fourier-sampling-zero-mass-on-anyonic-subgroup-dual-for-bose-symmetry}. Since $K^\perp \subseteq (K \cap T)^\perp$, also $q_{\rho, R} (K^\perp) = 0$. Otherwise, if $K \cap T = S$, then since $(K \setminus S) \cap T = \emptyset$, we have $\abs{p_{\rho, R} (x)} \leq 1 - \epsilon$ for all $x \in K \setminus S$. Hence,
    \begin{equation*}
        q_{\rho, R} (K^\perp) = \frac{1}{\abs{K}} \left( \sum_{x \in K \cap S} p_{\rho, R} (x) + \sum_{x \in K \setminus S} p_{\rho, R} (x) \right) \leq \frac{\abs{S}}{\abs{K}} + \frac{\abs{K} - \abs{S}}{\abs{K}} (1 - \epsilon) \leq 1 - \epsilon / 2,
    \end{equation*}
    since $\abs{K} \geq 2 \abs{S}$ by Lagrange's theorem.
\end{proof}

\noindent \cref{lmm:fourier-sampling-zero-mass-on-anyonic-subgroup-dual-for-bose-symmetry} is analogous to \cref{lmm:weak-state-fourier-sampling-distribution-supported-on-trivial-irreps-of-hidden-subgroup-normal-core} and \cref{lmm:fourier-sampling-anti-concentration-for-bose-vs-anyonic-symmetry} is analogous to \cref{lmm:weak-state-fourier-sampling::anti-concentration-of-distribution}, and the proof of \cref{thm:efficient-algorithm-for-normal-core-statehsp} only depends on \cref{lmm:weak-state-fourier-sampling-distribution-supported-on-trivial-irreps-of-hidden-subgroup-normal-core,lmm:weak-state-fourier-sampling::anti-concentration-of-distribution}. Hence, we can relax the asymmetry condition in abelian Bose symmetry learning to $\abs{\Tr(R(x) \rho)} \leq 1 - \epsilon$ OR $\Tr(R(x) \rho) \in S^1 \setminus \{1\}$ for all $x \in G \setminus S$, and the algorithm in \cref{thm:efficient-algorithm-for-normal-core-statehsp} still produces the same output. Thus, we redefine abelian \prb{Bose} symmetry learning (abelian \prb{StateHSP}) as follows:

\begin{problem}[Abelian \prb{BoseSL} --- Redefined]\label{prb:redefined-abelian-bosesl}\leavevmode
    \begin{description}[style=standard]
        \item[Input] Copies of a state $\rho \in \densmats{\mathcal{H}}$.
        \item[Promise] There is a finite abelian group $G$, a unitary representation $R: G \rightarrow \mathcal{U}(\mathcal{H})$, a subgroup $S \leq G$, and $0 \leq \epsilon < 1$ such that: 
        \begin{itemize}
            \item $\rho$ possesses Bose symmetry under the action of $S$:
            \begin{equation*}
                \Tr(R(g) \rho) = 1 \quad \forall g \in S .
            \end{equation*}
            \item For all $g \notin S$, either $\rho$ is $\epsilon$-far from possessing Bose symmetry under the action of $g$, or $\rho$ possesses anyonic, non-Bose symmetry under the action of $g$:
            \begin{equation*}
                \abs{\Tr(R(g) \rho)} \leq 1 - \epsilon \quad \text{or} \quad \Tr(R(g) \rho) \in S^1 \setminus \{1\} \quad \forall g \in G \setminus S .
            \end{equation*}
        \end{itemize}
        \item[Task] Find (a set of generators of) the hidden symmetry subgroup $S$.
    \end{description}
\end{problem}

\noindent \cref{lmm:fourier-sampling-zero-mass-on-anyonic-subgroup-dual-for-bose-symmetry,lmm:fourier-sampling-anti-concentration-for-bose-vs-anyonic-symmetry}, together with the proof of \cref{thm:efficient-algorithm-for-normal-core-statehsp}, give the following:

\begin{theorem}
    Abelian \prb{BoseSL}, in its new formulation above, can be solved in $\polylog \abs{G}$ time with probability at least $1 - \delta$ using $O(\log \abs{\mathcal{M}} + \log(1 / \delta)) / \epsilon$ copies of $\rho$, where $\mathcal{M}$ is the set of all minimal subgroups of $G / S$.
\end{theorem}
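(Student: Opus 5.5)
The plan is to rerun the proof of \cref{thm:efficient-algorithm-for-normal-core-statehsp} essentially verbatim in the abelian setting, replacing its two distributional inputs by \cref{lmm:fourier-sampling-zero-mass-on-anyonic-subgroup-dual-for-bose-symmetry,lmm:fourier-sampling-anti-concentration-for-bose-vs-anyonic-symmetry}. The algorithm is: run \cref{alg:state-fourier-sample} (with the convention $\lambda\mapsto-\lambda$) $m$ times on fresh copies of $\rho$, obtaining i.i.d.\ samples $\lambda_1,\dots,\lambda_m \sim q_{\rho,R}$. Then output $A = \bigcap_j \ker \chi_{\lambda_j} = \gen{\lambda_1,\dots,\lambda_m}^\perp$, computed classically by Gaussian elimination / Smith normal form over $G$, exactly as in \cite{HEC25abelianStateHSP}. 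Each sample costs one copy and $\polylog\abs{G}$ time, since $\QFT_G$ and $\ctrl{R}$ are assumed efficient and $G$ is abelian. Hence the running time is $m\cdot\polylog\abs{G}$ plus polynomial post-processing.

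For correctness, first note that \cref{lmm:fourier-sampling-zero-mass-on-anyonic-subgroup-dual-for-bose-symmetry} places all the mass of $q_{\rho,R}$ on $S^\perp$. So every $\lambda_j \in S^\perp$, whence $S \le A$ always. Let $\mathcal{M}'$ be the set of subgroups $K$ with $S < K$ minimal; these correspond bijectively to $\mathcal{M}$ via $K\mapsto K/S$. Then $A \ne S$ iff $K \le A$ for some $K\in\mathcal{M}'$. Moreover, $K\le A$ iff every $\lambda_j\in K^\perp$, using the inclusion-reversing duality of \cref{prop:facts-about-dual-subgroup}. Note that the new promise of \cref{prb:redefined-abelian-bosesl} is used only through the two lemmas: the anyonic-but-not-Bose elements of $T\setminus S$ are handled by \cref{lmm:fourier-sampling-anti-concentration-for-bose-vs-anyonic-symmetry}, whose case split gives $q_{\rho,R}(K^\perp)=0$ when $K\cap T>S$, and gives the bound $1-\epsilon/2$ otherwise.

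By independence and \cref{lmm:fourier-sampling-anti-concentration-for-bose-vs-anyonic-symmetry}, $\Pr(K\le A)\le(1-\epsilon/2)^m\le e^{-m\epsilon/2}$ for each $K\in\mathcal{M}'$. A union bound then gives $\Pr(A\ne S)\le\abs{\mathcal{M}}e^{-m\epsilon/2}$, which is at most $\delta$ once $m = \lceil 2(\ln\abs{\mathcal{M}}+\ln(1/\delta))/\epsilon\rceil$. This is the claimed copy complexity.

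The only step I expect to need care is verifying that \cref{lmm:fourier-sampling-zero-mass-on-anyonic-subgroup-dual-for-bose-symmetry,lmm:fourier-sampling-anti-concentration-for-bose-vs-anyonic-symmetry} remain valid under the redefined promise. Those lemmas were phrased with $T$ the anyonic subgroup, and $T$ need not satisfy an $\epsilon$-gap itself. However, their proofs only use that elements outside $T$ have $\abs{p_{\rho,R}}\le 1-\epsilon$, which is exactly the promise. They also use the structural fact $p_{\rho,R}|_T=\chi_{\lambda'}$, derived from the support of the Fourier difference distribution on $T^\perp$ via \cref{lmm:fourier-difference-sampling-distribution}, which needs no gap. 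I will therefore check that chain explicitly and otherwise treat the argument as routine.
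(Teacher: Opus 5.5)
Your proposal is correct and is exactly the paper's argument: the paper proves this theorem by observing that \cref{lmm:fourier-sampling-zero-mass-on-anyonic-subgroup-dual-for-bose-symmetry,lmm:fourier-sampling-anti-concentration-for-bose-vs-anyonic-symmetry} can replace the support and anti-concentration inputs in the proof of \cref{thm:efficient-algorithm-for-normal-core-statehsp}. Your explicit check that those two lemmas need no gap on $T$, only $\abs{p_{\rho,R}}\le 1-\epsilon$ outside $T$ (which is precisely what the redefined promise gives), fills in a point the paper leaves implicit.
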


\subsection{Anyonic symmetry learning over non-abelian groups}\label{sec:non-abelian-anyonic-symmetry-learning}

Unlike for representations of abelian groups, the endomorphisation of a representation $R$ of a non-abelian group, $R \tp R^{-1}$, is not necessarily a representation, and so we cannot use the endomorphisation trick (or Fourier difference sampling) to reduce \prb{AnyonicSL} over $G$ to \prb{BoseSL} over $G$ for non-abelian groups $G$. However, we show in this section that we can still reduce \prb{AnyonicSL} over $G$ to \prb{BoseSL} over a larger group which contains $G$.

Let $G$ be a finite group with exponent $E = \exp(G)$ and $R: G \to \mathcal{U}(\mathcal{H})$ be a linear representation of $G$. For all $g \in G$, we have
\begin{equation*}
    R(g)^E = R(g^E) = R(e_G) = I_{\mathcal{H}}.
\end{equation*}
Hence, the eigenvalues of $R(g)$ are all $E$-th roots of unity, i.e. powers of $\omega = e^{2 \pi i / E}$. Define the following representation $R': G \times \ZZ_E \to \mathcal{U}(\mathcal{H})$ of $G \times \ZZ_E$ by
\begin{equation*}
    R'(g, k) = \omega^{-k} R(g) \quad \forall g \in G, k \in \ZZ_E.
\end{equation*}

Now consider an instance $\mathcal{IA} = \prb{AnyonicSL}(G, R, \rho)$ of anyonic symmetry learning with hidden symmetry subgroup $S = \HSS(\mathcal{IA})$ and minimum asymmetry distance $\epsilon = \MASD(\mathcal{IA})$. For each $g \in S$, we have $\abs{\Tr(R(g) \rho)} = 1$, so there exists $k_g \in \ZZ_E$ such that $\Tr(R(g) \rho) = \omega^{k_g}$. We must have $k_{g + h} = k_g + k_h$ for all $g, h \in S$, since $R(g + h) \rho = R(g) R(h) \rho = \omega^{k_g + k_h} \rho$. Therefore the set $S' \coloneq \{ (g, k_g): g \in S \}$ is a subgroup of $G \times \ZZ_E$. It is clear that $R'(g, k_g) \rho = \rho$ for all $(g, k_g) \in S'$, and for all $(g, k) \notin S'$, $\abs{\Tr(R'(g, k) \rho)} \leq 1 - \epsilon$. Since $S = \pi_G (S')$ is the projection of $S'$ onto $G$, we can recover a set of generators for $S$ from a set of generators for $S'$, by taking the $G$-component of each generator.

Thus, we have proven the following:

\begin{theorem}\label{thm:reduction-of-non-abelian-anyonic-symmetry-learning-to-bose-symmetry-learning}
    Any instance $\mathcal{IA} = \prb{AnyonicSL}(G, R, \rho)$ of anyonic symmetry learning over a finite group $G$ can be reduced to an instance $\mathcal{IA}' = \prb{BoseSL}(G \times \ZZ_E, R', \rho)$ of Bose symmetry learning over the group $G \times \ZZ_E$.
\end{theorem}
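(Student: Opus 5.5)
The plan is to verify directly that $(G \times \ZZ_E, R', \rho)$ is a valid instance whose hidden symmetry subgroup is $S'$, and that $S$ can be read off from $S'$. The routine checks, in order, are the following.

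\begin{enumerate}[label=(\roman*)]
\item $R'$ is a linear representation of $G \times \ZZ_E$. Since $\omega^{-k}$ is a character of $\ZZ_E$ commuting with everything, $R'(g,k)R'(h,l) = \omega^{-(k+l)}R(gh) = R'(gh,k+l)$. Moreover $\ctrl{R'}$ is $\ctrl{R}$ followed by a controlled phase on the $\ZZ_E$ register, so it is efficient whenever $\ctrl{R}$ is.
\item Since $\rho$ is mixed, $\abs{\Tr(R(g)\rho)} = 1$ forces $R(g)\rho = \omega^{k_g}\rho$ (equality in Cauchy--Schwarz on each eigenvector of $\rho$). The eigenvalue relation then gives $k_{gh} = k_g + k_h$, written multiplicatively for non-abelian $G$. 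Hence $g \mapsto k_g$ is a homomorphism $S \to \ZZ_E$, and $S'$ is its graph, a subgroup.
\item $\Tr(R'(g,k_g)\rho) = 1$ on $S'$, and $\pi_G(S') = S$, so generators of $S'$ project to generators of $S$.
\end{enumerate}

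The main obstacle is the asymmetry promise. For $(g,k) \notin S'$ with $g \notin S$ we have $\abs{\Tr(R'(g,k)\rho)} = \abs{\Tr(R(g)\rho)} \le 1 - \epsilon$, as required. But for $g \in S$ and $k \neq k_g$ we get $\Tr(R'(g,k)\rho) = \omega^{k_g - k} \in S^1 \setminus \{1\}$, which violates the strict \prb{BoseSL} promise. So the reduced instance only satisfies the relaxed promise of \cref{prb:redefined-abelian-bosesl}, now over a possibly non-abelian group. I would therefore prove the reduction in the sense that the algorithm of \cref{thm:efficient-algorithm-for-normal-core-statehsp} (and hence \cref{thm:bose-symmetry-learning-over-poly-near-hamiltonian-groups}, which only calls it) still succeeds. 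The proof of that theorem uses only \cref{lmm:weak-state-fourier-sampling-distribution-supported-on-trivial-irreps-of-hidden-subgroup-normal-core} and \cref{lmm:weak-state-fourier-sampling::anti-concentration-of-distribution}. The former still holds verbatim, since it only needs $R'(x)\rho = \rho$ on $\core(S')$. It remains to reprove the anti-concentration bound.

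For anti-concentration, let $T' = S \times \ZZ_E$, which is a subgroup. On $T'$ we have $R'(h)\rho = \chi(h)\rho$ with $\chi(g,k) = \omega^{k_g - k}$ a one-dimensional character of $T'$ whose kernel is exactly $S'$. Take $K > S'$ and $N = \ncl(K)$. By \cref{lmm:probability-of-triv-set-under-weak-state-fourier-sampling}, $q(\widehat{G \times \ZZ_E}[K]) = \EE_{x \in N} p_{\rho,R'}(x)$. Split $N$ into left cosets $x(N \cap T')$. Then $p(xh) = \Tr(R'(x)R'(h)\rho) = \chi(h)\,p(x)$, so each coset contributes $p(x)\sum_{h \in N \cap T'} \chi(h)$. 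This splits into two cases.

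\emph{Case $N \cap T' \not\subseteq S'$.} Then $\chi$ is non-trivial on $N \cap T'$, every coset sum vanishes, and the mass is $0 \le 1 - \epsilon/2$.

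\emph{Case $N \cap T' \subseteq S'$.} Since $S' \subseteq N$, this gives $N \cap T' = S'$. Every $x \in N \setminus S'$ then lies outside $T'$, so $\abs{p(x)} \le 1 - \epsilon$. The triangle-inequality computation of \cref{lmm:weak-state-fourier-sampling::anti-concentration-of-distribution}, together with $\abs{N} \ge 2\abs{S'}$, gives $1 - \epsilon/2$.

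This mirrors \cref{lmm:fourier-sampling-anti-concentration-for-bose-vs-anyonic-symmetry} and completes the reduction, with the same parameter $\epsilon$.
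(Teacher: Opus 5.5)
Your proposal is correct, and it is more careful than the paper's own argument. The paper proves the statement in the paragraph preceding it. It defines $R'$ and $S'$ exactly as you do, then asserts that $\abs{\Tr(R'(g,k)\rho)}\le 1-\epsilon$ for all $(g,k)\notin S'$, i.e.\ that $(G\times\ZZ_E,R',\rho)$ is a genuine \prb{BoseSL} instance. As you point out, that assertion is false. For $g\in S$ and $k\neq k_g$ one has $\Tr(R'(g,k)\rho)=\omega^{k_g-k}\in S^1\setminus\{1\}$; already $(e_G,k)$ with $k\neq 0$ is a counterexample whenever $E\ge 2$. So the reduced instance only satisfies the relaxed promise of \cref{prb:redefined-abelian-bosesl}, now over a non-abelian group.

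Your route, showing that the normal-core algorithm still succeeds under that relaxed promise, is what the reduction actually needs to be usable. Your anti-concentration argument uses the character $\chi(g,k)=\omega^{k_g-k}$ on $T'=S\times\ZZ_E$, whose kernel is $S'$, and decomposes $N$ into left cosets of $N\cap T'$. This is the non-abelian replacement for the Poisson-summation proofs of \cref{lmm:fourier-sampling-zero-mass-on-anyonic-subgroup-dual-for-bose-symmetry,lmm:fourier-sampling-anti-concentration-for-bose-vs-anyonic-symmetry}, and it is more elementary since it needs no duality. Your Cauchy--Schwarz justification that $\abs{\Tr(R(g)\rho)}=1$ forces $R(g)\rho=\omega^{k_g}\rho$ for mixed $\rho$ also makes explicit a step the paper only asserts.

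One point to tighten. The union bound in the proof of \cref{thm:efficient-algorithm-for-normal-core-statehsp} runs over subgroups $K$ that are minimal over $\core(S')$. When $S'$ is not normal in $G\times\ZZ_E$, such $K$ need not contain $S'$. (The paper's \cref{lmm:weak-state-fourier-sampling::anti-concentration-of-distribution} is stated for $K>S$ but applied to exactly these $K$.) So you need the bound for every $K$ with $N=\ncl(K)\not\subseteq S'$, whereas your Case 2 uses $S'\subseteq N$.

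No new idea is required. Case 1 is unchanged. In Case 2 you get $N\cap T'=N\cap S'$, which is a proper subgroup of $N$, and every $x\in N\setminus S'$ lies outside $T'$. The same triangle-inequality estimate with $\abs{N}\ge 2\abs{N\cap S'}$ then gives $1-\epsilon/2$.

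Similarly, the poly-near-Hamiltonian algorithm may call the normal-core routine on restrictions to subgroups $L$. Your structure restricts cleanly: $T'\cap L$ is a subgroup and $\chi$ restricted to it has kernel $S'\cap L$. So the same argument applies there as well.
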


Note that this reduction also allows us to learn Bose symmetries when other anyonic symmetries are present: if $T$ is the set of Bose symmetries of $\rho$ (i.e. $T = \{g \in G: \Tr(R(g) \rho) = 1\}$), then $T = \{g \in G: (g, 0) \in S'\}$, so we can recover a set of generators for $T$ from a set of generators for $S'$ by taking the $G$-component of all generators whose $\ZZ_E$-component is $0$.

\cref{thm:reduction-of-non-abelian-anyonic-symmetry-learning-to-bose-symmetry-learning} shows that if $G \times \ZZ_E$ admits an efficient quantum algorithm for \prb{BoseSL}, then $G$ admits an efficient quantum algorithm for \prb{AnyonicSL}. From \cref{thm:bose-symmetry-learning-over-poly-near-hamiltonian-groups}, we know that any poly-near Hamiltonian group admits an efficient quantum algorithm for \prb{BoseSL}. The following lemma gives a condition under which $G \times \ZZ_E$ is poly-near Hamiltonian, and hence \prb{SubsetBoseSL} is efficiently solvable. First, we introduce a definition similar in spirit to the definition of a poly-near Hamiltonian group.

\begin{definition}\label{def:poly-near-abelian-group}
    A finite group $G$ is \emph{poly-near abelian} if $[G: Z(G)] = O(\polylog \abs{G})$, where $Z(G) \coloneq \{ g \in G: g h = h g \text{ for all } h \in G \}$ is the \emph{center} of $G$.
\end{definition}

\begin{lemma}
    Suppose $G$ is a poly-near abelian group. Then $G \times \ZZ_E$, where $E = \exp(G)$, is poly-near Hamiltonian.
\end{lemma}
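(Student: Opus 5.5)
Since $\ZZ_E$ is abelian, the center of $G \times \ZZ_E$ is $Z(G) \times \ZZ_E$. The plan is to show that the center is always contained in the Baer norm, and then compare indices.

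First, I would note that $Z(H) \leq \Baer(H)$ for any group $H$. If $z \in Z(H)$ and $K \leq H$, then $z K z^{-1} = K$ because $z$ commutes with every element of $K$. So $z$ normalises every subgroup, and hence lies in the intersection $\Baer(H) = \bigcap_{K \leq H} N_H(K)$.

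Next, I would apply this to $H = G \times \ZZ_E$. The center of a direct product is the product of the centers, and $Z(\ZZ_E) = \ZZ_E$. Therefore $Z(H) = Z(G) \times \ZZ_E$, and
\begin{equation*}
    [H : \Baer(H)] \leq [H : Z(H)] = \frac{\abs{G} \cdot E}{\abs{Z(G)} \cdot E} = [G : Z(G)].
\end{equation*}
The inequality holds because $Z(H) \leq \Baer(H) \leq H$, so the index of the larger subgroup divides, and is at most, the index of the smaller one.

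Finally, I would compare the polylog scales. The hypothesis gives $[G:Z(G)] = O(\polylog \abs{G})$. We have $\abs{H} = E\abs{G} \geq \abs{G}$, and also $E \leq \abs{G}$, so $\log\abs{H} = \Theta(\log \abs{G})$. Hence $[H : \Baer(H)] = O(\polylog \abs{H})$, which says $H$ is poly-near Hamiltonian.

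I do not expect a real obstacle here. The only points needing care are the containment $Z \leq \Baer$ and the identification of the center of the direct product.
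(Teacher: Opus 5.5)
Your proof is correct and follows the same route as the paper: both use $Z(\Gamma) \leq \Baer(\Gamma)$ and $Z(G \times \ZZ_E) = Z(G) \times \ZZ_E$ to get $[G \times \ZZ_E : \Baer(G \times \ZZ_E)] \leq [G : Z(G)]$. Your closing comparison of $\log\abs{G \times \ZZ_E}$ with $\log\abs{G}$ makes explicit a step the paper leaves implicit. Only the trivial direction $\abs{G} \leq \abs{G \times \ZZ_E}$ is actually needed there.
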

\begin{proof}
    Recall from \cref{sec:bose-symmetry-learning-over-poly-near-hamiltonian-groups} that $G$ is poly-near Hamiltonian if $[G: \Baer(G)] = O(\polylog \abs{G})$, where $\Baer(G) = \{ g \in G: g K g^{-1} = K \text{ for all } K \leq G \}$ is the Baer norm of $G$. Note that $Z(\Gamma) \leq \Baer(\Gamma)$ for any group $\Gamma$, and $Z(\Gamma_1 \times \Gamma_2) = Z(\Gamma_1) \times Z(\Gamma_2)$ for any groups $\Gamma_1, \Gamma_2$. Hence,
    \begin{equation*}
        Z(G) \times \ZZ_E = Z(G) \times Z(\ZZ_E) = Z(G \times \ZZ_E) \leq \Baer(G \times \ZZ_E).
    \end{equation*}
    This gives
    \begin{align*}
        [(G \times \ZZ_E) : \Baer(G \times \ZZ_E)] & = \frac{\abs{G} \cdot E}{\abs{\Baer(G \times \ZZ_E)}} \leq E \frac{\abs{G}}{\abs{Z(G \times \ZZ_E)}} \\
        & = [G: Z(G)] = O(\polylog \abs{G}).
    \end{align*}
\end{proof}

\begin{corollary}
    There is an efficient algorithm for \prb{AnyonicSL} over any poly-near abelian group $G$, provided that certain quantum Fourier transforms over $G$ can be performed efficiently.
\end{corollary}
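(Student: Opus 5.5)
The plan is to chain the reduction of \cref{thm:reduction-of-non-abelian-anyonic-symmetry-learning-to-bose-symmetry-learning} with the algorithm of \cref{thm:bose-symmetry-learning-over-poly-near-hamiltonian-groups}, using the preceding lemma to check that the enlarged group is poly-near Hamiltonian.

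First, given an instance $\mathcal{IA} = \prb{AnyonicSL}(G, R, \rho)$ with $G$ poly-near abelian, form the instance $\prb{BoseSL}(G \times \ZZ_E, R', \rho)$, where $E = \exp(G)$ and $R'(g,k) = \omega^{-k} R(g)$. The minimum asymmetry distance is unchanged ($\epsilon$), so the copy count carries over without loss. The hidden subgroup is $S' = \{(g, k_g) : g \in S\}$.

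Next, check efficiency of the ingredients:
\begin{itemize}
    \item \emph{Controlled representation.} $\ctrl{R'}$ is implemented by applying $\ctrl{R}$ on the $G$-register and then a controlled phase $\omega^{-k}$ on the $\ZZ_E$-register. The phase gate is efficient since $E \le \abs{G}$ gives $\log E = O(\log\abs{G})$.
    \item \emph{Fourier transform.} A QFT over $G \times \ZZ_E$ is the tensor product of a QFT over $G$ with the abelian QFT over $\ZZ_E$.
    \item \emph{Subgroup QFTs.} The QFTs over subgroups of $G \times \ZZ_E$ required by \cite{Gav04HSP} are what the hypothesis ``certain quantum Fourier transforms over $G$'' covers. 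I would state this assumption at the level of $G \times \ZZ_E$, or note that the relevant subgroups arise from those of $G$ times cyclic factors.
\end{itemize}

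The preceding lemma shows that $G \times \ZZ_E$ is poly-near Hamiltonian. There is one caveat: its index bound should read $[G \times \ZZ_E : \Baer] \le [G : Z(G)]$, since $\abs{Z(G \times \ZZ_E)} = \abs{Z(G)}\,E$. Also $\log\abs{G \times \ZZ_E} \le 2\log\abs{G}$, so polylog in the new group order is polylog in $\abs{G}$. Then \cref{thm:bose-symmetry-learning-over-poly-near-hamiltonian-groups} returns generators of $S'$ efficiently.

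Finally, project each generator of $S'$ onto its $G$-component to obtain generators of $S = \pi_G(S')$. The step most likely to need care is the QFT hypothesis for subgroups of the enlarged group. The rest is bookkeeping.
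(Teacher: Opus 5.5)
Your chain is exactly the paper's route: the reduction of \cref{thm:reduction-of-non-abelian-anyonic-symmetry-learning-to-bose-symmetry-learning}, then the lemma that $G\times\ZZ_E$ is poly-near Hamiltonian, then \cref{thm:bose-symmetry-learning-over-poly-near-hamiltonian-groups}, then projection onto $G$. Your implementation remarks are correct: $\ctrl{R'}$ is $\ctrl{R}$ followed by the diagonal phase $\ket{k}\mapsto\omega^{-k}\ket{k}$, and the QFT over the product factorises. Your ``caveat'' is not needed, because the lemma's chain already ends at $[G:Z(G)]$: the factor $E$ cancels against $\abs{Z(G\times\ZZ_E)}=E\,\abs{Z(G)}$.

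The genuine gap is the sentence ``the minimum asymmetry distance is unchanged''. The paper's reduction makes the same assertion, namely $\abs{\Tr(R'(g,k)\rho)}\le 1-\epsilon$ for all $(g,k)\notin S'$, and it is false. Take $g\in S$ and $k\neq k_g$. Then $(g,k)\notin S'$, but $R'(g,k)\rho=\omega^{k_g-k}\rho$, so $\abs{\Tr(R'(g,k)\rho)}=1$. Already $(e,1)$ is such an element as soon as $E\ge 2$. This is inherent in the construction: all of $S\times\ZZ_E$ consists of anyonic symmetries of $\rho$, and only $S'$ consists of Bose ones. So the reduced instance violates the \prb{BoseSL} promise for every $\epsilon>0$, and \cref{thm:bose-symmetry-learning-over-poly-near-hamiltonian-groups} cannot be applied as a black box. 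What you need is a non-abelian analogue of the paper's redefined abelian \prb{BoseSL}.

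The repair is short.
\begin{itemize}
\item \emph{Structure of the reduced instance.} Since $g\mapsto k_g$ is a homomorphism on $S$, the map $\chi(g,k)=\Tr(R'(g,k)\rho)=\omega^{k_g-k}$ is a character of $S\times\ZZ_E$ with kernel exactly $S'$. Every $x\notin S\times\ZZ_E$ still satisfies $\abs{\Tr(R'(x)\rho)}\le 1-\epsilon$.
\item \emph{Support.} The support property of weak Fourier sampling only uses $R'(x)\rho=\rho$ on $\core(S')$, so it is unaffected.
\item \emph{Anti-concentration.} Let $N$ be a normal subgroup with $N\not\le S'$; this is the case needed in the union bound, with $N=\ncl(K)$. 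The relevant mass is $\frac{1}{\abs{N}}\sum_{x\in N}\Tr(R'(x)\rho)$. Set $T=N\cap(S\times\ZZ_E)$, so that $\chi|_T$ has kernel $N\cap S'$.
  \begin{itemize}
  \item If $T=N\cap S'$, the triangle-inequality estimate, with the proper subgroup $N\cap S'$ in place of $S$, bounds the mass by $1-\epsilon/2$.
  \item If $T>N\cap S'$, then $\chi|_T$ is non-trivial, so $\sum_{x\in T}\chi(x)=0$. The mass is then at most $(1-\abs{T}/\abs{N})(1-\epsilon)\le 1-\epsilon$.
  \end{itemize}
  Hence the normal-core algorithm keeps its guarantee and sample bound.
\item \emph{Restrictions.} Restricting to any subgroup $L$ preserves this structure, with $S'\cap L$ and $(S\times\ZZ_E)\cap L$ in place of $S'$ and $S\times\ZZ_E$.
\end{itemize}
The poly-near-Hamiltonian algorithm then goes through, provided it touches the instance only through normal-core calls on such restrictions. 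That is what the black-box argument behind \cref{thm:bose-symmetry-learning-over-poly-near-hamiltonian-groups} presumes. You should state and verify this step explicitly, rather than treating the reduced instance as an ordinary \prb{BoseSL} instance.
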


\section{Projective anyonic symmetry learning}\label{sec:projective-state-hsp}

The Bose and anyonic symmetry learning problems encapsulate many problems that involve finding hidden symmetries of a quantum state. In both \prb{BoseSL} and \prb{AnyonicSL}, we have so far required the representation $R$ to be \emph{linear}. However, there are often physically realistic symmetries that are described by \emph{projective} representations instead.

One common example of a projective representation in quantum information is the Pauli representation of $\ZZ_d^{2n}$ given by $R_W (x) = \tilde{W}_x$, where $\tilde{W}_x$ is the \emph{phaseless Weyl operator} defined by $\tilde{W}_{(a, b)} = X^a Z^b$ for $a, b \in \ZZ_d^n$. Given a state $\rho$, the subgroup $S \leq \ZZ_d^{2n}$ of all $x \in \ZZ_d^{2n}$ such that $\abs{\Tr(R_W (x) \rho)} = 1$ is called the \emph{(phaseless) stabiliser group} of $\rho$. Learning the stabiliser group of \emph{pure states} has been studied in \cite{Mon17,HEC25abelianStateHSP,ADIS25quditBellSampling}; however, the mixed state case has not been studied.

Motivated by this, we introduce the \emph{projective anyonic symmetry learning} problem, which is the same as the anyonic symmetry learning problem, except now we allow for projective representations:

\begin{definition}[\prb{ProjAnyonicSL}]\leavevmode\label{prb:projective-state-hsp}
    \begin{description}
        \item[Input] Copies of a state $\rho \in \densmats{\mathcal{H}}$.
        \item[Promise] There is a finite group $G$, a \emph{projective} unitary representation
        $R: G \rightarrow \mathcal{U}(\mathcal{H})$, a subgroup $S \leq G$ and a constant $0 < \epsilon \leq 1$ such that:
        \begin{itemize}
            \item $\rho$ possesses \emph{anyonic symmetry} under the action of $S$:
            \begin{equation}
                \abs{\Tr(R(h) \rho)} = 1 \quad \forall h \in S.
            \end{equation}
            \item For all $x \notin S$, $\rho$ is $\epsilon$-far from being anyonic-symmetric under the action of $x$:
            \begin{equation}
                \abs{\Tr(R(g) \rho)} \leq 1 - \epsilon \quad
                \forall g \in G \setminus S.
            \end{equation}
        \end{itemize}
        \item[Task] Find (a set of generators of) the hidden symmetry subgroup $S$.
    \end{description}
    Write such an instance $\mathcal{IA}$ of \prb{ProjAnyonicSL} as $\mathcal{IA} = \prb{ProjAnyonicSL}(G, R, \rho)$. Write $\HSS(\mathcal{IA})$ for the associated hidden symmetry subgroup $S$ and $\MASD(\mathcal{IA})$ for the associated \emph{minimum asymmetry distance} $\epsilon$.
\end{definition}

\noindent For the rest of this section, fix an instance $\mathcal{IA} = \prb{ProjAnyonicSL}(G, R, \rho)$ of \prb{ProjAnyonicSL} with $G$ abelian, $R: G \to \mathcal{U}(\mathcal{H})$ a projective representation, and hidden symmetry subgroup $S = \HSS(\mathcal{IA})$ and minimum asymmetry distance $\epsilon = \MASD(\mathcal{IA})$.

\subsection{Linearisation of projective representations via linear codes}\label{sec:linearisation-of-projective-representations}

\cref{alg:state-fourier-sample} samples from $q_{\rho, R} (\lambda)$ which is a valid probability distribution. Unfortunately, when $R$ is a projective representation, $q_{\rho, R}$ is in general no longer a probability distribution. For example, consider the projective representation $R_W (a, b) = Z^a X^b$ of $\ZZ_2^2$, and the state $\ket{\psi} = \frac{1}{2} \ket{0} - \frac{\sqrt{3}}{2} \ket{1}$; we have $q_{\rho, R_W} (0) = \frac{1}{8} (1 - \sqrt{3}) < 0$. 

As we will see however, we are able to sample from a similar function which is a valid probability distribution, using a new technique of \emph{linearising} projective representations.

By \cref{fct:abelian-projective-representation-cocycle-form}, up to a gauge transformation $f: G \to S^1$, we may assume that $R(x) R(y) = \omega^{B(x, y)} R(x + y)$ for all $x, y \in G$, where $\omega$ is a primitive $E$-th root of unity, $E$ is the exponent of $G$, and $B: G \times G \to \ZZ_E$ is a bilinear form. In particular, $R(0)^2 = \omega^{B(0, 0)} R(0) = R(0)$, so $R(0) = I$. Here, we assume that the gauge transformation $f$ is known and the unitary $\ket{g} \mapsto f(g) \ket{g}$ is efficiently implementable, and so we can implement the equivalent projective representation $R'(x) = f(x) R(x)$ efficiently, given an efficient implementation of $R$.

We want to construct a linear representation $R_\text{lin}: G^s \to \mathcal{U} (\mathcal{H}^{\tp t})$ from $R: G \to \mathcal{U}(\mathcal{H})$, where $s$ and $t$ are positive integers. This will then allow us to perform Fourier sampling on $\rho^{\tp t}$ using $R_\text{lin}$.

With this in mind, we want $R_\text{lin}$ to satisfy
\begin{align}
    \abs{\Tr(R_\text{lin}(x_1, \dots, x_s) \rho^{\tp t})} & = 1 \quad \forall x_1, \dots, x_s \in S \tag{$\dagger$} \label{eq:multi-subgroup-invariance} \\
    \abs{\Tr(R_\text{lin}(y_1, \dots, y_s)) \rho^{\tp t})} & \ne 1 \quad \forall y_1, \dots, y_s \in G \text{ such that at least one } y_i \notin S. \tag{$*$} \label{eq:multi-subgroup-non-invariance}
\end{align}

Then, by \cref{lmm:fourier-difference-sampling-distribution,fct:fourier-sampling-distribution-supported-on-dual-of-symmetry-subgroup}, performing Fourier difference sampling on $\rho^{\tp t}$ with representation $R_\text{lin}$ will only yield elements of $(S^s)^\perp = (S^\perp)^s$. Given enough samples from $(S^\perp)^s$, we can recover $S^\perp$, and so also $S$.

Further, if we require that
\begin{equation}
    \Tr(R_\text{lin}(x_1, \dots, x_s) \rho^{\tp t}) = 1 \quad \forall x_1, \dots, x_s \in S, \tag{$\ddagger$} \label{eq:multi-subgroup-exact-invariance}
\end{equation}
then by \cref{fct:fourier-sampling-distribution-supported-on-dual-of-symmetry-subgroup} we can perform Fourier sampling with $R_\text{lin}$ to obtain elements of $(S^\perp)^s$ (instead of Fourier difference sampling).

In order to satisfy \cref{eq:multi-subgroup-invariance}, since $S$ is a subgroup, it suffices to restrict $R_\text{lin}$ to be of the form
\begin{align*}
    R_\text{lin}(\vd{x}) & = R({a_{1 1} x_1 + \cdots + a_{s 1} x_s}) \tp \cdots \tp R({a_{1 t} x_1 + \cdots + a_{s t} x_s}) \\
    & = R((\vd{x} M)_1) \tp \cdots \tp R((\vd{x} M)_t),
\end{align*}
where $a_{k \ell} \in \ZZ$ for all $k, \ell$ and $M \in \ZZ^{s \times t}$ is defined by its entries $M_{k \ell} = a_{k \ell}$. So for each $M \in \ZZ^{s \times t}$, we associate the corresponding map $R_M: G^s \to \mathcal{U} (\mathcal{H}^{\tp t})$ defined by
\begin{equation*}
    R_M (\vd{x}) = R((\vd{x} M)_1) \tp \cdots \tp R((\vd{x} M)_t).
\end{equation*}

The following lemma characterises all $M \in \ZZ^{s \times t}$ such that $R_M$ is a linear representation.

\begin{lemma}\label{lmm:characterisation-of-bell-sampling-representations}
    $R_M$ is a linear representation if
    \begin{equation}
        M M^T = 0 \pmod{E}, \tag{$M\star$} \label{eq:valid-bell-sampling-matrix}
    \end{equation}
    i.e. all rows of $M$ are mutually orthogonal modulo $E$. Moreover, this condition is necessary if there exist $x^*, y^* \in G$ such that $B(x^*, y^*) = 1$.
\end{lemma}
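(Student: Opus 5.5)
The plan is to compute the cocycle of $R_M$ directly and read off when it is trivial. Fix the gauge from \cref{fct:abelian-projective-representation-cocycle-form}, so that $R(x)R(y) = \omega^{B(x,y)} R(x+y)$ with $B$ bilinear and $\omega$ a primitive $E$-th root of unity. For $\vd{x}, \vd{y} \in G^s$, write $u_\ell = (\vd{x}M)_\ell = \sum_k M_{k\ell} x_k$ and $v_\ell = (\vd{y}M)_\ell$. Using the mixed-product property of the tensor product, we get
\begin{equation*}
    R_M(\vd{x}) R_M(\vd{y}) = \bigotimes_{\ell=1}^t R(u_\ell)R(v_\ell) = \omega^{\sum_\ell B(u_\ell, v_\ell)} \bigotimes_{\ell=1}^t R(u_\ell + v_\ell) = \omega^{\sum_\ell B(u_\ell, v_\ell)} R_M(\vd{x}+\vd{y}),
\end{equation*}
where the last step uses $(\vd{x}+\vd{y})M = \vd{x}M + \vd{y}M$. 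This uses that $G$ is abelian, so the map $x \mapsto a x$ is a homomorphism for each integer $a$.

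Next we expand the exponent by bilinearity:
\begin{equation*}
    \sum_\ell B(u_\ell, v_\ell) = \sum_{k,k'} \Bigl(\sum_\ell M_{k\ell} M_{k'\ell}\Bigr) B(x_k, y_{k'}) = \sum_{k,k'} (MM^T)_{kk'} B(x_k, y_{k'}) \pmod{E}.
\end{equation*}
If $MM^T \equiv 0 \pmod E$, every term vanishes because $B$ takes values in $\ZZ_E$. The cocycle is then trivial. We also have $R_M(0) = R(0)^{\tp t} = I$, so $R_M$ is a homomorphism, that is, a linear representation. This gives sufficiency.

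For necessity, assume there exist $x^*, y^*$ with $B(x^*, y^*) = 1$, and suppose $(MM^T)_{kk'} \not\equiv 0 \pmod E$ for some pair $k, k'$. Take $\vd{x}$ with $x^*$ in coordinate $k$ and $0$ elsewhere, and $\vd{y}$ with $y^*$ in coordinate $k'$ and $0$ elsewhere. The cocycle then equals $\omega^{(MM^T)_{kk'}} \neq 1$, since $\omega$ is primitive. Because the operators $R_M(\cdot)$ are unitary and hence nonzero, the identity $R_M(\vd{x})R_M(\vd{y}) = c\,R_M(\vd{x}+\vd{y})$ with $c \ne 1$ contradicts the homomorphism property. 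So $R_M$ is not linear.

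The only step needing care is the bilinear expansion modulo $E$. We must check that integer coefficients act consistently on $B(x_k, y_{k'})$, which holds because $B(ax, by) = ab\,B(x,y)$ for integers $a, b$. We must also check that the diagonal case $k = k'$ is covered, which it is, since $x$ and $y$ are chosen independently.
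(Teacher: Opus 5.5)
Your proof is correct and follows essentially the same route as the paper. Both compute the cocycle of $R_M$ via the mixed-product property and expand it by bilinearity to $\sum_{j,k}(MM^T)_{jk}B(x_j,y_k)$, then obtain necessity by placing $x^*$ and $y^*$ in single coordinates. Your extra remarks just spell out details the paper leaves implicit: that $R_M(0)=I$, that primitivity of $\omega$ forces the phase to differ from $1$, and that the diagonal case $k=k'$ is covered.
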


\begin{proof}
    For all $\vd{x}, \vd{y} \in G^s$,
    \begin{align*}
        R_M (\vd{x}) R_M (\vd{y}) & = R((\vd{x} M)_1) R((\vd{y} M)_1) \tp \cdots \tp R((\vd{x} M)_t) R((\vd{y} M)_t) \\
        & = \omega^{B((\vd{x} M)_1, (\vd{y} M)_1)} \cdots \omega^{B((\vd{x} M)_t, (\vd{y} M)_t)} R(((\vd{x} + \vd{y}) M)_1) \tp \cdots \tp R(((\vd{x} + \vd{y}) M)_t) \\
        & = \omega^{B((\vd{x} M)_1, (\vd{y} M)_1) + \cdots + B((\vd{x} M)_t, (\vd{y} M)_t)} R_M (\vd{x} + \vd{y})
    \end{align*}
    Hence, $R$ is a representation if and only if
    \begin{align*}
        0 & = B((\vd{x} M)_1, (\vd{y} M)_1) + \cdots + B((\vd{x} M)_t, (\vd{y} M)_t) \\
        & = \sum_{\ell = 1}^t \sum_{j, k = 1}^s a_{j \ell} a_{k \ell} B(x_j, y_k) \\
        & = \sum_{j, k = 1}^s (M M^T)_{j k} B(x_j, y_k) \pmod{E} \quad \forall x, y \in G^s,
    \end{align*}
    The equation above holds if \myrefeq{Condition}{eq:valid-bell-sampling-matrix} holds. The necessity condition can be seen to hold by taking $x_j = x^*$ and $y_k = y^*$, and all other $x_{j'}$ and $y_{k'}$ to be zero.
\end{proof}

\begin{corollary}\label{crl:product-form-of-linearisation-representation}
    If $R_M$ is a representation, then
    \begin{equation*}
        R_M (x_1, \dots, x_s) = \bigotimes_{\ell = 1}^t R(a_{1 \ell} x_1) \cdots R(a_{s \ell} x_s).
    \end{equation*}
\end{corollary}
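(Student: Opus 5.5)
We want to show that, when $R_M$ is a linear representation, each tensor factor $R((\vd{x} M)_\ell)$ factorises as the ordered product $R(a_{1\ell} x_1) \cdots R(a_{s\ell} x_s)$. The plan is to exploit that $R_M$ is linear on $G^s$. Decompose $\vd{x} = (x_1, \dots, x_s)$ as the sum $\vd{x} = \vd{x}^{(1)} + \cdots + \vd{x}^{(s)}$, where $\vd{x}^{(j)} = (0, \dots, 0, x_j, 0, \dots, 0)$ has $x_j$ in position $j$. Linearity of $R_M$ then gives
\[
    R_M(\vd{x}) = R_M(\vd{x}^{(1)}) R_M(\vd{x}^{(2)}) \cdots R_M(\vd{x}^{(s)}).
\]

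Next we evaluate each factor directly from the definition of $R_M$. Since $(\vd{x}^{(j)} M)_\ell = a_{j\ell} x_j$, we get
\[
    R_M(\vd{x}^{(j)}) = \bigotimes_{\ell=1}^t R(a_{j\ell} x_j).
\]
Multiplying these $s$ tensor-product operators and using the mixed-product rule $(A_1 \tp \cdots \tp A_t)(B_1 \tp \cdots \tp B_t) = A_1 B_1 \tp \cdots \tp A_t B_t$ gives, in the $\ell$-th tensor factor, the product $R(a_{1\ell}x_1) \cdots R(a_{s\ell}x_s)$ in the stated order. This is exactly the claimed formula.

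The only subtle point is that linearity is used as an exact identity of operators, with no phase, for the $s-1$ successive multiplications. This is precisely what the hypothesis that $R_M$ is a (linear) representation provides, for instance via the condition \eqref{eq:valid-bell-sampling-matrix} of \cref{lmm:characterisation-of-bell-sampling-representations}. We do not need to track individual cocycle phases per factor: the phases in each tensor slot may be nontrivial, but their product across all slots is $1$, and the identity is stated for the whole tensor product. The argument is a short induction on $s$ and has no real obstacle beyond keeping the order of the factors consistent.
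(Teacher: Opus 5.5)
Your proof is correct, but it takes a different route from the paper's. The paper works inside each tensor slot. It uses the bilinear cocycle to rewrite $R(a_{1\ell}x_1+\cdots+a_{s\ell}x_s)$ as the ordered product $R(a_{1\ell}x_1)\cdots R(a_{s\ell}x_s)$ times a phase governed by $\sum_{j<k}a_{j\ell}a_{k\ell}B(x_j,x_k)$. It then pulls all these scalars out of the tensor product and notes that the total exponent $\sum_{j<k}(MM^T)_{jk}B(x_j,x_k)$ vanishes by \eqref{eq:valid-bell-sampling-matrix}.

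You never open up the cocycle. You split $\vd{x}$ into the coordinate vectors $\vd{x}^{(j)}$ and apply the homomorphism property of $R_M$ on $G^s$. Combined with $(\vd{x}^{(j)}M)_\ell=a_{j\ell}x_j$ and the mixed-product rule, this gives the identity immediately. Your argument uses exactly the stated hypothesis, namely linearity of $R_M$. It needs neither the normal form of the cocycle from \cref{fct:abelian-projective-representation-cocycle-form} nor \eqref{eq:valid-bell-sampling-matrix}.

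This is a small genuine advantage. \cref{lmm:characterisation-of-bell-sampling-representations} only guarantees that linearity forces $MM^T\equiv 0 \pmod{E}$ when $B$ attains the value $1$, yet the paper's proof cites that condition directly. The paper's computation, in exchange, exhibits the individual per-slot phases. It shows that self-orthogonality makes them cancel only in aggregate across slots, which is the style of explicit phase tracking used again in the proof of \cref{lmm:condition-for-zero-sum-property}.
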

\begin{proof}
    We have $R(a_{1 \ell} x_1 + \cdots + a_{s \ell} x_s) = \omega^{\sum_\ell \sum_{j < k} a_{j \ell} a_{k \ell} B(x_j, x_k)} R(a_{1 \ell} x_1) \cdots R(a_{s \ell} x_s)$. The power of $\omega$ is $\sum_{j < k} (M M^T)_{j k} B(x_j, x_k) = 0$ by \cref{lmm:characterisation-of-bell-sampling-representations}.
\end{proof}

\cref{lmm:characterisation-of-bell-sampling-representations} gives a sufficient criterion for $M$ to induce a valid representation; however, we need to impose further conditions on $M$ in order for $R_M$ to satisfy the condition in \cref{eq:multi-subgroup-non-invariance}. For example, the all-zeros matrix trivially satisfies \myrefeq{Condition}{eq:valid-bell-sampling-matrix}, but will induce the trivial representation, which does not satisfy \cref{eq:multi-subgroup-non-invariance}.

\begin{lemma}\label{lmm:characterisation-of-useful-bell-sampling-representations}
    The property
    \begin{equation}
        \forall x_1, \dots, x_s \in G, \left(\forall \ell, (\vd{x} M)_\ell \in S\right) \quad \Longrightarrow \quad \left(\forall k, x_k \in S\right) 
    \end{equation}
    is equivalent to $R_M$ satisfying \cref{eq:multi-subgroup-non-invariance}, which is satisfied if the following condition holds:
    \begin{equation*}
        \text{$M$ can be put into standard form by swapping rows and columns, and adding rows} \tag{$M*$} \label{eq:useful-bell-sampling-matrix-condition}
    \end{equation*}
    By standard form, we mean of the form $(I_s \mid A)$ with $A \in \ZZ^{s \times (t - s)})$. Note that adding rows implies we are also able to multiply any row by an integer.
\end{lemma}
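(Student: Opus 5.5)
Two things have to be shown: the equivalence between the implication property and \cref{eq:multi-subgroup-non-invariance}, and the sufficiency of \myrefeq{Condition}{eq:useful-bell-sampling-matrix-condition}.

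\textbf{The equivalence.} Since $R_M(\vd{x})$ is a tensor product,
\[
\Tr(R_M(\vd{x})\rho^{\tp t}) = \prod_{\ell=1}^t \Tr(R((\vd{x}M)_\ell)\rho).
\]
Each factor has modulus at most $1$, because $R$ is unitary and $\rho$ is a density operator. The product therefore has modulus $1$ if and only if every factor has modulus $1$. By the promise of \prb{ProjAnyonicSL}, $\abs{\Tr(R(g)\rho)}=1$ holds exactly when $g\in S$, since off $S$ the modulus is at most $1-\epsilon<1$. Hence $\abs{\Tr(R_M(\vd{x})\rho^{\tp t})}=1$ if and only if $(\vd{x}M)_\ell\in S$ for all $\ell$. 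Condition \cref{eq:multi-subgroup-non-invariance} says that this modulus is $\neq 1$ whenever some $x_k\notin S$. In contrapositive form, modulus $1$ forces every $x_k\in S$, which is exactly the displayed implication.

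\textbf{Sufficiency of the standard form.} Suppose $M$ is transformed into $(I_s\mid A)$ by row swaps, column swaps and row additions. I will check that each operation preserves the implication property, or more precisely that the property for the transformed matrix implies it for $M$.

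Column swaps merely permute the entries of $\vd{x}M$, so the hypothesis ``all entries lie in $S$'' is unchanged.

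Row operations amount to left multiplication $M\mapsto PM$ by an integer matrix $P$ (swaps, or $I+E_{jk}$). Note that $\vd{x}(PM)=(\vd{x}P)M$. The key point is that the row-operation matrices are invertible over $\ZZ$, with inverses again products of swaps and $I-E_{jk}$. Moreover $\vd{x}\mapsto\vd{x}P$ is a bijection of $G^s$ that preserves $S^s$, because $S$ is a subgroup and is closed under integer combinations. Now suppose $M'=PM$ has the property and all $(\vd{x}M)_\ell\in S$. Set $\vd{y}=\vd{x}P^{-1}$; then $\vd{y}M'=\vd{x}M$ has all entries in $S$, so $\vd{y}\in S^s$, and hence $\vd{x}=\vd{y}P\in S^s$.

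The remark about multiplying a row by an integer needs care. Scaling by a non-unit is not invertible, so I would only allow it in combination with steps that remain unimodular, or otherwise read the statement's ``adding rows'' as generating unimodular transformations. I expect this to be the main point to handle carefully.

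Finally, for the standard form $(I_s\mid A)$ the property is immediate. The first $s$ entries of $\vd{x}(I_s\mid A)$ are exactly $x_1,\dots,x_s$, so if all entries lie in $S$ then every $x_k\in S$.
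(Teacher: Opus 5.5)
Your approach is the paper's. The paper calls the equivalence ``clear'', and your factorisation $\Tr(R_M(\vd{x})\rho^{\tp t})=\prod_\ell \Tr(R((\vd{x}M)_\ell)\rho)$ together with $\epsilon>0$ is exactly what that means. For sufficiency the paper says Gaussian elimination recovers $x_1,\dots,x_s$ from $\vd{x}M$ through integer combinations, which stay in $S$. Everything you prove is correct. However, you stop short of the statement as written: it explicitly allows multiplying rows by integers, but you only handle transformations invertible over $\ZZ$ and propose restricting the statement otherwise. (The paper's one-line proof glosses over the same point.)

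No restriction is needed, for two reasons.

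First, invertibility over $\ZZ$ is more than your argument uses. Since $G$ has exponent $E$, the map $\vd{x}\mapsto\vd{x}P$ depends only on $P \bmod E$. So your $\vd{y}=\vd{x}P^{-1}$ argument works for any $P$ invertible modulo $E$, e.g.\ scaling by $3$ when $E=4$.

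Second, a non-unit scaling can never occur on a path to standard form. Write the composite of all operations as $M\mapsto PMQ$. Here $P\in\ZZ^{s\times s}$ is an arbitrary product of row-operation matrices and $Q$ is a column permutation. Let $N$ be the $s\times s$ submatrix of $MQ$ (its first $s$ columns) that becomes $I_s$.
\begin{itemize}
\item Then $PN\equiv I_s \pmod{E}$, so $\det P\,\det N\equiv 1 \pmod{E}$.
\item Hence $\det P$ is a unit mod $E$, and so is the determinant of every individual step, in particular every scaling factor.
\item It also follows that $NP\equiv I_s \pmod{E}$.
\end{itemize}

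This gives a direct proof with no step-by-step induction. We have $\vd{x}=(\vd{x}N)P$, and the entries of $\vd{x}N$ are $s$ of the entries of $\vd{x}M$. So if all entries of $\vd{x}M$ lie in $S$, every $x_k$ is an integer combination of elements of $S$ and therefore lies in $S$.
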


\begin{proof}
    The equivalence is clear. Being able to put $M$ into standard form by the above operations means we can solve we can determine $x_1, \dots, x_s$ from $(\vd{x} M)_1, \dots, (\vd{x} M)_t$ by standard Gaussian elimination techniques. Due to the linearity of each operation used in Gaussian elimination and the closure property of $S$, this means that if $(\vd{x} M)_\ell \in S$ for all $\ell \in [t]$, then $x_k \in S$ for all $k \in [s]$.
\end{proof}

If $R_M$ satisfies \cref{eq:multi-subgroup-invariance} and \cref{eq:multi-subgroup-non-invariance}, then as mentioned above we may use the Fourier difference sampling approach to recover $S$. However, we could also enforce the following condition on $M$ which will mean that $R_M$ satisfies \cref{eq:multi-subgroup-exact-invariance}, which allows us to use Fourier sampling instead of Fourier difference sampling to recover $S$.

\begin{lemma}\label{lmm:condition-for-zero-sum-property}
    Let $M$ satisfy \myrefeq{Condition}{eq:valid-bell-sampling-matrix} and \myrefeq{Condition}{eq:useful-bell-sampling-matrix-condition}. $R_M$ satisfies the property in \cref{eq:multi-subgroup-exact-invariance} if $M$ satisfies:
    \begin{align*}
        \text{$E$ is odd} & \implies \sum_{\ell = 1}^t a_{k \ell} = 0 \pmod{E} \text{ for all } k, \tag{$M\ddagger$} \\
        \text{$E$ is even} & \implies \sum_{\ell = 1}^t a_{k \ell} = 0 \pmod{E} \text{ for all } k \quad \text{and} \quad \sum_{\ell = 1}^t a_{k \ell}^2 = 0 \pmod{2E} \text{ for all } k, \tag{$M\ddagger$} \label{eq:zero-sum-property}
    \end{align*}
    i.e. the sum of the entries of each row of $M$ is zero modulo $E$, and if $E$ is even then each row of $M$ is \emph{isotropic} (self-orthogonal) modulo $2E$.
\end{lemma}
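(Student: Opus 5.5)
The plan is to compute $\Tr(R_M(\vd{x})\rho^{\tp t})$ for $\vd{x}\in S^s$ and show it equals $1$. By \cref{crl:product-form-of-linearisation-representation} (valid since $M$ satisfies \eqref{eq:valid-bell-sampling-matrix}), $R_M(\vd{x})=\bigotimes_{\ell}R(a_{1\ell}x_1)\cdots R(a_{s\ell}x_s)$. Each $x_k\in S$ is an anyonic symmetry, so by purity of the eigenvalue condition ($\abs{\Tr(R(x)\rho)}=1$ forces $R(x)\rho=\phi(x)\rho$ with $\abs{\phi(x)}=1$, since $R(x)$ is unitary and $\rho$ is a density operator) every $R(a x_k)$ acts on $\rho$ as a scalar. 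Hence $\Tr(R_M(\vd{x})\rho^{\tp t})=\prod_{\ell}\prod_k \phi(a_{k\ell}x_k)$, and it suffices to treat each $k$ separately: show $\prod_{\ell=1}^t \phi(a_{k\ell}x)=1$ for every $x\in S$ and every row $k$.

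The key step is to compute $\phi(ax)$ for $x\in S$ and $a\in\ZZ$ (take $a\ge 0$ representatives; negative ones handled by $R(-x)=\omega^{-B(x,-x)}R(x)^{-1}$ or by reducing mod $E$ once the formula is seen to be consistent). From the cocycle relation, $R(x)^a=\omega^{B(x,x)(0+1+\cdots+(a-1))}R(ax)=\omega^{B(x,x)\binom{a}{2}}R(ax)$ by induction. So $\phi(ax)=\phi(x)^a\,\omega^{-B(x,x)\binom{a}{2}}$. Then
\begin{equation*}
\prod_{\ell}\phi(a_{k\ell}x)=\phi(x)^{\sum_\ell a_{k\ell}}\;\omega^{-B(x,x)\sum_\ell \binom{a_{k\ell}}{2}}.
\end{equation*}
For the first factor: $\phi(x)^E=1$, because $R(x)^E=\omega^{B(x,x)\binom{E}{2}}R(Ex)=\omega^{B(x,x)\binom{E}{2}}I$, so $\phi(x)^E$ is a power of $\omega$; I will need to check it is actually $1$, or absorb it. Note $\omega^{\binom{E}{2}B(x,x)}=1$ when $E$ is odd, and equals $(-1)^{B(x,x)}$ when $E$ is even, so $\phi(x)$ is a $2E$-th root of unity in general. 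Thus I should use the row sum condition in the form needed: for $E$ odd, $\sum_\ell a_{k\ell}\equiv0\pmod E$ kills $\phi(x)^{\sum a}$; for $E$ even, I need $\sum_\ell a_{k\ell}\equiv 0 \pmod{2E}$ or a compensating argument, which I expect to extract by combining the two given conditions (since $\sum a^2\equiv \sum a\pmod 2$).

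For the second factor, write $\sum_\ell\binom{a_{k\ell}}{2}=\tfrac12\left(\sum_\ell a_{k\ell}^2-\sum_\ell a_{k\ell}\right)$. For $E$ odd, $2$ is invertible mod $E$ and both sums vanish mod $E$ (the square sum by the diagonal of \eqref{eq:valid-bell-sampling-matrix}), giving exponent $0$ mod $E$. For $E$ even, the condition $\sum a^2\equiv0\pmod{2E}$ together with $\sum a\equiv 0\pmod E$ gives the half-difference $\equiv 0$ mod $E$ provided $\sum a\equiv 0\pmod{2E}$; otherwise a residual sign $(-1)^{B(x,x)}$ appears, which must cancel against the residual from $\phi(x)^{\sum a}$ noted above. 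The main obstacle is exactly this even-$E$ bookkeeping of signs: I will handle it by tracking everything with the $2E$-th root of unity $\phi(x)$ and its relation $\phi(x)^E=(-1)^{B(x,x)}$, and checking that the two sign residues coincide and cancel.
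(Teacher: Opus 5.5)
Your proposal is correct and follows the paper's proof: the product form from \cref{crl:product-form-of-linearisation-representation}, the scalar action $R(ax)\rho=\phi(x)^a\omega^{-\binom{a}{2}B(x,x)}\rho$, and the two row conditions. In the even-$E$ case your two residual signs are indeed both $(-1)^{B(x,x)}$ when $\sum_\ell a_{k\ell}\equiv E \pmod{2E}$: one comes from $\phi(x)^E=(-1)^{B(x,x)}$, the other from the half-difference being $\equiv -E/2 \pmod E$. So they cancel as you expect.

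The paper avoids this case split by writing $\phi(x)=\tau^{b}$ with $\tau=e^{2\pi i/2E}$ and $b+B(x,x)=2c$. Each row then contributes $\tau^{-B(x,x)\sum_\ell a_{k\ell}^2}\,\omega^{c\sum_\ell a_{k\ell}}$, which is $1$ directly from the two hypotheses. Your parity remark $\sum a^2\equiv\sum a \pmod 2$ cannot force $\sum a\equiv 0 \pmod{2E}$, but it is not needed.
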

\begin{proof}
    For $x \in G$, we have 
    \begin{align*}
        R(x)^E & = \omega^{\frac{E(E - 1)}{2} B(x, x)} R(E x) = (-1)^{\indic\{2 \mid E\} \cdot B(x, x)} R(E x) \\
        & = (-1)^{\indic\{2 \mid E\} \cdot B(x, x)} R(0) = (-1)^{\indic\{2 \mid E\} \cdot B(x, x)} I.
    \end{align*}
    
    First suppose $E$ is odd. Then the eigenvalues of $R(x)$ are $E$-th roots of unity. Let $x_1, \dots, x_s \in S$ and $R(x_k) \rho = \omega^{b_k} \rho$ for each $k$, with each $b_k \in \ZZ_E$. Then
    \begin{align*}
        R(a_{1 \ell} x_1) \cdots R(a_{s \ell} x_s) \rho = \omega^{-\sum_{k = 1}^s \binom{a_{k \ell}}{2} B(x_k, x_k)} \omega^{\sum_{k = 1}^s a_{k \ell} b_k} \rho,
    \end{align*}
    therefore by \cref{crl:product-form-of-linearisation-representation},
    \begin{align*}
        R_M (x_1, \dots, x_s) \rho^{\tp t} & = \omega^{-\sum_{k = 1}^s \sum_{\ell = 1}^t \binom{a_{k \ell}}{2} B(x_k, x_k)} \omega^{\sum_{k = 1}^s \sum_{\ell = 1}^t a_{k \ell} b_k} \rho^{\tp t}.
    \end{align*}
    So it suffices that $\sum_{\ell = 1}^t a_{k \ell} = 0 \pmod{E}$ for all $k$ and that $\sum_{\ell = 1}^t \binom{a_{k \ell}}{2} = 2^{-1} \sum_{\ell = 1}^t a_{k \ell}^2 - 2^{-1} \sum_{\ell = 1}^t a_{k \ell} = 0 \pmod{E}$ for all $k$, where $2^{-1}$ is the multiplicative inverse of $2$ in $\ZZ_E$ (which exists since $E$ is odd). We already have that $\sum_{\ell = 1}^t a_{k \ell}^2 = 0 \pmod{E}$ for all $k$ by \cref{lmm:characterisation-of-bell-sampling-representations}. Hence, the condition $\sum_{\ell = 1}^t a_{k \ell} = 0 \pmod{E}$ for all $k$ is sufficient.

    Now suppose $E$ is even. For the rest of the proof, we consider each $B(x, y)$ to be embedded in $\ZZ$ (the choice of embedding does not matter). Let $\tau = e^{2 \pi i / 2E}$. Then the eigenvalues of $R(x)$ are $E$-th roots of unity (i.e. even powers of $\tau$) if $B(x, x)$ is even, and odd powers of $\tau$ if $B(x, x)$ is odd. Let $x_1, \dots, x_s \in S$ and $R(x_k) \rho = \tau^{b_k} \rho$ for each $k$, with each $b_k \in \ZZ$. Importantly, for each $k$, $b_k = B(x_k, x_k) \pmod{2}$, so $b_k + B(x_k, x_k) = 2 c_k$ for some $c_k \in \ZZ$. Then
    \begin{align*}
        R(a_{1 \ell} x_1) \cdots R(a_{s \ell} x_s) \rho & = \omega^{-\sum_{k = 1}^s \binom{a_{k \ell}}{2} B(x_k, x_k)} \tau^{\sum_{k = 1}^s a_{k \ell} b_k} \rho \\
        & = \tau^{-\sum_{k = 1}^s a_{k \ell}^2 B(x_k, x_k)} \tau^{\sum_{k = 1}^s a_{k \ell} B(x_k, x_k)} \tau^{\sum_{k = 1}^s a_{k \ell} b_k} \rho \\
        & = \tau^{-\sum_{k = 1}^s a_{k \ell}^2 B(x_k, x_k)} \omega^{\sum_{k = 1}^s a_{k \ell} c_k} \rho
    \end{align*}
    therefore by \cref{crl:product-form-of-linearisation-representation},
    \begin{align*}
        R_M (x_1, \dots, x_s) \rho^{\tp t} & = \tau^{-\sum_{k = 1}^s \sum_{\ell = 1}^t a_{k \ell}^2 B(x_k, x_k)} \omega^{\sum_{k = 1}^s \sum_{\ell = 1}^t a_{k \ell} c_k} \rho^{\tp t}.
    \end{align*}
    It is clearly that the stated conditions suffice to make this expression equal to $\rho^{\tp t}$ for all $x_1, \dots, x_s \in S$.
\end{proof}

The following lemma shows that the above conditions on $M$ are preserved under row and column operations on $M$.

\begin{lemma}\label{lmm:row-and-column-operations-preserve-bell-sampling-matrix-conditions}
    Multiplying rows of $M$ by scalars in $\ZZ_E$, adding rows, swapping rows and swapping columns preserves \myrefeq{Condition}{eq:valid-bell-sampling-matrix} and \myrefeq{Condition}{eq:useful-bell-sampling-matrix-condition}. These operations also preserve \myrefeq{Condition}{eq:zero-sum-property}, assuming \myrefeq{Condition}{eq:valid-bell-sampling-matrix} and \myrefeq{Condition}{eq:useful-bell-sampling-matrix-condition} already hold.
\end{lemma}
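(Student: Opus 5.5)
The plan is to treat each elementary operation separately, writing it as $M \mapsto PM$ for row operations (with $P$ an integer $s\times s$ matrix) and $M \mapsto MQ$ for column swaps (with $Q$ a permutation matrix), and then check the three conditions one by one.

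\textbf{\myrefeq{Condition}{eq:valid-bell-sampling-matrix}.} For row operations we have $(PM)(PM)^T = P(MM^T)P^T$, which is $0 \pmod{E}$ whenever $MM^T$ is. For column swaps, $QQ^T = I$ gives $(MQ)(MQ)^T = MM^T$. This step is purely formal.

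\textbf{\myrefeq{Condition}{eq:useful-bell-sampling-matrix-condition}.} By definition this condition asks whether some sequence of the allowed operations brings $M$ to standard form. Every allowed operation on $M$ yields a matrix from which one can return to $M$ by further allowed operations, with one exception: multiplying a row by a non-unit scalar. I will handle this more robustly through the equivalent property from \cref{lmm:characterisation-of-useful-bell-sampling-representations}. That property says $\vd{x}M \in S^t \Rightarrow \vd{x} \in S^s$, i.e.\ the map $\vd{x} \mapsto \vd{x}M$ detects $S$. Under row operations $\vd{x}(PM) = (\vd{x}P)M$, and column swaps only permute coordinates. For invertible $P$ (swaps, additions, unit scalings) the property is therefore clearly preserved. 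For non-unit scalings I would restrict to unit scalars, or interpret the lemma as preserving the reachability of standard form, since such moves are only used as part of Gaussian elimination.

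\textbf{Zero-sum condition \eqref{eq:zero-sum-property}, $E$ odd.} Row sums are $M\mathbf{1}$, and $(PM)\mathbf{1} = P(M\mathbf{1}) = 0$. Column swaps permute the entries within each row, so row sums are unchanged.

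\textbf{Zero-sum condition, $E$ even: the main obstacle.} Here we also need $\sum_\ell a_{k\ell}^2 \equiv 0 \pmod{2E}$. Column swaps and row swaps trivially preserve this. For scaling a row by $c$, the row's squared norm becomes $c^2\sum_\ell a_{k\ell}^2$, still $0 \pmod{2E}$. For adding row $j$ to row $k$, the new squared norm is
\[
\sum_\ell a_{k\ell}^2 + \sum_\ell a_{j\ell}^2 + 2\sum_\ell a_{k\ell}a_{j\ell}.
\]
The first two terms vanish mod $2E$ by hypothesis. The cross term needs \myrefeq{Condition}{eq:valid-bell-sampling-matrix}: it gives $\sum_\ell a_{k\ell}a_{j\ell} \equiv 0 \pmod{E}$, so twice it is $0 \pmod{2E}$.

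One subtlety remains: entries are integers, and reduction mod $E$ of a scalar changes the squared norm mod $2E$. Replacing $a$ by $a+E$ changes $a^2$ by $2aE + E^2$, and $E^2 \equiv 0 \pmod{2E}$ since $E$ is even. This also explains why the lemma assumes \myrefeq{Condition}{eq:valid-bell-sampling-matrix}. When the row being modified is multiplied by $c$ before the addition, the cross term is again $2c\sum_\ell a_{k\ell}a_{j\ell} \equiv 0$. Thus the interaction of integer lifts with the mod $2E$ isotropy condition is the only genuinely delicate point.
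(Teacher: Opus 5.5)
Your proposal is correct and is essentially the paper's proof. The paper declares every case clear except isotropy for even $E$, and handles that case with exactly your expansion $\langle x+y,x+y\rangle=\langle x,x\rangle+\langle y,y\rangle+2\langle x,y\rangle$, where the cross term vanishes mod $2E$ by \myrefeq{Condition}{eq:valid-bell-sampling-matrix}.

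Your caveat about non-unit scalars is a genuine correction that the paper overlooks. Scaling a row by $0$, or by any zero divisor (e.g.\ $(1,1,1,1)$ by $2$ when $E=4$), shrinks the row span below size $E^s$. Row operations never enlarge the span, so this destroys \myrefeq{Condition}{eq:useful-bell-sampling-matrix-condition}. You should therefore state the restriction to unit scalars outright; it is all that \cref{rmk:linearisation-matrix-in-standard-form} uses. Drop the alternative ``reachability'' reading, since reachability of standard form is exactly what fails.

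For the invertible operations, your reversibility observation already proves preservation of \myrefeq{Condition}{eq:useful-bell-sampling-matrix-condition} directly. The detour through the property of \cref{lmm:characterisation-of-useful-bell-sampling-representations} proves something strictly weaker, because that property is only implied by \myrefeq{Condition}{eq:useful-bell-sampling-matrix-condition}. For example, the self-orthogonal row $(3,3,2,2,2)$ over $\ZZ_6$ satisfies it, since $x=3x-2x$, yet it cannot be brought to standard form.
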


\begin{proof}
    Mutual orthogonality of rows (\myrefeq{Condition}{eq:valid-bell-sampling-matrix}) are clearly preserved, as is whether or not $M$ can be put into standard form (\myrefeq{Condition}{eq:useful-bell-sampling-matrix-condition}) and the zero-sum property (part of \myrefeq{Condition}{eq:zero-sum-property}).

    If $E$ is even and we assume that $M$ satisfies \myrefeq{Condition}{eq:valid-bell-sampling-matrix} and \myrefeq{Condition}{eq:useful-bell-sampling-matrix-condition}, then for two rows $x$ and $y$ of $M$, we have $\gen{x + y, x + y} = \gen{x, x} + \gen{y, y} + 2 \gen{x, y}$. We have assumed that $\gen{x, x} = \gen{y, y} = 0$ modulo $2E$ and $\gen{x, y} = 0$ modulo $E$, so $\gen{x + y, x + y} = 0$ modulo $2E$. Hence, the isotropicity of rows is preserved under adding rows, so is also preserved under multiplying rows by scalars in $\ZZ_E$. Swapping rows and columns clearly preserves isotropicity of rows.
\end{proof}

\myrefeq{Condition}{eq:valid-bell-sampling-matrix}, \myrefeq{Condition}{eq:useful-bell-sampling-matrix-condition} and \myrefeq{Condition}{eq:zero-sum-property} can be viewed from a coding-theoretic perspective:

\begin{theorem}\label{thm:linearisation-matrix-relation-to-linear-codes}
    Let $M \in \ZZ_E^{s \times t}$ be a matrix satisfying \myrefeq{Condition}{eq:valid-bell-sampling-matrix} and \myrefeq{Condition}{eq:useful-bell-sampling-matrix-condition}. Then $M$ is a generator matrix of a free, self-orthogonal linear code $C \subseteq \ZZ_E^t$ over $\ZZ_E$ of block length $t$ and dimension $s$. If $M$ also satisfies \myrefeq{Condition}{eq:zero-sum-property}, then $C$ is a zero-sum code which is also isotropic modulo $2E$ if $E$ is even.
\end{theorem}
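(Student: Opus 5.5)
Take $C$ to be the row space of $M$ over $\ZZ_E$, that is, $C = \{\vd{u} M : \vd{u} \in \ZZ_E^s\} \subseteq \ZZ_E^t$. It is a submodule of $\ZZ_E^t$ by definition, so it is a linear code of block length $t$. The rows of $M$ generate $C$, so the real work is showing they are linearly independent over $\ZZ_E$. Then $C$ is free of dimension $s$ and $M$ is a generator matrix for it. This is the only place \myrefeq{Condition}{eq:useful-bell-sampling-matrix-condition} is used, and it is the main step.

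To prove independence, I will use the fact that swapping rows or columns and adding integer multiples of one row to another are invertible operations over $\ZZ_E$. Swapping rows and row additions are invertible row operations, so they preserve the row space and the existence of a linear dependence among the rows. Swapping columns permutes coordinates, which also preserves both properties. Multiplying a row by an integer, however, is not invertible modulo $E$ when the integer is not a unit, so it cannot be used as freely. I will handle this by noting that the standard form $(I_s \mid A)$ has linearly independent rows: if $\vd{u}(I_s \mid A) = 0$, then the first $s$ coordinates give $\vd{u} = 0$. I then trace back. Suppose $\vd{u} M = 0$ with $\vd{u} \neq 0$. Each operation sends the rows to $\ZZ$-linear combinations of the previous rows, so the final rows are $PM$ for some integer matrix $P$, up to a column permutation. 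Hence $(I_s \mid A) = P M Q$ with $Q$ a permutation matrix, which gives $PMQ$ having $s$ independent rows. The rows of $PM$ lie in $C$, so $C$ contains $s$ independent vectors $e_1,\dots,e_s$ (after the column permutation) that are spanned by the $s$ rows of $M$. Restricting to the first $s$ coordinates, the $s\times s$ block $M'$ of $MQ$ satisfies $P M' = I_s$ over $\ZZ_E$. So $M'$ is invertible, and $\vd{u} M = 0$ forces $\vd{u} M' = 0$, hence $\vd{u} = 0$. This is the argument I expect to need the most care, because of the non-unit scalings; working with the square block sidesteps that issue.

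Self-orthogonality is immediate. \myrefeq{Condition}{eq:valid-bell-sampling-matrix} says $MM^T = 0 \pmod E$. For codewords $\vd{u}M$ and $\vd{v}M$ we have $(\vd{u}M)\cdot(\vd{v}M) = \vd{u}MM^T\vd{v}^T = 0$, so $C \subseteq C^\perp$.

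For the final part, the zero-sum condition in \myrefeq{Condition}{eq:zero-sum-property} says $M\mathbf{1}^T = 0 \pmod E$. It follows that $(\vd{u}M)\cdot\mathbf{1} = 0$, so every codeword has entries summing to $0$ modulo $E$. When $E$ is even, each row is isotropic modulo $2E$ and distinct rows are orthogonal modulo $E$. Lifting $\vd{u}$ to integers, the identity $\langle x+y,x+y\rangle = \langle x,x\rangle + \langle y,y\rangle + 2\langle x,y\rangle$ from the proof of \cref{lmm:row-and-column-operations-preserve-bell-sampling-matrix-conditions} shows that every integer combination $\sum u_k r_k$ of the rows has squared norm $\sum u_k^2\langle r_k,r_k\rangle + 2\sum_{j<k}u_ju_k\langle r_j,r_k\rangle \equiv 0 \pmod{2E}$. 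Hence $C$ is isotropic modulo $2E$, interpreted on integer lifts of the codewords.
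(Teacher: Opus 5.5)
Your proof is correct and follows the same route as the paper's. Self-orthogonality comes from $MM^T \equiv 0 \pmod{E}$, freeness and dimension $s$ come from the standard-form condition, and the zero-sum and mod-$2E$ isotropy properties pass from the rows to all codewords by bilinearity (the paper cites its row-operations lemma for this last step, and you use that lemma's identity directly). Your argument that $PM' = I_s$ forces the square block $M'$ to be invertible (since $\det P \det M' = 1$ over $\ZZ_E$) is a welcome addition: the paper only asserts independence, even though the permitted row operations include non-unit scalings.
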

\begin{proof}
    The self-orthogonality follows from \myrefeq{Condition}{eq:valid-bell-sampling-matrix}, the linear independence of rows follows from \myrefeq{Condition}{eq:useful-bell-sampling-matrix-condition}, and the zero-sum property follows from \myrefeq{Condition}{eq:zero-sum-property}. The fact that $C$ is free also follows from \myrefeq{Condition}{eq:useful-bell-sampling-matrix-condition}. \cref{lmm:row-and-column-operations-preserve-bell-sampling-matrix-conditions} shows that the zero-sum, orthogonality and isotropicity properties extend from the rows of the generator matrix to all codewords.
\end{proof}

\begin{notation}
    When $R$ is clear from the context, write $p_{\rho, M} = p_{\rho, R_M}$ and $q_{\rho, M} = q_{\rho, R_M}$ for all $M$ such that $R_M$ is a representation.
\end{notation}

\begin{notation}
    Write $M_k$ for the $k$th row of $M$. If $M$ satisfies any of \myrefeq{Condition}{eq:valid-bell-sampling-matrix}, \myrefeq{Condition}{eq:useful-bell-sampling-matrix-condition}, \myrefeq{Condition}{eq:zero-sum-property}, then trivially so does $M_k$ for any $k$.
\end{notation}

\begin{lemma}
    The marginal distribution of $q_{\rho, M}$ on the $k$th coordinate is $q_{\rho, M_k}$ for all $k$.
\end{lemma}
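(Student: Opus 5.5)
The plan is to compute the marginal directly from the Fourier-analytic description of $q_{\rho,M}$ given by \cref{lmm:state-fourier-sampling-output-distribution}. Since $R_M$ is a linear representation of the abelian group $G^s$, we have $q_{\rho,M} = \hat{p}_{\rho,M}$. Here the characters of $G^s$ factor as $\chi_{(\lambda_1,\dots,\lambda_s)}(x_1,\dots,x_s) = \prod_j \chi_{\lambda_j}(x_j)$. The marginal on coordinate $k$ is obtained by summing $q_{\rho,M}(\lambda_1,\dots,\lambda_s)$ over all $\lambda_j$ with $j \neq k$. Exchanging the order of summation and using $\sum_{\lambda_j \in \widehat{G}} \overline{\chi_{\lambda_j}(x_j)} = \abs{G}\,\indic\{x_j = 0\}$ (the case $K=\{0\}$, $K^\perp=\widehat{G}$ of \cref{lmm:subgroup-orthogonality-relations}) collapses the sum over $G^s$ to the elements $(0,\dots,0,x_k,0,\dots,0)$. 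The normalisation $1/\abs{G}^s$ becomes $1/\abs{G}$. This gives
\[
\sum_{\lambda_j,\, j\neq k} q_{\rho,M}(\lambda) = \frac{1}{\abs{G}}\sum_{x\in G}\overline{\chi_{\lambda_k}(x)}\, p_{\rho,M}(0,\dots,0,x,0,\dots,0).
\]
Equivalently, this is \cref{lmm:poisson-summation-identity-on-cosets} applied to the subgroup $K = \{0\}^{k-1}\times G\times\{0\}^{s-k}$.

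Next I identify $p_{\rho,M}$ on that slice. By definition, $R_M(0,\dots,x,\dots,0) = \bigotimes_{\ell=1}^t R(a_{k\ell}x) = R_{M_k}(x)$, where $M_k$ is viewed as a $1\times t$ matrix. Hence $p_{\rho,M}(0,\dots,x,\dots,0) = \Tr(R_{M_k}(x)\rho^{\tp t}) = p_{\rho,M_k}(x)$. The row $M_k$ satisfies \myrefeq{Condition}{eq:valid-bell-sampling-matrix} because $(M M^T)_{kk}=0$, as noted in the preceding notation. Therefore $R_{M_k}$ is itself a linear representation of $G$ by \cref{lmm:characterisation-of-bell-sampling-representations}, so $q_{\rho,M_k} = \hat p_{\rho,M_k}$ is well defined as a Fourier sampling distribution. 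The displayed sum is then exactly $q_{\rho,M_k}(\lambda_k)$.

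The main point to be careful about is the bookkeeping of the sign convention $\lambda\mapsto-\lambda$ and the normalisations, so that both sides use the same convention from \cref{eq:fourier-sampling-output-distribution}. A more operational alternative is to note that Fourier sampling over $G^s$ with $\ctrl{R_M}$ is a product QFT, and that tracing out the other registers gives the same conclusion. The computation above avoids this. No step seems genuinely hard; the argument is essentially an application of Poisson summation.
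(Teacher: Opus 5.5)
Your proposal is correct and is essentially the paper's proof: write $q_{\rho,M}=\hat p_{\rho,M}$, sum over the characters on the coordinates $j\neq k$ so that orthogonality collapses the sum over $G^s$ to the slice $(0,\dots,0,x,0,\dots,0)$ with normalisation $1/\abs{G}$, and then recognise $p_{\rho,M}$ on that slice as $p_{\rho,M_k}$. Your check that the single row $M_k$ still satisfies the self-orthogonality condition, so that $q_{\rho,M_k}$ is a genuine Fourier sampling distribution, is the same point the paper records in the notation just before the lemma.
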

\begin{proof}
    For $k = 1$,
    \begin{align*}
        \sum_{x_2, \dots, x_s \in G} q_{\rho, M} (x_1, \dots, x_s) & = \frac{1}{\abs{G}^s} \sum_{z_1, \dots, z_s \in G} \sum_{x_2, \dots, x_s \in G} \overline{\chi_{z_1}(x_1) \cdots \chi_{z_s}(x_s)} p_{\rho, M} (x_1, \dots, x_s) \\
        & = \frac{1}{\abs{G}} \sum_{z_1 \in G} \overline{\chi_{z_1}(x_1)} p_{\rho, M} (x_1, 0, \dots, 0) \\
        & = q_{\rho, M_1} (x_1).
    \end{align*}
    The proof is exactly the same for $k > 1$.
\end{proof}

\begin{lemma}
    Given an oracle to $\ctrl{R}$, we can implement $\ctrl{R_M}$ using $t$ queries to $\ctrl{R}$ and $O(1)$ ancillae and depth $O(s t)$, or $\Theta(t)$ ancillae and depth $O(s)$.
\end{lemma}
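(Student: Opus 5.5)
The plan is to build $\ctrl{R_M}$ directly from the definition
\[
R_M(\vd{x}) = R((\vd{x} M)_1) \tp \cdots \tp R((\vd{x} M)_t).
\]
The circuit will, for each $\ell \in [t]$, coherently compute the group element $c_\ell = (\vd{x}M)_\ell = \sum_k a_{k\ell} x_k$ into an ancilla $G$-register. It then applies $\ctrl{R}$ controlled on that register to the $\ell$th copy of $\mathcal{H}$, and finally uncomputes $c_\ell$. This uses exactly one query to $\ctrl{R}$ per column, so $t$ queries in total. All entries $a_{k\ell}$ may be taken modulo $E$, since $E \cdot g = 0$ in $G$.

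The basic arithmetic step is the reversible map $\ket{x_k}\ket{y} \mapsto \ket{x_k}\ket{y + a_{k\ell} x_k}$ on $\CC^G \tp \CC^G$. For finite abelian $G$ this is a classical reversible circuit: with a known decomposition $G \cong \prod \ZZ_{n_i}$ it is componentwise modular multiply-add. I count it as $O(1)$ depth in units of group operations, as the statement implicitly does. If one instead wants to count gates, each such step is $\polylog\abs{G}$, and scalar multiplication by $a_{k\ell} < E$ uses repeated doubling.

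For the $O(1)$-ancilla version, I use a single ancilla register initialised to $\ket{0}$. For each $\ell = 1, \dots, t$ in turn:
\begin{itemize}
\item add $a_{1\ell}x_1, \dots, a_{s\ell}x_s$ sequentially, which takes depth $s$;
\item apply $\ctrl{R}$ between the ancilla and copy $\ell$;
\item subtract the same terms in reverse, returning the ancilla to $\ket{0}$.
\end{itemize}
Each column costs depth $O(s)$, giving depth $O(st)$ overall with one ancilla register. For the parallel version, I allocate $t$ ancilla registers, one per column, and perform the $s \cdot t$ additions in $s$ rounds. In round $k$, each ancilla $\ell$ adds $a_{k\ell}x_k$. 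These $t$ operations all read $x_k$ simultaneously, which is not directly parallel.

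The main obstacle is that fan-out of the control register $x_k$. I handle it by first copying each $x_k$ into $t$ ancilla copies using CNOT-style group additions ($\ket{x}\ket{0} \mapsto \ket{x}\ket{x}$) in a doubling tree. This has depth $O(\log t)$, or can be absorbed by staggering: round $k$ for ancilla $\ell$ is scheduled at time $k + \ell$ modulo $s$-blocks, via a standard pipelining argument, which yields depth $O(s + t)$. To obtain exactly $O(s)$ with $\Theta(t)$ ancillae, I would try the following order. First, each ancilla $\ell$ holds a private copy of the input row. Then all $t$ controlled-$R$ queries are applied in parallel in depth $1$, since they act on disjoint copies of $\mathcal{H}$. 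Finally, everything is uncomputed in reverse. If the fan-out cost cannot be hidden, I would state the bound as $O(s + \log t)$ and note that it is $O(s)$ when fan-out gates are allowed.

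Correctness on basis states is immediate: the final state is $\ket{\vd{x}} \tp R(c_1)\ket{\psi_1} \tp \cdots \tp R(c_t)\ket{\psi_t}$, with all ancillae restored to $\ket{0}$. By linearity this gives $\ctrl{R_M}$. Note that it does not even require $R_M$ to be a linear representation.
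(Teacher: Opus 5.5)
Your proposal is essentially the paper's argument: compute each $(\vd{x}M)_\ell$ into an ancilla register, apply $\ctrl{R}$ to the $\ell$th copy of $\mathcal{H}$, and uncompute. This is done either one column at a time ($O(1)$ ancillae, depth $O(st)$) or for all $t$ columns together ($\Theta(t)$ ancillae, $s$ rounds of additions, one parallel layer of queries). The paper's parallel version tacitly lets all $t$ accumulators read $x_k$ in the same layer, i.e.\ exactly the fan-out you flag, so your fan-out-conditional depth $O(s)$ is precisely what it proves. Without fan-out, a light-cone argument forces depth $\Omega(\log t)$ whenever a row of $M$ has $t$ nonzero entries, so $O(s)$ cannot be recovered in general; also note that fanning out every $x_k$ simultaneously needs $\Theta(st)$, not $\Theta(t)$, ancillae.
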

\begin{proof}
    The proof is similar to that of \cref{lmm:implementation-of-endomorphism-representation}: we use an ancillary registers to compute the linear combinations of the input group elements specified by $M$, and then apply $\ctrl{R}$ to the resulting group elements in parallel, before uncomputing the ancillary registers.

    We may either compute the linear combinations one at a time (compute, perform $\ctrl{R}$, uncompute, repeat with next linear combination), or compute all linear combinations in parallel (compute all, perform $\ctrl{R}$ in parallel, uncompute all).
\end{proof}

The following lemma shows that left-multiplying $M$ by an invertible matrix produces the same output distribution, up to a linear transformation of the output:

\begin{lemma}\label{lmm:distribution-invariance-under-left-multiplication}
    Let $A \in \ZZ_E^{s \times s}$ be an invertible matrix. Then $q_{\rho, A M} (\vd{z}) = q_{\rho, M} (\vd{z} A^{-1})$ for all $\vd{z} \in G^s$.
\end{lemma}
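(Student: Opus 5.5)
The plan is to compare the two functions $p_{\rho, AM}$ and $p_{\rho, M}$ on $G^s$ directly, and then push the relation through the Fourier transform. Here $G^s$ is viewed as a right $\ZZ_E$-module, acting on row vectors $\vd{x} \in G^s$. Since $E = \exp(G)$, the product $\vd{x} A$ is well defined for $A \in \ZZ_E^{s \times s}$, and $\vd{x} \mapsto \vd{x} A$ is a group automorphism of $G^s$ whenever $A$ is invertible over $\ZZ_E$.

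\emph{Step 1.} I would first check that $R_{AM}$ is still a representation. Indeed $(AM)(AM)^T = A(MM^T)A^T = 0 \pmod{E}$, so \cref{lmm:characterisation-of-bell-sampling-representations} applies and $q_{\rho, AM}$ is well defined. Next, by definition,
\begin{equation*}
R_{AM}(\vd{x}) = \bigotimes_{\ell} R((\vd{x} A M)_\ell) = R_M(\vd{x} A).
\end{equation*}
This gives $p_{\rho, AM}(\vd{x}) = p_{\rho, M}(\vd{x} A)$ for all $\vd{x} \in G^s$.

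\emph{Step 2.} Write $q_{\rho, AM}(\vd{z}) = \hat{p}_{\rho, AM}(\vd{z})$ using \cref{lmm:state-fourier-sampling-output-distribution}, applied to the abelian group $G^s$. This gives
\begin{equation*}
q_{\rho, AM}(\vd{z}) = \frac{1}{\abs{G}^s} \sum_{\vd{x}} \overline{\chi_{\vd{z}}(\vd{x})}\, p_{\rho, M}(\vd{x} A).
\end{equation*}
Substitute $\vd{y} = \vd{x} A$, which is a bijection of $G^s$. The sum becomes $\sum_{\vd{y}} \overline{\chi_{\vd{z}}(\vd{y} A^{-1})}\, p_{\rho, M}(\vd{y})$.

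\emph{Step 3 (main obstacle).} The key identity is the character-transport relation $\chi_{\vd{z}}(\vd{y} A^{-1}) = \chi_{\vd{z}'}(\vd{y})$ for the appropriate $\vd{z}'$. With the identification $\widehat{G^s} \cong \widehat{G}^s$, we have $\chi_{\vd{z}}(\vd{y}) = \prod_k \chi_{z_k}(y_k)$. Since characters are homomorphisms, $\chi_{z}(a y) = \chi_{a z}(y)$ for $a \in \ZZ$, using $\chi_z^a = \chi_{az}$. Expanding the product then gives
\begin{equation*}
\chi_{\vd{z}}(\vd{y} B) = \chi_{\vd{z} B^T}(\vd{y}).
\end{equation*}
Applying this with $B = A^{-1}$ yields $q_{\rho, AM}(\vd{z}) = q_{\rho, M}(\vd{z} (A^{-1})^T)$.

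To match the stated form $\vd{z} A^{-1}$, I need a convention for how $\widehat{G}$ is paired with $G$. The natural one is to treat $\vd{z}$ as acting through the transposed pairing, i.e.\ to regard the output as a column vector, or equivalently to identify the dual via Pontryagin duality (\cref{fct:pontryagin-duality}) with the adjoint action. I would state this convention explicitly: the $\ZZ_E$-action on $\widehat{G}^s$ is dual to the action on $G^s$, so ``$\vd{z} A^{-1}$'' is interpreted with respect to that pairing. Under it, the computation above gives exactly the claim. This bookkeeping of transposes is the only delicate point; everything else is the change of variables.
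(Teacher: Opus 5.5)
Your argument is the paper's argument: both establish $p_{\rho,AM}(\vd x)=p_{\rho,M}(\vd xA)$, change variables by $\vd x\mapsto\vd xA^{-1}$, and then move the matrix from the group element onto the character. The one place you differ is that last step, and there you are right and the paper is not.

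With the paper's own definition $\chi_{\vd z}(\vd x)=\prod_k\chi_{z_k}(x_k)$, the correct identity is $\chi_{\vd z}(\vd xB)=\chi_{\vd zB^T}(\vd x)$. The paper instead asserts $\chi_{\vd z}(\vd xB)=\chi_{\vd zB}(\vd x)$, which fails whenever $B$ is not symmetric. So the lemma as literally stated is false. You should therefore not try to recover ``$\vd zA^{-1}$'' through an unspecified ``dual action'' convention. State plainly that what holds is $q_{\rho,AM}(\vd z)=q_{\rho,M}(\vd z(A^{-1})^T)$.

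A concrete check shows the discrepancy. Take $G=\ZZ_3$, the one-dimensional representation $R(x)=\omega^x$ with $\rho=1$, $M=I_2$ and $A=\begin{pmatrix}1&1\\0&1\end{pmatrix}$. The cocycle is trivial, so $R_M$ is a representation for every $M$. Then $p_{\rho,AM}(\vd x)=\omega^{2x_1+x_2}$, so $q_{\rho,AM}$ is the point mass at $(2,1)$. By contrast, $q_{\rho,M}(\vd zA^{-1})$ is the point mass at $(1,2)$, while $q_{\rho,M}(\vd z(A^{-1})^T)$ is the point mass at $(2,1)$, as your formula predicts.

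The correction is harmless for how the lemma is used in \cref{rmk:linearisation-matrix-in-standard-form}. That remark only needs $q_{\rho,AM}$ to be the image of $q_{\rho,M}$ under some invertible linear map, here $\vd w\mapsto\vd wA^T$, and that map preserves $(S^\perp)^s$.
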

\begin{proof}
    Let $\vd{x} = (x_1, \dots, x_s)$. We have
    \begin{equation*}
        p_{\rho, A M} (\vd{x}) = \Tr(R_{A M}(\vd{x}) \rho^{\tp t}) = \Tr(R((\vd{x} A M)_1) \rho) \cdots \Tr(R((\vd{x} A M)_t) \rho) = p_{\rho, M} ((\vd{x} A)).
    \end{equation*}
    Hence,
    \begin{align*}
        q_{\rho, A M} (z_1, \dots, z_s) & = \frac{1}{\abs{G}^s} \sum_{x_1, \dots, x_s \in G} \overline{\chi_{z_1}(x_1) \cdots \chi_{z_s}(x_s)} p_{\rho, A M} (\vd{x}) \\
        & = \frac{1}{\abs{G}^s} \sum_{\vd{x} \in G^s} \overline{\chi_{\vd{z}} (\vd{x})} p_{\rho, M} (\vd{x} A) \\
        & = \frac{1}{\abs{G}^s} \sum_{\vd{x} \in G^s} \overline{\chi_{\vd{z}} (\vd{x} A^{-1})} p_{\rho, M} (\vd{x}),
    \end{align*}
    where the final equality is by the change of variable $\vd{x} \mapsto \vd{x} A^{-1}$.
    It is straightforward to verify that $\chi_{\vd{z}} (\vd{x} B) = \chi_{\vd{z} B} (\vd{x})$ for any square matrix $B$, so we are done.
\end{proof}

\begin{remark}\label{rmk:linearisation-matrix-in-standard-form}
    Note that swapping columns of $M$ clearly does not change the distribution $q_{\rho, M}$, and swapping rows, adding rows and multiplying rows by invertible scalars in $\ZZ_E$ corresponds to left-multiplying $M$ by an invertible matrix. So by \cref{lmm:distribution-invariance-under-left-multiplication} and \cref{lmm:row-and-column-operations-preserve-bell-sampling-matrix-conditions}, we may assume WLOG that $M$ is in standard form. Hence, by \cref{thm:linearisation-matrix-relation-to-linear-codes}, each linearisation of $R$ corresponds (up to a linear transform) to a free self-orthogonal linear code over $\ZZ_E$. Furthermore, each linearisation of $R$ also satisfying \myrefeq{Condition}{eq:zero-sum-property} corresponds to a code which is also zero-sum.
\end{remark}

\begin{lemma}\label{lmm:projective-fourier-sampling-subgroup-mass}
    For all subgroups $K_1, \dots, K_s \leq G$,
    \begin{equation*}
        q_{\rho, R_M} (K_1^\perp \times \cdots \times K_s^\perp) \leq \prod_{k = 1}^s \EE_{x_k \in K_k} \abs{\Tr(P(x_k) \rho)}.
    \end{equation*}
\end{lemma}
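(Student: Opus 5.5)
We read $P$ as the projective representation $R$, assume $M$ satisfies \myrefeq{Condition}{eq:valid-bell-sampling-matrix} so that $R_M$ is a linear representation of the abelian group $G^s$, and take the bound to be on the magnitude of the right-hand quantity. The plan is to compute the mass of the product dual subgroup by the Poisson summation identity and then bound the resulting average factor by factor.

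\emph{Step 1: reduce to an average of $p_{\rho,M}$.} Since $(K_1\times\cdots\times K_s)^\perp = K_1^\perp\times\cdots\times K_s^\perp$ inside $\widehat{G^s}\cong \widehat G^s$, \cref{lmm:fourier-sampling-distribution-mass-on-subgroup-dual} applied to $G^s$, $R_M$ and $\rho^{\tp t}$ gives
\[
q_{\rho,M}(K_1^\perp\times\cdots\times K_s^\perp)=\EE_{\vd{x}\in K_1\times\cdots\times K_s} p_{\rho,M}(\vd{x}).
\]
This quantity is a probability, hence real and nonnegative, so it is bounded by $\EE_{\vd{x}} \abs{p_{\rho,M}(\vd{x})}$ via the triangle inequality.

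\emph{Step 2: factor $p_{\rho,M}$.} By definition,
\[
p_{\rho,M}(\vd{x}) = \prod_{\ell=1}^t \Tr\bigl(R((\vd{x}M)_\ell)\rho\bigr).
\]
What we actually need is a bound of the form $\abs{p_{\rho,M}(\vd{x})}\le \prod_k \abs{\Tr(R(x_k)\rho)}$, or at least one that factorises over $k$ after averaging. The natural route uses \cref{rmk:linearisation-matrix-in-standard-form}: we may assume $M=(I_s\mid A)$, noting that the row operations act on $K_1\times\cdots\times K_s$ compatibly only when these are handled carefully. A cleaner way is to avoid this issue altogether. Column $\ell=k$ of the standard form gives the factor $\Tr(R(x_k)\rho)$ for $k\le s$, and every other factor has modulus at most $1$ since $R$ is unitary and $\rho$ is a state. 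Therefore
\[
\abs{p_{\rho,M}(\vd{x})}\le\prod_{k=1}^s\abs{\Tr(R(x_k)\rho)}.
\]

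\emph{Step 3: conclude.} Averaging over the product set factorises the expectation, which gives the claimed bound.

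The main obstacle is justifying the standard-form assumption. Left-multiplying $M$ by an invertible $A$ changes the subgroup in question (by \cref{lmm:distribution-invariance-under-left-multiplication}), so for general $K_k$ the reduction is not free. If $M$ is not already in standard form with identity columns, I would instead prove the statement directly for $M$ containing the columns $e_1,\dots,e_s$ up to permutation. This is the case used downstream, and column permutations do not change the distribution. Failing that, I would restrict to $K_1=\cdots=K_s$, where $K^s$ is invariant under invertible $A$ over $\ZZ_E$.
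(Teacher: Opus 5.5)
Your proof is correct and follows the paper's argument: Poisson summation on $G^s$, the triangle inequality, discarding the $t-s$ factors of modulus at most $1$, and factorising the expectation using the identity columns of the standard form. The paper uses $(\vd{x}M)_\ell = x_\ell$ for $\ell \le s$ without comment, relying on the standard-form reduction of \cref{rmk:linearisation-matrix-in-standard-form}, so your point that this reduction is only free when $K_1=\cdots=K_s$ (the case the subsequent corollary needs) is a legitimate sharpening rather than a gap.
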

\begin{proof}
    We have
    \begin{align*}
        q_{\rho, R_M} (K_1^\perp \times \cdots \times K_s^\perp) & = q_{\rho, R_M} ((K_1 \times \cdots \times K_s)^\perp) \\
        & = \EE_{(x_1, \dots, x_s) \in K_1 \times \cdots \times K_s} \Tr(R_M(x_1, \dots, x_s) \rho^{\tp t}) \\
        & = \Abs{\EE_{(x_1, \dots, x_s) \in K_1 \times \cdots \times K_s} \prod_{\ell = 1}^t \Tr(R((\vd{x} M)_\ell) \rho)} \\
        & \leq \EE_{(x_1, \dots, x_s) \in K_1 \times \cdots \times K_s} \Abs{\prod_{\ell = 1}^t \Tr(R((\vd{x} M)_\ell) \rho)} \\
        & \leq \EE_{(x_1, \dots, x_s) \in K_1 \times \cdots \times K_s} \prod_{\ell = 1}^s \Abs{\Tr(R((\vd{x} M)_\ell) \rho)} \\
        & = \prod_{k = 1}^s \EE_{x_k \in K_k} \abs{\Tr(R(x_k) \rho)}
    \end{align*}
    where the first inequality is by the triangle inequality.
\end{proof}

\begin{corollary}
    Let $K > S$ be a subgroup of $G$. Then
    \begin{equation*}
        q_{\rho, R_M} (K^\perp \times \cdots \times K^\perp) \leq (1 - \epsilon / 2)^s.
    \end{equation*}
\end{corollary}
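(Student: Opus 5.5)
Apply \cref{lmm:projective-fourier-sampling-subgroup-mass} with $K_1 = \dots = K_s = K$. This reduces the claim to a single-coordinate bound: it suffices to show
\begin{equation*}
    \EE_{x \in K} \abs{\Tr(R(x) \rho)} \leq 1 - \frac{\epsilon}{2},
\end{equation*}
since the product of $s$ identical such factors is $(1 - \epsilon/2)^s$. The product factorisation in \cref{lmm:projective-fourier-sampling-subgroup-mass} is the key point. Reading its last steps, the product over codeword coordinates is effectively bounded by the product over the $s$ information coordinates: after putting $M$ in standard form (\cref{rmk:linearisation-matrix-in-standard-form}), the first $s$ columns are the identity, so $(\vd{x} M)_k = x_k$ for $k \le s$. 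The remaining factors have modulus at most $1$ and can be dropped, because $\abs{\Tr(R(y)\rho)} \le 1$ for unitary $R(y)$ and density operator $\rho$. The expectation over the product group $K^s$ then factorises into a product of $s$ independent expectations over $K$. I will simply invoke the lemma as stated.

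For the single-coordinate bound I will repeat the argument of \cref{lmm:weak-state-fourier-sampling::anti-concentration-of-distribution}, now with absolute values inside the average. Split $K$ into $S$ and $K \setminus S$. On $S$, each term has $\abs{\Tr(R(x)\rho)} = 1$ by the anyonic-symmetry promise of \prb{ProjAnyonicSL}. On $K \setminus S$, each term is at most $1 - \epsilon$ by the asymmetry promise. Hence
\begin{equation*}
    \EE_{x \in K} \abs{\Tr(R(x)\rho)} \le \frac{\abs{S}}{\abs{K}} + \Bigl(1 - \frac{\abs{S}}{\abs{K}}\Bigr)(1 - \epsilon) = 1 - \epsilon\Bigl(1 - \frac{\abs{S}}{\abs{K}}\Bigr).
\end{equation*}
Since $S < K$ is a proper subgroup, Lagrange's theorem (\cref{fct:lagranges-theorem}) gives $\abs{K} \ge 2\abs{S}$. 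So $1 - \abs{S}/\abs{K} \ge 1/2$, and the bound $1 - \epsilon/2$ follows.

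The only delicate point is whether $q_{\rho, R_M}(K^\perp \times \cdots \times K^\perp)$ is genuinely the mass of a probability distribution on the dual of $K^s$. This requires $R_M$ to be a linear representation, so that \cref{lmm:fourier-sampling-distribution-mass-on-subgroup-dual} applies. That is guaranteed by \myrefeq{Condition}{eq:valid-bell-sampling-matrix} via \cref{lmm:characterisation-of-bell-sampling-representations}, which is implicitly assumed in the setting of \cref{lmm:projective-fourier-sampling-subgroup-mass}. Given that, the corollary is immediate, and no step presents a real obstacle.
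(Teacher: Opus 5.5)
Your proposal is correct and takes the paper's route: apply \cref{lmm:projective-fourier-sampling-subgroup-mass} with $K_1 = \cdots = K_s = K$, then bound each of the $s$ factors by the $\abs{K} \geq 2\abs{S}$ anti-concentration argument. The only difference is that the paper cites \cref{lmm:weak-state-fourier-sampling::anti-concentration-of-distribution}, which is stated for Bose symmetry, whereas you re-derive the single-coordinate bound with absolute values inside the average — exactly the adaptation needed for the anyonic promise $\abs{\Tr(R(x)\rho)} = 1$ on $S$.
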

\begin{proof}
    Follows from \cref{lmm:projective-fourier-sampling-subgroup-mass} and \cref{lmm:weak-state-fourier-sampling::anti-concentration-of-distribution}.
\end{proof}

\begin{definition}
    We define the \emph{sampling rate} of the Fourier sampling distribution $q_{\rho, R_M}$ as $s / t$, the rate of the corresponding linear code.
\end{definition}

\begin{theorem}\label{thm:algorithm-for-projective-state-hsp}
    Let $\mathcal{IA} = \prb{ProjAnyonicSL}(G, R, \rho)$ be an instance of abelian \prb{ProjAnyonicSL} with hidden symmetry subgroup $S = \HSS(\mathcal{IA})$ and minimum asymmetry distance $\epsilon = \MASD(\mathcal{IA})$. Let $\mathcal{M}$ be the set of all minimal subgroups $K > S$. Let $M$ be a matrix satisfying \myrefeq{Condition}{eq:valid-bell-sampling-matrix} and \myrefeq{Condition}{eq:useful-bell-sampling-matrix-condition}.
    
    Then there is a quantum algorithm, using $O(\log \abs{\mathcal{M}} + \log(1 / \delta)) \cdot r(M) / \epsilon$ copies of $\rho$, which outputs a generating set of $S$ with probability at least $1 - \delta$. Moreover, if $\ctrl{R}$ and $\QFT_G$ can be implemented efficiently (in time $\polylog \abs{G}$), then the algorithm runs in time $\polylog(\abs{G}) / \epsilon$.

    Moreover, the algorithm runs in time $r(M) \cdot \polylog \abs{G} / \epsilon$ if $\ctrl{R}$ and $\QFT_G$ can be implemented in time $\polylog \abs{G}$.
\end{theorem}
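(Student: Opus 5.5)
The algorithm repeatedly performs Fourier difference sampling (\cref{alg:fourier-difference-sampling}) over $G^s$ with the linear representation $R_M$, which is valid by \cref{lmm:characterisation-of-bell-sampling-representations}. Each run of Fourier sampling consumes $t$ copies of $\rho$ and produces a tuple $\vd{z} = (z_1, \dots, z_s) \in \widehat{G}^s$. We collect the $s$ coordinates of every difference sample into one multiset of elements of $\widehat{G}$ and output $A = \bigcap_j \ker \chi_{z_j}$, i.e.\ $A = \gen{z_j}^\perp$, computed by Gaussian elimination.

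The first step is correctness on the support. By \cref{lmm:fourier-difference-sampling-distribution}, each difference sample is distributed according to $q_{\rho^{\tp 2t}, R_M \tp R_M^{-1}}$, the Fourier transform of $\abs{p_{\rho,M}}^2$. For $\vd{x} \in S^s$ every $(\vd{x}M)_\ell$ lies in $S$, so $\abs{p_{\rho,M}(\vd{x})} = 1$. By \cref{fct:fourier-sampling-distribution-supported-on-dual-of-symmetry-subgroup}, applied to the Bose instance $\abs{p}^2$, the samples therefore lie in $(S^s)^\perp = (S^\perp)^s$. Hence every $z_j \in S^\perp$, and so $S \le A$.

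The second step is anti-concentration. Fix $K \in \mathcal{M}$ and $k \in [s]$. Using the marginal lemma, the $k$-th coordinate of a difference sample is distributed as the Fourier transform of $\abs{p_{\rho, M_k}}^2$. Its mass on $K^\perp$ is $\EE_{x\in K}\abs{\Tr(R(M_k x)\rho)}^{2\cdot(\ldots)}$, a product of the $t$ factors $\abs{\Tr(R(a_{k\ell}x)\rho)}^2$. Condition (M$*$) puts $M$ in standard form, so some column has $a_{k\ell}=1$. Bounding every other factor by $1$ gives mass at most $\EE_{x\in K}\abs{\Tr(R(x)\rho)}^2 \le 1-\epsilon/2$, by the Lagrange argument of \cref{lmm:weak-state-fourier-sampling::anti-concentration-of-distribution} (note $\abs{K}\ge 2\abs{S}$).

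The third step handles the dependence between coordinates. The $s$ coordinates of one sample are not independent, and different samples are only conditionally independent given $z'_0$. I would condition on $z'_0$ and apply the same anti-concentration bound to the conditional distribution. By Lemma \ref{lmm:poisson-summation-identity-on-cosets}, the conditional mass on $K^\perp$ equals a coset-shifted average whose absolute value is bounded by the same $\EE\abs{\cdot}$ quantity, by the triangle inequality.

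For the joint event, $K \le A$ requires all $s$ coordinates of every sample to lie in $K^\perp$. I would bound this using \cref{lmm:projective-fourier-sampling-subgroup-mass} and its corollary, which give $q(K^\perp\times\cdots\times K^\perp)\le(1-\epsilon/2)^s$ for a single sample. Over $m$ conditionally independent samples this yields a bound of $(1-\epsilon/2)^{sm}$. A union bound over $\mathcal{M}$ then gives failure probability at most $\abs{\mathcal{M}}e^{-sm\epsilon/2}$, so $sm = O(\log\abs{\mathcal{M}}+\log(1/\delta))/\epsilon$ suffices. The total copy count is $t(m+1)$, which equals $(\log\abs{\mathcal{M}}+\log(1/\delta))/(r(M)\epsilon)$ up to constants; I read the statement's ``$\cdot r(M)$'' as this rate dependence. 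The running time follows from the $\ctrl{R_M}$ implementation lemma.

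The main obstacle is justifying the product bound $(1-\epsilon/2)^s$ for the \emph{difference} samples conditioned on $z'_0$, rather than for plain Fourier samples. I would first try to rerun the proof of \cref{lmm:projective-fourier-sampling-subgroup-mass} with $p$ replaced by $\abs{p}^2$, using that $\abs{\prod_\ell \Tr(R((\vd{x}M)_\ell)\rho)} \le \prod_k \abs{\Tr(R(x_k)\rho)}$ by the standard-form columns. If conditioning causes trouble, the fallback is to use $2$ fresh copies per difference. This costs only a constant factor in the copy count.
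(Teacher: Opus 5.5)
Your proposal is correct and follows the paper's proof essentially step for step: Fourier difference sampling with $R_M$, outputting the intersection of the kernels of all coordinates, the per-block bound $(1-\epsilon/2)^s$ from \cref{lmm:projective-fourier-sampling-subgroup-mass} using the standard-form columns, conditional independence given the reference sample, and a union bound over $\mathcal{M}$. Two of your points go beyond the paper's text. First, your use of \cref{lmm:poisson-summation-identity-on-cosets} to bound the conditional mass $q_{\rho,R_M}(z'_0 + (K^\perp)^s)$ by $\EE_{x\in K^s}\abs{p_{\rho,M}(x)}$ supplies the justification that the paper leaves implicit. Second, reading the copy count as $t(m+1)\approx ms/r(M)$, i.e.\ dividing by the rate, is the right interpretation, since the paper's identity $t = r(M)s$ is a slip.
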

\begin{proof}
    The instance $\mathcal{IA}$ gives rise to an instance $\prb{AnyonicSL}(G^s, R_M, \rho^{\tp t})$ of \prb{AnyonicSL} with hidden subgroup $S^s$ and minimum asymmetry distance $\epsilon' = 1 - (1 - \epsilon)^s$. So the problem in solved in the same way that we solve any abelian \prb{AnyonicSL} instance.

    The algorithm is simply: perform Fourier difference sampling on $m + 1$ copies of $\rho^{\tp s}$ using $R_M$, and output the intersection of the kernels of the samples.

    Let the Fourier difference samples of the algorithm be $z_1, \dots, z_{m s} \in \widehat{G}$ and the sample that is subtracted from the others (in the Fourier difference sampling subroutine) be $(z'_{-s + 1}, \dots, z'_0) \in \widehat{G}^s$. Each $z_i \in S^\perp$. The output of the algorithm is $A = \ker{z_1} \cap \cdots \cap \ker{z_{m s}}$. We have $S \leq A$, so the algorithm fails if and only if there is some $K \in \mathcal{M}$ such that $K \leq A$, i.e.
    \begin{equation*}
        (z_1, \dots, z_s), \dots, (z_{(m - 1) s + 1}, \dots, z_{m s}) \in (K^\perp)^s
    \end{equation*}
    for some $K \in \mathcal{M}$. Note that $(z_1, \dots, z_s), \dots, (z_{(m - 1) s + 1}, \dots, z_{m s})$ are conditionally independent given $z'_{-s + 1}, \dots, z'_0$. Hence, by a simple union bound over all $K \in \mathcal{M}$, the probability of failure, given $z'_{-s + 1}, \dots, z'_0$, is at most
    \begin{align*}
        & \sum_{K \in \mathcal{M}} \prod_{i = 1}^m \Pr((z_{(i - 1) s + 1}, \dots, z_{i s}) \in (K^\perp)^s \mid z'_{-s + 1}, \dots, z'_0) \\
        & \leq \abs{\mathcal{M}} \cdot ((1 - \epsilon / 2)^s)^m,
    \end{align*}
    and so by the law of total probability, the probability of failure is at most $\abs{\mathcal{M}} \cdot (1 - \epsilon / 2)^{m s} \leq \abs{\mathcal{M}} \cdot e^{-\epsilon m s / 2}$. In order for this to be upper bounded by $\delta$, it suffices that
    \begin{equation*}
        m s = \Ceil{\frac{\log \abs{\mathcal{M}} + \log(1 / \delta)}{-\log(1 - \epsilon / 2)}} = O((\log \abs{\mathcal{M}} + \log(1 / \delta)) / \epsilon).
    \end{equation*}
    The total number of copies of $\rho$ used is $t (m + 1) = r(M) s (m + 1)$.

    Note that if $M$ also satisfies \myrefeq{Condition}{eq:zero-sum-property}, then this gives rise to an instance $\prb{BoseSL}(G^s, R_M, \rho^{\tp t})$ of \prb{BoseSL} with hidden symmetry subgroup $S^s$ and minimum asymmetry distance $\epsilon' = 1 - (1 - \epsilon)^s$. In this case, we can use the algorithm from \cref{thm:efficient-algorithm-for-normal-core-statehsp} for \prb{BoseSL}; in particular, we can use Fourier sampling instead of Fourier difference sampling. The analysis is the same, except that we do not need to consider conditional independence of the samples.
\end{proof}


\subsection{Optimising sample rate and number of copies needed per sample}\label{subsec:optimising-rate-and-block-length}

As shown in the previous subsection, each linearisation of the projective representation $R$ corresponds to a free, self-orthogonal linear code over $\ZZ_E$. Clearly, the choice of code affects the linearisation representation and so the Fourier sampling distribution it induces. Therefore, certain desirable properties of the induced distribution may be achieved by choosing a certain code over another. Two properties which we wish to optimise are the number of state copies needed to sample from the distribution, and the sampling rate (number of samples obtained per copy of the state).

The number of copies needed to sample from the distribution is the block length $t$ of the code, and the sampling rate is the rate $s / t$ of the code. Minimising the block length of the code is particularly relevant to near-term applications, where it is difficult to maintain coherent access to many copies of the state. Maximising the rate would be useful in scenarios where the total number of state copies used is limited, and we want to obtain as many samples as possible from those copies.

Thus, we seek free, self-orthogonal linear codes $C$ over $\ZZ_E$ with high rate and small block length. We leave it as an open question whether other properties of the code, e.g. the minimum distance, have any relevance to the induced distribution.

We first consider codes with generator matrices $M$ which satisfy \myrefeq{Condition}{eq:valid-bell-sampling-matrix} and \myrefeq{Condition}{eq:useful-bell-sampling-matrix-condition}; recall these give rise to representations $R_M$ with which we may solve abelian \prb{ProjAnyonicSL} via Fourier difference sampling.

We then consider codes with generator matrices $M$ which also satisfy \myrefeq{Condition}{eq:zero-sum-property}; these give rise to representations $R_M$ with which we may solve abelian \prb{ProjAnyonicSL} via Fourier sampling.

When running our algorithm from \cref{thm:algorithm-for-projective-state-hsp} for abelian \prb{ProjAnyonicSL} using Fourier difference sampling (or indeed when running any algorithm that uses Fourier difference sampling), if we perform Fourier sampling $m + 1$ times, then we obtain $m s$ samples from the Fourier difference sampling distribution, which requires $t (m + 1)$ copies. So the rate of samples per copy is $\frac{m s}{t (m + 1)} = \frac{s}{t} \cdot \frac{m}{m + 1}$. For large $m$, this is approximately the rate $s / t$ of the code. So the rate of the code is the ``asymptotic'' rate of samples per copy of the state; for codes satisfying \myrefeq{Condition}{eq:zero-sum-property}, the rate of the code is exactly the rate of samples per copy of the state.

\subsubsection{Optimising without zero-row-sum condition}

Let $C \subseteq \ZZ_E^t$ be a free, self-orthogonal linear code of block length $t$ and dimension $s$, with some generator matrix $M$.

First, we seek to optimise the block length of $C$. The block length is clearly optimised when the dimension $s$ is $1$, so we are looking for matrices of the form $(\begin{matrix}
    1 & s_1 & \cdots & s_{t - 1}
\end{matrix})$ for minimal $t$ (since by \cref{rmk:linearisation-matrix-in-standard-form} we may assume $M$ is in standard form). The self-orthogonality condition implies that $1 + s_1^2 + \cdots + s_{t-1}^2 = 0 \pmod{E}$; hence, the minimal block length for the code is dependent on the decomposability of $E$ as a modular sum of squares. As the following lemmas establish, this in turn has a simple dependence on the prime factorisation of $E$, and such a decomposition can be found efficiently.

\begin{lemma}\label{lmm:sum-of-squares-decomposition-conditions}
    Let $k \in \NN$ with $k \geq 2$. Then:
    \begin{enumerate}
        \item $k \neq 0 \pmod{4}$ and $k$ contains no prime factor of the form $4 m + 3$ if and only if there exists $s_1 \in \ZZ$ such that $1 + s_1^2 = 0 \pmod{k}$.
        \item $k \neq 0 \pmod{4}$ if and only if there exist $s_1, s_2 \in \ZZ$ such that $1 + s_1^2 + s_2^2 = 0 \pmod{k}$.
        \item If $k \neq 0 \pmod{8}$ if and only if there exist $s_1, s_2, s_3, s_4 \in \ZZ$ such that $1 + s_1^2 + s_2^2 + s_3^2 = 0 \pmod{k}$.
        \item If $k = 0 \pmod{8}$, then there exist $s_1, s_2, s_3, s_4 \in \ZZ$ such that $1 + s_1^2 + s_2^2 + s_3^2 + s_4^2 = 0 \pmod{k}$.
    \end{enumerate}
\end{lemma}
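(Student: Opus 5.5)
The plan is to reduce everything to prime powers with the Chinese Remainder Theorem, then settle odd primes and powers of $2$ separately. The key observation is that a congruence $1 + s_1^2 + \cdots + s_r^2 \equiv 0 \pmod{k}$ is solvable if and only if it is solvable modulo $p^e$ for every prime power $p^e \,\|\, k$. Solutions modulo the individual $p^e$ glue together by CRT, and a solution modulo $k$ reduces to one modulo each divisor. So each item reduces to the question: for which prime powers is $-1$ a sum of $r$ squares, for $r = 1, 2, 3, 4$? In every item, the direction \emph{no solution} will come from reducing modulo a suitable divisor, namely $p$, $4$ or $8$.

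\textbf{Odd prime powers.} For an odd prime $p$ I will show the following.
\begin{itemize}
\item $-1$ is a square modulo $p$ if and only if $p \equiv 1 \pmod 4$. This is Euler's criterion: $(-1)^{(p-1)/2} = 1$.
\item $-1$ is always a sum of two squares modulo $p$. The set $\{a^2\}$ and the set $\{-1 - b^2\}$ each have $(p+1)/2$ elements in $\ZZ_p$, so they must intersect.
\end{itemize}
Both facts lift to $p^e$ by Hensel's lemma applied to $f(a) = a^2 + 1 + b^2$ with $b$ fixed. The derivative $2a$ is a unit provided $a \not\equiv 0 \pmod p$. If the solution mod $p$ has $a \equiv 0$, then $b \not\equiv 0$, since otherwise $-1 \equiv 0$; in that case swap the roles of $a$ and $b$. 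Consequently, at odd prime powers two squares always suffice, and one square suffices exactly when $p \equiv 1 \pmod 4$. This is also why the odd part of $k$ never obstructs items 2--4.

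\textbf{Powers of $2$.} The squares modulo $8$ are exactly $\{0,1,4\}$. From this I will read off the following.
\begin{itemize}
\item Modulo $2$, $-1 \equiv 1^2$, so one square suffices.
\item Modulo $4$, sums of two squares only take the values $0, 1, 2$, so $-1 \equiv 3$ needs three squares ($1+1+1$).
\item Modulo $8$, sums of three squares never equal $7$, since they only give $\{0,1,2,3,4,5,6\}$, while four squares suffice ($1+1+1+4$).
\end{itemize}
Since $8 \mid 2^e$ for $e \geq 3$, the obstruction modulo $8$ persists for all such $e$. Item 3 also needs a positive statement here: modulo $2^e$ with $e \le 2$, three squares suffice. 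I read the statement of item 3 as concerning three squares $s_1, s_2, s_3$; the fourth variable is a typo.

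\textbf{Assembling the items.}
\begin{itemize}
\item \emph{Item 1:} $4 \nmid k$ and every odd prime factor of $k$ is $\equiv 1 \pmod 4$, equivalently $k$ has no prime factor of the form $4m+3$.
\item \emph{Item 2:} $4 \nmid k$ is equivalent to solvability with two squares, since odd primes never obstruct.
\item \emph{Item 3:} $8 \nmid k$ is equivalent to solvability with three squares.
\item \emph{Item 4:} this can alternatively be obtained directly from \cref{fct:lagrange-four-square-theorem}, by writing $k-1$ as a sum of four integer squares. That proof is also constructive via \cite{PT18lagrangeFourSquares}.
\end{itemize}
For the other items, efficient construction would use square roots of $-1$ modulo primes together with Hensel lifting, though factoring $k$ may be needed; I would remark on this rather than prove it.

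The main obstacle is the Hensel-lifting step for the two-squares case, where one has to ensure a solution with a unit coordinate exists. The argument above handles this.
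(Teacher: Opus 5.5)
Your proposal is correct and follows essentially the same route as the paper. Both reduce to prime powers by the Chinese remainder theorem, use Euler's criterion and the counting argument with Hensel lifting at odd primes, use the residues of squares modulo $4$ and $8$ for powers of $2$, and apply Lagrange's four-square theorem to $k-1$ for item 4. Make the Lagrange argument your actual proof of item 4 rather than an alternative, since your mod-$8$ computation alone does not cover $2^e$ with $e \ge 4$: Hensel lifting at $p = 2$ fails directly because the derivative $2a$ is never a unit there.
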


\begin{lemma}\label{lmm:sum-of-squares-decomposition-is-efficient}
    For each of the 4 cases in \cref{lmm:sum-of-squares-decomposition-conditions}, the condition on $k$ can be checked in $O(\polylog k)$ time, and if the condition on $k$ holds, then a set of corresponding $s_i$ can be found in $O(\polylog k)$ time.
\end{lemma}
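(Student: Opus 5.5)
The plan is to handle each of the four cases of \cref{lmm:sum-of-squares-decomposition-conditions} by reducing to prime powers via the Chinese Remainder Theorem. Checking each condition requires the prime factorisation of $k$, which in general we cannot compute classically in $\polylog k$ time. I would therefore first isolate the parts that need no factoring, and then argue that in this setting a factorisation is available.

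\emph{Checking the conditions.} Cases 2, 3 and 4 only need $k \bmod 4$ or $k \bmod 8$, which costs $O(\log k)$. Case 1 asks whether $k$ has a prime factor $\equiv 3 \pmod 4$. I would note that $E=\exp(G)$ divides $\abs{G}$, and that the algorithm already computes with the group structure of $G$ (a decomposition into cyclic factors is implicit in implementing $\QFT_G$). So I would take the factorisation of $E$ as given, or obtain it with Shor's algorithm in quantum $\polylog k$ time. With the factorisation in hand, checking Case 1 is immediate.

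\emph{Finding the $s_i$.} For Case 4 I would avoid factoring altogether. Apply the randomised four-square algorithm of \cite{RS86integerSquareDecompositions,PT18lagrangeFourSquares} to $n = k-1$, giving $1 + a^2+b^2+c^2+d^2 = k \equiv 0 \pmod k$ in expected $\polylog k$ time. The same method handles Case 3: it suffices to write $k-1$ or $2k-1$ (whichever is not of the form $4^a(8b+7)$) as a sum of three squares. Since $k \not\equiv 0 \pmod 8$, one of these, or some $jk-1$ with small $j$, is a sum of three squares by Legendre's three-square theorem. The three-square step can be done by the Rabin--Shallit randomised method: pick random $x$, test whether $n - x^2$ is a sum of two squares (prime, or twice a prime), and then use Cornacchia.

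Case 2 can also be done without factoring, by searching for a prime $p = jk - 1 - s_1^2$ with $p \equiv 1 \pmod 4$ and writing $p$ as a sum of two squares via Cornacchia. By density of primes in progressions (heuristically, or following the Rabin--Shallit analysis), this succeeds after $\polylog k$ trials.

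Case 1 is the step I expect to be hardest. Here $-1$ must be a square mod $k$, and there is no route to this that avoids factoring in general. I would use the factorisation to find, for each odd prime $p \mid k$, a square root of $-1$ modulo $p$ by picking a random non-residue $a$ and computing $a^{(p-1)/4}$. I would then Hensel-lift this to $p^e$, handle the factor $2$ (with $2 \Vert k$ at most), and combine the pieces by CRT. All of this runs in $\polylog k$ time given the factorisation.
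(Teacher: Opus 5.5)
Cases 1 and 4 match the paper. For Case 1, computing $a^{(p-1)/4}$ for a random non-residue $a$ is an equally good substitute for Tonelli--Shanks. Case 3 is a legitimate alternative route. When $k \not\equiv 0 \pmod 8$, the number $k-1$ (for $k$ even) or $2k-1$ (for $k$ odd) is odd and $\not\equiv 7 \pmod 8$. A three-square decomposition of it therefore gives $1+s_1^2+s_2^2+s_3^2 \equiv 0 \pmod k$ without factoring, which is cleaner than the paper's prime-by-prime construction. It does, however, need a three-square algorithm with a proven polylogarithmic running time; the paper relies on one for its zero-sum lemma. The random-$x$ sketch you describe has only a heuristic analysis, since it depends on how often $n - x^2$ is prime.

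The genuine gap is in Case 2, for two reasons.
\begin{itemize}
\item \emph{It solves the wrong congruence.} You look for $s_1, j$ such that $jk-1-s_1^2$ is a prime $\equiv 1 \pmod 4$, then write that prime as $a^2+b^2$. This gives $1+s_1^2+a^2+b^2 = jk$, which uses three squares and is really Case 3. Case 2 asks for the two-square congruence $1+s_1^2+s_2^2 \equiv 0 \pmod k$.
\item \emph{The natural correction has no proven running time.} You could instead search for $j$ with $jk-1$ itself a prime $\equiv 1 \pmod 4$ and apply Cornacchia. But this needs such primes to appear among $\polylog k$ candidates in the progression $-1 \bmod k$. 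That is not known unconditionally: the available bounds on the least prime in a progression modulo $k$ are polynomial in $k$, not polylogarithmic. So it does not establish the $O(\polylog k)$ claim.
\end{itemize}

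You already need the factorisation of $k$ for Case 1, so the fix is to work prime by prime as the paper does:
\begin{itemize}
\item For each odd prime $p \mid k$, draw random $b \in \ZZ_p$ and use Euler's criterion to test whether $-1-b^2$ is a quadratic residue. By the quadratic character sum (\cref{fct:jacobsthal-sum}), at least $(p-1)/2$ values of $b$ succeed, so $O(1)$ expected trials suffice.
\item Tonelli--Shanks then gives $s_1$ with $s_1^2 \equiv -1-b^2 \pmod p$.
\item The pair $(s_1,b)$ is not $\equiv (0,0) \pmod p$, so the gradient of $x_1^2+x_2^2+1$ is nonzero and Hensel's lemma lifts the solution to $p^{e}$.
\item Use $(1,0)$ modulo $2$ when $k \equiv 2 \pmod 4$, and combine all the local solutions by the Chinese remainder theorem.
\end{itemize}
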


We defer the proofs of \cref{lmm:sum-of-squares-decomposition-conditions,lmm:sum-of-squares-decomposition-is-efficient} to \cref{sec:appendix}.

\begin{corollary}\label{crl:linearisation-no-zero-row-sum-condition-min-block-length}
    Fourier difference sampling with a linearisation of the  projective representation $R$ can be performed with:
    \begin{itemize}
        \item at most 2 copies of $\rho$ if $E \neq 0 \pmod{4}$ and $E$ contains no prime factor of the form $4 m + 3$.
        \item at most 3 copies of $\rho$ if $E \neq 0 \pmod{4}$.
        \item at most 4 copies of $\rho$ if $E \neq 0 \pmod{8}$.
        \item at most 5 copies of $\rho$ for all $E$.
    \end{itemize}
\end{corollary}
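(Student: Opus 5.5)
The plan is to build, in each of the four cases, an explicit single-row generator matrix $M = (1 \; s_1 \; \cdots \; s_{t-1}) \in \ZZ^{1 \times t}$ with $t \in \{2,3,4,5\}$, and then apply the machinery of \cref{sec:linearisation-of-projective-representations}. Since $s = 1$, \myrefeq{Condition}{eq:valid-bell-sampling-matrix} reduces to the single congruence $MM^T = 1 + s_1^2 + \cdots + s_{t-1}^2 \equiv 0 \pmod{E}$. We take $k = E$ in \cref{lmm:sum-of-squares-decomposition-conditions}, and assume $E \ge 2$; if $E = 1$ the group is trivial and there is nothing to do. The four cases of that lemma then supply integers $s_i$ with this property and with $t = 2, 3, 4, 5$ respectively, under exactly the hypotheses listed in the corollary. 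By \cref{lmm:sum-of-squares-decomposition-is-efficient}, these $s_i$ can be found in time $\polylog E$, so the construction is efficient.

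Next we check the remaining hypotheses. The matrix $M$ is already in standard form $(I_1 \mid A)$, because its first entry is $1$, so \myrefeq{Condition}{eq:useful-bell-sampling-matrix-condition} holds trivially. By \cref{lmm:characterisation-of-bell-sampling-representations}, $R_M$ is then a linear representation of $G$ acting on $\mathcal{H}^{\tp t}$. By \cref{lmm:characterisation-of-useful-bell-sampling-representations}, $R_M$ also satisfies \cref{eq:multi-subgroup-non-invariance}, and it satisfies \cref{eq:multi-subgroup-invariance} by construction. It follows that $R_M$ is a linearisation that can be used in \cref{alg:fourier-difference-sampling}.

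Each Fourier sample taken with $R_M$ is obtained by running \cref{alg:state-fourier-sample} on $\rho^{\tp t}$. It therefore uses exactly $t$ copies of $\rho$ held coherently at one time, which gives the stated bounds of $2, 3, 4, 5$ copies. The difference step of \cref{alg:fourier-difference-sampling} only combines classical outcomes of separate runs, so it never needs more than $t$ copies coherently. The cases are nested: any $E$ satisfying the first condition also satisfies the later ones, so in each case we can use the smallest $t$ available.

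All the number-theoretic difficulty sits in \cref{lmm:sum-of-squares-decomposition-conditions}, which is deferred to the appendix. Given that lemma, the only point that needs care is that $M$ is taken modulo $E$, while the entries of $M$ are integers, and the argument relies only on congruences modulo $E$. The corollary itself is therefore essentially bookkeeping.
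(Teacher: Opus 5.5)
Your proposal is correct and matches the paper's argument. The paper likewise takes a one-row standard-form generator matrix $(1 \; s_1 \; \cdots \; s_{t-1})$, reduces self-orthogonality to $1 + s_1^2 + \cdots + s_{t-1}^2 \equiv 0 \pmod{E}$, and reads off the block lengths $2,3,4,5$ from \cref{lmm:sum-of-squares-decomposition-conditions} with $k = E$; your explicit checks of the standard-form condition and of the copies used per Fourier sample spell out what the paper leaves implicit.
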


Now we seek to optimise the sampling rate. We have the following very simple upper bound:
\begin{lemma}\label{lmm:upper-bound-on-sampling-rate}
    The sampling rate is at most $1 / 2$.
\end{lemma}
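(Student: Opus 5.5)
The plan is to show that any free, self-orthogonal code $C \subseteq \ZZ_E^t$ of dimension $s$ has $s \leq t/2$. By \cref{rmk:linearisation-matrix-in-standard-form}, we may take the generator matrix in standard form $M = (I_s \mid A)$ with $A \in \ZZ_E^{s \times (t-s)}$. Condition \eqref{eq:valid-bell-sampling-matrix} then becomes
\begin{equation*}
    M M^T = I_s + A A^T = 0 \pmod{E}, \qquad \text{i.e.} \qquad A A^T = -I_s \pmod{E}.
\end{equation*}

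The key step is a rank argument. The matrix $-I_s$ is invertible over $\ZZ_E$, and it factors as $A A^T$ with $A$ of size $s \times (t-s)$. I want to conclude $s \leq t-s$. Over a ring this needs care, so I would reduce modulo a prime $p \mid E$ (which exists since $E \geq 2$). This gives $\bar A \bar A^T = -I_s$ over the field $\FF_p$. Hence
\begin{equation*}
    s = \operatorname{rank}(-I_s) \leq \operatorname{rank}(\bar A) \leq t - s,
\end{equation*}
so $s/t \leq 1/2$.

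An alternative, basis-free route uses the dual code directly. Since $C \subseteq C^\perp$ and \cref{def:dual-code} gives $\dim C^\perp = t - s$, we would get $s \leq t - s$. This relies on comparing ranks of free submodules, which again is cleanest after reducing modulo $p$. So I would present the matrix argument as the main proof.

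The only delicate point is justifying rank inequalities over $\ZZ_E$ when $E$ is composite. Reducing modulo a prime divisor handles this, because $I_s$ stays the identity (and so full rank $s$) after reduction. I do not expect any real obstacle beyond this.
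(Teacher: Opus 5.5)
Your proof is correct, but it takes a different route from the paper. The paper's proof is a one-line counting argument. Since $C \subseteq C^\perp$ and a free $\ZZ_E$-module of rank $s$ has exactly $E^s$ elements, $E^s = \abs{C} \leq \abs{C^\perp} = E^{t-s}$, so $s \leq t-s$. Comparing cardinalities rather than ranks removes the ``over a ring this needs care'' concern you raise for the basis-free route, so no reduction modulo a prime is needed. The only input is $\abs{C^\perp} = E^{t-s}$, which the paper takes from its definition of the dual code; it holds because $\abs{C}\cdot\abs{C^\perp} = E^t$ for every linear code over $\ZZ_E$. Your version instead uses Condition $(M*)$ to pass to $(I_s \mid A)$, rewrites self-orthogonality as $AA^T \equiv -I_s \pmod{E}$, and then needs only rank over $\FF_p$. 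This is self-contained, since it uses no duality theory over $\ZZ_E$, and it shows concretely why the invertible block $-I_s$ forces $t-s \geq s$. However, it relies on the standard-form hypothesis, which the counting argument does not need. Over composite $E$ a free self-orthogonal code need not admit a standard form; for example, $\langle (3,3,4,4,4)\rangle \subseteq \ZZ_6^5$ is free of rank $1$ and self-orthogonal, yet no multiple of the row has a unit entry. In the paper's setting, $(M*)$ is assumed and row operations and column swaps preserve $MM^T \equiv 0 \pmod{E}$, so both proofs are complete there.
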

\begin{proof}
    Since $C$ is self-orthogonal, we have $C \subseteq C^\perp$, and so $d^s = \abs{C} \leq \abs{C^\perp} = d^{t - s}$, i.e. $s \leq t - s$.
\end{proof}

For any $E$, given a decomposition of $E - 1$ into a sum of four squares $s_1^2 + s_2^2 + s_3^2 + s_4^2 = E - 1$ (which exists by Lagrange's four-square theorem and may be found efficiently by \cite{PT18lagrangeFourSquares}), \cite{ADIS25quditBellSampling} constructed a matrix $\vd{R}$ such that $\vd{R}^T \vd{R} = \vd{R} \vd{R}^T = (E - 1) I$; explicitly,
\begin{equation*}
    \vd{R} = \begin{pmatrix}
        s_1 & s_2 & s_3 & s_4 \\
        s_2 & -s_1 & s_4 & -s_3 \\
        s_3 & -s_4 & -s_1 & s_2 \\
        s_4 & s_3 & -s_2 & -s_1
    \end{pmatrix}
\end{equation*}
Then $M_8 = (I \mid \vd{R})$ satisfies the conditions in \cref{lmm:characterisation-of-bell-sampling-representations} and \cref{lmm:characterisation-of-useful-bell-sampling-representations}, so the Fourier difference sampling distribution $q_{\rho, M_8}$ has sampling rate $1 / 2$, which is optimal, while the block length of the code is only $8$.

Thus, to achieve optimal sample rate, we never need consider codes of block length more than $8$. However, we needed codes of block length only $5$ to achieve the optimal block length. So there is a trade-off between optimal block length and optimal sampling rate. The following lemma establishes the best possible sampling rates for each block length.

\begin{lemma}\label{lmm:rate-block-length-tradeoff}\leavevmode
    \begin{itemize}
        \item If $E \neq 0 \pmod{4}$ and $E$ contains no prime factor of the form $4 m + 3$, then the optimal sampling rate of $1 / 2$ can be achieved using a code with optimal block length of $2$. (Recall $2$ is the optimal block length regardless of sampling rate for this case.)
        \item If $E \neq 0 \pmod{4}$, then the optimal sampling rate of $1 / 2$ can be achieved using a code with block length of $4$, which is optimal. (Recall $3$ is the optimal block length regardless of sampling rate for this case.)
        \item If $E = 0 \pmod{4}$, the optimal sampling rate of $1 / 2$ is achieved by the code above with block length $8$, which is optimal. There is a code with block length $7$ and sampling rate $3 / 7$ which is optimal for this block length, a code with block length $6$ and sampling rate $1 / 3$ which is optimal for this block length. For block length $5$, the best sampling rate is $1 / 5$.
    \end{itemize}
    Moreover, all these codes can be found in time $O(\polylog E)$.
\end{lemma}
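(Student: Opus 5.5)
The plan is to treat each bullet as an explicit construction plus a matching upper bound. By \cref{rmk:linearisation-matrix-in-standard-form} we may take $M=(I_s\mid A)$ with $A\in\ZZ^{s\times(t-s)}$. Then \myrefeq{Condition}{eq:valid-bell-sampling-matrix} reads $AA^T\equiv -I_s \pmod E$: each row $u$ of $A$ satisfies $1+\abs{u}^2\equiv 0$, and distinct rows of $A$ are orthogonal mod $E$. \cref{lmm:upper-bound-on-sampling-rate} gives $s\le t/2$ throughout. All constructions use only the scalars provided by \cref{lmm:sum-of-squares-decomposition-conditions}, or the four-square decomposition of $E-1$. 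These are found in $O(\polylog E)$ time by \cref{lmm:sum-of-squares-decomposition-is-efficient} and \cite{PT18lagrangeFourSquares}, which handles the efficiency claim.

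\emph{Constructions.} In the first case, take $s_1$ with $1+s_1^2\equiv 0$ and use $M=(1\ s_1)$. This gives $t=2$ and rate $1/2$, and block length $2$ is optimal by \cref{crl:linearisation-no-zero-row-sum-condition-min-block-length}. In the second case, take $a,b$ with $1+a^2+b^2\equiv 0$ and use
\[
M=\begin{pmatrix}1&0&a&b\\0&1&b&-a\end{pmatrix}.
\]
Each row has norm $1+a^2+b^2\equiv 0$, and the inner product of the two rows is $ab-ab=0$. For $E\equiv 0\pmod 4$:
\begin{itemize}
\item For $t=8$, use $M_8=(I\mid\vd{R})$; its rows are orthogonal because $\vd{R}\vd{R}^T=(E-1)I$.
\item For $t=7$, take the first three rows of $M_8$ and delete the (now zero) fourth identity column.
\item For $t=6$, take two rows of $M_8$ and delete the two zero identity columns.
\end{itemize}
Each of these stays in standard form and self-orthogonal. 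For $t=5$, use the single row $(1,s_1,\dots,s_4)$ from the four-square decomposition of $E-1$.

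\emph{Optimality.} In the second case, block length $3$ forces $s=1$, so the rate is at most $1/3$. Block length $2$ with rate $1/2$ requires $1+s_1^2\equiv0$, which is exactly the first case. Hence $4$ is optimal for rate $1/2$ there. For $4\mid E$ we reduce everything mod $4$, and the first observation concerns a single row. Since squares are $0$ or $1$ mod $4$, $1+\abs{u}^2\equiv 0 \pmod 4$ forces $u$ to have at least three odd entries, so $t-s\ge 3$. For $t=2s$ (rate $1/2$), $A$ is square and $\det(A)^2\equiv\det(-I)=(-1)^s\pmod 4$, so $s$ must be even. The case $s=2$ has $t-s=2<3$, so $s\ge4$ and $t\ge 8$. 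For $t=7$ the constraints $s\le 3.5$ and $t-s\ge3$ give $s\le 3$, so rate $3/7$ is optimal. For $t=6$ we get $s\le3$, and $s=3$ is excluded by the odd-$s$ determinant obstruction, so $s\le 2$ and rate $1/3$ is optimal.

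The main obstacle is block length $5$, where $s=2$ would need two rows $u,v\in\ZZ^3$ with $1+\abs{u}^2\equiv 1+\abs{v}^2\equiv 0$ and $u\cdot v\equiv0 \pmod 4$. The row-norm condition forces every entry of $u$ and of $v$ to be odd. Then $u\cdot v$ is a sum of three odd products, hence odd, which is a contradiction. So $s=1$ and the rate is $1/5$ at $t=5$.
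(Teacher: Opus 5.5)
Your proof is correct, and your constructions and your handling of the first two bullets match the paper's. The paper uses the same row $(1\ s_1)$, the same $2\times 4$ matrix from $1+a^2+b^2\equiv 0$, and the same truncations of $M_8$. It also gets the block-length-$3$ rate bound from \cref{lmm:upper-bound-on-sampling-rate}, as you do.

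You take a different route for the optimality claims when $4\mid E$:
\begin{itemize}
\item \emph{Paper, block length $6$.} It splits into two cases. For $E\equiv 0\pmod 8$, a $3\times 3$ block $\vd{R}$ with $\vd{r}_i\cdot\vd{r}_j\equiv-\delta_{ij}$ would contradict the three-square part of \cref{lmm:sum-of-squares-decomposition-conditions}. For $E\equiv 4\pmod 8$, every entry of $\vd{R}$ is odd, so off-diagonal inner products cannot vanish mod $4$.
\item \emph{Paper, block lengths $5$ and $8$.} Block length $5$ is dismissed as ``the same argument''. Minimality of $t=8$ for rate $1/2$ is asserted but only implicitly covered: one must combine the $t=6$ bound with parts~1--2 of that lemma to exclude $t=2$ and $t=4$.
\item \emph{Your argument.} You work uniformly mod $4$. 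Counting odd entries gives $t-s\ge 3$ for every row of $A$. The identity $\det(A)^2\equiv(-1)^s\pmod 4$ then shows that any rate-$1/2$ code must have even dimension.
\end{itemize}

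Your approach has three advantages. It avoids the mod-$8$ case split. It makes the lower bound $t\ge 8$ explicit and short: $s\ge 3$ together with $s$ even forces $s\ge 4$. The determinant observation is also more structural, since it shows no odd-dimensional self-dual code in standard form exists when $4\mid E$. The paper's route has the advantage of reusing its existing number-theoretic lemma.

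Your block-length-$5$ argument (all entries odd, hence odd inner product) is exactly the argument the paper uses for $E\equiv 4\pmod 8$, here written out in full.
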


We defer the proof of \cref{lmm:rate-block-length-tradeoff} to \cref{sec:appendix}.

\subsubsection{Optimising with zero-row-sum condition}

Now suppose $M$ in addition satisfies \myrefeq{Condition}{eq:zero-sum-property}, so that $C$ is a zero-sum code. Clearly, adding an extra condition to $C$ means the optimal block length and rate cannot improve. Given any matrix $M$ satisfying the conditions in \cref{lmm:characterisation-of-bell-sampling-representations} and \cref{lmm:characterisation-of-useful-bell-sampling-representations}, we can always consider the matrix $M' = (M \mid -M)$ which satisfies the condition in \cref{lmm:condition-for-zero-sum-property} as well as \cref{lmm:characterisation-of-bell-sampling-representations} and \cref{lmm:characterisation-of-useful-bell-sampling-representations}. However, this doubles the block length and halves the rate. Can we do better than this?

First, we consider how small the block length $t$ can be. We want to find $s_1, \dots, s_{t - 1}$ such that $1 + s_1^2 + \cdots + s_{t-1}^2 = 0 \pmod{E}$ and $1 + s_1 + \cdots + s_{t-1} = 0 \pmod{E}$ when $E$ is odd, and $1 + s_1^2 + \cdots + s_{t-1}^2 = 0 \pmod{2E}$ and $1 + s_1 + \cdots + s_{t-1} = 0 \pmod{E}$ when $E$ is even. The following lemma shows that $t = 4$ is sufficient for all $E$.

\begin{lemma}\label{lmm:sum-of-squares-decomposition-with-zero-sum-condition}
    Let $k \in \NN$ and $k \geq 2$.
    \begin{enumerate}
        \item $k = 2$ if and only if there exists $s_1 \in \ZZ$ such that $1 + s_1^2 = 0 \pmod{k}$ and $1 + s_1 = 0 \pmod{k}$.
        \item $k \neq 0 \pmod{4}$ and every prime factor of $k$ is either $3$ or of the form $3m + 1$ if and only there exist $s_1, s_2 \in \ZZ$ such that $1 + s_1^2 + s_2^2 = 0 \pmod{k}$ and $1 + s_1 + s_2 = 0 \pmod{k}$.
        \item $k \neq 0 \pmod{8}$ if and only if there exist $s_1, s_2, s_3 \in \ZZ$ such that $1 + s_1^2 + s_2^2 + s_3^2 = 0 \pmod{k}$ and $1 + s_1 + s_2 + s_3 = 0 \pmod{k}$.
        \item If $k = 0 \pmod{8}$, then there exist $s_1, s_2, s_3, s_4 \in \ZZ$ such that $1 + s_1^2 + s_2^2 + s_3^2 + s_4^2 = 0 \pmod{k}$ and $1 + s_1 + s_2 + s_3 + s_4 = 0 \pmod{k}$.
    \end{enumerate}
    Moreover, each decomposition can be found in expected time $O(\polylog(k))$.
\end{lemma}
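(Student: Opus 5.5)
Throughout, I would eliminate the zero-sum constraint by substitution. In each part, set $s_{t-1} = -(1 + s_1 + \cdots + s_{t-2})$. The problem then becomes representing $0$ modulo $k$ by a single inhomogeneous quadratic polynomial in $t-2$ free variables. Each part then follows one route: the Chinese remainder theorem reduces the question to prime powers $p^a \,\|\, k$, and Hensel lifting from $p$ handles odd $p$ whenever the polynomial has a nonsingular root modulo $p$. The prime $2$ is treated separately by a direct parity and $2$-adic analysis.

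\emph{Part 1.} This is immediate. The zero-sum condition forces $s_1 \equiv -1$, and then $1 + s_1^2 \equiv 2 \pmod k$, which vanishes if and only if $k = 2$.

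\emph{Part 2.} Substituting $s_2 = -1 - s_1$ gives the condition
\begin{equation*}
    2(s_1^2 + s_1 + 1) \equiv 0 \pmod k.
\end{equation*}
Since $s_1^2 + s_1 + 1$ is always odd, this forces $4 \nmid k$. Writing $k = 2^b m$ with $m$ odd and $b \le 1$, the condition becomes $s_1^2 + s_1 + 1 \equiv 0 \pmod m$, equivalently $(2 s_1 + 1)^2 \equiv -3 \pmod m$. For primes $p \neq 3$, quadratic reciprocity shows $-3$ is a square modulo $p$ exactly when $p \equiv 1 \pmod 3$. Hensel lifting then extends the root to any power of such a $p$.

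The prime $3$ needs separate care. The root $s_1 = 1$ works modulo $3$, but it is singular there. A direct check shows $s_1^2 + s_1 + 1 \not\equiv 0 \pmod 9$. So I expect the correct statement to allow $3$ only to the first power, and I would verify this exact condition and adjust the statement if needed.

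\emph{Part 3 ($t = 4$).} After substitution the condition is
\begin{equation*}
    2 Q(s_1, s_2) \equiv 0 \pmod k, \qquad Q(s_1, s_2) = 1 + s_1 + s_2 + s_1^2 + s_1 s_2 + s_2^2.
\end{equation*}
For odd $p$, the quadratic part of $Q$ is a nondegenerate binary form (its discriminant is $-3$). A Chevalley--Warning or character-sum count then gives a root modulo $p$ with nonvanishing gradient, and Hensel's lemma lifts it to $p^a$. The case $p = 3$ needs a direct check for a nonsingular root.

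For $2^a \,\|\, k$ the requirement is $Q \equiv 0 \pmod{2^{a-1}}$. I would enumerate residues of $Q$ modulo $8$ to prove two things. First, $Q$ can be made even, and can even attain $0 \bmod 2$ or $0 \bmod 4$ as needed, which covers $a \le 2$. Second, if $8 \mid k$ the requirement $Q \equiv 0 \pmod 4$ (and higher) must be shown to fail for all $s_1, s_2$. This $2$-adic case analysis, both sufficiency for $a \le 2$ and the impossibility for $a \ge 3$, is where I expect the main obstacle to lie.

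\emph{Part 4 ($t = 5$).} With three free variables the substituted quadratic form is isotropic enough that a $2$-adic solution modulo every power of $2$ should exist. I would find a nonsingular $2$-adic root by hand, for example by searching small values modulo $16$ and then applying Hensel's lemma in its $2$-adic form $f(x) \equiv 0 \bmod 2^{2v+1}$ with $v = v_2(f'(x))$. Odd primes are handled as in Part 3.

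\emph{Efficiency.} Factoring $k$ is unnecessary. For the odd part $m$ of $k$, I would pick the free variables uniformly at random except the last, and solve the resulting univariate quadratic modulo $m$ by a randomized square-root routine in the style of \cite{PT18lagrangeFourSquares}. The roots found for the odd part and for $2^a$ are then combined by CRT. Constant success probability per trial, which follows from the counting bounds above, gives expected time $\polylog(k)$.
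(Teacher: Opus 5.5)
\textbf{Parts 1--4 (existence).} For Parts 1--3 your existence arguments follow the paper's proof. You substitute away the zero-sum constraint, reduce to prime powers by CRT, use reciprocity or a character-sum count at odd primes with Hensel lifting, and check $2$ directly.

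Your suspicion about Part 2 is correct, and the paper's own proof confirms it. The proof finds no solution modulo $9$ but exhibits one for $k=2$. So the true condition is $4 \nmid k$, $9 \nmid k$, and every prime factor $p>3$ satisfies $p \equiv 1 \pmod 3$. As written, the statement fails in both directions.

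Three small repairs in Part 3:
\begin{itemize}
\item Drop Chevalley--Warning, since two variables in degree two is not enough for it, and keep the character sum.
\item Nonsingularity for $p>3$ holds because the only critical point of $Q$ is $(-1/3,-1/3)$, where $Q = 2/3 \not\equiv 0$. At $p=3$ the root $(1,1)$ has $\partial_{s_1} Q = 4 \not\equiv 0$.
\item $Q \equiv 2 \pmod 4$ whenever $Q$ is even. So for $a \le 2$ you only ever need $Q$ even, and "$0 \bmod 4$" is never attainable; this is exactly the $8 \mid k$ obstruction.
\end{itemize}

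For Part 4 your local route differs from the paper's and works. Write $s_1 = -(1+b+c+d)$ and $f(b,c,d) = 1+b^2+c^2+d^2+(1+b+c+d)^2$. Then $f/2 = b^2+c^2+d^2+bc+bd+cd+b+c+d+1$ has the root $(1,1,0)$ modulo $2$ with $\partial_d(f/2) = 3$ odd. So ordinary Hensel lifts it to every power of $2$, and odd prime powers come from Part 3 with $s_4 = 0$. The paper instead uses the identity $f = \tfrac14\bigl(X^2+Y^2+Z^2+5\bigr)$ and writes $8k-5$ as a sum of three odd squares via Legendre's theorem. That route is uniform in $k$ and needs no factorisation of $k$.

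\textbf{The gap: efficiency.} The genuine gap is your efficiency paragraph. Solving a quadratic modulo an odd composite $m$ whose factorisation you do not know is not an available primitive. Any routine that does it extracts square roots modulo $m$, and by Rabin's argument that factors $m$. The method of \cite{PT18lagrangeFourSquares} never does this: it randomises until the leftover integer is prime, and only takes square roots modulo that prime. In Part 2 there is nothing left to randomise, so your method is literally ``compute $\sqrt{-3} \bmod m$''.

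The claimed constant success probability also does not follow from your counting bounds. Those bounds are per prime. For free variables drawn uniformly modulo $m$, the per-prime events are independent by CRT, so the success probability is about $2^{-\omega(m)}$. That is not $1/\polylog(k)$ when $m$ has $\Theta(\log m/\log\log m)$ prime factors.

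The repair is what the paper does:
\begin{itemize}
\item Factor $k$ first, by Shor's algorithm as in \cref{lmm:sum-of-squares-decomposition-is-efficient}.
\item Run the randomised search prime by prime, where the character sum gives constant success probability, then lift with Hensel and glue with CRT.
\item For Part 4, use the three-square identity, which avoids factoring altogether.
\end{itemize}
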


We defer the proof of \cref{lmm:sum-of-squares-decomposition-with-zero-sum-condition} to \cref{sec:appendix}.

\begin{corollary}\label{crl:linearisation-zero-row-sum-condition-min-block-length}
    Fourier sampling with a linearisation of the projective representation $R$ can be performed with:
    \begin{itemize}
        \item at most 3 copies of $\rho$ if every prime factor of $E$ is either $3$ or of the form $3m + 1$.
        \item at most 4 copies of $\rho$ if $E \neq 0 \pmod{4}$.
        \item at most 5 copies of $\rho$ for all $E$.
    \end{itemize}
\end{corollary}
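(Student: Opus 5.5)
The plan is to deduce \cref{crl:linearisation-zero-row-sum-condition-min-block-length} directly from \cref{lmm:sum-of-squares-decomposition-with-zero-sum-condition}, applied with $k = E$. I will take the dimension $s = 1$ and use a single-row generator matrix $M = (1 \; s_1 \; \cdots \; s_{t-1})$. Such an $M$ is in standard form $(I_1 \mid A)$, so \myrefeq{Condition}{eq:useful-bell-sampling-matrix-condition} holds automatically.

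For $t = 1$ no case is needed, since the claim is only an upper bound. The first step is to check that the conclusions of \cref{lmm:sum-of-squares-decomposition-with-zero-sum-condition} give all the conditions required of $M$.

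\textbf{Odd $E$.} \myrefeq{Condition}{eq:valid-bell-sampling-matrix} is the single equation $1+\sum s_i^2 \equiv 0 \pmod E$. \myrefeq{Condition}{eq:zero-sum-property} is $1+\sum s_i\equiv 0 \pmod E$. Both are supplied by the lemma.

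\textbf{Even $E$.} Here \myrefeq{Condition}{eq:zero-sum-property} additionally asks that $1+\sum s_i^2\equiv 0 \pmod{2E}$, which the lemma only gives modulo $E$. I would handle this by recalling that the text preceding the lemma explicitly reformulates the even case with the modulus $2E$. So I would either
\begin{itemize}
\item invoke the lemma with $k = 2E$ for the square condition, noting that $2E\not\equiv 0 \pmod 8$ whenever $E\not\equiv 0\pmod 4$; or
\item observe that a decomposition modulo $2E$ with sum $\equiv 0 \pmod{2E}$ implies the needed conditions modulo $E$.
\end{itemize}
This parity bookkeeping is the step I expect to be the only real obstacle.

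With $M$ valid, \cref{lmm:condition-for-zero-sum-property} shows that $R_M$ satisfies \cref{eq:multi-subgroup-exact-invariance}. Fourier sampling with $R_M$ on $\rho^{\tp t}$ is then a \prb{BoseSL} Fourier sample, as in the last paragraph of the proof of \cref{thm:algorithm-for-projective-state-hsp}, and it uses $t$ copies. The three bullets then correspond to the cases of the lemma.

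\textbf{First bullet.} If every prime factor of $E$ is $3$ or $\equiv 1 \pmod 3$, then $E$ is odd, so $E\not\equiv 0 \pmod 4$. Case 2 of the lemma then gives $t=3$. (The case $E=2$ from case 1 is covered trivially by the other bullets.)

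\textbf{Second bullet.} If $E\not\equiv 0\pmod 4$, then $E \not\equiv 0 \pmod 8$. Case 3 gives $t = 4$, and by the parity remark it also holds modulo $2E$ when $E\equiv 2\pmod 4$.

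\textbf{Third bullet.} For arbitrary $E$, case 3 or case 4 gives $t\le 5$.

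Finally, \cref{lmm:sum-of-squares-decomposition-with-zero-sum-condition} states that each decomposition can be found in expected $\polylog E$ time, so the linearisation itself can be constructed efficiently.
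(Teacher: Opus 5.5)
Your route is the paper's: a single-row generator $(1\; s_1\; \cdots\; s_{t-1})$ built from \cref{lmm:sum-of-squares-decomposition-with-zero-sum-condition}, with $k=E$ for odd $E$ and $k=2E$ for even $E$. This is exactly how the paper handles the mod-$2E$ isotropy requirement. The paper also notes that entries can be reduced into $[0,E-1]$ without disturbing the mod-$2E$ square condition, since $(s-E)^2\equiv s^2 \pmod{2E}$ for even $E$; with your integer-matrix viewpoint this makes no difference. Your second and third bullets go through as you describe.

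The first bullet, however, fails when $9\mid E$, and your step ``case 2 of the lemma then gives $t=3$'' inherits this. Substituting $s_1=-(1+s_2)$ reduces the two conditions to $(2s_2+1)^2\equiv -3 \pmod{k}$. Modulo $9$ this has no solution, because the squares modulo $9$ are $0,1,4,7$ while $-3\equiv 6$. The paper's own proof of case 2 says as much (``no solutions modulo $3^a$ for $a\ge 2$''), which contradicts the lemma's statement. For $E=9$ there is therefore no generator of block length at most $3$ satisfying both \myrefeq{Condition}{eq:valid-bell-sampling-matrix} and \myrefeq{Condition}{eq:zero-sum-property}. The first bullet must be restricted to $E=3^{b}\prod_i p_i^{a_i}$ with $b\le 1$ and every $p_i\equiv 1 \pmod 3$, i.e.\ it needs the additional hypothesis $9\nmid E$. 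This defect is in the statement and is shared by the paper's proof; it is not an error in your deduction from the lemma as written.
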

\begin{proof}
    Follows from \cref{lmm:sum-of-squares-decomposition-with-zero-sum-condition}. Note that if $E$ is even, we must decompose $2E$, not $E$ (so we take $k = 2E$ in \cref{lmm:sum-of-squares-decomposition-with-zero-sum-condition}). Importantly, if $E$ is even, for any solution $s_1, ..., s_{t-1}$ satisfying $s_1 + \cdots + s_{t-1} = s_1^2 + \cdots + s_{t-1}^2 = -1 \pmod{2E}$, we can assume WLOG that $s_1, ..., s_{t-1}$ are in $[0, E - 1]$: if some $s_i \in [E, 2E - 1]$, then replace $s_i$ with $s_i - E$: this means we still have $s_1 + ... + s_{t-1} = -1 \pmod{E}$ and $s_1^2 + ... + s_{t-1}^2 = -1 \pmod{2E}$, since $(s_i - E)^2 = s_i^2 \pmod{2E}$.
\end{proof}

Now consider how large the rate can be. Suppose there is such a valid $M$ with optimal rate $1 / 2$. Then $C = C^\perp$, so $C$ is a self-dual code. Since $C$ is zero-sum, we have $(1, \dots, 1) \in C^\perp = C$. Hence, $(1, \dots, 1)$ is orthogonal to itself, so $2s = t = 0 \pmod{E}$, so $E \mid t$. Hence, the block length $t$ must be at least $E$. So we see that requiring optimal rate here means that the minimum sized code must have block length at least $E$, whereas before, ignoring the row-sum condition allowed the block-length to be constant (at most $8$).

To conclude this section, we consider codes $C$ which do achieve the optimal sampling rate of $1 / 2$ for small values of $E$.

\begin{example}
    Suppose the group exponent is $E = 2$. Let $M_\text{Ham}$ be the generator matrix of the extended binary $[8, 4, 4]$ Hamming code, i.e.
    \begin{equation*}
        M_\text{Ham} = \begin{pmatrix}
            1 & 0 & 0 & 0 & 0 & 1 & 1 & 1 \\
            0 & 1 & 0 & 0 & 1 & 0 & 1 & 1 \\
            0 & 0 & 1 & 0 & 1 & 1 & 0 & 1 \\
            0 & 0 & 0 & 1 & 1 & 1 & 1 & 0
        \end{pmatrix} \in \ZZ_2^{4 \times 8}.
    \end{equation*}
    $M_\text{Ham}$ is in standard form and its rows are mutually orthogonal modulo $2$, have entries summing to zero modulo $2$, and are isotropic modulo $4$. Thus, $M_\text{Ham}$ satisfies \myrefeq{Condition}{eq:valid-bell-sampling-matrix}, \myrefeq{Condition}{eq:useful-bell-sampling-matrix-condition} and \myrefeq{Condition}{eq:zero-sum-property}. Hence, the Fourier sampling distribution induced by $R_{M_\text{Ham}}$ has optimal sampling rate $1 / 2$. Furthermore, $t = 8$ is the smallest block length for which this is possible: WLOG, $M$ must be in standard form, and the isotropicity condition in \myrefeq{Condition}{eq:zero-sum-property} means that each row of $M$ must have weight at least $4$, which already implies $t \geq 7$.
\end{example}

\begin{example}
    Suppose the group exponent is $E = 3$. Let $M_\text{Golay}$ be the generator matrix of the ternary $[12, 6, 6]$ extended Golay code, i.e.
    \begin{equation*}
        M_\text{Golay} = \left( 
            \begin{array}{cccccc|cccccc}
                1 & 0 & 0 & 0 & 0 & 0 & 0 & 1 & 1 & 2 & 2 & 2 \\
                0 & 1 & 0 & 0 & 0 & 0 & 1 & 0 & 2 & 1 & 2 & 2 \\
                0 & 0 & 1 & 0 & 0 & 0 & 1 & 2 & 0 & 2 & 1 & 2 \\
                0 & 0 & 0 & 1 & 0 & 0 & 2 & 1 & 2 & 0 & 1 & 2 \\
                0 & 0 & 0 & 0 & 1 & 0 & 2 & 2 & 1 & 1 & 0 & 2 \\
                0 & 0 & 0 & 0 & 0 & 1 & 1 & 1 & 1 & 1 & 1 & 0
            \end{array}
        \right) \in \ZZ_3^{6 \times 12}.
    \end{equation*}
    $M_\text{Golay}$ is in standard form and its rows are mutually orthogonal modulo $3$ and have entries summing to zero modulo $3$. Thus, $M_\text{Golay}$ satisfies \myrefeq{Condition}{eq:valid-bell-sampling-matrix}, \myrefeq{Condition}{eq:useful-bell-sampling-matrix-condition} and \myrefeq{Condition}{eq:zero-sum-property}. Hence, the Fourier sampling distribution induced by $R_{M_\text{Golay}}$ has optimal sampling rate $1 / 2$.
\end{example}

\subsection{Applications}

\subsubsection{Stabiliser group learning}

As mentioned earlier, the representation $R_W: \ZZ_d^{2n} \to \mathcal{U}(\CC^{d^n})$ defined by $R_W (a, b) = \tilde{W}_{(a, b)} = X^a Z^b$ is a projective representation of $\ZZ_d^{2n}$ with cocycle $\omega((a, b), (a', b')) = \omega_d^{b \cdot a'}$.
$\tilde{W}_{(a, b)}$ is a phaseless version of the \emph{Weyl operator} $W_{(a, b)} \coloneq \tau^{a \cdot b} X^a Z^b$, where $\tau = e^{\pi i (d^2 + 1) / d}$. Since $\tilde{W}_{(a, b)}$ and $W_{(a, b)}$ differ only by a phase, the \emph{phaseless stabiliser group} $\Weyl(\rho) = \{x \in \ZZ_d^{2n}: \abs{\Tr(W_x \rho)} = 1\}$ is equal to $\{x \in \ZZ_d^{2n}: \abs{\Tr(\tilde{W}_x \rho)} = 1\}$. Hence, learning the stabiliser group of an arbitrary $n$-qudit state $\rho$ can be formulated as instance of \prb{ProjAnyonicSL}; in particular, the instance is $\prb{ProjAnyonicSL}(\ZZ_d^{2n}, R, \rho)$ with hidden symmetry subgroup $\Weyl(\rho)$. Quantum lgorithms for learning stabiliser groups of pure qubit and qudit states were already known \cite{Mon17,HEC25abelianStateHSP,ADIS25quditBellSampling}: the above formulation gives the first explicit quantum algorithm for stabiliser learning of arbitrary qudit states. It also improves upon the number of copies needed at a time from the algorithm of \cite{ADIS25quditBellSampling} (at most 5 for worst case $d$ compared to at most 8).

Note that from the phaseless stabiliser group $\Weyl(\rho)$, we can recover the stabiliser group $\Stab(\rho) = \{P \in \mathcal{P}_d^{\tp n}: P \rho = \rho\}$ by measuring $\rho$ in the eigenbasis of $W_x$ for each output generator $x$ from the algorithm.

\begin{theorem}\label{thm:stabiliser-group-learning-algorithm}
    Given copies of an $n$-qudit state $\rho \in \densmats{(\CC^d)^{\tp n}}$, there is a $\poly(n)$-time quantum algorithm which uses $O(n)$ copies of $\rho$ at a time and outputs with high probability a generating set for the stabiliser group $\Stab(\rho)$ of $\rho$.
\end{theorem}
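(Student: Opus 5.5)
We apply \cref{thm:algorithm-for-projective-state-hsp} to the instance $\prb{ProjAnyonicSL}(\ZZ_d^{2n}, R_W, \rho)$, whose hidden symmetry subgroup is $\Weyl(\rho)$. We then recover $\Stab(\rho)$ from $\Weyl(\rho)$ with a short phase-fixing step.

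First, we put $R_W$ in the form assumed in \cref{sec:linearisation-of-projective-representations}. The cocycle $\omega_d^{b\cdot a'}$ is already of the form $\omega^{B(x,y)}$, with $B((a,b),(a',b')) = b\cdot a'$ bilinear over $\ZZ_d$. The exponent of $G$ is $E=d$, so no gauge transformation is needed and $R_W(0)=I$. Moreover, $B(x^*,y^*)=1$ is attained, so \myrefeq{Condition}{eq:valid-bell-sampling-matrix} is exactly the right requirement.

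Next, we pick a concrete code. Using \cref{lmm:sum-of-squares-decomposition-with-zero-sum-condition}, with $k=2d$ if $d$ is even and $k=d$ if $d$ is odd, we find in $\polylog d$ time a zero-sum, isotropic row $M=(1\ s_1\ \cdots\ s_{t-1})$ with $t\le 5$. This gives a \prb{BoseSL} instance over $G$ and lets us use plain Fourier sampling. Alternatively, \cref{crl:linearisation-no-zero-row-sum-condition-min-block-length} with Fourier difference sampling also works. Either way, each sample uses $O(1)$ copies.

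We then check efficiency. $\QFT_{\ZZ_d^{2n}}$ is implementable in $\poly(n,\log d)$ time. $\ctrl{R_W}$ is a product of controlled $X^{a_i}$ and $Z^{b_i}$ on single qudits, so it is also efficient. The lemma on implementing $\ctrl{R_M}$ then gives $\ctrl{R_{M}}$ with $t$ calls.

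For the sample count, $|\mathcal{M}|\le |G| = d^{2n}$, so $\log|\mathcal{M}| = O(n\log d)$. With $\epsilon$ and $\delta$ treated as constants, \cref{thm:algorithm-for-projective-state-hsp} yields a generating set of $\Weyl(\rho)$ from $O(n)$ copies (for fixed $d$) in $\poly(n)$ time. Recovering the subgroup from the kernel intersection is Gaussian elimination over $\ZZ_d$, or Smith normal form when $d$ is composite, which is polynomial.

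Finally, we pass from $\Weyl(\rho)$ to $\Stab(\rho)$. Take the output generators $x_1,\dots,x_r$ with $r=O(n)$. For each $x_j$, $\rho$ is an eigenstate of $W_{x_j}$ with some eigenvalue $\lambda_j$, which is a $d$-th root of unity (or a $2d$-th root, given the phase $\tau$). Measuring one copy of $\rho$ in the eigenbasis of $W_{x_j}$ returns $\lambda_j$ deterministically. The operators $\lambda_j^{-1}W_{x_j}$ then generate $\Stab(\rho)$. This costs one extra copy per generator, so $O(n)$ copies in total.

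The main obstacle is this phase bookkeeping. We must confirm that the eigenvalue map $x\mapsto\lambda_x$ makes $\{\lambda_x^{-1}W_x\}$ a group. This follows from the Weyl multiplication rule together with the fact that commuting stabilisers satisfy $W_xW_y=\tau^{(\cdot)}W_{x+y}$ with a consistent phase. We must also confirm that the eigenbasis measurement for $W_x$ is efficient, which we do via a Clifford circuit mapping $W_x$ to a single-qudit $Z$-type operator.
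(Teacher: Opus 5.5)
Your proposal is correct and follows the paper's own argument: you cast stabiliser learning as $\prb{ProjAnyonicSL}(\ZZ_d^{2n}, R_W, \rho)$ with hidden subgroup $\Weyl(\rho)$, apply \cref{thm:algorithm-for-projective-state-hsp}, and recover $\Stab(\rho)$ by measuring each output generator's $W_{x_j}$-eigenvalue. Your explicit zero-sum row with $t \le 5$ is what the paper means by ``at most $5$ copies at a time'', and your phase bookkeeping can be simplified: $\{P : P\rho = \rho\}$ is automatically a group, and $cW_x \mapsto x$ restricted to it is an isomorphism onto $\Weyl(\rho)$, so lifting generators suffices (and since each eigenvalue outcome is deterministic, one copy of $\rho$ can be reused for all these measurements).
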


In an upcoming work, we show that performing Fourier sampling and Fourier difference sampling with linearisations of the projective representation $x \mapsto \tilde{W}_x$ leads to a general notion of Bell sampling and Bell difference sampling for qudits, which we use in a variety of applications aside from stabiliser group learning.

\subsubsection{Bose symmetry learning over central extensions}

A central extension of a group $Q$ by a group $Z$ is a group $G$ such that $Z \leq Z(G)$ and $G / Z \cong Q$, where $Z(G)$ is the centre of $G$.

A common way of linearising a projective representation $R: Q \to \mathcal{U}(\mathcal{H})$ is to construct a linear representation $R'$ of the central extension $G$ of $Q$ by $Z = \ZZ_E$, where $E$ is the exponent of $Q$. If we were able to solve \prb{BoseSL} over $G$ with representation $R'$, then we could solve \prb{ProjAnyonicSL} over $Q$ with representation $R$. However, an efficient quantum algorithm for \prb{BoseSL} over general central extensions is not known, since they are generally non-abelian and not poly-near Hamiltonian. This is why we instead used the approach in \cref{sec:linearisation-of-projective-representations} of constructing linearisations of $R$ which are representations of \emph{abelian} groups, which allows us to use the efficient quantum algorithm for \prb{BoseSL} over abelian groups.

Given that we now have an efficient quantum algorithm for \prb{ProjAnyonicSL} over abelian groups by \cref{thm:algorithm-for-projective-state-hsp}, it is natural to ask whether the reduction from \prb{ProjAnyonicSL} over abelian groups to \prb{BoseSL} over central extensions can be reversed; i.e. can we use the algorithm for \prb{ProjAnyonicSL} over abelian groups to solve \prb{BoseSL} over central extensions of abelian groups?

\begin{conjecture}
    There is an efficient quantum algorithm for \prb{BoseSL} over central extensions of abelian groups.
\end{conjecture}

There is good reason to suspect this is true: it can be shown that if we perform Fourier sampling over the abelian subgroup $Z$ of $G$, then the restriction to $Q$ of the linear representation of $G$ is a projective representation of $Q$ which acts on the post-measurement states obtained from Fourier sampling over $Z$.
We have made good progress on resolving this conjecture, and hope to fully answer it soon.\isaac{TODO: do we need this last sentence?}

\section{Symmetries of other quantum objects}\label{sec:symmetries-of-other-quantum-objects}

So far, we have only considered symmetry learning problems in which the object of interest is a quantum state. In this section, we introduce similar problems pertaining to learning symmetries of other quantum objects; namely, unitaries, Hamiltonians, finite subsets of a Hilbert space, and subspaces of a Hilbert space.



\subsection{Learning symmetries of unitaries}

We introduce the problem of learning symmetries of unitaries:

\begin{problem}[Unitary Symmetry Learning (\prb{UnitarySL})]\label{prb:quantum-unitary-symmetry-learning}\leavevmode
    \begin{description}
        \item[Input] Blackbox access to a unitary $U \in \mathcal{U}(\mathcal{H})$.
        \item[Promise] There is a finite group $G$, a projective unitary representation $R: G \to \mathcal{U}(\mathcal{H})$, and a subgroup $K \leq G$ such that $U$ is symmetric under the action of $K$, i.e. $R(g) U R(g)^\dagger = U$ for all $g \in K$, and $U$ is not symmetric under the action of any $g \notin K$, i.e. $\norm{R(g) U R(g)^\dagger - U} \geq \epsilon$ for all $g \notin K$, for some specified norm $\norm{\cdot}$ and $\epsilon > 0$.
        \item[Task] Learn the hidden subgroup $K$.
    \end{description}
\end{problem}

One might think that our linearisation construction from \cref{sec:projective-state-hsp} would be needed for an algorithm that solves \prb{UnitarySL} with respect to a projective representation, as was needed for solving \prb{AnyonicSL} with respect to a projective representation. However, the next two facts show that in fact we can instead consider the linear representation $\overline{R} \tp R$, from which we can perform Fourier sampling due to the nature of the Choi state of $U$.

$\overline{R} \tp R$ can be viewed as a ``linearisation'' of $R$ in a sense as it is a linear representation constructed from $R$, but it does not belong to the class of linearisations we consider in \cref{sec:projective-state-hsp}.

\begin{lemma}\label{lmm:complex-conjugate-linearisation}
    If $R: G \to \mathcal{U}(\mathcal{H})$ is a projective unitary representation, then $\overline{R} \tp R: G \to \mathcal{U}(\mathcal{H} \tp \mathcal{H})$ is a linear unitary representation.
\end{lemma}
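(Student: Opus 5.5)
The plan is a direct computation: the phase in the twisted multiplication rule for $\overline{R}$ is the complex conjugate of the phase for $R$, so the two phases cancel in the tensor product. Start from the defining relation of \cref{def:projective-representation}: for all $g, h \in G$,
\begin{equation*}
    R(g) R(h) = \omega(g, h) R(g h), \qquad \omega(g, h) \in S^1 .
\end{equation*}
Taking entrywise complex conjugates gives $\overline{R}(g)\,\overline{R}(h) = \overline{\omega(g, h)}\,\overline{R}(g h)$. Entrywise conjugation commutes with matrix products and sends unitaries to unitaries, so $\overline{R}$ is again a projective unitary representation, with cocycle $\overline{\omega}$.

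Next I will use the mixed-product property of the tensor product:
\begin{equation*}
    (\overline{R} \tp R)(g)\,(\overline{R} \tp R)(h) = \overline{R}(g)\overline{R}(h) \tp R(g) R(h) = \overline{\omega(g, h)}\,\omega(g, h)\, \overline{R}(gh) \tp R(gh) = \abs{\omega(g, h)}^2\, (\overline{R} \tp R)(gh).
\end{equation*}
Since $\omega$ takes values in $S^1$, we have $\abs{\omega(g, h)}^2 = 1$. Hence $\overline{R} \tp R$ is a homomorphism into $\mathcal{U}(\mathcal{H} \tp \mathcal{H})$, and it is unitary because it is a tensor product of unitaries.

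There is essentially no obstacle here. The one point to watch is that the cocycle must be unimodular, which is built into \cref{def:projective-representation}. I will also note that no gauge fixing, and no abelianness of $G$, is needed, in contrast to the linearisations of \cref{sec:projective-state-hsp}.
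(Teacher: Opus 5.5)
Your proof is correct and is essentially the paper's own argument. The paper also uses the mixed-product property and the fact that $\overline{\omega(g,h)}\,\omega(g,h) = \abs{\omega(g,h)}^2 = 1$, so the two phases cancel in $\overline{R} \tp R$.
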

\begin{proof}
    For all $g, h \in G$, we have
    \begin{align*}
        (\overline{R} \tp R)(g) (\overline{R} \tp R)(h) & = \overline{R(g) R(h)} \tp R(g) R(h) \\
        & = \overline{\omega(g, h) R(g h)} \tp \omega(g, h) R(g h) \\
        & = \overline{R(g h)} \tp R(g h)
    \end{align*}
    where $\omega(g, h)$ is the cocycle of $R$.
\end{proof}

\begin{lemma}[{\cite[Figure 2]{LW22hamiltonianSymmetries}}]\label{lmm:fourier-sampling-with-complex-conjugate-linearisation}
    We can perform Fourier sampling with representation $R' = \overline{R} \tp R$ on the Choi state $\Phi^\mathcal{E}$ (where $\mathcal{E}(\rho) = U \rho U^\dagger$) using $1$ query to $U$.
\end{lemma}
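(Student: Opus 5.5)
Following \cite[Figure 2]{LW22hamiltonianSymmetries}, the idea is to run \cref{alg:state-fourier-sample} with $R' = \overline{R}\tp R$ on the Choi state, and to obtain the single controlled operation $\ctrl{R'}$ from controlled $R$-type gates sandwiching one uncontrolled use of $U$. Two identities make this work. First, by \cref{lmm:complex-conjugate-linearisation}, $R'$ is a genuine linear representation, so \cref{lmm:state-fourier-sampling-output-distribution} (and its non-abelian analogue \cref{crl:weak-state-fourier-sample::probability}) applies to it. Second, for the maximally entangled state we have the transpose trick $(A\tp I)\ket{\Phi^+} = (I\tp A^T)\ket{\Phi^+}$. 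Taking $A = \overline{R(g)}$ gives $(\overline{R(g)}\tp I)\ket{\Phi^+} = (I\tp R(g)^\dagger)\ket{\Phi^+}$.

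The concrete procedure is as follows. Prepare $\frac{1}{\sqrt{\abs{G}}}\sum_g \ket{g}\tp\ket{\Phi^+}$ on $\CC^G\tp\mathcal{H}_A\tp\mathcal{H}_B$. Apply the controlled unitary $\sum_g \ketbra{g}{g}\tp R(g)^\dagger$ on $\mathcal{H}_B$, then $U$ on $\mathcal{H}_B$ with no control, then $\ctrl{R}$ on $\mathcal{H}_B$. Here the Choi state is arranged so that $U$ acts on the second factor; if the paper's convention $(\mathcal{E}\tp\mathrm{id})$ is used instead, swap the roles of $A$ and $B$. This yields
\[
\frac{1}{\sqrt{\abs{G}}}\sum_g \ket{g}\tp (I\tp R(g)UR(g)^\dagger)\ket{\Phi^+}.
\]
Applying the transpose trick in reverse to the leftmost factor $R(g)^\dagger$ that meets $\ket{\Phi^+}$, this becomes
\[
\frac{1}{\sqrt{\abs{G}}}\sum_g \ket{g}\tp (\overline{R(g)}\tp R(g))(I\tp U)\ket{\Phi^+}.
\]
The commutation needed here is trivial, since the two operators act on different tensor factors. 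The result is exactly $\ctrl{R'}$ applied to $\ket{g}\tp\ket{\Phi^U}$. Finally, apply $\QFT_G$ and measure the irrep label, which completes Fourier sampling with $R'$ on $\Phi^U$ using exactly one query to $U$.

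The main subtlety is phases. $R$ is only projective, so $R(g)^\dagger$ should be read as a fixed choice of the unitary $R(g)^{-1}$. The cocycle then cancels in $R(g)UR(g)^\dagger$, and the identity $(\overline{R(g)}\tp I)\ket{\Phi^+} = (I\tp R(g)^\dagger)\ket{\Phi^+}$ holds exactly for each fixed matrix $R(g)$. We also need controlled access to $R^\dagger$, which is standard: invert the circuit for $\ctrl{R}$. I expect the remaining work to be checking the transpose identity carefully in the chosen basis, after which the rest is routine.
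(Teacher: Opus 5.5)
Your proposal is correct and follows essentially the same route as the paper: the transpose trick $(\overline{R(g)}\tp I)\ket{\Phi^+} = (I\tp R(g)^\dagger)\ket{\Phi^+}$ converts $\ctrl{R'}$ acting on the Choi state into a single register sandwich of controlled-$R(g)^\dagger$, then $U$, then controlled-$R(g)$, using one query to $U$. The only difference is cosmetic: the paper then traces out the untouched reference register and runs the circuit on $I/N$, while you keep the purification $\ket{\Phi^+}$ and check the state identity directly.
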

\begin{proof}
    The standard Fourier sampling circuit is: start with state $\ket{0} \tp \ket{\Phi^\mathcal{E}}$, apply $\QFT_G$ to the first register, apply controlled-$\overline{R}(g)$ to the first and second registers, apply controlled-$R(g)$ to the first and third registers, then apply $\QFT_G^\dagger$ to the first register, then measure the first register.
    
    Since $\ket{\Phi^\mathcal{E}} = (I \tp U) \ket{\Phi}$, we can instead start with the maximally entangled state $\ket{\Phi}$ and apply the controlled-$\overline{R}(g)$ to the first and second registers, then apply $U$ to the third register, then proceed as before. But now by the transpose trick $(A \tp I) \ket{\Phi} = (I \tp A^T) \ket{\Phi}$, we can instead apply controlled-$R(g)^\dagger$ to the first and third registers, and proceed as before. Hence, the second register is not operated on, so can be traced out.
    
    This leaves us with the following circuit which samples from the same distribution as the original: start with $\ketbra{0}{0} \tp I / N$, apply $\QFT_G$ to the first register, apply controlled-$R(g)^\dagger$ to the first and second registers, apply $U$ to the second register, apply controlled-$R(g)$ to the first and second registers, then apply $\QFT_G^\dagger$ to the first register, then measure the first register.
\end{proof}

The following fact shows that the measure of Bose-symmetry of the Choi state $\Phi^U$ of $U$ with respect to $R'(g)$ is related to a measure of how symmetric $U$ is with respect to $R(g)$.

\begin{lemma}\label{lmm:choi-state-bose-symmetry-measure}
    Let $V$ and $U$ be $N \times N$ unitaries, let $\mathcal{E}$ denote the quantum channel $\mathcal{E}(\rho) = U \rho U^\dagger$, let $\Phi^\mathcal{E}$ denote the Choi state of $\mathcal{E}$. Then
    \begin{equation*}
        \Tr((\overline{V} \tp V) \Phi^\mathcal{E}) = \frac{1}{N} \Tr(U V U^\dagger V^\dagger).
    \end{equation*}
\end{lemma}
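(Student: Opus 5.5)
The plan is a direct calculation using the vector form of the Choi state together with the transpose trick from the proof of \cref{lmm:fourier-sampling-with-complex-conjugate-linearisation}. Since $\mathcal{E}$ is a unitary channel, its Choi state is pure. Using the paper's ordering convention $\Phi^{\mathcal{E}} = (\mathcal{E} \tp \mathrm{id})(\ketbra{\Phi^+}{\Phi^+})$, we have $\Phi^{\mathcal{E}} = \ketbra{\phi}{\phi}$ with $\ket{\phi} = (U \tp I)\ket{\Phi^+}$. (If one prefers the ordering $(I \tp U)\ket{\Phi^+}$, as in \cref{lmm:fourier-sampling-with-complex-conjugate-linearisation}, the same argument goes through with the tensor factors swapped.) Therefore
\[
\Tr((\overline{V} \tp V)\Phi^{\mathcal{E}}) = \bra{\Phi^+}(U^\dagger \tp I)(\overline{V} \tp V)(U \tp I)\ket{\Phi^+} = \bra{\Phi^+}(U^\dagger \overline{V} U \tp V)\ket{\Phi^+}.
\]

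The key step is to move everything onto a single tensor factor. I would use the identity $(A \tp B)\ket{\Phi^+} = (AB^T \tp I)\ket{\Phi^+}$, which is just the transpose trick $(I \tp B)\ket{\Phi^+} = (B^T \tp I)\ket{\Phi^+}$. Combining this with $\bra{\Phi^+}(X \tp I)\ket{\Phi^+} = \frac{1}{N}\Tr(X)$, the expression above becomes $\frac{1}{N}\Tr(U^\dagger \overline{V} U V^T)$. Since $\overline{V} = (V^\dagger)^T$, this can be rewritten as
\[
\frac{1}{N}\Tr\bigl(U^\dagger (V^\dagger)^T U V^T\bigr).
\]
Taking the transpose inside the trace (which leaves the trace unchanged) gives $\frac{1}{N}\Tr(V U^T V^\dagger \overline{U})$.

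This shows the main obstacle: the calculation naturally produces $U^T$ and $\overline{U}$, not $U$ and $U^\dagger$, and it does so whichever side $U$ acts on. So I will need to check the conventions carefully. The intended identity should follow either by taking the Choi state with respect to $\overline{\mathcal{E}}$, or by using the ordering in which $\overline{V}$ acts on the same factor as $U$ together with the transpose trick in the opposite direction. Concretely, I would redo the computation with $\ket{\phi} = (I \tp U)\ket{\Phi^+}$ and the operator $\overline{V} \tp V$. This gives $\bra{\Phi^+}(\overline{V} \tp U^\dagger V U)\ket{\Phi^+}$, and moving $\overline{V}$ across as $V^\dagger$ yields $\frac{1}{N}\Tr(V^\dagger U^\dagger V U)$. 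By cyclicity this equals $\frac{1}{N}\Tr(U V U^\dagger V^\dagger)$ up to replacing $V$ with $V^\dagger$, that is, up to complex conjugation of the trace. I would fix the conjugation or ordering convention so that the two sides match exactly. Since $\Tr(UVU^\dagger V^\dagger)$ and its conjugate have the same modulus and the same real part, this bookkeeping does not affect the later symmetry arguments.

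Everything else is routine index manipulation with $\ket{\Phi^+} = \frac{1}{\sqrt N}\sum_x \ket{x}\ket{x}$. I will verify the transpose trick and the trace identity explicitly in the computational basis.
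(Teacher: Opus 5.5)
Your second computation, with $\ket{\Phi^{\mathcal E}} = (I \tp U)\ket{\Phi^+}$, is exactly the paper's argument: use the transpose trick to move $\overline{V}$ to the second factor as $V^\dagger$, then apply $\bra{\Phi^+}(I \tp X)\ket{\Phi^+} = \frac{1}{N}\Tr(X)$. Your suspicion about the final form is also correct, and it exposes an error in the paper's last line. The computation gives
\[
\Tr\bigl((\overline{V} \tp V)\Phi^{\mathcal E}\bigr) = \frac{1}{N}\Tr(U^\dagger V U V^\dagger) = \frac{1}{N}\overline{\Tr(U V U^\dagger V^\dagger)} .
\]
The paper's step from $\frac{1}{N}\Tr(U^\dagger V U V^\dagger)$ to $\frac{1}{N}\Tr(U V U^\dagger V^\dagger)$ is not a cyclic rearrangement but a complex conjugation.

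The lemma as literally stated is therefore false. Take $N = 3$, let $V$ be the cyclic shift $\ket{x} \mapsto \ket{x+1 \bmod 3}$, and let $U = \mathrm{diag}(1, i, -1)$. Since $U$ is diagonal, both Choi orderings give the same state. The left-hand side is $\frac{-1-2i}{3}$ and the right-hand side is $\frac{-1+2i}{3}$.

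So the one step of your plan that cannot be carried out is the promise to ``fix the conjugation or ordering convention so that the two sides match exactly'' under the stated hypotheses. You should instead commit explicitly to a corrected statement. Your $\overline{\mathcal E}$ idea gives one: with the paper's official convention $\Phi^{\mathcal E} = (\mathcal E \tp \mathrm{id})(\ketbra{\Phi^+}{\Phi^+})$, your first computation with $U$ replaced by $\overline{U}$ yields $\frac{1}{N}\Tr(V U^\dagger V^\dagger U) = \frac{1}{N}\Tr(U V U^\dagger V^\dagger)$ exactly. Alternatively, keep $(I \tp U)\ket{\Phi^+}$ and write the right-hand side as $\frac{1}{N}\Tr(V U V^\dagger U^\dagger)$.

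Be careful, though: the ordering is not mere bookkeeping. Your parenthetical claim that the argument ``goes through with the tensor factors swapped'' is wrong. So is your remark that $U^T$ and $\overline{U}$ appear ``whichever side $U$ acts on'', which your own second computation contradicts. With $(U \tp I)\ket{\Phi^+}$ and the operator $\overline{V} \tp V$, your result $\frac{1}{N}\Tr(V U^T V^\dagger \overline{U})$ equals $\frac{1}{N}\overline{\Tr(U \overline{V} U^\dagger \overline{V}^\dagger)}$. This tests whether $U$ commutes with $\overline{V}$, not with $V$, and for complex $V$ these conditions genuinely differ. Under the paper's stated Choi convention, the reduction in \cref{thm:unitary-symmetry-learning-reduction-to-bose-symmetry-learning} would therefore detect the symmetries of $\overline{U}$ rather than those of $U$.

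With $\overline{V}$ on the reference factor (the ordering the paper's proof actually uses), your closing remark is right. There the left-hand side equals $\frac{1}{N}\Tr(V U V^\dagger U^\dagger)$, which, with $V = R(g)$, is exactly the quantity whose real part appears in the proof of that theorem. So the downstream reduction is unaffected.
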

\begin{proof}
    Let $\ket{\Phi} = \frac{1}{\sqrt{N}} \sum_{x \in \ZZ_N} \ket{x} \ket{x}$ denote the maximally entangled state, so that $\ket{\Phi^\mathcal{E}} = (I \tp U) \ket{\Phi}$. Then
    \begin{align*}
        \Tr((\overline{V} \tp V) \Phi^\mathcal{E}) & = \bra{\Phi^\mathcal{E}} (\overline{V} \tp V) \ket{\Phi^\mathcal{E}} \\
        & = \bra{\Phi} (I \tp U^\dagger) (\overline{V} \tp V) (I \tp U) \ket{\Phi} \\
        & = \bra{\Phi} (I \tp U^\dagger V U) (\overline{V} \tp I) \ket{\Phi} \\
        & = \bra{\Phi} (I \tp U^\dagger V U V^\dagger) \ket{\Phi} \\
        & = \frac{1}{N} \Tr(U V U^\dagger V^\dagger),
    \end{align*}
    where in the penultimate equality we have used the transpose trick $(A \tp I) \ket{\Phi} = (I \tp A^T) \ket{\Phi}$.
\end{proof}

\begin{theorem}\label{thm:unitary-symmetry-learning-reduction-to-bose-symmetry-learning}
    Let $U \in \mathcal{U}(\mathcal{H})$ be an $N \times N$ unitary. Then when working with the normalised Frobenius norm, \prb{UnitarySL} (\cref{prb:quantum-unitary-symmetry-learning}) is reducible to \prb{BoseSL} with state $\ket{\Phi^U}$, group $G$, representation $R' = \overline{R} \tp R$, asymmetry parameter $\epsilon^2 / 2$; in particular, abelian \prb{UnitarySL} can be solved using $O(\log \abs{G} / \epsilon^2)$ queries to $U$.
\end{theorem}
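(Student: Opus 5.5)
The reduction has four steps: verify that $R'$ is linear, translate the unitary symmetry condition into a Bose symmetry condition on the Choi state, implement Fourier sampling with one query to $U$, and invoke the earlier Bose symmetry learning algorithm.

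\emph{Linearity and exact symmetry.} By \cref{lmm:complex-conjugate-linearisation}, $R' = \overline{R} \tp R$ is a linear unitary representation of $G$, so $\prb{BoseSL}(G, R', \Phi^U)$ is a well-posed instance. Applying \cref{lmm:choi-state-bose-symmetry-measure} with $V = R(g)$ gives
\begin{equation*}
    p(g) := \Tr(R'(g)\Phi^U) = \tfrac{1}{N}\Tr\bigl(U R(g) U^\dagger R(g)^\dagger\bigr) = \tfrac{1}{N}\Tr\bigl(U^\dagger\, R(g)^\dagger U R(g)\bigr),
\end{equation*}
where the last form uses cyclicity of the trace. If $g \in K$, then $R(g) U R(g)^\dagger = U$. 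Conjugating by $R(g)^\dagger$ gives $R(g)^\dagger U R(g) = U$, so $p(g) = \frac1N \Tr(U^\dagger U) = 1$. This is exact Bose symmetry of $\Phi^U$ under $R'(K)$.

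\emph{Asymmetry for $g \notin K$.} Write $U_g = R(g) U R(g)^\dagger$. Because the normalised Frobenius norm is unitarily invariant, $\Re\,\frac1N\Tr(U^\dagger U_g)$ equals the real part of $p(g)$, up to the harmless replacement $g \mapsto g^{-1}$. (Equivalently one can use $\Tr(A^\dagger)=\overline{\Tr(A)}$; I will check the conjugation carefully at this point, since the direction of conjugation depends on the convention.) Then
\begin{equation*}
    \norm{U_g - U}_{F,\mathrm{norm}}^2 = \tfrac1N\Tr\bigl((U_g-U)^\dagger(U_g-U)\bigr) = 2 - 2\Re\,\tfrac1N\Tr(U^\dagger U_g).
\end{equation*}
Now use $\abs{p}\le 1$, and $\Re p \le 1 - \epsilon^2/2$ whenever $\norm{U_g-U}\ge\epsilon$. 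This yields $\Re\,p(g) \le 1-\epsilon^2/2$ for all $g\notin K$.

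This is the main obstacle. The \prb{BoseSL} promise asks for $\abs{p(g)} \le 1-\epsilon'$, but the Frobenius bound only controls the real part. The phase of $p(g)$ could be nontrivial, so we could have $\abs{p(g)}$ close to $1$ with $p(g)\ne 1$. I would handle this by noting that \cref{lmm:weak-state-fourier-sampling::anti-concentration-of-distribution} actually only uses the real part. Its chain of inequalities bounds
\begin{equation*}
    q(\widehat G[K']) = \EE_{x\in N} p(x),
\end{equation*}
which is real, and the bound survives if we replace $\abs{\Tr(R(x)\rho)}$ by $\Re\,\Tr(R(x)\rho)$, since the average is real and equal to its real part. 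So anti-concentration $1-\epsilon^2/4$ holds with $\epsilon' = \epsilon^2/2$. For the abelian case, \cref{prb:redefined-abelian-bosesl} already tolerates unimodular $p(g) \neq 1$, so both situations are covered. For the abelian case one can alternatively square via $\overline{R}\tp R$ applied twice, but the real-part argument is cleaner.

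\emph{Sampling and conclusion.} By \cref{lmm:fourier-sampling-with-complex-conjugate-linearisation}, each Fourier sample with respect to $R'$ on $\Phi^U$ costs one query to $U$. Plugging into \cref{thm:efficient-algorithm-for-normal-core-statehsp}, where for abelian $G$ every subgroup is normal so the core is $K$ itself, we get $O\bigl((\log\abs{\mathcal M}+\log(1/\delta))/\epsilon'\bigr) = O(\log\abs G/\epsilon^2)$ queries, using $\abs{\mathcal M}\le\abs G$ and constant $\delta$.
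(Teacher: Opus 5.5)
Your proposal follows the paper's route: linearity of $\overline{R}\tp R$, the Choi-state identity of \cref{lmm:choi-state-bose-symmetry-measure}, the expansion of the normalised Frobenius distance as $2-2\Re\,p(g)$, one query per Fourier sample via \cref{lmm:fourier-sampling-with-complex-conjugate-linearisation}, then the normal-core algorithm. It is correct.

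The one place you diverge is where the paper's own proof has a hole. The paper passes from $\Re\,\frac{1}{N}\Tr(R(g)UR(g)^\dagger U^\dagger)\le 1-\epsilon^2/2$ directly to $\abs{\Tr(R'(g)\Phi^U)}\le 1-\epsilon^2/2$. That does not follow, and it is false in general. Take $G=\ZZ_d$, $R(z)=Z^z$ and $U=X$. Then $Z^zXZ^{-z}=\omega^zX$, so $K=\{0\}$ and each $z\neq 0$ is $\abs{1-\omega^z}$-far from being a symmetry, yet $p(z)$ is a $d$-th root of unity.

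Your repair is valid, and it is what actually justifies the $O(\log\abs{G}/\epsilon^2)$ bound. \cref{lmm:weak-state-fourier-sampling::anti-concentration-of-distribution} only needs $\Re\,p(x)\le 1-\epsilon'$ off the hidden subgroup, because $q(\widehat{G}[K'])=\EE_{x\in N}p(x)$ is real. The support statement only needs $p=1$ on $K$. Strictly, this shows that $(G,R',\Phi^U)$ satisfies a real-part relaxation of the \prb{BoseSL} promise rather than the literal one, so the theorem's first clause should be read in that sense.

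Two of your side remarks should be dropped.
\begin{itemize}
\item \cref{prb:redefined-abelian-bosesl} does not cover the remaining cases on its own. With parameter $\epsilon'=\epsilon^2/2$ it says nothing about values with $1-\epsilon'<\abs{p(g)}<1$ and $\Re\,p(g)\le 1-\epsilon'$. Such values occur, e.g.\ for $U=X\oplus 1$ and $R(z)=Z^z\oplus 1$ on $\CC^d\oplus\CC$. The real-part argument does all the work.
\item ``Squaring'' by a second application of $\overline{R}\tp R$, i.e.\ passing to $\abs{p(g)}^2$, fails exactly in the problematic case. In the $Z$/$X$ example $\abs{p(z)}^2=1$ for every $z$, so that route would return all of $\ZZ_d$ instead of $K=\{0\}$.
\end{itemize}
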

\begin{proof}
    We are promised that $\frac{1}{N} \norm{R(g) U R(g)^\dagger - U}_F^2 \geq \epsilon^2$ for all $g \notin K$, where $\norm{\cdot}_F$ denotes the unnormalised Frobenius norm. Let $R'(g) = \overline{R(g)} \tp R(g)$, which is a linear representation by \cref{lmm:complex-conjugate-linearisation}. We have
    \begin{align*}
        \norm{R(g) U R(g)^\dagger - U}_F^2 & = \norm{R(g) U R(g)^\dagger}_F^2 + \norm{U}_F^2 - 2 \Re(\Tr(R(g) U R(g)^\dagger U^\dagger)) \\
        & = 2(N - \Re(\Tr(R(g) U R(g)^\dagger U^\dagger)))
    \end{align*}
    Hence, $\frac{1}{N} \Re(\Tr(R(g) U R(g)^\dagger U^\dagger)) \leq 1 - \epsilon^2 / 2$ for all $g \notin K$. By \cref{lmm:choi-state-bose-symmetry-measure}, $\abs{\Tr(R'(g) \Phi^U)} \leq 1 - \epsilon^2 / 2$ for all $g \notin K$. Also, it is clear that $\Tr(R'(g) \Phi^U) = 1$ for all $g \in K$. We can implement Fourier sampling for $\overline{R} \tp R$ on $\Phi^U$ using $1$ query to $U$ by \cref{lmm:fourier-sampling-with-complex-conjugate-linearisation}. Hence, \prb{UnitarySL} reduces to \prb{BoseSL}.
\end{proof}

\subsection{Learning symmetries of Hamiltonians}

Similarly to unitary symmetry learning (\cref{prb:quantum-unitary-symmetry-learning}), we introduce the problem of learning symmetries of Hamiltonians. We take inspiration from \cite{LW22hamiltonianSymmetries}, who considered the related problem of testing symmetries of Hamiltonians.

\begin{problem}[Hamiltonian Symmetry Learning (\prb{HamiltonianSL})]\label{prb:quantum-hamiltonian-symmetry-learning}\leavevmode
    \begin{description}
        \item[Input] Blackbox access to the evolution operator $U(t) = e^{-i H t}$ of an $N \times N$ Hamiltonian $H$.
        \item[Promise] There is a finite group $G$, a projective unitary representation $R: G \to \mathcal{U}(\mathcal{H})$, and a subgroup $S \leq G$ such that $H$ is symmetric under the action of $S$, i.e. $R(g) H R(g)^\dagger = H$ for all $g \in S$, and $H$ is not symmetric under the action of any $g \notin S$, i.e. $\norm{R(g) H R(g)^\dagger - H} \geq \epsilon$ for all $g \notin S$, for some specified norm $\norm{\cdot}$ and $\epsilon > 0$.
        \item[Task] Learn the hidden subgroup $S$.
    \end{description}
\end{problem}

Given that our access to the Hamiltonian $H$ is via its unitary evolution operator, it makes sense to attempt to reduce this problem to unitary symmetry learning. The following lemma is the key to this reduction.

\begin{lemma}\label{lmm:symmetry-measure-of-unitary-lower-bound-by-symmetry-measure-of-hamiltonian}
    Let $U$ be unitary and $B$ be Hermitian. Let $\mu = \lambda_{\max}(B) - \lambda_{\min}(B)$ be the spectral width of $B$. If $\mu \in (0, 2\pi)$, then $\norm{U e^{-i B} U^\dagger - e^{-i B}}_F \geq 2 \frac{\sin(\mu/2)}{\mu} \norm{U B U^\dagger - B}_F$.
\end{lemma}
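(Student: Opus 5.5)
Let $B' = U B U^\dagger$, which is Hermitian with the same spectrum as $B$, so $U e^{-iB} U^\dagger = e^{-iB'}$. The claim becomes
\begin{equation*}
    \norm{e^{-iB'} - e^{-iB}}_F \geq 2 \frac{\sin(\mu/2)}{\mu} \norm{B' - B}_F ,
\end{equation*}
where $B$ and $B'$ are unitarily equivalent with spectral width $\mu < 2\pi$. The plan is to work in the eigenbases and reduce to a scalar inequality. Write $B = \sum_j \lambda_j \ketbra{u_j}{u_j}$ and $B' = \sum_k \lambda_k \ketbra{v_k}{v_k}$, with the same eigenvalues since $B' = UBU^\dagger$. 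Set $c_{jk} = \abs{\braket{u_j | v_k}}^2$; this is a doubly stochastic matrix.

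Expand both Frobenius norms in these bases. For any function $f$,
\begin{equation*}
    \norm{f(B') - f(B)}_F^2 = \Tr f(B')^\dagger f(B') + \Tr f(B)^\dagger f(B) - 2\Re \Tr f(B)^\dagger f(B') = \sum_{j,k} c_{jk} \abs{f(\lambda_k) - f(\lambda_j)}^2 .
\end{equation*}
The last equality uses double stochasticity to rewrite the diagonal terms $\sum_j \abs{f(\lambda_j)}^2$ as $\sum_{j,k} c_{jk} \abs{f(\lambda_j)}^2$, and similarly for $k$. Applying this with $f(x) = x$ and with $f(x) = e^{-ix}$ reduces the lemma to a termwise comparison: it suffices that
\begin{equation*}
    \abs{e^{-i\lambda_k} - e^{-i\lambda_j}} \geq 2\frac{\sin(\mu/2)}{\mu} \abs{\lambda_k - \lambda_j} \quad \text{for all } j, k .
\end{equation*}
The coefficients $c_{jk}$ are nonnegative, so summing preserves the inequality, and taking square roots gives the result.

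For the scalar step, let $\delta = \abs{\lambda_k - \lambda_j} \in [0, \mu]$. Then $\abs{e^{-i\lambda_k} - e^{-i\lambda_j}} = 2\sin(\delta/2)$, which holds because $\delta/2 \in [0, \pi)$. The function $x \mapsto \sin(x)/x$ is decreasing on $(0, \pi)$. Since $\delta/2 \leq \mu/2 < \pi$, this gives
\begin{equation*}
    \frac{\sin(\delta/2)}{\delta/2} \geq \frac{\sin(\mu/2)}{\mu/2},
\end{equation*}
that is, $2\sin(\delta/2) \geq 2\frac{\sin(\mu/2)}{\mu}\,\delta$. The case $\delta = 0$ is trivial.

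The main point needing care is the expansion identity. One has to check that $\Tr f(B)^\dagger f(B') = \sum_{j,k} \overline{f(\lambda_j)} f(\lambda_k) c_{jk}$ and that the cross terms combine correctly into $\abs{f(\lambda_k) - f(\lambda_j)}^2$. This works because the $c_{jk}$ are real and doubly stochastic. The hypothesis $\mu < 2\pi$ is used only in the monotonicity step.
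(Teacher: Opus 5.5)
Your proof is correct and is essentially the paper's argument. Taking $v_k = U u_k$, your weights $c_{jk} = \abs{\braket{u_j | U | u_k}}^2$ are exactly $\abs{U_{jk}}^2$ in the eigenbasis of $B$. The paper reaches the same weighted sums directly from the entries $[U,B]_{jk} = U_{jk}(\lambda_k - \lambda_j)$ and $[U,e^{-iB}]_{jk} = U_{jk}(e^{-i\lambda_k} - e^{-i\lambda_j})$, using unitary invariance of the Frobenius norm, and then compares termwise via the monotonicity of $2\sin(x/2)/x$; your trace expansion with double stochasticity is just a slightly longer derivation of that identity.
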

\begin{proof}
    Write $B$ in its eigen-decomposition as $B = \sum_i \lambda_i \ketbra{v_i}{v_i}$. For ease of notation, index matrices in the eigenbasis $\{\ket{v_i}\}$ of $B$. Then $[U, B]_{j k} = \braket{v_j | [U, B] | v_k} = U_{j k} (\lambda_k - \lambda_j)$ and $[U, e^{-i B}]_{j k} = U_{j k} (e^{-i \lambda_k} - e^{-i \lambda_j})$. We have $\abs{e^{-i \lambda_k} - e^{-i \lambda_j}} = 2 \abs{\sin((\lambda_k - \lambda_j)/2)}$. Define
    \begin{equation*}
        a_{j k} = \abs{[U, e^{-i B}]_{j k} / [U, B]_{j k}} = \abs{f(\lambda_k - \lambda_j)}.
    \end{equation*}
    where $f(x) = 2 \sin(x/2) / x$. Since $f$ is decreasing on $(0, 2\pi)$ and even, we have $a_{j k} \geq f(\mu)$ for all $j, k$. The inequality follows from the definition of the Frobenius norm, and unitary invariance of the Frobenius norm.
\end{proof}

\begin{theorem}\label{thm:hamiltonian-symmetry-learning-reduction-to-unitary-symmetry-learning}
    Let $H$ be an $N \times N$ Hamiltonian. Then, when working with the normalised Frobenius norm, \prb{HamiltonianSL} (\cref{prb:quantum-hamiltonian-symmetry-learning}) is reducible to \prb{UnitarySL} (\cref{prb:quantum-unitary-symmetry-learning}) with unitary $U = U(t^*) = e^{-i H t^*}$, asymmetry parameter $\epsilon a b / (2 \pi \mu)$, where $t^*$ is any value in the interval $\frac{1}{\mu} \cdot [a, 2\pi - b]$, $\mu$ is the spectral spread $\lambda_{\max} (H) - \lambda_{\min} (H)$, and $a, b \in (0, 2\pi)$. Hence, it is also reducible to \prb{BoseSL}.
    
    In particular, assuming we can make queries to $U(t) = e^{-i H t}$ for at least one $t \in \frac{1}{\mu} \cdot [\Omega(1 / \polylog \abs{G}), 2\pi - \Omega(1 / \polylog \abs{G})]$ (note the input $t$ to $U(t)$ is allowed to be different across queries), abelian \prb{HamiltonianSL} can be solved in $O(\polylog \abs{G} \cdot \mu^2 / \epsilon^2)$ queries.
\end{theorem}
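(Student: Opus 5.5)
The plan is to apply \cref{lmm:symmetry-measure-of-unitary-lower-bound-by-symmetry-measure-of-hamiltonian} with $B = t^* H$, so that $e^{-iB} = U(t^*)$, and then to invoke \cref{thm:unitary-symmetry-learning-reduction-to-bose-symmetry-learning}. The hidden subgroup is unchanged by this reduction. Indeed, if $g \in S$ then $R(g) H R(g)^\dagger = H$, and exponentiating gives $R(g) U(t^*) R(g)^\dagger = e^{-i t^* R(g) H R(g)^\dagger} = U(t^*)$. So every $g \in S$ is a symmetry of $U(t^*)$. It remains to show that every $g \notin S$ is quantitatively far from being a symmetry of $U(t^*)$.

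Fix $g \notin S$ and set $V = R(g)$. The spectral width of $B = t^*H$ is $t^*\mu \in [a, 2\pi - b] \subset (0, 2\pi)$, so the lemma applies and gives
\[
\norm{V U(t^*) V^\dagger - U(t^*)}_F \geq f(t^*\mu)\, t^* \norm{V H V^\dagger - H}_F,
\]
where $f(x) = 2\sin(x/2)/x$. Dividing both sides by $\sqrt{N}$ turns this into the same statement for the normalised Frobenius norm. Writing $x = t^*\mu$, the prefactor is $f(x) t^* = 2\sin(x/2)/\mu$, and the normalised right-hand norm is at least $\epsilon$ by hypothesis.

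The main obstacle is a clean lower bound on $\sin(x/2)$ for $x \in [a, 2\pi - b]$ that produces exactly the constant $ab/(2\pi\mu)$. Since $\sin$ is concave on $[0, \pi]$, on the interval $y = x/2 \in [a/2, \pi - b/2]$ it is minimised at an endpoint, giving $\sin(x/2) \geq \min(\sin(a/2), \sin(b/2))$. Jordan's inequality then yields $\sin(a/2) \geq a/\pi$. To reach the product form, I will use that $a, b < 2\pi$, so $a/\pi \geq a b/(2\pi^2)$, and symmetrically for $b$. This gives $2\sin(x/2)/\mu \geq ab/(\pi^2\mu)$, which is at least $ab/(2\pi\mu)$ as required. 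If the constant needs adjusting, only the constant changes and the argument is otherwise the same. Hence $U(t^*)$ satisfies the \prb{UnitarySL} promise with hidden subgroup $S$ and parameter $\epsilon' = \epsilon ab/(2\pi\mu)$. Composing with \cref{thm:unitary-symmetry-learning-reduction-to-bose-symmetry-learning} then gives the reduction to \prb{BoseSL}.

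For the query bound, I take $a, b = \Omega(1/\polylog\abs{G})$. Then $\epsilon'^2 = \Omega(\epsilon^2/(\mu^2 \polylog\abs{G}))$. By \cref{thm:unitary-symmetry-learning-reduction-to-bose-symmetry-learning}, abelian \prb{UnitarySL} uses $O(\log\abs{G}/\epsilon'^2)$ queries, which is $O(\polylog\abs{G}\cdot\mu^2/\epsilon^2)$.
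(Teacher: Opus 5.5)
Your route is the paper's: apply \cref{lmm:symmetry-measure-of-unitary-lower-bound-by-symmetry-measure-of-hamiltonian} with $B = t^*H$, use the hypothesis $\epsilon$ in the normalised norm, then compose with \cref{thm:unitary-symmetry-learning-reduction-to-bose-symmetry-learning}. Rewriting the prefactor as $f(t^*\mu)\,t^* = 2\sin(t^*\mu/2)/\mu$ is a nice simplification. However, the constant-chasing step, which you correctly flagged as the crux, does not give the stated parameter.

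\textbf{The failing step.} Your chain ends with $2\sin(x/2)/\mu \ge ab/(\pi^2\mu)$ and then asserts this is at least $ab/(2\pi\mu)$. That is false, because $\pi^2 > 2\pi$. For example, at $a = b = \pi$ your chain certifies only $1/\mu$, but $\pi/(2\mu)$ is needed. As written, you have proved the reduction only with asymmetry parameter $\epsilon ab/(\pi^2\mu)$, which is a factor $2/\pi$ weaker than the statement. Your remark that ``only the constant changes'' concedes the statement's explicit parameter rather than proving it.

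\textbf{A second, smaller flaw.} Jordan's inequality gives $\sin(a/2) \ge a/\pi$ only when $a \le \pi$, and this fails badly for $a$ near $2\pi$. The intermediate bound can be rescued. Nonemptiness of the interval forces $a + b \le 2\pi$, which gives $\min(\sin(a/2),\sin(b/2)) \ge \sin(\min(a,b)/2)$ with $\min(a,b) \le \pi$. But the justification as written is incorrect.

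\textbf{The fix.} Replace Jordan with the sharper bound $\sin y \ge y(1 - y/\pi)$ for $y \in [0,\pi]$, which is what the paper uses. In your formulation this gives two estimates:
\begin{itemize}
\item $\sin(a/2) \ge a(2\pi - a)/(4\pi) \ge ab/(4\pi)$, using $2\pi - a \ge b$;
\item $\sin(b/2) \ge b(2\pi - b)/(4\pi) \ge ab/(4\pi)$, by the symmetric argument.
\end{itemize}
Hence $2\sin(x/2)/\mu \ge ab/(2\pi\mu)$, exactly as claimed. Your endpoint minimum $2\min(\sin(a/2),\sin(b/2))/\mu$ is in fact at least as tight as the paper's bound. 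The paper instead decouples the two factors. Since $f$ is decreasing on $(0,2\pi)$, it gets $f(t^*\mu) \ge f(2\pi - b) = \sin(\pi - b/2)/(\pi - b/2) \ge b/(2\pi)$, and separately $t^* \ge a/\mu$.

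The asymptotic query bound $O(\polylog\abs{G}\cdot\mu^2/\epsilon^2)$ is unaffected by the constant.

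\textbf{Smaller omission.} The statement also allows $t$ to differ across queries. The paper justifies this by noting that the proof of \cref{thm:efficient-algorithm-for-normal-core-statehsp} needs the Fourier samples only to be independent and to satisfy the support and anti-concentration properties, not to be identically distributed. You fix a single $t^*$ and do not address this point.
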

\begin{proof}
    For $g \in S$ (i.e. $R(g) H R(g)^\dagger = H$), we also have $R(g) U(t) R(g)^\dagger = U(t)$ for all $t$. For the rest of the proof, let $g \notin S$, so $\norm{R(g) H R(g)^\dagger - H}_F \geq \epsilon$.

    Let $t^* \in \frac{1}{\mu} \cdot [a, 2\pi - b]$ and let $\mu^* = \mu t^* = \mu(H t^*)$. Clearly $\mu^* \in [a, 2 \pi - b]$, so by \cref{lmm:symmetry-measure-of-unitary-lower-bound-by-symmetry-measure-of-hamiltonian},
    \begin{equation*}
        \norm{R(g) e^{-i H t^*} R(g)^\dagger - e^{-i H t^*}}_F \geq 2 \frac{\sin(\mu^*/2)}{\mu^*} \norm{R(g) H t^* R(g)^\dagger - H t^*}_F \geq \frac{\sin(\pi - b/2)}{\pi - b/2} \epsilon t^*.
    \end{equation*}
    By the simple inequality $\sin(x) / x \geq 1 - x / \pi$ for $x \in [0, \pi]$, and the bound on $t^*$, we therefore have $\norm{R(g) U(t^*) R(g)^\dagger - U(t^*)}_F \geq \epsilon a b / (2 \pi \mu)$.

    Finally, we note that when after applying the reduction to \prb{BoseSL}, we need not have same Choi state $\Phi^{U(t^*)}$ for each Fourier sample obtained, since performing Fourier sampling on $\Phi^{U(t^*)}$ for any $t^* \in \frac{1}{\mu} \cdot [a, 2\pi - b]$ will yield a distribution which is supported on $\hat{G}[\core (S)]$ and of which at most $1 - \epsilon a b / (2 \pi \mu)$ of the probability mass is supported on $\hat{G}[K]$ for any $K > S$ by \cref{lmm:weak-state-fourier-sampling-distribution-supported-on-trivial-irreps-of-hidden-subgroup-normal-core} and \cref{lmm:weak-state-fourier-sampling::anti-concentration-of-distribution}. In the proof of \cref{thm:efficient-algorithm-for-normal-core-statehsp}, we do not require that each Fourier sample is identically distributed, only that they are independent and satisfy the two above support properties.
\end{proof}

\subsection{Learning symmetries of sets of states}

We introduce the problem of learning the Bose symmetry of a finite set of states:

\begin{problem}[\prb{SubsetBoseSL}]\label{prb:set-of-states-bose-symmetry-learning}\leavevmode
    \begin{description}
        \item[Input] Access to copies of a finite set of states $A = \{\ket{\psi_1}, \dots, \ket{\psi_m}\} \subset \mathcal{H}$.
        \item[Promise] There is a finite group $G$, a linear unitary representation $R: G \to \mathcal{U}(\mathcal{H})$, and a subgroup $S \leq G$ such that $A$ is symmetric under the action of $S$, i.e. $R(g) A = A$ for all $g \in S$, and $A$ is not symmetric under the action of any $g \notin S$, i.e. $\min_{j \in [m]} \max_{k \in [m]} \abs{\braket{\psi_j | R(g) \ket{\psi_k}}} \leq 1 - \epsilon$.
        \item[Task] Learn the hidden symmetry subgroup $S$.
    \end{description}
\end{problem}

Note that the condition $R(g) A = A$ is equivalent to $\forall i \in [m], \exists j \in [m]$ such that $R(g) \ket{\psi_i} = \ket{\psi_j}$. Since the $j$ must be unique, this is equivalent to the existence of a permutation $\sigma \in S_m$ (the symmetry group on $m$ letters) such that $R(g) \ket{\psi_i} = \ket{\psi_{\sigma(i)}}$ for all $i \in [m]$.

For $\sigma \in S_m$, write $P(\sigma)$ for the permutation operator on $\mathcal{H}^{\tp m}$ which permutes the $m$ tensor factors according to $\sigma$, i.e. $P(\sigma) \ket{x_1} \tp \cdots \tp \ket{x_m} = \ket{x_{\sigma^{-1}(1)}} \tp \cdots \tp \ket{x_{\sigma^{-1}(m)}}$. $P$ is a representation of $S_m$. Note that for any unitary $U \in \mathcal{U}(\mathcal{H})$, we have $P(\sigma) U^{\tp m} = U^{\tp m} P(\sigma)$. 

\begin{theorem}\label{thm:subset-bose-symmetry-learning-reduction-to-bose-symmetry-learning}
    \prb{SubsetBoseSL} over $G$ is reducible to \prb{BoseSL} over $G \times S_m$.
\end{theorem}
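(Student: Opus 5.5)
The plan is to encode the whole set $A$ into the single pure state $\ket{\Psi} = \ket{\psi_1} \tp \cdots \tp \ket{\psi_m} \in \mathcal{H}^{\tp m}$, which can be prepared from one copy of each $\ket{\psi_i}$. On it we let $G \times S_m$ act via
\begin{equation*}
    R'(g, \sigma) = P(\sigma) R(g)^{\tp m}.
\end{equation*}
First we check that $R'$ is a linear representation of the direct product. This holds because $R^{\tp m}$ and $P$ are each representations, and $P(\sigma)$ commutes with every $U^{\tp m}$, as noted just before the statement. Hence $R'(g,\sigma)R'(h,\tau) = P(\sigma)P(\tau)R(g)^{\tp m}R(h)^{\tp m} = R'(gh, \sigma\tau)$.

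Next we identify the Bose symmetries. We compute
\begin{equation*}
    \braket{\Psi | R'(g,\sigma) | \Psi} = \prod_{i=1}^m \braket{\psi_i | R(g) | \psi_{\sigma^{-1}(i)}}.
\end{equation*}
Let $S' = \{(g,\sigma) : R(g)\ket{\psi_{j}} = \ket{\psi_{\sigma(j)}} \ \forall j\}$. Every factor equals $1$ on $S'$, so $\Tr(R'(g,\sigma)\Psi) = 1$ there. $S'$ is a subgroup, and its projection onto $G$ is exactly $S$, using the remark after the problem statement that $R(g)A = A$ is equivalent to the existence of such a $\sigma$. Conversely, if a product of moduli-at-most-one terms equals $1$, every factor equals $1$, so $\ket{\psi_{\sigma^{-1}(i)}}$ is mapped exactly onto $\ket{\psi_i}$. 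This shows that $S'$ is precisely the set of Bose symmetries. We assume the $\ket{\psi_i}$ are distinct, so that $\sigma$ is determined by $g$ and no spurious phases arise; global phases of the $\ket{\psi_i}$ are fixed as given.

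Then we need the gap. For $(g,\sigma) \notin S'$ we consider two cases.
\begin{enumerate}
    \item If $g \notin S$, the promise gives some $j$ with $\max_k \abs{\braket{\psi_j | R(g) | \psi_k}} \le 1-\epsilon$. Since $R$ is unitary, $\braket{\psi_j|R(g)|\psi_k} = \overline{\braket{\psi_k|R(g)^\dagger|\psi_j}}$. To bound the factor indexed by $i=j$ we need $\abs{\braket{\psi_j|R(g)|\psi_{\sigma^{-1}(j)}}} \le 1-\epsilon$, which follows directly from the promise as stated with $k = \sigma^{-1}(j)$. So the product has modulus at most $1-\epsilon$.
    \item If $g \in S$ but $\sigma$ is not the permutation induced by $g$, some factor is $\braket{\psi_i|\psi_{\ell}}$ with $\ell \ne i$, up to applying $R(g)$. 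Here we need a separation promise $\abs{\braket{\psi_i|\psi_\ell}} \le 1-\epsilon$ for distinct states.
\end{enumerate}

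This second case is the main obstacle: the stated promise does not literally control overlaps between distinct elements of $A$. I would handle it by assuming (or deriving from the promise applied at $g = e \in S$) a pairwise separation $\abs{\braket{\psi_i|\psi_\ell}} \le 1-\epsilon$. Alternatively, one could use the relaxed form of the Bose promise and argue only about the projection onto $G$, since we only need $\pi_G(S') = S$. The most robust route is the former, stated as part of the reduction's promise with asymmetry parameter $\epsilon$.

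Finally, given generators of $S'$ from a \prb{BoseSL} solver over $G \times S_m$, we output their $G$-components, which generate $\pi_G(S') = S$.
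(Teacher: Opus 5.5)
Your construction matches the paper's. You use the same product state $\ket{\psi_1}\tp\cdots\tp\ket{\psi_m}$, the same representation $R'(g,\sigma)=P(\sigma)R(g)^{\tp m}$, the same product formula for $\braket{\Psi|R'(g,\sigma)|\Psi}$, the same bound for $g\notin S$, and the same recovery of $S$ by projecting generators onto $G$. In the $g\notin S$ bound, your choice of the factor $i=j$ uses the promise exactly in its stated $\min_j\max_k$ form, whereas the paper's proof implicitly swaps bra and ket.

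\textbf{Case 2.} Your case 2 is a genuine point that the paper's proof skips. The paper only bounds the overlap for $g\notin S$. It says nothing about $(g,\sigma)$ with $g\in S$ and $\sigma\neq\sigma_g$, where $\sigma_g$ is the permutation induced by $g$; there the value is $\prod_i\braket{\psi_i|\psi_{\sigma_g\sigma^{-1}(i)}}$. Without a separation hypothesis this can be arbitrarily close to $1$. For instance, take trivial $G$ and two nearly equal states: the \prb{SubsetBoseSL} promise holds vacuously, while the produced \prb{BoseSL} instance has asymmetry distance $1-\abs{\braket{\psi_1|\psi_2}}^2$.

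So adding $\abs{\braket{\psi_i|\psi_\ell}}\le 1-\epsilon$ for $i\neq\ell$ is the right fix, but it is an extra assumption. It cannot be derived from the promise at $g=e$, because the promise says nothing about elements of $S$. The relaxed promise allowing values in $S^1\setminus\{1\}$ does not help either: it is only formulated for abelian groups, and it does not cover overlaps of modulus close to but below $1$.

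\textbf{An incorrect step.} As written, you claim that if a product of factors of modulus at most one equals $1$, every factor equals $1$. This is false, since the phases can cancel. Distinctness of the $\ket{\psi_i}$ does not prevent this.

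You do not actually need that step. Case 1 excludes every $(g,\sigma)$ with $g\notin S$. Under the separation hypothesis, case 2 excludes every $(g,\sigma)$ with $g\in S$ and $\sigma\neq\sigma_g$. So the Bose subgroup is exactly your $S'$, with gap $\epsilon$ on its complement, and its projection onto $G$ is $S$.
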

\begin{proof}
    Let $G' = G \times S_m$. Let $\ket{\psi'} = \ket{\psi_1} \tp \cdots \tp \ket{\psi_m}$. Define the map $R': G' \to \mathcal{U}((\mathcal{H})^{\tp m})$ by
    \begin{equation*}
        R'(g, \sigma) \coloneq R(g)^{\tp m} P(\sigma) = P(\sigma) R(g)^{\tp m}.
    \end{equation*}
    Since $R(g)^{\tp m}$ and $P(\sigma)$ commute, $R'$ is a representation of $G'$. Let $S' = \{(g, \sigma) \in G' : R'(g, \sigma) \ket{\psi'} = \ket{\psi'}\}$ be the Bose symmetry subgroup of $\ket{\psi'}$ with respect to $R'$. We can obtain the hidden subgroup $S \leq G$ of $A$ is simply the projection of $S'$ onto $G$; in particular, we can obtain a set of generators for $S$ from a set of generators for $S'$ by taking the $G$-component of each generator.

    If $g \notin S$, then there is some $j \in [m]$ such that $R(g) \ket{\psi_j} \neq \ket{\psi_k}$ for all $k \in [m]$. In particular, $\abs{\braket{\psi_k | R(g) | \psi_j}} \leq 1 - \epsilon$ for all $k \in [m]$. This implies that
    \begin{equation*}
        \abs{\braket{\psi' | R'(g, \sigma) | \psi'}} = \abs{\braket{\psi_1 | R(g) | \psi_{\sigma^{-1} (1)}} \cdots \braket{\psi_m | R(g) | \psi_{\sigma^{-1}(m)}}} \leq 1 - \epsilon
    \end{equation*}
    for all $\sigma \in S_m$.
\end{proof}

\cref{thm:subset-bose-symmetry-learning-reduction-to-bose-symmetry-learning} shows that if the non-abelian group $G \times S_m$ admits an efficient quantum algorithm for \prb{BoseSL}, then $G$ admits an efficient quantum algorithm for \prb{SubsetBoseSL} for $A$ of size $m$. From \cref{thm:bose-symmetry-learning-over-poly-near-hamiltonian-groups}, we know that any poly-near Hamiltonian group admits an efficient quantum algorithm for \prb{BoseSL}. The following lemma gives a condition under which $G \times S_m$ is poly-near Hamiltonian, and hence \prb{SubsetBoseSL} is efficiently solvable.

Recall from \cref{def:poly-near-abelian-group} that a group $G$ is poly-near abelian if $[G: Z(G)] = O(\polylog \abs{G})$, where $Z(G)$ is the centre of $G$.

\begin{lemma}
    Suppose $G$ is a poly-near abelian group, and $m = O(\log \log \abs{G} / \log \log \log \abs{G})$. Then $G \times S_m$ is a poly-near Hamiltonian group.
\end{lemma}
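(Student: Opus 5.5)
The plan is to mirror the proof of the analogous lemma for $G \times \ZZ_E$, using the centre as a lower bound for the Baer norm. Recall that $Z(\Gamma) \leq \Baer(\Gamma)$ for every group $\Gamma$: a central element conjugates every subgroup to itself. Recall also that $Z(\Gamma_1 \times \Gamma_2) = Z(\Gamma_1) \times Z(\Gamma_2)$. Applying both facts to $\Gamma = G \times S_m$ gives
\begin{equation*}
    Z(G) \times Z(S_m) = Z(G \times S_m) \leq \Baer(G \times S_m).
\end{equation*}
Hence
\begin{equation*}
    [G \times S_m : \Baer(G \times S_m)] \leq \frac{\abs{G} \cdot m!}{\abs{Z(G)} \cdot \abs{Z(S_m)}} \leq [G : Z(G)] \cdot m! .
\end{equation*}
Since $Z(S_m)$ is trivial for $m \geq 3$, we simply bound $\abs{Z(S_m)} \geq 1$ and do not need its exact value.

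The first factor is $O(\polylog \abs{G})$ by the poly-near abelian assumption. For the second factor, use the crude bound $m! \leq m^m = 2^{m \log m}$. With $m = O(\log\log\abs{G}/\log\log\log\abs{G})$ we have $\log m \leq \log\log\log\abs{G}$ (for large $\abs{G}$), so
\begin{equation*}
    m \log m = O\!\left(\frac{\log\log\abs{G}}{\log\log\log\abs{G}} \cdot \log\log\log\abs{G}\right) = O(\log\log\abs{G}).
\end{equation*}
Therefore $m! \leq 2^{O(\log\log\abs{G})} = O(\polylog\abs{G})$. The product of two polylogarithmic quantities is again polylogarithmic, so the index is $O(\polylog\abs{G})$.

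The only remaining point is that poly-near Hamiltonian is defined relative to the order of the group itself, here $\abs{G \times S_m} = \abs{G}\, m!$. Since $\abs{G \times S_m} \geq \abs{G}$, we have $\polylog\abs{G} \leq \polylog\abs{G \times S_m}$, so the bound transfers directly. The main obstacle is the factorial estimate, that is, checking that the stated growth rate of $m$ is exactly what makes $m!$ polylogarithmic. The rest is the same bookkeeping as in the $G \times \ZZ_E$ case.
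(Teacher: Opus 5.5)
Your proof is correct and follows essentially the same route as the paper: bound $\Baer(G \times S_m)$ from below by $Z(G) \times Z(S_m)$, which gives index at most $m! \cdot [G:Z(G)]$, and then use the growth condition on $m$ to get $m \log m = O(\log\log\abs{G})$, hence $m! = O(\polylog\abs{G})$. You are slightly more careful than the paper in two places: it writes an equality where $\abs{Z(S_2)} = 2$ makes it only an inequality, and you spell out the factorial estimate, but the argument is the same.
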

\begin{proof}
    Recall from \cref{sec:bose-symmetry-learning-over-poly-near-hamiltonian-groups} that $G$ is poly-near Hamiltonian if $[G: \Baer(G)] = O(\polylog \abs{G})$, where $\Baer(G) = \{ g \in G: g K g^{-1} = K \text{ for all } K \leq G \}$ is the Baer norm of $G$. Note that $Z(\Gamma) \leq \Baer(\Gamma)$ for any group $\Gamma$, and $Z(\Gamma_1 \times \Gamma_2) = Z(\Gamma_1) \times Z(\Gamma_2)$ for any groups $\Gamma_1, \Gamma_2$. Hence, $Z(G) \times Z(S_m) = Z(G \times S_m) \leq \Baer(G \times S_m)$. This gives
    \begin{align*}
        [(G \times S_m) : \Baer(G \times S_m)] & = \frac{\abs{G} \cdot m!}{\abs{\Baer(G \times S_m)}} \leq m! \frac{\abs{G}}{\abs{Z(G \times S_m)}} \\
        & = m! \cdot [G: Z(G)] = m! \cdot O(\polylog \abs{G}) = m! \cdot O(\polylog \abs{G \times S_m})
    \end{align*}
    To see why the upper bound on $m$ is sufficient, note that we need to have $m! = O(\polylog \abs{G})$. So it suffices that $m \log m = O(\log \log \abs{G})$, which is satisfied by the given bound on $m$.
\end{proof}

\begin{corollary}
    \prb{SubsetBoseSL} is efficiently solvable for poly-near abelian groups $G$, provided that the number of states in the set is $m = O(\log \log \abs{G} / \log \log \log \abs{G})$, and quantum Fourier transforms over certain subgroups of $G \times S_m$ can be performed efficiently.
\end{corollary}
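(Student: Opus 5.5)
The corollary follows by chaining three results already established. First, \cref{thm:subset-bose-symmetry-learning-reduction-to-bose-symmetry-learning} reduces a \prb{SubsetBoseSL} instance over $G$ with $|A| = m$ to the \prb{BoseSL} instance $\prb{BoseSL}(G \times S_m, R', \ket{\psi'})$. Here $\ket{\psi'} = \ket{\psi_1} \tp \cdots \tp \ket{\psi_m}$ and $R'(g,\sigma) = P(\sigma) R(g)^{\tp m}$. The hidden symmetry subgroup $S$ is the projection onto $G$ of the Bose symmetry subgroup $S'$. The asymmetry parameter is preserved: every $(g,\sigma)$ with $g \notin S$ satisfies $\abs{\braket{\psi'|R'(g,\sigma)|\psi'}} \le 1-\epsilon$. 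Second, by the preceding lemma, $G$ poly-near abelian and $m = O(\log\log\abs{G}/\log\log\log\abs{G})$ together make $G \times S_m$ poly-near Hamiltonian. Third, \cref{thm:bose-symmetry-learning-over-poly-near-hamiltonian-groups} then solves \prb{BoseSL} over $G\times S_m$ efficiently, provided the required QFTs over subgroups of $G\times S_m$ are efficient; the corollary assumes exactly this.

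Two routine efficiency checks remain. The first is the resource accounting. One copy of $\ket{\psi'}$ costs one copy of each $\ket{\psi_i}$, so the sample overhead is a factor $m$. The algorithm also needs $\ctrl{R'}$. This is built from $m$ applications of $\ctrl{R}$ on the $m$ tensor factors, all controlled on the $G$-component, followed by a controlled permutation of $m$ registers controlled on the $S_m$-component. The controlled permutation can be done with $O(m^2)$ controlled swaps, after decomposing $\sigma$ from its encoding. Since $m$ is sub-logarithmic, all of this is $\polylog\abs{G}$. The second check is that $\log\abs{G\times S_m} = \log\abs{G} + O(m\log m) = \Theta(\log\abs{G})$, so ``efficient in $\abs{G\times S_m}$'' coincides with ``efficient in $\abs{G}$''. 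Finally, generators of $S$ are recovered by projecting the generators of $S'$ onto $G$.

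The main point of care is whether the reduction produces a valid \prb{BoseSL} promise. One has to confirm that the Bose symmetry subgroup $S'$ projects onto all of $S$. This holds because each $g \in S$ induces a permutation $\sigma_g$ with $R(g)\ket{\psi_i} = \ket{\psi_{\sigma_g(i)}}$. The element $(g,\tau)$, where $\tau$ is the inverse of $\sigma_g$ (or $\sigma_g$ itself, depending on the convention for $P$), then fixes $\ket{\psi'}$. One also has to handle elements $(g,\sigma)$ with $g\in S$ but the wrong $\sigma$. These are not covered by the $1-\epsilon$ bound. They do not cause trouble if the states are distinct with pairwise overlaps bounded away from $1$, which seems implicit in the promise. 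Otherwise I would argue that the normal-core and Gavinsky machinery only needs anti-concentration on subgroups strictly containing $S'$, which Lagrange's theorem gives as in \cref{lmm:weak-state-fourier-sampling::anti-concentration-of-distribution}. Apart from this subtlety, the proof is a direct composition of the cited results.
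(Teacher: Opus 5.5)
Your chain of results is exactly the paper's route to this corollary: \cref{thm:subset-bose-symmetry-learning-reduction-to-bose-symmetry-learning}, then the lemma that $G\times S_m$ is poly-near Hamiltonian, then \cref{thm:bose-symmetry-learning-over-poly-near-hamiltonian-groups}. Your resource accounting is fine. The subtlety you flag, about elements $(g,\sigma)$ with $g\in S$ but $\sigma\neq\sigma_g$, is genuine. The paper's proof of the reduction also only checks $g\notin S$. However, neither of your two ways of dismissing it works as stated.

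First, separation of the states is not implicit in the promise. The promise only constrains $g\notin S$, and it is vacuous when, say, $R$ is trivial and $S=G$. For $g\in S$ and $\sigma\neq\sigma_g$ one gets $\langle\psi'|R'(g,\sigma)|\psi'\rangle=\prod_i\langle\psi_i|\psi_{\tau(i)}\rangle$ for a nontrivial permutation $\tau$. This can be arbitrarily close to $1$ in modulus when two states of $A$ nearly coincide.

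Second, Lagrange's theorem alone does not give anti-concentration on subgroups strictly containing $S'$. The proof of \cref{lmm:weak-state-fourier-sampling::anti-concentration-of-distribution} needs both $\lvert S'\rvert\le\lvert N\rvert/2$ and the pointwise bound $\lvert p(x)\rvert\le 1-\epsilon$ on $N\setminus S'$, and the latter is exactly what fails here. Concretely, take $R$ trivial and $m=2$. Then $S'=G\times\{\mathrm{id}\}$, yet $K=G\times S_2$ has $\mathbb{E}_{x\in K}p(x)=\tfrac12(1+\lvert\langle\psi_1|\psi_2\rangle\rvert^2)$, which is arbitrarily close to $1$. So the black-box reduction only guarantees a \prb{BoseSL} instance with asymmetry parameter $\min\{\epsilon,\,1-\max_{i\neq j}\lvert\langle\psi_i|\psi_j\rangle\rvert\}$. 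To get efficiency in $1/\epsilon$, you need to add an explicit separation hypothesis on $A$.

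A counting argument that does work is one on projections. Let $N$ be any subgroup with $\pi_G(N)\not\subseteq S$. The elements of $N$ projecting into $S$ form the preimage of the proper subgroup $S\cap\pi_G(N)$, so they have index at least $2$ in $N$. Every other element $(g,\sigma)$ has $g\notin S$ and hence $\lvert p(g,\sigma)\rvert\le 1-\epsilon$. Therefore $\mathbb{E}_{x\in N}p(x)\le 1-\epsilon/2$.

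Since only $\pi_G(S')$ is needed, this could let you drop the separation hypothesis. But you would then have to re-verify that the normal-core algorithm, and Gavinsky's reduction built on it, only need anti-concentration on such subgroups. In other words, they must tolerate failing to separate $S'$ from overgroups with the same $G$-projection. That is not a black-box consequence of \cref{thm:bose-symmetry-learning-over-poly-near-hamiltonian-groups}.
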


Note that we can define a subset \emph{anyonic} symmetry learning (\prb{SubsetAnyonicSL}) problem analogously to \prb{SubsetBoseSL} using the definition of anyonic symmetry from \cref{prb:relaxed-state-hsp}. As in the reductions from anyonic symmetry learning to Bose symmetry learning, if $G$ is abelian we can reduce the problem to \prb{SubsetBoseSL} using the endomorphisation of $R$ (see \cref{sec:relaxed-state-hsp}) if $R$ is a linear representation, or a self-orthogonal, zero-sum linear code (see \cref{sec:projective-state-hsp}) if $R$ is a projective representation. We can also combine the above ideas with those of \cref{sec:non-abelian-anyonic-symmetry-learning} to reduce \prb{SubsetAnyonicSL} over any group $G$ to \prb{BoseSL} over $G \times \ZZ_E \times S_m$.

\subsection{Learning symmetries of subspaces}

In \prb{AnyonicSL}, we want to learn the subgroup of all $g \in G$ such that $R(g) \ket{\psi} = e^{i \theta} \ket{\psi}$ for some $\theta \in [0, 2\pi)$, i.e. $R(g) \ket{\psi} \in V$, where $V$ is the one-dimensional subspace spanned by $\ket{\psi}$. However, what if we wanted to relax this symmetry condition by allowing $V$ to be a higher-dimensional subspace? This motivates the following \emph{subspace symmetry learning} problem.

\begin{problem}[\prb{SubspaceSL}]\label{prb:subspace-symmetry-learning}\leavevmode
    \begin{description}
        \item [Input] A subspace $V \subseteq \mathcal{H}$.
        \item [Promise] There is a finite group $G$, a projective unitary representation $R: G \to \mathcal{U}(\mathcal{H})$, and a subgroup $S \leq G$ such that $V$ is symmetric under the action of $S$, i.e. $R(g) V = V$ for all $g \in S$, and $V$ is not symmetric under the action of any $g \notin S$, i.e. $\abs{\braket{\psi | R(g) | \phi}} \leq 1 - \epsilon$ for all $g \notin S$, $\ket{\psi}, \ket{\phi} \in V$, for some $\epsilon > 0$.
        \item [Task] Learn the hidden symmetry subgroup $S$.
    \end{description}
\end{problem}

If the form of access to $V$ is maximally mixed states on $V$, then \prb{SubspaceSL} can be reduced to the following \emph{conjugate symmetry learning} problem:

\begin{problem}[\prb{ConjugateSL}]\label{prb:conjugate-symmetry-learning}\leavevmode
    \begin{description}
        \item [Input] Copies of a mixed state $\rho \in \densmats{\mathcal{H}}$.
        \item [Promise] There is a finite group $G$, a projective unitary representation $R: G \to \mathcal{U}(\mathcal{H})$, and a subgroup $S \leq G$ such that $\rho$ is symmetric under the action of $S$, i.e. $R(g) \rho R(g)^\dagger = \rho$ for all $g \in S$, and $\rho$ is not symmetric under the action of any $g \notin S$, i.e. $\norm{R(g) \rho R(g)^\dagger - \rho} \geq \epsilon$ for all $g \notin S$, for some specified norm $\norm{\cdot}$ and $\epsilon > 0$.
        \item [Task] Learn the hidden symmetry subgroup $S$.
    \end{description}
\end{problem}

To see the connection between these two problems, let $\rho_V = \frac{1}{k} \Pi_V$ be the maximally mixed state on $V$, where $\Pi_V$ is the projector onto $V$ and $k = \dim(V)$. Then since $\rho_V$ is a uniform mixture of all states in $V$, we have $R(g) \rho_V R(g)^\dagger = \rho_V$ for all $g \in S$. Now let $g \notin S$ and let $U = R(g)$. We want to show that $\norm{U \rho_V U^\dagger - \rho_V}_1 \geq \epsilon$, i.e. $\norm{A}_1 \geq k \epsilon$, where $A = \Pi_V - U \Pi_V U^\dagger$.

We have $\norm{\Pi_V U \ket{\phi}} = \max_{\ket{\psi} \in V} \abs{\braket{\psi | U | \phi}} \leq 1 - \epsilon$ for all $\ket{\phi} \in V$. Let $\ket{\phi} \in V$. Then $\braket{\phi | A | \phi} = 1 - \braket{\phi | U \Pi_V U^\dagger | \phi} = 1 - \norm{\Pi_V U \ket{\phi}}^2 \geq 1 - (1 - \epsilon)^2 = 2 \epsilon - \epsilon^2$. Let $\ket{\phi_1}, \dots, \ket{\phi_k}$ be an orthonormal basis for $V$, so that $\Pi_V = \sum_{j = 1}^k \ketbra{\phi_j}{\phi_j}$. Then
\begin{align*}
    \norm{A}_1 \norm{\Pi_V}_\infty & \geq \norm{\Pi_V A}_1 \geq \Tr(\Pi_V A) \\
    & = \sum_{j = 1}^k \braket{\phi_j | A | \phi_j} \geq k (2 \epsilon - \epsilon^2) \geq k \epsilon,
\end{align*}
and since $\norm{\Pi_V}_\infty = 1$, we have $\norm{A}_1 \geq k \epsilon$ as desired.

\section*{Acknowledgements} Subramanian acknowledges support from the Royal Society through a University Research Fellowship. 

\paragraph{AI Disclosure Statement} We used Gemini and ChatGPT to assist with the proofs of some technical lemmas: namely, \cref{lmm:probability-of-triv-set-under-weak-state-fourier-sampling,lmm:sum-of-squares-decomposition-conditions,lmm:sum-of-squares-decomposition-is-efficient,lmm:sum-of-squares-decomposition-with-zero-sum-condition}, as well as \cref{thm:hamiltonian-symmetry-learning-reduction-to-unitary-symmetry-learning}. The authors verified the correctness and originality of all content including references.

\printbibliography

\appendix

\section{Proofs of number-theoretic lemmas}\label{sec:appendix}

In this section, we prove the number-theoretic lemmas in \cref{subsec:optimising-rate-and-block-length}, which we restate here for convenience.

\subsection{Preliminaries}

\begin{fact}[Hensel's Lemma]\label{fct:hensels-lemma}
    Let $p$ be prime, let $f(x_1, \dots, x_m) \in \ZZ[x_1, \dots, x_m]$ be a polynomial in indeterminate $x_1, \dots, x_m$ with integer coefficients. Write $\partial_j f(x_1, \dots, x_m)$ for the formal derivative of $f$ with respect to $x_j$. Let $r = (r_1, \dots, r_m) \in \ZZ^m$ be such that $f(r) = 0 \pmod{p}$ and $\partial_j f(r) \neq 0 \pmod{p}$ for some $j$. Then for every $e \in \NN$, there exists $s = (s_1, \dots, s_m) \in \ZZ^m$ (which is unique modulo $p^e$) such that $f(s) = 0 \pmod{p^e}$ and $s_j = r_j \pmod{p}$ for all $j$.
\end{fact}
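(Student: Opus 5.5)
We prove Fact~\ref{fct:hensels-lemma} by reducing to the one-variable Hensel lemma and then doing the usual Newton-style lifting, one power of $p$ at a time. Fix an index $j$ with $\partial_j f(r) \not\equiv 0 \pmod p$. We freeze every coordinate except the $j$-th at its value in $r$, and lift only the $j$-th coordinate. So we look for $s$ of the form $s = r + h\,e_j$ with $h \in \ZZ$ and $h \equiv 0 \pmod p$. This gives $s_i = r_i$ for $i \neq j$ and $s_j \equiv r_j \pmod p$.

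\textbf{Uniqueness.} The uniqueness clause cannot mean uniqueness over all $m$ coordinates when $m > 1$. For example, $f(x_1,x_2) = x_1 + x_2 - 1$ has many lifts modulo $p^e$ with $s \equiv r \pmod p$. We therefore read it as uniqueness of $s_j$ modulo $p^e$ once the other coordinates are fixed. For $m = 1$ this is the literal statement.

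\textbf{Taylor expansion.} The key algebraic input is an expansion with integer coefficients: for $x \in \ZZ^m$ and $h \in \ZZ$,
\[
f(x + h e_j) = f(x) + h\,\partial_j f(x) + h^2 g(x,h)
\]
for some $g \in \ZZ[x_1,\dots,x_m,h]$. To verify this, it suffices by linearity to check a single monomial. Expanding $(x_j+h)^k$ by the binomial theorem, the coefficient of $h^1$ is $k x_j^{k-1}$, which is exactly the formal derivative. Every remaining term carries a factor $h^2$ and has integer binomial coefficients.

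\textbf{Induction on $e$.} The case $e = 1$ is $s = r$. Suppose $s^{(e)} \equiv r \pmod p$ satisfies $f(s^{(e)}) \equiv 0 \pmod{p^e}$, and write $f(s^{(e)}) = p^e c$ with $c \in \ZZ$. Set $s^{(e+1)} = s^{(e)} + t p^e e_j$. By the expansion, and since $2e \geq e+1$,
\[
f(s^{(e+1)}) \equiv p^e\bigl(c + t\,\partial_j f(s^{(e)})\bigr) \pmod{p^{e+1}}.
\]
Because $\partial_j f$ has integer coefficients and $s^{(e)} \equiv r \pmod p$, we get $\partial_j f(s^{(e)}) \equiv \partial_j f(r) \not\equiv 0 \pmod p$. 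Hence the congruence $c + t\,\partial_j f(s^{(e)}) \equiv 0 \pmod p$ has a unique solution $t \in \ZZ_p$, which gives existence at level $e+1$. For uniqueness, any two lifts at level $e+1$ with the same frozen coordinates reduce to lifts at level $e$. These agree modulo $p^e$ by the inductive hypothesis, so they differ by $t p^e e_j$ with $t$ forced modulo $p$. Thus they agree modulo $p^{e+1}$.

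\textbf{Main obstacle.} There is no genuine difficulty. The only points needing care are stating the integer-coefficient Taylor expansion precisely, and pinning down the correct sense of uniqueness in the multivariate case.
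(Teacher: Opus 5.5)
Your proof is correct. The paper gives no proof of this statement; it quotes Hensel's lemma as a standard fact in the appendix preliminaries, so there is no argument of the paper's to compare against. Your argument is the standard Newton lifting:
\begin{itemize}
\item fix one coordinate $j$ with $\partial_j f(r)\not\equiv 0 \pmod p$;
\item use the integer-coefficient expansion $f(x+he_j)=f(x)+h\,\partial_j f(x)+h^2 g(x,h)$;
\item at each level solve the linear congruence $c+t\,\partial_j f(s^{(e)})\equiv 0 \pmod p$.
\end{itemize}
This is complete. Since $\partial_j f(s^{(e)})\equiv\partial_j f(r)\pmod p$, the inverse needs to be computed only once.

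Your objection to the uniqueness clause is also right. For $m\ge 2$ and $e\ge 2$ the lift is generally not unique modulo $p^e$: your example $x_1+x_2-1$ has $p$ distinct lifts modulo $p^2$. So the statement as written is false, and the correct version is the one you prove, namely uniqueness of the lifted coordinate once the others are fixed. Your uniqueness step is slightly terse; spelled out, writing $s'=s+up^e e_j$ with $f(s)\equiv f(s')\equiv 0 \pmod{p^{e+1}}$ gives $u\,\partial_j f(s)\equiv 0 \pmod p$, hence $u\equiv 0 \pmod p$.

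The flaw in the paper's statement does no damage downstream. Every application in the appendix (the sums-of-squares lemmas, with and without the zero-sum condition) uses only existence. Your constructive induction needs one evaluation of $f$ modulo $p^{e+1}$ per level. That is also what backs the paper's remark that solutions can be lifted from $p$ to $p^{a}$ in time polynomial in $a\log p$.
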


\begin{fact}[Legendre's Three-Square Theorem]\label{fct:legendre-three-square-theorem}
    For every natural number $n \in \NN$, $n$ can be expressed as a sum of three integer squares, $n = a^2 + b^2 + c^2$ for $a, b, c \in \ZZ$, if and only if $n$ is not of the form $4^k (8 m + 7)$ for $k, m \in \NN \cup \{0\}$.
\end{fact}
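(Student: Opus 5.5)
The plan is the classical two-direction argument: a congruence obstruction for necessity, and a lattice argument with Dirichlet's theorem for sufficiency. None of the paper's earlier machinery is needed, since \cref{fct:legendre-three-square-theorem} is purely number-theoretic.

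\emph{Necessity.} Squares are congruent to $0,1,4 \pmod 8$, so a sum of three squares is never $7 \pmod 8$. If $4 \mid a^2+b^2+c^2$, then reducing mod $4$ (squares are $0,1 \pmod 4$) forces $a,b,c$ all even. Hence $4n$ is a sum of three squares if and only if $n$ is. Induction on $k$ then shows that no $4^k(8m+7)$ is a sum of three squares.

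\emph{Sufficiency, reduction.} The same descent lets us assume $4 \nmid n$. Since $n \not\equiv 7 \pmod 8$, we have $n \equiv 1,2,3,5,6 \pmod 8$. The case $n=1$ is trivial, and it is convenient to further reduce to $n$ not divisible by an odd square. This is the standard device: one proves the squarefree case and handles square factors by a separate Gauss-style composition argument. Alternatively, one checks that the lattice argument below works verbatim for general $n$ with the appropriate choice of $q$.

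\emph{Sufficiency, core construction.} The goal is a positive definite integral ternary quadratic form $f$ of determinant $1$ that represents $n$. For this I would follow the Ankeny--Mitchell route.

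\begin{itemize}
\item Using Dirichlet's theorem on primes in arithmetic progressions, pick a prime $q$ satisfying an appropriate congruence mod $4n$. The choice is made so that $-q$ is a quadratic residue mod $n$, split according to $n \bmod 8$.
\item From a root $t$ with $t^2 \equiv -q \pmod n$, build an explicit matrix
\[
\begin{pmatrix} a & b & 1 \\ b & c & 0 \\ 1 & 0 & n \end{pmatrix}
\]
with $ac-b^2 = q$-related data chosen so that the determinant equals $1$ and the matrix is positive definite.
\item By construction $f(0,0,1)=n$.
\end{itemize}

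The remaining step is to show that every positive definite integral ternary form of determinant $1$ is equivalent to $x^2+y^2+z^2$. Hermite's bound gives $\min f \le \gamma_3 = 2^{1/3} < 2$, so $f$ represents $1$ by a primitive vector. Completing the basis splits $f$ as $x^2 \perp g$, where $g$ is a positive definite binary form of determinant $1$. Applying the binary analogue, $\gamma_2 = (4/3)^{1/2} < 2$, gives $g \sim y^2+z^2$. Hence $n$ is represented by $x^2+y^2+z^2$.

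\emph{Main obstacle.} The hard part is the choice of $q$ and the entries $a,b,c$. One needs simultaneous congruences, one modulo $8$ or $4$ and one modulo each odd prime dividing $n$, that are compatible with quadratic reciprocity. These make $-q$ (or $-q$ times a unit) a square mod $n$ and make the determinant exactly $1$. The verification needs a case split on $n \pmod 8$ (the classes $1,2,3,5,6$) and careful Jacobi-symbol bookkeeping. I would first do $n \equiv 3 \pmod 8$, where $q \equiv 1 \pmod 4$ simplifies reciprocity, and then adapt it to the even and remaining odd classes.

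Since the paper uses \cref{fct:legendre-three-square-theorem} only as a cited fact, in context I would state this sketch and defer the detailed case analysis to a standard reference.
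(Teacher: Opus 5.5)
You should know that the paper gives no proof of this statement. It quotes it as a classical fact, used in the appendix to guarantee that $8k-5$ is a sum of three squares, so deferring to a reference matches the paper. Judged as a proof, several of your steps are correct: the necessity argument; the reduction to $4\nmid n$; and the final step, where Hermite's bound $2^{1/3}<2$ forces $\min f=1$, you split off $x^2$, and $\gamma_2<2$ handles the binary remainder. Sufficiency, however, has a gap exactly at the step you defer, and the one concrete hint you give points at the wrong congruence. For your matrix, $\det = n(ac-b^2)-c$, so with $D:=ac-b^2$ the condition $\det=1$ forces $c=nD-1$. Conversely, given $D\ge 1$, the matrix exists and is positive definite (by Sylvester, since $a>0$, $D>0$, $\det=1$) exactly when $-D$ is a square modulo $c$: choose $b^2\equiv -D \pmod c$ and $a=(b^2+D)/c$. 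So the condition lives modulo $c$, not modulo $n$. A root of $t^2\equiv -q\pmod n$ never enters this matrix, and with $c=q$ it is vacuous anyway, since $q\equiv -1\pmod n$ gives $-q\equiv 1$.

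The missing idea is to let Dirichlet's theorem produce $c$ itself (or $c/2$) as the prime $q$. Then $nD\equiv 1\pmod q$ turns the requirement into $\legendre{-D}{q}=\legendre{-n}{q}=1$, which reciprocity controls. If $n\equiv 1\pmod 4$, take $c=q\equiv -1\pmod n$ with $q\equiv 1\pmod 4$, so $\legendre{-n}{q}=\legendre{q}{n}=\legendre{-1}{n}=1$. If $n=2n'\equiv 2\pmod 4$, take $c=q\equiv -1\pmod n$, with $q\equiv 1\pmod 8$ when $n'\equiv 1\pmod 4$ and $q\equiv 5\pmod 8$ when $n'\equiv 3\pmod 4$; then $\legendre{-n}{q}=\legendre{2}{q}\legendre{-1}{n'}=1$. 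If $n\equiv 3\pmod 8$, take $c=2q$ with $2q\equiv -1\pmod n$ and $q\equiv 1\pmod 4$. Then $\legendre{-n}{q}=\legendre{q}{n}=\legendre{-1}{n}\legendre{2}{n}=1$, and the square root modulo $2$ is free. Each case prescribes one reduced residue class modulo $4n$, so Dirichlet applies. Only Jacobi reciprocity is used, so no squarefreeness is needed. In any case that reduction is immediate by scaling, with no composition argument: if $n=s^2m$ with $s$ odd, then $m\equiv n\pmod 8$.
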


\cite{PS19threeSquareTheorem} gives an efficient ($O(\polylog n)$-time) algorithm for decomposing a valid $n \in \NN$ into a sum of three squares.

\begin{definition}[Quadratic Residue]
    A \emph{quadratic residue (QR)} modulo $k$ is an integer $a$ such that there exists an integer $x$ with $x^2 = a \pmod{k}$.
\end{definition}

\begin{definition}[Legendre Symbol]
    For a prime $p$ and integer $a \in \ZZ$, the \emph{Legendre symbol} is defined as
    \begin{equation*}
        \legendre{a}{p} \coloneq \begin{cases}
            1 & \text{if } a \text{ is a QR modulo } p \text{ and } a \neq 0 \pmod{p}, \\
            -1 & \text{if } a \text{ is not a QR modulo } p, \\
            0 & \text{if } a = 0 \pmod{p}.
        \end{cases}
    \end{equation*}
\end{definition}

\noindent The Legendre symbol satisfy several nice properties:

\begin{fact}\label{fct:legendre-symbol-multiplicativity}
    The Legendre symbol is completely multiplicative in its top argument: for any prime $p$ and integers $a, b \in \ZZ$, $\legendre{a b}{p} = \legendre{a}{p} \legendre{b}{p}$.
\end{fact}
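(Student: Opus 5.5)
The plan is to prove complete multiplicativity directly from the cyclic structure of $\ZZ_p^\times$, treating the degenerate cases first. If $p \mid a$ or $p \mid b$, then $p \mid ab$, so both sides vanish: the left because $\legendre{ab}{p} = 0$, the right because one of its factors is $0$.

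Next take $p = 2$ with $a, b$ odd. Every odd integer is congruent to $1 = 1^2 \pmod 2$, hence is a nonzero QR. So $\legendre{a}{2} = \legendre{b}{2} = \legendre{ab}{2} = 1$.

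The main case is $p$ an odd prime with $a, b \not\equiv 0 \pmod p$. I would use that $\ZZ_p^\times$ is cyclic of even order $p-1$; this is standard, for instance because a polynomial of degree $n$ over the field $\FF_p$ has at most $n$ roots. Fix a generator $g$ and write $a \equiv g^i$ and $b \equiv g^j$.

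The key claim is that $g^k$ is a QR if and only if $k$ is even. One direction is immediate: if $k$ is even, then $g^k = (g^{k/2})^2$. For the other, suppose $x^2 \equiv g^k$ with $x = g^m$. Then $2m \equiv k \pmod{p-1}$. Since $p-1$ is even, this forces $k$ to be even.

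Consequently $\legendre{g^k}{p} = (-1)^k$, which is well defined because $p-1$ is even. Then
\[
\legendre{ab}{p} = (-1)^{i+j} = (-1)^i (-1)^j = \legendre{a}{p}\legendre{b}{p}.
\]

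An alternative route to the same conclusion is Euler's criterion, $\legendre{a}{p} \equiv a^{(p-1)/2} \pmod p$. Multiplicativity then follows because both sides lie in $\{\pm 1\}$ and $p > 2$ separates $1$ from $-1$.

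The only step needing care is the key claim, namely that the parity of an exponent is well defined modulo $p-1$ and that it characterises squares. Everything else is bookkeeping.
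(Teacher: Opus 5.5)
Your proof is correct. The paper does not prove this statement: it lists it as standard number-theoretic background, next to quadratic reciprocity and Euler's criterion, so there is no paper argument to compare against, and your proposal supplies a justification the paper omits.

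The cyclic-generator argument is the textbook proof, and all three cases are handled properly. Keeping the case $p = 2$ is right, since the paper's definition allows $p=2$, and then every odd integer is a nonzero QR. The only tacit step is that before writing $x = g^m$ you need $x \not\equiv 0 \pmod p$; this is immediate from $x^2 \equiv g^k \not\equiv 0$.

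Your Euler-criterion variant is shorter and stays within the paper's own toolkit (\cref{fct:euler-criterion}). However, the paper's version of that fact only says that $a$ is a QR iff $a^{(p-1)/2} \equiv 1$. To get $\legendre{a}{p} \equiv a^{(p-1)/2}$ for non-residues you also need $a^{(p-1)/2} \equiv \pm 1$, which follows from Fermat's little theorem and the fact that $x^2 = 1$ has only the roots $\pm 1$ in $\FF_p$.

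The generator route is self-contained but uses cyclicity of $\ZZ_p^\times$, which is more than you need. For odd $p$, squaring on $\ZZ_p^\times$ is exactly $2$-to-$1$, because its kernel is $\{\pm 1\}$. So the nonzero squares form an index-$2$ subgroup, and the Legendre symbol on units is the quotient homomorphism onto $\{\pm 1\}$.
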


\begin{fact}\label{fct:legendre-symbol-periodicity}
    The Legendre symbol is periodic in its top argument: $\legendre{a}{p} = \legendre{b}{p}$ if $a = b \pmod{p}$.
\end{fact}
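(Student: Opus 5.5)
The plan is to unwind the definition of the Legendre symbol and observe that each of the three cases in that definition depends only on the residue class of the top argument modulo $p$. Fix a prime $p$ and integers $a, b$ with $a = b \pmod{p}$, so that $b = a + m p$ for some $m \in \ZZ$. We then show that $a$ and $b$ fall into the same case of the definition, which forces $\legendre{a}{p} = \legendre{b}{p}$.

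\emph{Step 1 (the zero case).} We have $a = 0 \pmod{p}$ if and only if $p \mid a$. Since $b - a = m p$, this holds if and only if $p \mid b$, i.e. $b = 0 \pmod{p}$. So $\legendre{a}{p} = 0$ exactly when $\legendre{b}{p} = 0$.

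\emph{Step 2 (the residue case).} Suppose $a$ is a quadratic residue modulo $p$, witnessed by $x \in \ZZ$ with $x^2 = a \pmod{p}$. Then $x^2 = a = b \pmod{p}$ by transitivity of congruence, so the same $x$ witnesses that $b$ is a quadratic residue modulo $p$. The argument is symmetric in $a$ and $b$, so $a$ is a QR modulo $p$ if and only if $b$ is.

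\emph{Step 3 (combining).} By Steps 1 and 2, the conditions ``QR and nonzero'', ``not a QR'' and ``zero'' each hold for $a$ if and only if they hold for $b$. Hence the two symbols take the same value in $\{1, -1, 0\}$.

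There is no genuine obstacle here. The only point to state carefully is that both defining properties are invariant under $a \mapsto a + m p$, because congruence modulo $p$ is an equivalence relation compatible with the definition of a quadratic residue.
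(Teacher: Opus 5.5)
Your argument is correct and complete. The paper states this as a standard fact without proof, and checking directly from the definition that both $p \mid a$ and ``$a$ is a QR modulo $p$'' depend only on the residue class of $a$ modulo $p$ is exactly the standard justification (Step 3 also relies on the three cases being mutually exclusive, which holds because $0 = 0^2$ is a QR, so ``not a QR'' already forces $a \not\equiv 0 \pmod{p}$).
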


\begin{fact}\label{fct:minus-1-quadratic-residue}
    $-1$ is a QR modulo a prime $p$, i.e. $\legendre{-1}{p} = 1$, if and only if $p \equiv 1 \pmod{4}$.
\end{fact}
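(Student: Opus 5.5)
We prove this for odd primes $p$; the case $p = 2$ is degenerate, since $-1 \equiv 1 = 1^2 \pmod 2$ while $2 \not\equiv 1 \pmod 4$. In the appendix the fact is only applied to odd prime factors, and we will add a remark saying so. The plan is to work in the multiplicative group $\ZZ_p^\times$, which has order $p - 1$. The point is that $-1$ is a quadratic residue exactly when $\ZZ_p^\times$ contains an element of order $4$.

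\emph{Necessity.} Suppose $x^2 \equiv -1 \pmod p$. Then $x \neq 0$ and $x^4 \equiv 1$. Since $p$ is odd, $-1 \not\equiv 1$, so $x^2 \not\equiv 1$ and $x$ has order exactly $4$ in $\ZZ_p^\times$. By Lagrange's theorem (\cref{fct:lagranges-theorem}), $4 \mid p - 1$, i.e.\ $p \equiv 1 \pmod 4$.

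\emph{Sufficiency.} Here we need an element of order $4$ whenever $4 \mid p-1$. The first route is to use that $\ZZ_p^\times$ is cyclic, being a finite subgroup of the multiplicative group of a field. If $g$ is a generator, then $x = g^{(p-1)/4}$ has order $4$. Hence $x^2$ has order $2$. The polynomial $y^2 - 1 = (y-1)(y+1)$ has only the roots $\pm 1$ in the field $\ZZ_p$, so $x^2 = -1$.

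If we prefer not to invoke cyclicity, we use Wilson's theorem, $(p-1)! \equiv -1 \pmod p$. Pair each $j$ with $p - j \equiv -j$ to get
\begin{equation*}
    (p-1)! \equiv (-1)^{(p-1)/2} \left(\left(\tfrac{p-1}{2}\right)!\right)^2 \pmod p .
\end{equation*}
When $p \equiv 1 \pmod 4$ the sign is $+1$, so $x = \left(\frac{p-1}{2}\right)!$ satisfies $x^2 \equiv -1$.

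The only step with real content is sufficiency, i.e.\ producing the square root of $-1$. We expect this to be the main obstacle only in the sense of choosing which standard input to cite: cyclicity of $\ZZ_p^\times$, or Wilson's theorem. We will use the Wilson route, since it gives $x$ explicitly. For the efficiency claims elsewhere in the appendix, though, the more practical method is to take a random quadratic non-residue $a$ and set $x = a^{(p-1)/4}$. By Euler's criterion $x^2 = a^{(p-1)/2} \equiv -1$, and this runs in expected $\polylog p$ time.
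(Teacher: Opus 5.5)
Your proof is correct, and restricting to odd primes is a genuine correction rather than a convenience. Under the paper's definition of the Legendre symbol, $\legendre{-1}{2} = 1$ although $2 \not\equiv 1 \pmod 4$, so the fact as literally stated fails at $p = 2$. Every use of it in the appendix is for odd $p$: the odd prime factors $p_i$ in \cref{lmm:sum-of-squares-decomposition-conditions}, where the $2$-part is handled separately, and primes $p > 3$ in \cref{lmm:sum-of-squares-decomposition-with-zero-sum-condition}. So nothing downstream is affected.

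The paper gives no proof; it records this as a standard fact next to Euler's criterion and quadratic reciprocity. You argue via Lagrange's theorem in $\ZZ_p^\times$ for necessity, and Wilson's theorem or cyclicity of $\ZZ_p^\times$ for sufficiency. This is self-contained and yields an explicit witness $\left(\frac{p-1}{2}\right)!$, though evaluating it costs $\Theta(p)$ multiplications.

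If you lean on facts the paper already states, one line gives both directions. By \cref{fct:euler-criterion} with $a = -1$, $-1$ is a QR modulo an odd prime $p$ iff $(-1)^{(p-1)/2} \equiv 1 \pmod p$. Since $-1 \not\equiv 1$, this holds iff $(p-1)/2$ is even, i.e.\ iff $p \equiv 1 \pmod 4$. This route also gives directly the form $\legendre{-1}{p} = (-1)^{(p-1)/2}$, which is how the paper actually uses the fact: in the character-sum computation of \cref{lmm:sum-of-squares-decomposition-is-efficient} and when evaluating $\legendre{-3}{p}$. Euler's criterion is itself normally proved via cyclicity or a Wilson-type pairing, so the two routes rest on the same input; yours just makes it explicit.

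Your closing construction $x = a^{(p-1)/4}$ for a random non-residue $a$ is the algorithmic form of the same Euler-criterion argument. It is a valid, slightly more specialised alternative to the Tonelli--Shanks step (\cref{fct:tonelli-shanks-algorithm}) that the paper invokes.
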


\begin{fact}[Quadratic Reciprocity Law]\label{fct:quadratic-reciprocity-law}
    Let $p \neq q$ be primes. Then
    \begin{equation*}
        \legendre{p}{q} \legendre{q}{p} = (-1)^{\frac{p - 1}{2} \frac{q - 1}{2}}.
    \end{equation*}
\end{fact}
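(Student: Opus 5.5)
The statement as written needs $p$ and $q$ to be \emph{odd} primes. For $p = 2$ the exponent $\frac{p-1}{2}$ is not an integer, and e.g.\ $\legendre{2}{7} = 1$ whereas $\legendre{7}{2}$ is not a meaningful symbol here. So I would prove it for distinct odd primes $p, q$, the only case the appendix uses. I will use the Gauss sum argument, working in the ring $\ZZ[\zeta]$ with $\zeta = e^{2\pi i/p}$. The one auxiliary fact needed is Euler's criterion, $\legendre{a}{q} \equiv a^{(q-1)/2} \pmod{q}$ for $q \nmid a$. It follows from $\FF_q^\times$ being cyclic of order $q-1$, since the squares are exactly the elements killed by raising to the power $(q-1)/2$. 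Applied to $a=-1$ it also gives \cref{fct:minus-1-quadratic-residue}.

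\textbf{Step 1 (Gauss sum squared).} Define $g = \sum_{a=1}^{p-1} \legendre{a}{p} \zeta^a$ and set $p^* = (-1)^{(p-1)/2} p$. The goal of this step is $g^2 = p^*$.
\begin{itemize}
\item Expand $g^2 = \sum_{a,b} \legendre{ab}{p} \zeta^{a+b}$ using \cref{fct:legendre-symbol-multiplicativity}.
\item Substitute $b = ac$ with $c$ ranging over $\FF_p^\times$. This gives $g^2 = \sum_{c} \legendre{c}{p} \sum_{a \ne 0} \zeta^{a(1+c)}$.
\item The inner sum equals $p-1$ when $c = -1$ and $-1$ otherwise.
\item Since $\sum_c \legendre{c}{p} = 0$, this yields $g^2 = \legendre{-1}{p}\, p$.
\item Finally $\legendre{-1}{p} = (-1)^{(p-1)/2}$ by Euler's criterion.
\end{itemize}

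\textbf{Step 2 (two evaluations of $g^q$ modulo $q$).} All congruences here are modulo $q\ZZ[\zeta]$.
\begin{itemize}
\item First evaluation: by the Frobenius congruence $(x+y)^q \equiv x^q + y^q$ and $\legendre{a}{p}^q = \legendre{a}{p}$ ($q$ odd), we get $g^q \equiv \sum_a \legendre{a}{p}\zeta^{aq}$.
\item Reindex $a \mapsto aq^{-1}$ and use \cref{fct:legendre-symbol-periodicity,fct:legendre-symbol-multiplicativity}. This gives $g^q \equiv \legendre{q}{p} g$.
\item Second evaluation: $g^q = g\,(g^2)^{(q-1)/2} = g\,(p^*)^{(q-1)/2}$.
\item By Euler's criterion, $(p^*)^{(q-1)/2} \equiv \legendre{p^*}{q} \pmod q$ in $\ZZ$, hence also in $\ZZ[\zeta]$.
\item Together these give $\legendre{q}{p} g \equiv \legendre{p^*}{q} g$.
\end{itemize}

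\textbf{Step 3 (cancellation and conclusion).} Multiply the last congruence by $g$ to get $\legendre{q}{p} p^* \equiv \legendre{p^*}{q} p^*$.
\begin{itemize}
\item Since $p^*$ is an integer coprime to $q$, it is invertible modulo $q$ in $\ZZ$, so we may cancel it. This leaves $\legendre{q}{p} \equiv \legendre{p^*}{q} \pmod{q\ZZ[\zeta]}$.
\item Both sides are $\pm 1$, and $2 \notin q\ZZ[\zeta]$ because $q\ZZ[\zeta] \cap \ZZ = q\ZZ$ ($\ZZ[\zeta]$ is a free $\ZZ$-module with basis $1,\zeta,\dots,\zeta^{p-2}$). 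Hence the two symbols are equal.
\item Expanding by multiplicativity, $\legendre{p^*}{q} = \legendre{-1}{q}^{(p-1)/2}\legendre{p}{q} = (-1)^{\frac{p-1}{2}\frac{q-1}{2}}\legendre{p}{q}$.
\item Multiplying both sides by $\legendre{p}{q}$ gives the stated law.
\end{itemize}

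The main obstacle is Step 3: making sure the reductions modulo $q$ inside $\ZZ[\zeta]$ are legitimate. This means checking the Frobenius congruence in a commutative ring of characteristic zero (the cross binomial coefficients are divisible by $q$), and justifying that congruent signs $\pm1$ must be equal. The character-sum computation in Step 1 is routine by comparison.
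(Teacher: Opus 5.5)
The paper does not prove this statement. It is recorded as a standard Fact with no proof or citation, and it is invoked only once: in the appendix proof of \cref{lmm:sum-of-squares-decomposition-with-zero-sum-condition}, to get $\legendre{-3}{p} = \legendre{p}{3}$ for primes $p > 3$, i.e.\ with $q = 3$ and both primes odd. So there is no argument of the paper's to compare against. What you give is the standard quadratic Gauss sum proof, and it is correct.

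The following steps all go through:
the evaluation $g^2 = \legendre{-1}{p}\, p$;
the two computations of $g^q$ modulo $q\ZZ[\zeta]$ (Frobenius and reindexing on one side, $g\,(p^*)^{(q-1)/2}$ and Euler's criterion on the other);
the cancellation of $p^*$ via an integer inverse modulo $q$;
and the upgrade from a congruence between signs to an equality.

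The one input you assert without proof is that $1, \zeta, \dots, \zeta^{p-2}$ is a $\ZZ$-basis of $\ZZ[\zeta]$. This rests on the irreducibility of $1 + x + \cdots + x^{p-1}$ (Eisenstein after $x \mapsto x + 1$). Alternatively, $2 \in q\ZZ[\zeta]$ would make $2/q$ a rational algebraic integer, which is impossible for $q \geq 3$.

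Your restriction to odd primes is right and covers the paper's only use. However, for $p = 2$ the obstruction is the exponent rather than the symbol: under the paper's definition $\legendre{7}{2} = 1$ is well defined, whereas $(-1)^{\frac{1}{2} \cdot 3}$ is not.

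Relative to the paper's bare statement, your route makes the appendix self-contained. The cost is importing Euler's criterion in its congruence form (cf.\ \cref{fct:euler-criterion}) and some elementary arithmetic in $\ZZ[\zeta]$.
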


\begin{fact}[{Quadratic Character Sum -- \cite[Theorem 5.48]{LN97finiteFields}}]\label{fct:jacobsthal-sum}
    Let $p$ be an odd prime, let $q(x) = a_2 x^2 + a_1 x + a_0 \in \FF_p [x]$ be a quadratic polynomial over $\FF_p$. Let $d = a_1^2 - 4 a_2 a_0$ be the discriminant of $q$. Then if $d \neq 0$,
    \begin{equation*}
        \sum_{x \in \FF_p} \legendre{q(x)}{p} = -\legendre{a_2}{p}.
    \end{equation*}
\end{fact}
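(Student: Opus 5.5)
The plan is the classical argument: complete the square to reduce to a fixed polynomial $y^2 - d$, then evaluate that sum by counting points on a hyperbola. Throughout, $p$ is odd and $a_2 \neq 0$, since $q$ is quadratic.

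\emph{Step 1 (complete the square).} Use the identity $4a_2\, q(x) = (2a_2 x + a_1)^2 - d$. Since $4a_2 \neq 0$ in $\FF_p$, \cref{fct:legendre-symbol-multiplicativity} gives
\[
\legendre{q(x)}{p} = \legendre{4a_2}{p}\legendre{4a_2 q(x)}{p} = \legendre{a_2}{p}\legendre{(2a_2x+a_1)^2 - d}{p},
\]
because $\legendre{4a_2}{p}^2 = 1$ and $\legendre{4}{p} = 1$.

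\emph{Step 2 (change of variables).} The map $x \mapsto y = 2a_2x + a_1$ is a bijection of $\FF_p$, since $2a_2$ is invertible. Summing over $x$ therefore gives
\[
\sum_{x} \legendre{q(x)}{p} = \legendre{a_2}{p}\sum_{y \in \FF_p}\legendre{y^2-d}{p}.
\]
It remains to show that $T \coloneq \sum_y \legendre{y^2 - d}{p} = -1$ whenever $d \neq 0$.

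\emph{Step 3 (point count).} For $c \in \FF_p$, the number of $z \in \FF_p$ with $z^2 = c$ is $1 + \legendre{c}{p}$. This covers $c = 0$, which has one root, and the cases of a nonzero QR (two roots) and a non-residue (no roots). Hence
\[
N \coloneq \#\{(y,z) \in \FF_p^2 : z^2 = y^2 - d\} = p + T.
\]
To compute $N$ directly, rewrite the equation as $(y - z)(y + z) = d$ and substitute $u = y - z$, $v = y + z$. This is an invertible linear change of variables because $2$ is invertible mod $p$. The equation becomes $uv = d$ with $d \neq 0$, which has exactly $p - 1$ solutions: $u$ ranges over $\FF_p^\times$ and $v = d/u$. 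So $N = p - 1$, giving $T = -1$ and the claimed value $-\legendre{a_2}{p}$.

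There is no real obstacle here. The only points needing care are the bookkeeping in Step 3, namely that the $c = 0$ case is correctly covered by $1 + \legendre{c}{p}$, and the use of $p$ odd to invert $2$ in both Step 2 and Step 3. The hypothesis $d \neq 0$ is used exactly once, to count the solutions of $uv = d$; if $d = 0$ the count would instead be $2p - 1$.
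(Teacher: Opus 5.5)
The paper gives no proof of this statement. It is imported as a Fact from \cite[Theorem 5.48]{LN97finiteFields}, so there is no in-paper argument to compare yours against.

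Your proof is correct and complete. Each step checks out:
\begin{itemize}
    \item the identity $4a_2 q(x) = (2a_2x+a_1)^2 - d$;
    \item the multiplicativity step, which uses $\legendre{4a_2}{p}^2 = 1$;
    \item the bijection $x \mapsto 2a_2 x + a_1$;
    \item the root count $1 + \legendre{c}{p}$, where $p$ odd guarantees that a nonzero square has two distinct roots;
    \item the count of $p-1$ solutions to $uv = d$ when $d \neq 0$.
\end{itemize}

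A common alternative evaluates $T$ purely with characters. One writes $T = \sum_{c}\bigl(1+\legendre{c}{p}\bigr)\legendre{c-d}{p}$ and, for $c \neq 0$, uses $\legendre{c(c-d)}{p} = \legendre{1 - d c^{-1}}{p}$ to reduce to $\sum_{e \neq 1}\legendre{e}{p} = -1$. That route needs the vanishing $\sum_{e \in \FF_p}\legendre{e}{p} = 0$, i.e.\ that exactly half the nonzero residues are squares. Your hyperbola count avoids this: beyond multiplicativity in Step 1, it only uses the root count and the trivial count for $uv = d$.

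Both arguments work verbatim over any $\FF_q$ with $q$ odd. Your closing remark on $d = 0$, which gives $N = 2p-1$ and hence $\sum_x \legendre{q(x)}{p} = (p-1)\legendre{a_2}{p}$, recovers the other case of the textbook theorem.
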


\noindent Both checking whether $a$ is a QR mod $p$, and finding a square root of a quadratic residue, can be performed efficiently (in time $\polylog p$):

\begin{fact}[Euler's Criterion]\label{fct:euler-criterion}
    Let $p$ be an odd prime. Then $a \in \ZZ$ is a QR mod $p$ if and only if $a^{(p-1)/2} \equiv 1 \pmod{p}$.

    Furthermore, this condition can be checked in time $O(\polylog p)$ using exponentiation by squaring.
\end{fact}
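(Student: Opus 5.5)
We will prove the criterion for $a$ coprime to $p$, the case relevant for Legendre symbols. If $p \mid a$, then $a = 0^2$ is trivially a residue while $a^{(p-1)/2} \equiv 0$, so that case is handled separately, and trivially, by checking $a \bmod p$. The plan is to work in the multiplicative group $\FF_p^\times$, which has order $p-1$ (even, since $p$ is odd), and to compare two subsets: the set $Q$ of nonzero squares, and the root set $Z$ of the polynomial $x^{(p-1)/2} - 1$ over the field $\FF_p$. The claim is exactly that $Q = Z$.

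First, we show $Q \subseteq Z$. If $a \equiv x^2$ with $x \not\equiv 0$, then $a^{(p-1)/2} \equiv x^{p-1} \equiv 1$ by Fermat's little theorem, which is Lagrange's theorem (\cref{fct:lagranges-theorem}) applied to $\FF_p^\times$.

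Second, we count. The squaring map $x \mapsto x^2$ is a homomorphism of $\FF_p^\times$. Its kernel is $\{\pm 1\}$, which has size exactly $2$ because $p$ is odd and a degree-$2$ polynomial over a field has at most two roots. Hence $\abs{Q} = (p-1)/2$. On the other hand, $x^{(p-1)/2} - 1$ is a nonzero polynomial of degree $(p-1)/2$ over a field, so $\abs{Z} \le (p-1)/2$. Combining this with $Q \subseteq Z$ gives $Q = Z$, which is the criterion. The only point needing care is this root-count bound, which relies on $\FF_p$ being a field (unique factorisation in $\FF_p[x]$). I do not expect a real obstacle. An alternative route uses the cyclicity of $\FF_p^\times$: write $a = g^k$; then $a$ is a square iff $k$ is even iff $(p-1) \mid k(p-1)/2$. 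The counting argument avoids needing cyclicity, so I would use it.

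For the running-time claim, compute $a^{(p-1)/2} \bmod p$ by exponentiation by squaring. Expand $(p-1)/2$ in binary, which has $O(\log p)$ bits. Repeatedly square $a$ modulo $p$ and multiply together the powers corresponding to set bits, reducing modulo $p$ after each step so that all intermediates have $O(\log p)$ bits. This uses $O(\log p)$ modular multiplications, each costing $\polylog p$ time, for a total of $O(\polylog p)$. Finally, compare the result with $1$.
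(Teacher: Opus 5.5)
Your proof is correct. The paper gives no proof of this Fact; it is quoted as standard background, so there is no alternative route to compare against. Your argument is the standard textbook one: Fermat's little theorem gives $Q \subseteq Z$; the squaring homomorphism on $\FF_p^\times$ has kernel exactly $\{\pm 1\}$, so $\abs{Q} = (p-1)/2$; and a nonzero polynomial of degree $(p-1)/2$ over a field has at most that many roots. The square-and-multiply bound for the running time is likewise standard. For that bound, reduce $a$ modulo $p$ first, which the $\polylog p$ claim implicitly assumes.

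Your restriction to $a$ coprime to $p$ is necessary, not just convenient. The paper's definition of a quadratic residue includes $0$, so the biconditional as literally stated fails when $p \mid a$. The correct general form is $a^{(p-1)/2} \equiv \legendre{a}{p} \pmod{p}$. It follows from your argument plus one more line: $a^{(p-1)/2}$ squares to $1$, so it must equal $-1$ whenever $a \notin Z$.
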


\begin{fact}[Tonelli-Shanks Algorithm \cite{Sha73numberTheoreticAlgorithms}]\label{fct:tonelli-shanks-algorithm}
    Let $p$ be an odd prime and let $a$ be a quadratic residue modulo $p$. There exists a (classical) algorithm which finds an integer $x$ such that $x^2 \equiv a \pmod{p}$ in time $O(\log^4 p)$.
\end{fact}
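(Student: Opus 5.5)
The plan is the classical Tonelli--Shanks iteration: start from a candidate root whose squaring error is an element of the $2$-Sylow subgroup of $\FF_p^\times$, and cancel that error one bit at a time using a generator of that subgroup. Write $p - 1 = 2^{S} Q$ with $Q$ odd; this factorisation costs $O(\log p)$ divisions. If $S = 1$ we can finish at once: $x = a^{(p+1)/4}$ satisfies $x^2 = a \cdot a^{(p-1)/2} = a$ by Euler's criterion (\cref{fct:euler-criterion}). In general, the first step is to find a quadratic non-residue $z$. We sample $z$ uniformly from $\FF_p^\times$ and test it with Euler's criterion. Exactly half of $\FF_p^\times$ are non-residues, so the expected number of trials is $2$, each costing $O(\log^3 p)$. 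Consequently the stated bound should be read as an expected (Las Vegas) running time. Without randomness we would have to invoke a conditional deterministic bound on the least non-residue, so I would simply flag this.

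Next we initialise $c = z^{Q}$, $t = a^{Q}$, $x = a^{(Q+1)/2}$ and $M = S$, and maintain the following invariant:
\begin{equation*}
x^2 = a t, \qquad t^{2^{M-1}} = 1, \qquad c \text{ has order exactly } 2^{M}.
\end{equation*}
At initialisation the first identity is immediate. The second holds because $t^{2^{S-1}} = a^{(p-1)/2} = 1$, using that $a$ is a QR. The third holds because $c^{2^{S-1}} = z^{(p-1)/2} = -1$ while $c^{2^S} = 1$.

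Each iteration runs as follows. If $t = 1$, output $x$. Otherwise, compute by repeated squaring the least $i$ with $t^{2^{i}} = 1$; the invariant gives $0 < i < M$. Set $b = c^{2^{M-i-1}}$ and update
\begin{equation*}
x \leftarrow x b, \qquad t \leftarrow t b^2, \qquad c \leftarrow b^2, \qquad M \leftarrow i.
\end{equation*}
The key step, and the one I expect to need the most care, is checking that the invariant survives this update. The new $x^2$ equals $a t b^2$, which is $a$ times the new $t$. The element $b$ has order $2^{i+1}$, so $b^2$ has order exactly $2^{i}$, which is the third condition with $M = i$. For the second condition, note that $t^{2^{i-1}}$ and $b^{2^{i}}$ both have order $2$, so both equal $-1$, since $-1$ is the unique element of order $2$ in the cyclic group $\FF_p^\times$. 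Hence $(t b^2)^{2^{i-1}} = (-1)(-1) = 1$.

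Since $M$ strictly decreases, there are at most $S \le \log p$ iterations. When the loop stops, $t = 1$ and therefore $x^2 = a$. Each iteration performs $O(\log p)$ squarings to find $i$ and to form $b$. Each modular multiplication costs $O(\log^2 p)$, so an iteration costs $O(\log^3 p)$ and the loop costs $O(\log^4 p)$ in total. The initial exponentiations cost $O(\log^3 p)$ and the expected non-residue search costs $O(\log^3 p)$. Together these give the claimed $O(\log^4 p)$ bound.
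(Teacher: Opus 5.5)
Your argument is correct and is exactly the Tonelli--Shanks algorithm that the paper invokes by citation to Shanks; the paper gives no proof of its own. Your invariant check and $O(\log^4 p)$ accounting are sound, and you are right that the non-residue search makes this an expected-time (Las Vegas) bound, which the paper's unqualified phrasing glosses over. One small addition: treat $a \equiv 0 \pmod p$ separately (output $x = 0$), since the paper's definition of quadratic residue admits it and your initialisation $t^{2^{S-1}} = 1$ fails there.
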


\subsection{Proofs of lemmas}

\begin{lemma}[Restatment of \cref{lmm:sum-of-squares-decomposition-conditions}]
    Let $k \in \NN$ with $k \geq 2$. Then:
    \begin{enumerate}
        \item $k \neq 0 \pmod{4}$ and $k$ contains no prime factor of the form $4 m + 3$ if and only if there exists $s_1 \in \ZZ$ such that $1 + s_1^2 = 0 \pmod{k}$.
        \item $k \neq 0 \pmod{4}$ if and only if there exist $s_1, s_2 \in \ZZ$ such that $1 + s_1^2 + s_2^2 = 0 \pmod{k}$.
        \item If $k \neq 0 \pmod{8}$ if and only if there exist $s_1, s_2, s_3, s_4 \in \ZZ$ such that $1 + s_1^2 + s_2^2 + s_3^2 = 0 \pmod{k}$.
        \item If $k = 0 \pmod{8}$, then there exist $s_1, s_2, s_3, s_4 \in \ZZ$ such that $1 + s_1^2 + s_2^2 + s_3^2 + s_4^2 = 0 \pmod{k}$.
    \end{enumerate}
\end{lemma}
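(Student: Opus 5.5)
The plan is to reduce every part to prime-power moduli via the Chinese Remainder Theorem, and then to solve modulo primes and lift. Each condition $1 + s_1^2 + \cdots + s_r^2 \equiv 0 \pmod{k}$ holds modulo $k$ if and only if it holds modulo every prime power $p^e \,\|\, k$, since solutions can be glued by CRT. So for each part it suffices to determine for which prime powers $p^e$ a solution exists. For odd $p$, a solution modulo $p$ lifts to one modulo $p^e$ by \cref{fct:hensels-lemma} applied to $f = 1 + \sum x_i^2$. The partial derivative $2x_j$ is nonzero mod $p$ for some $j$, because not all $s_i \equiv 0$ (otherwise $1 \equiv 0$). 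So for odd primes only the mod-$p$ question matters, and the prime $2$ will be handled by direct enumeration modulo $8$ together with a lifting argument.

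For part 1 the condition is that $-1$ is a square. Modulo an odd $p$ this holds if and only if $p \equiv 1 \pmod 4$, by \cref{fct:minus-1-quadratic-residue}. Modulo $2$ it holds ($s_1 = 1$), but modulo $4$ it fails because squares are $0$ or $1$ mod $4$. Combining these gives exactly the stated condition.

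For part 2, the odd primes are always fine: the sets $\{x^2\}$ and $\{-1 - y^2\}$ each have $(p+1)/2$ elements modulo $p$, so by pigeonhole they intersect. (Alternatively, one can use \cref{fct:jacobsthal-sum}.) Modulo $2$ a solution exists ($1 + 1 + 0$). Modulo $4$, a sum of two squares lies in $\{0, 1, 2\}$, so it can never equal $3$, which gives necessity. This already makes the four-square case trivially solvable at all odd primes.

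For part 3 (three squares), I handle the $2$-adic part directly. We need $s_1^2 + s_2^2 + s_3^2 \equiv -1 \pmod{2^e}$ for $e \le 2$, which is easy ($1 + 1 + 1 \equiv 3$). For $e \ge 3$ we would need $\equiv 7 \pmod 8$, which is impossible because a sum of three squares mod $8$ never equals $7$; this gives necessity when $8 \mid k$. For sufficiency, an alternative is to apply \cref{fct:legendre-three-square-theorem} to $n = k - 1$ when $k \not\equiv 0 \pmod 8$. Note that $k - 1$ is of the forbidden form $4^a(8m+7)$ only if $a = 0$ and $k \equiv 0 \pmod 8$. Indeed, if $a \ge 1$ then $k - 1$ would be even, so $k$ is odd and $k - 1 \equiv 0 \pmod 4$; then $k - 1 = 4^a(8m+7)$ is fine to exclude only under that shape, and one checks that the obstruction never arises except when $k \equiv 0 \pmod 8$. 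So writing $k - 1 = a^2 + b^2 + c^2$ directly gives a solution, and this route also yields efficiency via \cite{PS19threeSquareTheorem}. Part 4 follows immediately from \cref{fct:lagrange-four-square-theorem} applied to $k - 1$.

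The main obstacle I expect is the careful $2$-adic case in part 3. Both the mod-$8$ obstruction and verifying that $4^a(8m+7) = k - 1$ forces $8 \mid k$ need checking. If $a \ge 1$, then $k = 4^a(8m+7) + 1 \equiv 1 \pmod 4$, which is not $0 \bmod 8$; so the integer route fails in exactly those cases. There I would fall back on the CRT/Hensel route: handle odd primes via part 2 (set $s_3 = 0$), and handle the power of $2$ dividing $k$, which is at most $4$, by the explicit solution $1 + 1 + 1 + 1 \equiv 0 \pmod 4$.
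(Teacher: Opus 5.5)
Your argument is correct and follows the paper's proof essentially step for step: CRT reduction to prime powers, Hensel lifting at odd primes (the gradient is nonzero since not all $s_i \equiv 0$), $-1$ being a square mod $p$ iff $p \equiv 1 \pmod 4$, the pigeonhole argument on the $(p+1)/2$ squares, the mod-$4$ and mod-$8$ obstructions, explicit small solutions at the $2$-part, and Lagrange's four-square theorem applied to $k-1$ for part 4. The one thing to fix is the Legendre three-square detour in part 3. Your claim that $k-1$ has the form $4^a(8m+7)$ only when $a=0$ and $8 \mid k$ is false (e.g.\ $k=29$, $k-1=28=4\cdot 7$), as your own last paragraph notices. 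Simply drop that route: the CRT/Hensel argument, with $s_3=0$ at odd primes and $1+1+1+1\equiv 0 \pmod 4$ at the $2$-part, already proves sufficiency for every $k\not\equiv 0 \pmod 8$.
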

\begin{proof}
    Let $k = 2^a \prod_{i = 1}^\ell p_i^{a_i}$ be the prime factorisation of $k$, where $p_i$ are distinct odd primes. In each case, by the Chinese remainder theorem, a solution exists if and only if a solution exists modulo $2^a$ and modulo $p_i^{a_i}$ for all $i$.
    \begin{enumerate}
        \item We have $s_1^2 = -1 \pmod{k}$, i.e. $-1$ is a quadratic residue modulo $k$. By the Chinese remainder theorem, this is equivalent to $-1$ being a quadratic residue (QR) modulo $2^a$ and modulo $p_i^{a_i}$ for all $i$. For $a \leq 2$, $-1$ is not a QR mod $2^a$, as otherwise there exists $x \in \ZZ$ such that $x^2 + 1 = 0 \pmod{2^a}$, but then $x^2 + 1 = 0 \pmod{4}$ i.e. $x^2 = 3 \pmod{4}$, which is impossible. Clearly, $-1$ is a QR mod $2^0$ and $2^1$.
        
        Let $f(x) = x^2 + 1 \in \ZZ[x]$, so $f'(x) = 2x$. For $p$ prime, suppose $s^2 = -1 \pmod{p}$, so $f(s) = 0 \pmod{p}$. We must have $s \neq 0 \pmod{p}$, hence $f'(s) \neq 0 \pmod{p}$. So by Hensel's lemma (\cref{fct:hensels-lemma}), there is exists $s_e$ such that $f(s_e) = 0 \pmod{p^e}$ for all $e \in \NN$. Also, a solution to $f(x) = 0 \pmod{p^e}$ for all $e$ clearly implies a solution to $f(x) = 0 \pmod{p}$. Hence, $-1$ is a QR mod $p^e$ for all $e \in \NN$ if and only if $-1$ is a QR mod $p$, and by \cref{fct:minus-1-quadratic-residue}, $-1$ is a QR mod $p$ if and only if $p = 1 \pmod{4}$. This completes the proof.
        \item We have $s_1^2 + s_2^2 = -1 \pmod{k}$. First, if $k = p$ is prime, then since there are $(p + 1) / 2$ quadratic residues modulo $p$, there must be a non-empty intersection of the sets $\{s_1^2: s_1 \in \ZZ_p\}$ and $\{-1 - s_2^2: s_2 \in \ZZ_p\}$, so a solution exists for $k$ prime. Let $f(x, y) = x_1^2 + x_2^2 + 1 \in \ZZ[x_1, x_2]$, so $\partial_1 f(x_1, x_2) = 2 x_1$ and $\partial_2 f(x_1, x_2) = 2 x_2$. For a solution $(s_1, s_2)$ to $f(x, y) = 0 \pmod{p}$, we must have $s_1 \neq 0 \pmod{p}$ or $s_2 \neq 0 \pmod{p}$, hence $\partial_1 f(s_1, s_2) \neq 0 \pmod{p}$ or $\partial_2 f(s_1, s_2) \neq 0 \pmod{p}$. So by Hensel's lemma, there exists a solution $(s_e, t_e)$ to $f(x, y) = 0 \pmod{p^e}$ for all $e \in \NN$.
        
        Clearly a solution exists modulo $2^a$ for $a = 0, 1$. For $a \geq 2$, we must have $s_1^2 + s_2^2 + 1 = 0 \pmod{4}$. But every integer square is congruent to $0$ or $1$ modulo $4$, so no solutions exist.
        \item We have $s_1^2 + s_2^2 + s_3^2 = -1 \pmod{k}$. From above, we know that a solution exists modulo $p^e$ for all odd primes $p$ and all $e \in \NN$ (by taking $s_3 = 0$). A solution exists modulo $2^0$ (any assignment is a solution), modulo $2^1$ (a solution is $(1, 0, 0)$), and modulo $2^2$ (a solution is $(1, 1, 1)$). For a solution modulo $2^a$ for $a \geq 3$, we must have $s_1^2 + s_2^2 + s_3^2 + 1 = 0 \pmod{8}$. But every integer square is congruent to $0, 1$ or $4$ modulo $8$, so no solutions exist.
        \item The existence of $s_1, s_2, s_3, s_4$ follows from Lagrange's four-square theorem (\cref{fct:lagrange-four-square-theorem}) for $k - 1$.
    \end{enumerate}
\end{proof}

\begin{lemma}[Restatment of \cref{lmm:sum-of-squares-decomposition-is-efficient}]
    For each of the 4 cases in \cref{lmm:sum-of-squares-decomposition-conditions}, the condition on $k$ can be checked in $O(\polylog k)$ time, and if the condition on $k$ holds, then a set of corresponding $s_i$ can be found in $O(\polylog k)$ time.
\end{lemma}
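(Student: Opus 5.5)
The plan is to make each existence proof of the previous lemma constructive, after first factorising $k$. Factorisation is the main obstacle, since it is not known to be classically polynomial-time. So I will organise the argument to need it as little as possible, and otherwise accept $\polylog k$ time with quantum help (Shor's algorithm, consistent with the paper's setting). Where factorisation is unavoidable, I will state explicitly that the step is quantum polynomial-time.

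\emph{Cases 3 and 4.} Case 4 reduces to Lagrange's four-square theorem applied to $k-1$, which the randomised algorithm of \cite{PT18lagrangeFourSquares} solves in expected $\polylog k$ time. Case 3 needs no factorisation either. Checking the condition $k \not\equiv 0 \pmod 8$ is trivial. To find the $s_i$, I will find $n \equiv -1 \pmod k$ with $n \geq 0$ and $n$ not of the form $4^j(8m+7)$, and then apply the three-square algorithm of \cite{PS19threeSquareTheorem} (\cref{fct:legendre-three-square-theorem}). If $k$ is odd, I take $n = k-1$ or $2k-1$; one of these is $\not\equiv 7 \pmod 8$ because they differ by the odd number $k$, and both are odd so no $4^j$ factor arises. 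If $k \equiv 2 \pmod 4$ or $k \equiv 4 \pmod 8$, I will choose $n = ck-1$ for a small $c$ so that $n \bmod 8 \in \{1,3,5\}$. This is possible because $ck \bmod 8$ ranges over $\{2,4,6,0\}$ or $\{4,0\}$ respectively, and I just need to avoid $ck \equiv 0 \pmod 8$.

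\emph{Case 2.} This case is similar. I will find $n \equiv -1 \pmod k$ that is a sum of two squares, using the two-square decomposition. The cleanest route is the one from the previous proof: by CRT, solve modulo each prime power. For an odd prime $p$, I find $s_1$ with $-1-s_1^2$ a QR by random sampling, which succeeds with constant probability by the counting argument; I then take a square root with Tonelli--Shanks (\cref{fct:tonelli-shanks-algorithm}) and Hensel-lift, which is constructive via Newton iteration. The factor $2^a$ with $a \leq 1$ is handled trivially. This requires the factorisation of $k$.

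\emph{Case 1.} Checking the condition amounts to deciding whether every odd prime factor is $\equiv 1 \pmod 4$. With a factorisation in hand, I compute $\sqrt{-1}$ modulo each $p$ by Tonelli--Shanks, Hensel-lift to $p^{a_i}$, and combine by CRT.
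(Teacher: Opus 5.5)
\emph{Comparison with the paper.} For cases 1, 2 and 4 your proof matches the paper's. You factorise $k$ with Shor's algorithm, solve modulo each odd prime by random sampling plus Tonelli--Shanks, Hensel-lift, and recombine by CRT. In case 4 you apply the four-square algorithm of \cite{PT18lagrangeFourSquares} to $k-1$.

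Case 3 is where you take a genuinely different route. The paper factorises $k$ again, reuses the odd-prime solutions from case 2 together with the explicit solutions $(1,0,0)$ and $(1,1,1)$ modulo $2$ and $4$, and glues them by CRT. You instead choose an integer $n \equiv -1 \pmod k$ not of the form $4^j(8m+7)$ and call the three-square algorithm of \cite{PS19threeSquareTheorem} (via \cref{fct:legendre-three-square-theorem}).

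Your route gains two things. First, case 3 needs no factorisation, so it becomes a classical randomised algorithm instead of one relying on Shor's algorithm. Second, it gives a direct constructive proof of the \emph{if} direction of case 3 of \cref{lmm:sum-of-squares-decomposition-conditions}, with no Hensel lifting or CRT. The price is dependence on the heavier three-square algorithm, which the paper only uses in the zero-sum lemma. The paper's route, in turn, handles cases 1--3 uniformly once $k$ is factorised.

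\emph{An error in case 3.} Your choice of $n$ for odd $k$ is wrong as written. If $k$ is odd then $k-1$ is even, so the claim that both candidates are odd is false. It also contradicts your own remark that $k-1$ and $2k-1$ differ by the odd number $k$.

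The even candidate $k-1$ is never $\equiv 7 \pmod 8$, so the rule of taking whichever candidate is $\not\equiv 7 \pmod 8$ can select it even when it has the excluded form. For example, $k=29$ gives $n = 28 = 4 \cdot 7$. This is not a sum of three squares, so the three-square algorithm has nothing to return.

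The repair is immediate:
\begin{itemize}
\item For odd $k$, always take $n = 2k-1$. Since $2k \equiv 2$ or $6 \pmod 8$, this $n$ is odd and $\equiv 1$ or $5 \pmod 8$.
\item For even $k \not\equiv 0 \pmod 8$, take $n = k-1$. This $n$ is odd and $\equiv 1, 3$ or $5 \pmod 8$.
\end{itemize}
Such an $n$ is never of the form $4^j(8m+7)$. Any representation $n = s_1^2+s_2^2+s_3^2$ then gives $1+s_1^2+s_2^2+s_3^2 \equiv 0 \pmod k$, as required.
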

\begin{proof}
    Checking $k = 0 \pmod{4}$ or $k = 0 \pmod{8}$ can be done efficiently (in constant time when $k$ is represented in binary). In the first case, we want to find the factorisation of $k = 2^a \prod_{i = 1}^\ell p_i^{a_i}$ into primes; this can be done in $O(\polylog k)$ time using Shor's algorithm \cite{Sho97quantumFactoring}.
    \begin{enumerate}
        \item Checking whether $k$ contains a prime factor of the form $4m + 3$ can be done in $O(\ell) \leq O(\log k)$ time once we have found the prime factorisation of $k$. We may solve $x_i^2 = -1 \pmod{p_i}$ for each $i$ by the Tonelli-Shanks algorithm (\cref{fct:tonelli-shanks-algorithm}) in time $O(\log^4 p_i) \leq O(\polylog k)$. We can lift each solution $x_i$ to a solution $y_i$ to $y_i^2 = -1 \pmod{p_i^{a_i}}$ using Hensel's lemma (\cref{fct:hensels-lemma}) in time $O(a_i^3 \log^2 p_i) \leq O(\log^5 k)$. Finally, we can combine the solutions $y_i$ using the Chinese remainder theorem in time $O(\ell \log^2 k) \leq O(\log^3 k)$.
        \item The number of $b$ modulo $p$ such that $-1 - b^2$ is a QR mod $p$ is
        \begin{equation*}
            \frac{1}{2} \sum_{b \in \ZZ_p} \left(1 + \legendre{-1 - b^2}{p}\right) = \frac{1}{2} \left(p + \sum_{b \in \ZZ_p} \legendre{-1 - b^2}{p}\right) = \frac{1}{2} \left(p - (-1)^{(p - 1) / 2}\right) \geq (p - 1)/2,
        \end{equation*}
        where in the last equality we have used \cref{fct:jacobsthal-sum}. Hence, by picking uniformly random $b \in \ZZ_p$, and checking whether $-1 - b^2$ is a QR mod $p$ using \cref{fct:euler-criterion} (Euler's criterion), which occurs with probability at least $1 / 4$. Then using Tonelli-Shanks, we find a solution $x$ to $x^2 = -1 - b^2 \pmod{p}$. As in the previous case, we do this for all primes $p$, then lift the solutions to $p^a$ using Hensel's lemma, and combine the solutions using the Chinese remainder theorem.
        \item We can find a solution for each odd prime factor $p_i^{a_i}$ as in the previous case. For the $2^a$ factor, we can find a solution in constant time (e.g. $(1, 0, 0)$ for $a = 1$, $(1, 1, 1)$ for $a = 2$). We can then combine the solutions using the Chinese remainder theorem in time $O(\log^3 k)$.
        \item We use the (classical) algorithm in \cite{PT18lagrangeFourSquares} to find a decomposition of $k - 1$ into a sum of four squares in time $O(\polylog k)$.
    \end{enumerate}
\end{proof}

\begin{lemma}[Restatement of \cref{lmm:rate-block-length-tradeoff}]\leavevmode
    \begin{itemize}
        \item If $E \neq 0 \pmod{4}$ and $E$ contains no prime factor of the form $4 m + 3$, then the optimal sampling rate of $1 / 2$ can be achieved using a code with optimal block length of $2$. (Recall $2$ is the optimal block length regardless of sampling rate for this case.)
        \item If $E \neq 0 \pmod{4}$, then the optimal sampling rate of $1 / 2$ can be achieved using a code with block length of $4$, which is optimal. (Recall $3$ is the optimal block length regardless of sampling rate for this case.)
        \item If $E = 0 \pmod{8}$, the optimal sampling rate of $1 / 2$ is achieved by the code above with block length $8$, which is optimal. There is a code with block length $7$ and sampling rate $3 / 7$ which is optimal for this block length, a code with block length $6$ and sampling rate $1 / 3$ which is optimal for this block length. For block length $5$, the best sampling rate is $1 / 5$.
    \end{itemize}
    Moreover, all these codes can be found in time $O(\polylog E)$.
\end{lemma}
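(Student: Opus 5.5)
The plan is to handle each bullet with an explicit standard-form generator matrix, and then prove optimality by counting how many entries a row needs. By \cref{rmk:linearisation-matrix-in-standard-form} we may assume $M=(I_s\mid A)$ with $A\in\ZZ_E^{s\times(t-s)}$. Then \myrefeq{Condition}{eq:valid-bell-sampling-matrix} becomes $AA^T\equiv -I_s \pmod E$. This has two consequences we will use repeatedly: every row of $A$ is a representation of $-1$ as a sum of $t-s$ squares modulo $E$, and distinct rows of $A$ are orthogonal modulo $E$. By \cref{lmm:upper-bound-on-sampling-rate} we always have $s\le t-s$, so rate $1/2$ forces $t=2s$.

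For the first bullet, we take $s_1$ from \cref{lmm:sum-of-squares-decomposition-conditions}(1) and set $M=(1\;\, s_1)$, which gives $t=2$ and rate $1/2$. Block length $1$ is impossible because $1\not\equiv 0 \pmod E$. For the second bullet, we take $a,b$ from \cref{lmm:sum-of-squares-decomposition-conditions}(2) with $1+a^2+b^2\equiv 0$, and set
\begin{equation*}
    A=\begin{pmatrix} a & b\\ -b & a\end{pmatrix}.
\end{equation*}
Then $AA^T=(a^2+b^2)I\equiv -I$, so $M=(I_2\mid A)$ has $t=4$ and rate $1/2$.

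Optimality in the second bullet goes as follows. Rate $1/2$ forces $t$ to be even, so the only smaller option is $t=2$. That would need $-1$ to be a square modulo $E$, which by \cref{lmm:sum-of-squares-decomposition-conditions}(1) happens only in the first bullet's case. When the first bullet applies it already gives $t=2$, and otherwise $t=4$ is optimal. (A block length of $3$ allows only $s=1$, which has rate $1/3$.)

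For the case $8\mid E$, the construction is $M_8=(I_4\mid\vd{R})$. For block lengths $7$ and $6$ we take $(I_3\mid \vd{R}')$ and $(I_2\mid\vd{R}'')$, where $\vd{R}'$ and $\vd{R}''$ are the first three and first two rows of $\vd{R}$. Since $\vd{R}\vd{R}^T=(E-1)I$, these rows are pairwise orthogonal and have squared norm $\equiv -1$, so \myrefeq{Condition}{eq:valid-bell-sampling-matrix} holds and the rates are $3/7$ and $1/3$. For $t=5$ we use the single row $(1,s_1,\dots,s_4)$ coming from Lagrange's theorem applied to $E-1$.

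The key optimality step is to reduce modulo $8$. Every square is $0$, $1$ or $4$ modulo $8$, so a sum of squares congruent to $-1\equiv 7 \pmod 8$ needs at least $4$ terms. Hence $t-s\ge 4$. This gives $s\le t-4$ and $s\le t/2$, i.e.\ $s\le 1,2,3,4$ for $t=5,6,7,8$. These bounds are exactly the rates $1/5$, $1/3$, $3/7$ and $1/2$ achieved above. They also show that rate $1/2$ needs $t\ge 8$.

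I expect the main obstacle to be the bookkeeping of the divisibility cases. The main text states the third bullet for $E\equiv 0\pmod 4$, but the restatement says $E\equiv 0 \pmod 8$. When $E\equiv 4\pmod 8$, working modulo $4$ only gives $t-s\ge 3$, and the second bullet does not apply. In that subcase I would try a $3\times 3$ orthogonal-type matrix built from a three-square decomposition of $-1$, and check whether mutual orthogonality can be arranged; if it cannot, that subcase should be folded into the block-length-$8$ construction. I would prove the statement as restated, for $8\mid E$, and flag this subcase.

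Efficiency follows from \cref{lmm:sum-of-squares-decomposition-is-efficient} together with the four-square algorithm of \cite{PT18lagrangeFourSquares}. Each matrix above is assembled from $O(1)$ such decompositions, so all the codes can be found in $\polylog E$ time.
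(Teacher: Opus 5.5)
Your proposal is correct and takes essentially the paper's route. The constructions are the same: the row $(1, s)$, a $2\times 2$ orthogonal-type block, and $M_8=(I\mid\vd{R})$ with its truncations to $3\times 7$ and $2\times 6$. Optimality likewise rests on the rate bound $s\le t/2$ and the fact that $-1$ is not a sum of three squares modulo $8$. Your single inequality $t-s\ge 4$ packages this more uniformly, and it also proves that $t=8$ is optimal for rate $1/2$, which the paper asserts without a separate argument.

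The $E\equiv 4\pmod 8$ subcase you flagged is settled negatively in the paper by parity. Three squares summing to $-1\equiv 3\pmod 4$ force all entries odd, so two such rows have odd inner product and cannot be orthogonal. Hence no $3\times 3$ or $2\times 3$ block exists, and the same bounds at block lengths $5$ and $6$ hold in that subcase too.
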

\begin{proof}\leavevmode
    \begin{itemize}
        \item If $E \neq 0 \pmod{4}$ and $E$ contains no prime factor of the form $4m + 3$, then the optimal sampling rate and block length are achieved by the same code generated by $(\begin{smallmatrix}
            1 & s
        \end{smallmatrix})$, where $s^2 = -1 \pmod{E}$.

        \item If $E \neq 0 \pmod{4}$, then by \cref{lmm:sum-of-squares-decomposition-conditions} and \cref{lmm:sum-of-squares-decomposition-is-efficient}, there exist $s_1, s_2$ such that $1 + s_1^2 + s_2^2 = 0 \pmod{E}$ which are efficiently (in time $\polylog(E)$) findable. Then taking $\vd{R} = \begin{pmatrix}
            s_1 & s_2 \\
            s_2 & -s_1
        \end{pmatrix}$, the matrix $M = (I \mid \vd{R})$ satisfies the conditions in \cref{lmm:characterisation-of-bell-sampling-representations} and \cref{lmm:characterisation-of-useful-bell-sampling-representations}, so generates a free self-orthogonal code with block length $4$ and rate $1 / 2$. Given sampling rate $1 / 2$, block length $4$ is optimal: by \cref{lmm:sum-of-squares-decomposition-conditions}, the block length must be at least $3$, and a better sampling rate than $1 / 3$ for block length $3$ is not possible, since it would require the generator matrix to have at least two rows, which would mean the sample rate was $\geq 2 / 3$, which violates \cref{lmm:upper-bound-on-sampling-rate}.

        \item If we remove the last row from the above matrix $M_8$ and remove from the resulting matrix the all-zeros column (here, the fourth column), we obtain a $3 \times 7$ matrix $M_7$. This generates a code with block length $7$ and rate $3 / 7$. This is the optimal rate for block length $7$: a dimension higher than $3$ would mean that the rate is $\geq 4 / 7 > 1 / 2$, which violates \cref{lmm:upper-bound-on-sampling-rate}.
        
        Again, we can remove the last row from $M_7$ and remove from the resulting matrix the all-zeros column (the third column), giving a $2 \times 6$ matrix $M_6$, which has a block length of $6$ and a sample rate of $2 / 6 = 1 / 3$. To see that this the optimal rate for block length $6$, suppose the rate was higher. Then the dimension must be at least $3$, and in fact equal to $3$, otherwise the rate would be $> 1 / 2$. But then since the generator matrix $M$ is WLOG in standard form, we must have $M = (I \mid \vd{R})$, where $\vd{R}$ is a $3 \times 3$ integer matrix such that the rows $\vd{r}_1, \vd{r}_2, \vd{r}_3$ of $\vd{R}$ satisfy $\vd{r}_i \cdot \vd{r}_j = -\delta_{ij} \pmod{E}$.

        This is clearly not possible if $E = 0 \pmod{8}$, since it would imply that there exist $s_1, s_2, s_3 \in \ZZ$ such that $1 + s_1^2 + s_2^2 + s_3^2 = 0 \pmod{E}$, which violates \cref{lmm:sum-of-squares-decomposition-conditions}. If $E = 4 \pmod{8}$, then we must have that all entries of $\vd{r}_1, \vd{r}_2, \vd{r}_3$ are odd, since each integer square is $0$ or $1$ modulo $4$. But then $\vd{r}_i \cdot \vd{r}_j$ is $3$ modulo $4$, which is a contradiction.

        The same argument shows that for $E = 0 \pmod{4}$, a block length $5$ code cannot have rate higher than $1 / 5$.
    \end{itemize}
\end{proof}

\begin{lemma}[Restatment of \cref{lmm:sum-of-squares-decomposition-with-zero-sum-condition}]
    Let $k \in \NN$ and $k \geq 2$.
    \begin{enumerate}
        \item $k = 2$ if and only if there exists $s_1 \in \ZZ$ such that $1 + s_1^2 = 0 \pmod{k}$ and $1 + s_1 = 0 \pmod{k}$.
        \item $k \neq 0 \pmod{4}$ and every prime factor of $k$ is either $3$ or of the form $3m + 1$ if and only there exist $s_1, s_2 \in \ZZ$ such that $1 + s_1^2 + s_2^2 = 0 \pmod{k}$ and $1 + s_1 + s_2 = 0 \pmod{k}$.
        \item $k \neq 0 \pmod{8}$ if and only if there exist $s_1, s_2, s_3 \in \ZZ$ such that $1 + s_1^2 + s_2^2 + s_3^2 = 0 \pmod{k}$ and $1 + s_1 + s_2 + s_3 = 0 \pmod{k}$.
        \item If $k = 0 \pmod{8}$, then there exist $s_1, s_2, s_3, s_4 \in \ZZ$ such that $1 + s_1^2 + s_2^2 + s_3^2 + s_4^2 = 0 \pmod{k}$ and $1 + s_1 + s_2 + s_3 + s_4 = 0 \pmod{k}$.
    \end{enumerate}
    Moreover, each decomposition can be found in expected time $O(\polylog(k))$.
\end{lemma}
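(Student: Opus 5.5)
The plan is to reduce everything to prime powers with the Chinese remainder theorem, exactly as in the proof of \cref{lmm:sum-of-squares-decomposition-conditions}. The new ingredient is to eliminate the linear (zero-sum) constraint first. Both conditions are polynomial congruences, so a solution modulo $k$ exists if and only if one exists modulo each prime-power factor of $k$, and these can be glued together. In each case I would substitute $s_{t-1} = -1 - s_1 - \cdots - s_{t-2}$, which turns the pair of conditions into a single quadratic congruence $f \equiv 0$ in one fewer variable. I would then analyse that congruence modulo odd primes (via Hensel's lemma, \cref{fct:hensels-lemma}) and modulo powers of $2$ (by hand).

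\textbf{Case 1.} This is immediate: $s_1 \equiv -1$ forces $1 + s_1^2 \equiv 2 \equiv 0 \pmod k$, so $k = 2$.

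\textbf{Case 2.} The substitution gives $1 + s_1^2 + (1+s_1)^2 = 2(s_1^2 + s_1 + 1)$.
\begin{itemize}
  \item For the factor $2^a$: $s^2 + s + 1$ is always odd, so a solution exists exactly when $a \le 1$. This accounts for the condition $k \not\equiv 0 \pmod 4$.
  \item For an odd prime $p \neq 3$: $s^2 + s + 1$ has a root mod $p$ iff the discriminant $-3$ is a QR mod $p$. By \cref{fct:minus-1-quadratic-residue} and \cref{fct:quadratic-reciprocity-law}, this holds iff $p \equiv 1 \pmod 3$. Such a root is simple ($2s+1 \not\equiv 0$), so Hensel lifts it to every $p^e$. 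If $p \equiv 2 \pmod 3$ there is no root, which gives necessity.
  \item For $p = 3$: $s = 1$ works mod $3$, but the root is double ($2s+1 \equiv 0 \pmod 3$), so Hensel does not apply. A direct check shows $s^2 + s + 1 \equiv 3 \pmod 9$ for $s \in \{1, 4, 7\}$.
\end{itemize}
So I expect the statement to need $9 \nmid k$. Handling the prime $3$ is where I expect the real obstacle; I would verify this exponent restriction and state the corrected condition before proceeding.

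\textbf{Case 3.} After substituting $s_3 = -1 - s_1 - s_2$, the odd part is handled as follows. For fixed $s_1$, the congruence is a quadratic in $s_2$ with leading coefficient $2$. Summing Legendre symbols of its discriminant over $s_1$ with \cref{fct:jacobsthal-sum} shows that a positive fraction of choices of $s_1$ give a solvable quadratic. At any solution some partial derivative is nonzero mod $p$, because the zero locus is a nondegenerate conic; hence Hensel lifts it to every $p^e$. For the $2$-part, the explicit solution $(1,1,1)$ works mod $4$ (and trivially mod $2$). Necessity of $k \not\equiv 0 \pmod 8$ follows from the squares-mod-$8$ argument of \cref{lmm:sum-of-squares-decomposition-conditions}: squares are $0, 1, 4 \pmod 8$, so $1 + s_1^2 + s_2^2 + s_3^2 \not\equiv 0 \pmod 8$.

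\textbf{Case 4.} The odd part comes from Case 3 with $s_4 = 0$. For $2^a$, the substitution $s_4 = -1 - s_1 - s_2 - s_3$ gives $2f$, where
\[
f = 1 + \textstyle\sum_i s_i + \sum_i s_i^2 + \sum_{i<j} s_i s_j .
\]
It therefore suffices to solve $f \equiv 0 \pmod{2^{a-1}}$. At $(s_1, s_2, s_3) = (0, 1, 1)$ we have $f = 6 \equiv 0 \pmod 2$ and $\partial_1 f = 1 + 2s_1 + s_2 + s_3 = 3$, which is odd. So the standard Hensel lifting with a unit derivative lifts this to all powers of $2$.

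\textbf{Efficiency.} This follows as in \cref{lmm:sum-of-squares-decomposition-is-efficient}:
\begin{itemize}
  \item factor $k$ by Shor's algorithm;
  \item find roots mod $p$ using Euler's criterion and Tonelli--Shanks, with random sampling in Case 3 succeeding with constant probability;
  \item lift the roots by Hensel's lemma;
  \item recombine by the Chinese remainder theorem.
\end{itemize}
Each step runs in $\polylog k$ expected time.
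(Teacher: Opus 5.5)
Your argument is correct, and your Case~2 diagnosis is right: part~2 is false as stated whenever $9 \mid k$. Already $k=9$ fails, since $s^2+s+1 \bmod 9$ takes only the values $1,3,4,7$. The paper's own proof concedes this (``no solutions modulo $3^a$ for $a \geq 2$'') without amending the statement.

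Your $2$-adic analysis exposes a second wording defect that you should fix explicitly. Read literally, ``every prime factor of $k$ is either $3$ or of the form $3m+1$'' excludes the prime $2$. Yet $k=2$ and $k=6$ are solvable, e.g.\ $(s_1,s_2)=(4,1)$ modulo $6$, and the paper's proof itself observes this for $k=2$. The correct criterion is: $4 \nmid k$, $9 \nmid k$, and every prime factor $p \geq 5$ of $k$ satisfies $p \equiv 1 \pmod 3$. The first bullet of \cref{crl:linearisation-zero-row-sum-condition-min-block-length} rests on part~2 and needs the same correction.

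For Cases~1--3 your route is the paper's. You eliminate the zero-sum constraint by substitution, work prime power by prime power with CRT and Hensel, and count solvable quadratics with the Jacobsthal sum.

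\begin{itemize}
\item Your nondegenerate-conic remark supplies the Hensel hypothesis that the paper leaves unchecked in Case~3. It also handles $3^a$ uniformly, whereas the paper exhibits a root modulo $3$ but lifts only for $p>3$.
\item One small repair is needed at $p=3$. There the discriminant $4(-3s_1^2-2s_1-3)$ is linear in $s_1$, so the Jacobsthal fact does not literally apply. Simply use the root $(s_1,s_2)=(1,1)$, at which $\partial_{s_1}$ is a unit modulo $3$.
\end{itemize}

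Case~4 is a genuinely different route. You lift the root $(0,1,1)$ of your reduced form modulo $2$ to all powers of $2$, since its $s_1$-derivative $3$ is odd, and then glue with Case~3 via CRT. The paper instead writes $1+b^2+c^2+d^2+(1+b+c+d)^2$ as $\tfrac14(X^2+Y^2+Z^2+5)$, with $X=2b+2c+1$, $Y=2b+2d+1$, $Z=2c+2d+1$. It then applies Legendre's three-square theorem to $8k-5 \equiv 3 \pmod 8$, and normalises $X \equiv Y \equiv Z \equiv 1 \pmod 4$ so that $b,c,d$ are integers.

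Your version is more elementary and uniform with the other cases, but it needs the factorisation of $k$. The paper's produces an integer solution for every $k$ at once, with no factoring and no $2$-adic lifting.
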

\begin{proof}
    Let $k = 2^a \prod_{i = 1}^\ell p_i^{a_i}$ be the prime factorisation of $k$, where $p_i$ are distinct odd primes. As in the proof of \cref{lmm:sum-of-squares-decomposition-conditions}, we use the Chinese remainder theorem and Hensel's lemma to reduce to solving the equations modulo $2^a$ and modulo each $p_i$.
    \begin{enumerate}
        \item We have $a = -1 \pmod{k}$ so $2 = 0 \pmod{k}$.
        \item Any solution $(s_1, s_2)$ must satisfy $s_1 = -(1 + s_2) \pmod{k}$, so it suffices to find $s_2$ such that $f(s_2) \coloneqq 1 + s_2^2 + (1 + s_2)^2 = 2(s_2^2 + s_2 + 1) = 0 \pmod{k}$, i.e. $(2s_2 + 1)^2 = -3 \pmod{k}$. Taking $s_2 = 1$ gives a solution for $k = 2$ and $k = 3$. Since $s_2^2 + s_2 + 1$ is odd, there are no solutions for $k = 4$, so no solutions for any $2^a$ with $a \geq 2$. It is easy to check that $f(s_2) = 0$ has no solutions modulo $9$, so no solutions modulo $3^a$ for $a \geq 2$. Now for a prime $p > 3$, a solution exists iff $-3$ is a quadratic residue modulo $p$, i.e. $\legendre{-3}{p} = 1$. We can find such a $s_2$ using the Tonelli-Shanks algorithm. By the quadratic reciprocity law and multiplicativity of the Legendre symbol,
        \begin{equation*}
            \legendre{-3}{p} = \legendre{-1}{p} \legendre{3}{p} = (-1)^{(p - 1) / 2} \legendre{p}{3} (-1)^{(p - 1) / 2} = \legendre{p}{3}.
        \end{equation*}
        Since $\legendre{1}{3} = 1$ and $\legendre{2}{3} = -1$, we have $\legendre{p}{3} = 1$ if and only if $p = 1 \pmod{3}$. Since $f'(s_2) = 4s_2 + 2$, the solution $s_2$ satisfies $f'(s_2) = 0 = f(s_2) \pmod{p}$ if and only if $2s_2^2 + 2s_2 = 4s_2 \pmod{p}$, i.e. $s_2^2 = s_2 \pmod{p}$, i.e. $s_2 = 0$ or $s_2 = 1 \pmod{p}$. But $f(0) = 2$ and $f(1) = 6$ so $f(s_2) \neq 0 \pmod{p}$. Hence, $f'(s_2) \neq 0 \pmod{p}$, so by Hensel's lemma, a solution exists modulo $p^a$ for all $a \in \NN$. As usual, we can combine the solutions modulo $p_i^{a_i}$ for all $i$ using the Chinese remainder theorem.
        \item Any solution $(s_1, s_2, s_3)$ must satisfy $s_1 = -(1 + s_2 + s_3) \pmod{k}$, so it suffices to find $s_2, s_3$ such that $f(s_2, s_3) \coloneqq 1 + s_2^2 + s_3^2 + (1 + s_2 + s_3)^2 = 0 \pmod{k}$. Expanding $f$ gives
        \begin{align*}
            f(x, y) & = 2 x^2 + 2 y^2 + 2 x y + 2 x + 2 y + 2 \eqcolon 2 g(x, y) \\
            & 
        \end{align*}
        Taking $x = y = 1$ gives a solution for $k = 2$, $k = 3$ and $k = 2^2$. There are no solutions for $k = 2^3$ (and so for any $2^a$ with $a \geq 3$): any solution $(s_2, s_3)$ must satisfy $g(s_2, s_3) = 0 \pmod{4}$, but it can easily be checked that no solutions for this exist. For prime $p > 3$, let $t = 3^{-1}$ denote the multiplicative inverse of $3$ modulo $p$. Take $u = x + t$ and $v = y + t$, then we want to solve the quadratic equation $g(u - t, v - t) = u^2 + u v + v^2 + 2 t \pmod{p}$.
        
        For fixed $v \in \ZZ_p$, let $\Delta_v = v^2 - 4 (v^2 + 2 t) = -3 v^2 - 8 t$ be the discriminant of the quadratic equation. If $\Delta_v = r^2$ is a quadratic residue modulo $p$ (which can be checked in $O(\polylog p)$ time by Euler's criterion), then a solution $u$ exists: take $u = (-v + r) \cdot 2^{-1}$. Moreover, $u$ can be found in time $O(\polylog p)$ by using Tonelli-Shanks to find $r$. The number of $v \in \ZZ_p$ such that $\Delta_v$ is a quadratic residue modulo $p$ is
        \begin{equation*}
            \frac{1}{2} \sum_{v \in \ZZ_p} \left(1 + \legendre{\Delta_v}{p}\right) = \frac{1}{2} \left(p + \sum_{v \in \ZZ_p} \legendre{-3 v^2 - 8 t}{p}\right) = \frac{1}{2} \left(p - \legendre{-3}{p}\right) \geq (p - 1)/2,
        \end{equation*}
        where the third equality is by \cref{fct:jacobsthal-sum}.
        
        Hence, by choosing $v \in \ZZ_p$ uniformly at random, a solution exists with probability at least $1 / 4$, so we repeat this process a constant number of times to find a solution with high probability. As in the previous case, we can lift the solution to $p^a$ using Hensel's lemma, and combine the solutions using the Chinese remainder theorem.
        \item Any solution $(a, b, c, d)$ must satisfy $a = -(1 + b + c + d) \pmod{k}$, so it suffices to find $b, c, d$ such that $f(b, c, d) \coloneqq 1 + b^2 + c^2 + d^2 + (1 + b + c + d)^2 = 0 \pmod{k}$. Expanding $f$ gives
        \begin{align*}
            f(b, c, d) & = 2 b^2 + 2 c^2 + 2 d^2 + 2 b c + 2 b d + 2 c d + 2 b + 2 c + 2 d + 2 \\
            & = \frac{1}{4} \big( (2b + 2c + 1)^2 + (2b + 2d + 1)^2 + (2c + 2d + 1)^2 + 5 \big) \\
            & = \frac{1}{4} (X^2 + Y^2 + Z^2 + 5),
        \end{align*}
        where $X = 2b + 2c + 1$, $Y = 2b + 2d + 1$ and $Z = 2c + 2d + 1$. So it suffices to find $X, Y, Z \in \ZZ$ such that $X^2 + Y^2 + Z^2 = 8k - 5$. $8k - 5$ is not of the form $4^l (8 m + 7)$, so by Legendre's three-square theorem, such $X, Y, Z \in \ZZ$ do indeed exist. They can be found in expected time $O(\polylog k)$ by the algorithm in \cite{PS19threeSquareTheorem}. Moreover, we have $X^2 + Y^2 + Z^2 = 3 \pmod{8}$, which means $X, Y, Z$ are all odd (since each integer square is $0, 1$ or $4 \pmod{8}$). Also, without loss of generality (by potentially negating some of $X, Y, Z$), we may assume that $X = Y = Z = 1 \pmod{4}$. So write $X = 4 x + 1$, $Y = 4 y + 1$ and $Z = 4 z + 1$ for $x, y, z \in \ZZ$.
    \end{enumerate}
\end{proof}


\end{document}